\documentclass[11pt]{article}
\usepackage[letterpaper,portrait,margin=1in]{geometry}
\usepackage[T1]{fontenc}
\usepackage{lmodern}
\usepackage{amsmath,amssymb,amsthm}
\usepackage{array,longtable}
\usepackage{needspace}
\usepackage{microtype}
\usepackage{xcolor}
\usepackage{tikz}
\usetikzlibrary{arrows.meta,calc,fit,positioning}
\usepackage{aliascnt}
\usepackage[colorlinks=true,linkcolor=teal,citecolor=blue,urlcolor=teal]{hyperref}
\usepackage[nameinlink,capitalize]{cleveref}

\allowdisplaybreaks

\newtheorem{theorem}{Theorem}[section]
\newaliascnt{lemma}{theorem}
\newtheorem{lemma}[lemma]{Lemma}
\aliascntresetthe{lemma}
\newaliascnt{proposition}{theorem}
\newtheorem{proposition}[proposition]{Proposition}
\aliascntresetthe{proposition}
\newaliascnt{corollary}{theorem}
\newtheorem{corollary}[corollary]{Corollary}
\aliascntresetthe{corollary}
\newaliascnt{remark}{theorem}
\theoremstyle{remark}
\newtheorem{remark}[remark]{Remark}
\aliascntresetthe{remark}
\theoremstyle{plain}

\crefname{theorem}{Theorem}{Theorems}
\Crefname{theorem}{Theorem}{Theorems}
\crefname{lemma}{Lemma}{Lemmas}
\Crefname{lemma}{Lemma}{Lemmas}
\crefname{proposition}{Proposition}{Propositions}
\Crefname{proposition}{Proposition}{Propositions}
\crefname{corollary}{Corollary}{Corollaries}
\Crefname{corollary}{Corollary}{Corollaries}
\crefname{remark}{Remark}{Remarks}
\Crefname{remark}{Remark}{Remarks}
\crefname{equation}{equation}{equations}
\Crefname{equation}{Equation}{Equations}

\newcommand{\C}{\mathbb C}
\newcommand{\R}{\mathbb R}
\newcommand{\cA}{\mathcal A}
\newcommand{\cC}{\mathcal C}
\newcommand{\cE}{\mathcal E}
\newcommand{\cF}{\mathcal F}
\newcommand{\cH}{\mathcal H}
\newcommand{\cM}{\mathcal M}
\newcommand{\cS}{\mathcal S}
\newcommand{\cT}{\mathcal T}
\newcommand{\U}{\mathrm U}
\newcommand{\dd}{\mathrm d}
\newcommand{\diag}{\operatorname{diag}}
\newcommand{\Gr}{\operatorname{Gr}}
\newcommand{\Tr}{\operatorname{Tr}}
\newcommand{\ket}[1]{|#1\rangle}
\newcommand{\proj}[1]{|#1\rangle\!\langle #1|}
\newcommand{\ot}{\otimes}
\newcommand{\Forb}{F_{\mathrm{orb}}}
\newcommand{\Fcl}{F_{\mathrm{cl}}}
\newcommand{\Funiv}{F_{\mathrm{univ}}}
\newcommand{\doi}[2]{\href{https://doi.org/#1}{#2}}
\newcommand{\arxiv}[1]{\href{https://arxiv.org/abs/#1}{\texttt{arXiv:#1}}}

\title{Asymptotically Optimal Mixed-State Cloning}
\author{Jiani Fei\(^{*}\)\qquad
  Jinzhao Wang\(^{*,\dagger}\)}
\date{}
\begin{document}
\maketitle
\begingroup
\renewcommand{\thefootnote}{\fnsymbol{footnote}}
\footnotetext[1]{\textit{Leinweber Institute for Theoretical Physics,
  Stanford University, Stanford, CA 94305, USA}}
\footnotetext[2]{\textit{Zyphra, San Francisco, CA 94105, USA.}}
\endgroup

\begin{abstract}
We determine asymptotic global fidelities of cloning generic
mixed states with simple (non-degenerate) full-rank spectra. In fixed dimension $d$, let $n$ inputs and
$m_n$ outputs satisfy $m_n/n\to\gamma>1$.  For a known simple
full-rank spectrum $p_1>\cdots>p_d>0$, we prove that the optimal root
fidelity between the entire output and the product target state is
$$
  \Forb(\gamma,p)
  =\prod_{i<j}
  \frac{\sqrt{\gamma}+\sqrt{q_{ij}(\gamma-1+q_{ij})}}
       {\gamma+q_{ij}},
  \qquad q_{ij}=\frac{p_j}{p_i}.
$$
When the spectrum is unknown, one spectrum-independent cloner attains
$$
  \Funiv(\gamma,p)
  =\left(\frac{2\sqrt\gamma}{1+\gamma}\right)^{(d-1)/2}
   \Forb(\gamma,p),
$$
uniformly on compact subsets of the simple full-rank chamber and is minimax optimal.  The two factors quantify
spectral and eigenbasis uncertainty.  Our channels combine classical
processing of Schur--Weyl labels with a covariant Cartan-sector channel.
Local asymptotic normality and a Gaussian fidelity bound against arbitrary
channels yield the converses.  For flat projector states $\rho=P/r$,
with $P$ a rank-$r$ orthogonal projector, we separately prove the
optimum $\gamma^{-r(d-r)/2}$.  Furthermore, we estimate the exact asymptotic fidelity
of purify--clone--trace (PCT), and show that our cloners strictly
outperform PCT at every fixed gain for simple full-rank spectra and
rank-$r$ projector states with $1<r<d$.
\end{abstract}

\vspace{3em}
\begin{samepage}
\paragraph{Generative AI usage.}
The idea for the cloner construction is due to the authors. We
initially conjectured its asymptotic optimality but did not have a complete
proof.  We used OpenAI's ChatGPT (Sol~5.6 and 6) and Anthropic's Claude
(Fable~5 and 5.1) to assist with the analysis: the evaluation of the
cloning fidelity and the converse argument benefited from their
suggestions.  The same tools assisted with language editing, manuscript
organization, and figure preparation.  All AI-assisted material was
reviewed and revised by the authors, who take full responsibility for the
accuracy and content of the manuscript.
\end{samepage}

\clearpage
\tableofcontents

\section{Introduction}\label{sec:introduction-results}

The no-cloning theorem rules out perfect copying of an arbitrary unknown
quantum state \cite{WoottersZurek1982,Dieks1982}.  Approximate cloning asks
how well a channel can turn \(n\) identical inputs into \(m\) copies, at
\emph{gain} \(m/n\).  We compare the entire output with the product
target \(\rho^{\ot m}\), rather than testing its marginals as in broadcasting
\cite{BarnumCavesFuchsJozsaSchumacher1996}.  Optimal pure-state cloners
were established in the 1990s \cite{Werner1998,KeylWerner1999}, but
mixed-state cloning has remained open
\cite{ScaraniIblisdirGisinAcin2005}: existing results treat restricted
families, different loss criteria, or high-fidelity sample complexity,
without determining the optimal global fidelity in our setting
\cite{BuscemiDArianoMacchiavelloPerinotti2005Optimal,BuscemiDArianoMacchiavelloPerinotti2006,DangFan2007,FanLiuShi2007,Kazakov2007,LiTheilHarrowChuang2026,FanizzaGrinkoScharnhorstSpilecki2026,JeonSohnOh2026}.
We resolve this fixed-gain global-fidelity problem in the generic
mixed-state regime of simple full-rank spectra: the eigenvalues are
strictly positive and distinct.  Such states form an open dense
subset of the state space.\footnote{They also have full measure under
Hilbert--Schmidt volume: the excluded states are zeros of the determinant
or the characteristic-polynomial discriminant.}  We determine the
known-spectrum orbit optimum and, when the spectrum is unknown, the
compact-set minimax optimum for spectral sets with dense full-dimensional
interior.

\subsection{Results}\label{sec:main-results}

Fix an integer \(d\ge2\) and positive integers \(m_n\) such that
\begin{equation}\label{eq:fixed-gain-regime}
  \gamma_n:=\frac{m_n}{n}\longrightarrow\gamma>1.
\end{equation}
In particular, \(m_n>n\) for all sufficiently large \(n\).\footnote{At the
finitely many exceptional indices with \(m_n\le n\), our cloners are defined
as the exact partial trace channel (and the identity when \(m_n=n\)).  This convention
does not affect any limit, and all construction formulas below concern
\(m_n>n\).}
Let \(\mathsf A_d:=\{x\in\R^d:\sum_i x_i=1\}\) and define
\begin{equation}\label{eq:simple-spectrum-family}
  \Delta_d^\circ
  :=\left\{p\in\mathsf A_d:p_1>\cdots>p_d>0\right\}.
\end{equation}
For every ordered positive spectrum \(p\in\mathsf A_d\) with
\(p_1\ge\cdots\ge p_d>0\), including repeated eigenvalues, set
\(\rho_p=\diag(p_1,\ldots,p_d)\) and, for \(g\in\U(d)\),
\(\rho_{p,g}=g\rho_p g^*\).
For positive trace-class operators \(A,C\), not necessarily normalized, we
write
\begin{equation}\label{eq:root-fidelity-definition}
  F(A,C)=\|\sqrt A\sqrt C\|_1.
\end{equation}
For states, this is the root (unsquared) Uhlmann fidelity.  For nonnegative
functions or measures, the same notation denotes their Hellinger affinity. In
particular, \(F(P,Q)=\sum_x\sqrt{P(x)Q(x)}\) on a countable space.  The target
in every statement below is the entire product state
\(\rho_{p,g}^{\ot m_n}\).  For
\(p\in\Delta_d^\circ\), define
\begin{equation}\label{eq:pairwise-fidelity-factor}
  q_{ij}=\frac{p_j}{p_i},
  \qquad
  f_\gamma(q)
  =\frac{\sqrt\gamma+\sqrt{q(\gamma-1+q)}}{\gamma+q},
\end{equation}
and
\begin{equation}\label{eq:main-values}
  \Forb(\gamma,p)=\prod_{i<j}f_\gamma(q_{ij}),
  \qquad
  \Fcl(\gamma,d)
  =\left(\frac{2\sqrt\gamma}{1+\gamma}\right)^{(d-1)/2},
  \qquad
  \Funiv(\gamma,p)=\Fcl(\gamma,d)\Forb(\gamma,p).
\end{equation}
The factor \(\Forb\) measures the cost of amplifying the unknown eigenbasis.
The \emph{classical fidelity factor} \(\Fcl\) accounts for the additional
cost of reproducing the unknown spectral fluctuations.

\begin{figure}[!tbp]
\centering
\resizebox{\linewidth}{!}{%
\begin{tikzpicture}[
  flow/.style={-{Latex[length=2mm]},thick},
  dependency/.style={-{Latex[length=1.8mm]},thin,gray},
  guide/.style={thin,gray},
  box/.style={draw,rounded corners=2pt,align=center,inner sep=4pt,
    minimum height=9mm,font=\small},
  labelbox/.style={box,fill=blue!6},
  sectorbox/.style={box,fill=orange!9},
  note/.style={draw,rounded corners=2pt,align=left,inner sep=4pt,
    fill=gray!6,font=\scriptsize}
]
  \node[box,fill=gray!8,text width=17mm] (input) at (-0.5,0)
    {input\\[-1pt]\(\rho^{\ot n}\)};

  \node[labelbox,text width=24mm] (labelin) at (3.5,1.15)
    {Young label\\\(\mu\in\mathsf Y_n\)};
  \node[sectorbox,text width=24mm] (sectorin) at (3.5,-1.15)
    {irrep state\\\(\rho_\mu\) on \(\cH_\mu\)};

  \node[labelbox,text width=27mm] (labelout) at (7.1,1.15)
    {transition rule\\\(\nu\sim \mathsf K_n(\cdot\mid\mu)\)};
  \node[sectorbox,text width=27mm] (sectorout) at (7.1,-1.15)
    {sector channel\\\(\cE_{\mu,\nu}(\rho_\mu)\)};

  \node[box,fill=green!8,text width=28mm] (blockout) at (10.8,0)
    {assemble the \(\nu\)-block\\[-1pt]
     \(\cE_{\mu,\nu}(\rho_\mu)
       \ot I_{\cS_\nu}/\dim\cS_\nu\)};
  \node[box,fill=gray!8,text width=26mm]
    (output) at (15.75,0)
    {output\\[-1pt]
     \(\mathcal Q_n[\mathsf K_n](\rho^{\ot n})\)};

  \coordinate (schursplit) at (2.0,0);
  \draw[flow] (schursplit) |- (labelin.west);
  \draw[flow] (schursplit) |- (sectorin.west);
  \draw[flow] (input.east) --
    node[below=2pt,pos=.45,font=\scriptsize,align=center]
      {Schur\\transform} (schursplit);
  \draw[flow] (labelin.east) -- (labelout.west);
  \draw[flow] (sectorin.east) -- (sectorout.west);
  \draw[dependency] (labelout.south) --
    node[right,font=\scriptsize,text=gray] {selects \(\nu\)} (sectorout.north);
  \coordinate (assemblemerge) at (8.9,0);
  \draw[thick] (labelout.east) -| (assemblemerge);
  \draw[thick] (sectorout.east) -| (assemblemerge);
  \draw[flow] (assemblemerge) -- (blockout.west);
  \draw[flow] (blockout.east) --
    node[below=2pt,font=\scriptsize,align=center]
      {inverse Schur\\transform} (output.west);

  \node[note,text width=50mm,minimum height=13mm] (known) at (1.35,3.2)
    {\textbf{Known spectrum:} \(\cC_n^p\)\\
     \(\mathsf K_n^p(\nu\mid\mu)=P_{m_n,p}(\nu)\).\\
     classical factor \(1\).};
  \node[note,text width=50mm,minimum height=13mm] (unknown) at (7.1,3.2)
    {\textbf{Unknown spectrum:} \(\cC_n^{\mathrm{univ}}\)\\
     \(\mathsf K_n^{\mathrm{rnd}}\) randomly rounds \(\gamma_n\mu\).\\
     classical factor \(\Fcl(\gamma,d)\).};
  \node[note,text width=50mm,minimum height=13mm] (projector) at (12.85,3.2)
    {\textbf{Projector states \(P/r\):} \(\cC_n^{\mathrm{proj}}\)\\
     exact coupling of flat Schur laws.\\
     classical factor \(1\) (known \(r\)).};
  \coordinate (kernelchoice) at ([yshift=4mm]labelout.north);
  \draw[guide] (known.south) -- (kernelchoice);
  \draw[guide] (unknown.south) -- (kernelchoice);
  \draw[guide] (projector.south) -- (kernelchoice);
  \draw[dependency] (kernelchoice) -- (labelout.north);

\end{tikzpicture}%
}
\caption{\textbf{Schur--Cartan cloner construction}.  A Young-label transition rule
  selects \(\nu\), while the covariant channel \(\cE_{\mu,\nu}\) processes
  the conditional irrep state.  Known-spectrum, unknown-spectrum, and
  projector cloners differ only in their label rules, shown above. Their
  asymptotic fidelities are given in
  \cref{thm:known-optimum,thm:unknown-optimum,thm:grassmann}.}
\label{fig:schur-cartan-cloner}
\end{figure}

\paragraph{Known spectrum.}
For fixed \(p\), the target Young-label distribution is known and can be
sampled exactly at size \(m_n\). Only the conditional irrep
sectors encoding the eigenbasis need to be cloned.

\begin{theorem}[Known-spectrum optimum]\label{thm:known-optimum}
For every \(p\in\Delta_d^\circ\),
\begin{equation}\label{eq:known-optimum}
  \lim_{n\to\infty}
  \sup_{\cM_n}\inf_{g\in\U(d)}
  F\!\left(\cM_n(\rho_{p,g}^{\ot n}),\rho_{p,g}^{\ot m_n}\right)
  =\Forb(\gamma,p),
\end{equation}
where the supremum is over all quantum channels from \(n\) to \(m_n\)
qudits.  The explicit sequence of channels \(\cC_n^p\) defined by
\cref{eq:known-cloner-definition} asymptotically attains the displayed value.
Its fidelity converges to \(\Forb(\gamma,p)\) uniformly for \(p\) in
every compact \(K\Subset\Delta_d^\circ\) and \(g\in\U(d)\).
\end{theorem}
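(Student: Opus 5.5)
The proof has an achievability half and a converse half. Both reduce, through Schur--Weyl duality and local asymptotic normality (LAN), to a product over pairs \(i<j\) of independent one-mode Gaussian amplification problems. Each pair contributes the factor \(f_\gamma(q_{ij})\).

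\emph{Achievability.} I would analyze the cloner \(\cC_n^p\) of \cref{fig:schur-cartan-cloner}. It applies the Schur transform to \(\rho_{p,g}^{\ot n}\), keeps the irrep state \(\rho_\mu\), and samples \(\nu\sim P_{m_n,p}\) independently of \(\mu\). It then applies the covariant Cartan-sector channel \(\cE_{\mu,\nu}\) and tensors with the maximally mixed multiplicity state. Because the target \(\rho_{p,g}^{\ot m_n}\) has the same block structure, with weights \(P_{m_n,p}(\nu)\) and irrep states \(\rho_{\nu,g}\), the fidelity equals
\[
\sum_{\mu,\nu}\sqrt{P_{n,p}(\mu)P_{m_n,p}(\nu)}\,\cdot\,\sqrt{P_{n,p}(\mu)}\cdots .
\]
More precisely, since the label distribution is reproduced exactly, the fidelity is an average over \(\mu\sim P_{n,p}\) and \(\nu\sim P_{m_n,p}\) of \(F(\cE_{\mu,\nu}(\rho_{\mu,g}),\rho_{\nu,g})\), by joint concavity in the block form. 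So the classical factor is \(1\). Covariance removes \(g\). Labels concentrate within \(O(\sqrt n)\) of \(np\) and \(m_np\). In that window I would use the known LAN/Holstein--Primakoff approximation: \(\rho_{\mu,g}\) is close in trace norm to \(\bigotimes_{i<j}\) thermal states displaced by \(\sqrt n\,z_{ij}\), with inverse temperature \(\log(1/q_{ij})\), where \(z\) is the local parametrization of \(g\). The output is the analogous state with displacement \(\sqrt{m_n}\,z_{ij}\). The Cartan-sector channel is built to act, in this limit, as a phase-covariant Gaussian amplifier of gain \(\sqrt\gamma\) on each mode. The fidelity between an amplified thermal state and the thermal state with scaled displacement is displacement-independent and evaluates to \(f_\gamma(q)\). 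I would obtain this by maximizing over the amplifier's added noise, or equivalently over the thermal-environment temperature; the square roots in \(f_\gamma\) come from Gaussian fidelity formulas. The pairs multiply. Uniformity in \(p\in K\) follows because the LAN error bounds are uniform on compacts away from the walls.

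\emph{Converse.} For any \(\cM_n\), \(\inf_g\) is at most the Haar average over \(g\), which by twirling can be taken covariant. I would localize to an \(n^{-1/2}\)-neighbourhood of a fixed \(g_0\) with a wide Gaussian prior on \(z\). LAN on both input and output then converts \(\cM_n\) into a channel between the limiting Gaussian families, with error vanishing asymptotically. The stated Gaussian fidelity bound against arbitrary channels gives an upper bound of \(\prod f_\gamma(q_{ij})\) on the average fidelity for displacement amplification by \(\sqrt\gamma\) of these thermal families, up to prior-width corrections that vanish as the prior flattens. Fidelity is monotone under the LAN approximating channels, with a \(\sqrt{\text{trace distance}}\) loss, so this upper bound passes to the limit and yields the reverse of the inequality in \cref{eq:known-optimum}.

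\emph{Main obstacle.} The hard step is transporting the global fidelity of the whole \(m_n\)-copy output through LAN. The output dimension grows and the target is not local, so trace-norm LAN errors must be controlled on the output side as well, uniformly over the relevant \((\mu,\nu)\) window. I would handle this by exploiting the block decomposition so that only sector states enter, and by truncating to labels in the \(O(\sqrt n\log n)\) window, where LAN for \(\rho_\nu\) is uniform. The spectral direction is excluded from the problem because the spectrum is known, which is why no \(\Fcl\) appears.
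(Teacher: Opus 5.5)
Your overall architecture is the paper's. Exact sampling of $P_{m_n,p}$ makes the label factor one, and the sector fidelity should tend to $F(\tau_p^{(\gamma)},\tau_p)=\Forb(\gamma,p)$. The converse transfers an arbitrary $\cM_n$ through $T_{m_n}\circ\cM_n\circ S_n$ to a flat-prior Gaussian amplification problem. Your converse matches the paper's, which likewise rests on a separately proved Gaussian bound against arbitrary channels. Note that this bound is stated for the box prior $\mathsf Z_L$; a widening Gaussian prior would require rerunning its averaging argument, which needs only asymptotic translation invariance of the prior.

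The genuine gap is in achievability, at the step that carries all its content. You assert that $\cC_{\mu,\nu}$ ``acts as a phase-covariant Gaussian amplifier'' and propose to get $f_\gamma$ ``by maximizing over the amplifier's added noise''. For a fixed channel there is nothing to maximize. A gain-$\gamma$ phase-covariant amplifier with excess noise sends $\tau_q$ to some $\tau_{q'}$ with $q'>q^{(\gamma)}=1-(1-q)/\gamma$. Since $x\mapsto F(\tau_q,\tau_x)$ is strictly decreasing for $x>q$, its fidelity is then strictly below $f_\gamma(q)$. Maximizing over noise gives the value of the best amplifier, not of your cloner. You must show that the Cartan-sector channel is asymptotically the \emph{least-noise} amplifier, i.e.\ that $\cC_{\mu,\nu}(\rho_{p,\mu})$ approaches $\tau_p^{(\gamma)}$ itself.

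The missing idea is an exact computation with the Cartan intertwiner. Dividing $VF^{(\nu)}_\alpha=(F^{(\mu)}_\alpha\ot I+I\ot F^{(\omega)}_\alpha)V$ by $\sqrt{\Delta^\nu_\alpha}$ gives $V\widehat F^{(\nu)}_\alpha=(\sqrt{s}\,\widehat F^{(\mu)}_\alpha\ot I+\sqrt{1-s}\,I\ot\widehat F^{(\omega)}_\alpha)V$, with $s=s_{\alpha,n}=\Delta^\mu_\alpha/\Delta^\nu_\alpha\to1/\gamma$. This is a beam splitter fed with vacuum, so normalized PBW vectors split with binomial weights. Now pair with $I_{\cH_\omega}$ and use the prefactor $\dim\cH_\mu/\dim\cH_\nu=\prod_\alpha(s_{\alpha,n}+O(n^{-1}))$. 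Mode by mode, the input weights $(1-q_\alpha)q_\alpha^a$ become asymptotically $s(1-q_\alpha)(1-s+sq_\alpha)^k\to(1-q^{(\gamma)}_\alpha)(q^{(\gamma)}_\alpha)^k$ at occupation $k$. Equivalently, $\cC_{\mu,\nu}$ is the normalized dual of a pure-loss channel of transmissivity $s$, which is $\cA_{1/s}$.

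Turning this into a fidelity statement needs four further ingredients, none of which is in your sketch:
\begin{itemize}
\item the orthonormalized PBW basis at bounded height, with Gram matrix $I+O(n^{-1/2})$;
\item the uniform normalization $p^\lambda/s_\lambda(p)\to\prod_{i<j}(1-q_{ij})$;
\item torus invariance, so that the height cutoff reduces both states;
\item a uniform geometric tail bound to remove the cutoff.
\end{itemize}

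Three smaller corrections.
\begin{itemize}
\item After invoking covariance, take $g=I$. The sector state is then an undisplaced thermal state about the highest-weight vector. Your approximation by states displaced by $\sqrt n\,z$ is not a valid Holstein--Primakoff statement for fixed $g\neq I$, and it is not needed.
\item Concavity gives only $F\ge\mathbb E_{\mu,\nu}F(\cC_{\mu,\nu}(\rho_{p,\mu}),\rho_{p,\nu})$. That is enough for the limit but not for the stated uniform convergence on $K$, because the converse is pointwise in $p$. The paper instead proves a two-sided mixture estimate, using that all outputs converge to the same limit in a $\mu$-independent frame.
\item The obstacle you single out is not the hard one. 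The output side of the converse needs no block decomposition or label window, only forward LAN at size $m_n$ at the rescaled parameter $\sqrt{\gamma_n}z$, together with monotonicity of fidelity.
\end{itemize}
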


\paragraph{Unknown spectrum.}
Without \(p\), the target Young label law must instead be generated from
the measured input Young label.  Its \(n^{-1/2}\)-scale spectral
fluctuation must be dilated to the output scale, producing the factor
\(\Fcl(\gamma,d)\).

\begin{theorem}[Unknown-spectrum cloner fidelity and compact-set optimum]
\label{thm:unknown-optimum}
There exists a sequence of spectrum-independent, \(\U(d)\)-covariant channels
\(\cC_n^{\mathrm{univ}}\), defined by \cref{eq:universal-cloner-definition},
with the following properties.
\begin{enumerate}
\item[(a)] (Achievability) For every compact \(K\Subset\Delta_d^\circ\),
\begin{equation}\label{eq:unknown-uniform-achievability}
  \lim_{n\to\infty}
  \sup_{\substack{p\in K\\g\in\U(d)}}
  \left|
  F\!\left(
    \cC_n^{\mathrm{univ}}(\rho_{p,g}^{\ot n}),
    \rho_{p,g}^{\ot m_n}
  \right)-\Funiv(\gamma,p)
  \right|=0.
\end{equation}

\item[(b)](Converse) Let \(K\Subset\Delta_d^\circ\) be nonempty and
  satisfy
  \(K=\overline{\operatorname{int}_{\mathsf A_d}K}^{\,\mathsf A_d}\).
Then
\begin{equation}\label{eq:compact-global-value}
  \lim_{n\to\infty}
  \sup_{\cM_n}
  \inf_{\substack{p\in K\\g\in\U(d)}}
  F\!\left(\cM_n(\rho_{p,g}^{\ot n}),\rho_{p,g}^{\ot m_n}\right)
  =\inf_{p\in K}\Funiv(\gamma,p),
\end{equation}
where the supremum is over all quantum channels from \(n\) to \(m_n\)
qudits.
\end{enumerate}
\end{theorem}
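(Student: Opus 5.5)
The proof has two parts. Part (a) is an explicit fidelity evaluation for the cloner \(\cC_n^{\mathrm{univ}}\). Part (b) combines (a) with a local-asymptotic-normality (LAN) converse.

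\textbf{Achievability (a).} Since \(\cC_n^{\mathrm{univ}}\) is \(\U(d)\)-covariant, it suffices to take \(g=I\). After Schur--Weyl decomposition, both the output \(\cC_n^{\mathrm{univ}}(\rho_p^{\ot n})\) and the target \(\rho_p^{\ot m_n}\) are block diagonal in the output Young label \(\nu\), and both are maximally mixed on the multiplicity spaces \(\cS_\nu\). The fidelity therefore splits as
\[
F=\sum_\nu \Bigl\|\sqrt{\textstyle\sum_\mu P_{n,p}(\mu)\mathsf K_n^{\mathrm{rnd}}(\nu\mid\mu)\,\cE_{\mu,\nu}(\rho_\mu)}\,\sqrt{P_{m_n,p}(\nu)\,\rho_\nu}\Bigr\|_1 .
\]
The argument then proceeds in four steps.
\begin{enumerate}
\item Use the known-spectrum analysis behind \cref{thm:known-optimum} to show that, for typical pairs \(\mu\approx np\) and \(\nu\approx m_np\) in the \(\sqrt n\) window, \(\cE_{\mu,\nu}(\rho_\mu)\) has fidelity \(\Forb(\gamma,p)+o(1)\) with \(\rho_\nu\). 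I expect this to hold uniformly in the window, because the pairwise factors \(f_\gamma(q_{ij})\) depend continuously on the normalized labels and the window shrinks in relative scale.
\item Bound the sum below by approximately \(\Forb\) times the classical Hellinger affinity \(F\bigl(P_{n,p}\mathsf K_n^{\mathrm{rnd}},P_{m_n,p}\bigr)\). For the matching upper bound, use joint concavity to argue that the blockwise fidelity cannot exceed this product asymptotically.
\item By the CLT for Schur--Weyl measures, \(\mu\approx np+\sqrt n\,Z\) with \(Z\) a Gaussian on \(\mathsf A_d\) whose covariance is \(\diag(p)-pp^{T}\). Random rounding of \(\gamma_n\mu\) produces \(m_np+\sqrt{\gamma}\sqrt{m_n}\,Z\) at the output scale, whereas the target is \(m_np+\sqrt{m_n}\,Z\).
\item The affinity of two Gaussians \(N(0,\gamma\Sigma)\) and \(N(0,\Sigma)\) in \(d-1\) dimensions is \(\bigl(2\sqrt\gamma/(1+\gamma)\bigr)^{(d-1)/2}=\Fcl\). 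The lattice-to-continuum passage is justified by a local CLT with uniformity on compact \(K\).
\end{enumerate}

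\textbf{Converse (b).} The lower bound "\(\ge\)" in (b) is immediate from (a), taking the infimum over \(p\in K\). For the upper bound, fix an interior point \(p_0\) where \(\Funiv\) is nearly minimized; this is possible because \(K\) is the closure of its interior and \(\Funiv\) is continuous. Localize to \(p=p_0+h/\sqrt n\) and \(g=\exp(iX/\sqrt n)\). By LAN, \(\rho_{p,g}^{\ot n}\) is asymptotically equivalent, via channels in both directions, to a classical Gaussian shift model in \(h\) (covariance \(\Sigma\)) tensored with quantum Gaussian displacement states for the off-diagonal parameters. Each off-diagonal pair \(i<j\) contributes a thermal-like mode with ratio \(q_{ij}\). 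At output size \(m_n\) the same shift appears with all displacements scaled by \(\sqrt{\gamma}\).

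Any cloner \(\cM_n\) then induces, asymptotically, a channel from the input Gaussian model to the output Gaussian model. The minimax fidelity over a prior on \(h,X\) spreading to infinity is bounded by the Gaussian fidelity bound against arbitrary channels (the paper's converse tool). This bound is expected to factorize into the classical amplification factor \(\Fcl\) and the per-mode quantum factors \(f_\gamma(q_{ij})\). Those factors come from optimizing the fidelity of a phase-covariant amplifier of gain \(\gamma\) on a thermal displaced state, the known-spectrum analogue.

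\textbf{Main obstacle.} The hardest step is transferring the LAN equivalence to global fidelity, which is not a local distance. I would rely on monotonicity of fidelity under the LAN channels, together with the trace-norm approximations converting \(\varepsilon\)-closeness into \(o(1)\) fidelity error. The delicate part is showing that the worst case over a shrinking \(n^{-1/2}\) neighborhood, averaged with a flat prior, reproduces the Gaussian minimax value. This requires the LAN approximation to be uniform over \(|h|,|X|\le M\) and then letting \(M\to\infty\).
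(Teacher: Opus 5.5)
\paragraph{Overall.} Your architecture for both parts is the paper's. For (a) it is the block formula in the output label $\nu$, a sector factor $\Forb$ times the label affinity, and the dithered dilation producing $\mathsf N_{0,\gamma\Sigma_p}$ against $\mathsf N_{0,\Sigma_p}$. For (b) it is the two-way LAN transfer by monotonicity, the flat-prior bound of \cref{thm:gaussian-amplification} on windows with $L\to\infty$ after $n\to\infty$, and density of $\operatorname{int}_{\mathsf A_d}K$ together with continuity of $\Funiv$. The converse is sound, and concavity does give the lower half of (a), hence the ``$\ge$'' half of (b).

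\paragraph{The gap: the upper half of (a).} Your Step~2 fails here, because joint concavity only bounds the fidelity of a mixture from \emph{below}. Apply it to the normalized block $\sum_\mu a_\mu\,\cC_{\mu,\nu}(\rho_{p,\mu})$, with $a_\mu\propto P_{n,p}(\mu)\mathsf K_n^{\mathrm{rnd}}(\nu\mid\mu)$. You get $F\ge\Forb-o(1)$, but nothing in the other direction. These blocks genuinely are mixtures: randomized rounding routes up to $2^{d-1}$ neighbouring input labels into the same $\nu$. The known-spectrum analysis you cite has the same issue, since there every typical $\mu$ feeds every $\nu$. Pairwise fidelities $\Forb+o(1)$ cannot control a mixture from above. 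For example, $F(\proj0,I_2/2)=F(\proj1,I_2/2)=1/\sqrt2$, yet the equal mixture has fidelity $1$ with $I_2/2$.

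\paragraph{The fix.} Prove a statement about the sector output \emph{states} in a frame that depends only on $\nu$, rather than about their fidelities. In the orthonormalized PBW basis of $\cH_\nu$ up to a fixed height, you need:
\begin{itemize}
\item each $\cC_{\mu,\nu}(\rho_{p,\mu})$ is uniformly trace-norm close to $\tau_p^{(\gamma)}$;
\item $\rho_{p,\nu}$ is uniformly trace-norm close to $\tau_p$;
\item a uniform tail bound holds above the height cutoff.
\end{itemize}
This statement is convex in the input, so it survives any mixture over typical $\mu$. Fidelity continuity then gives the two-sided estimate; this is the mixture clause of \cref{prop:sector-fidelity}. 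A bound linear in the first argument, such as $F(A,B)^2\le\Tr(AW)\Tr(BW^{-1})$, would be an alternative route.

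\paragraph{Step 1.} Your justification (continuity of $f_\gamma$ in the normalized labels) is heuristic. The real content is the computation above: binomial splitting of the Cartan intertwiner with ratios $\Delta^\mu_\alpha/\Delta^\nu_\alpha\to\gamma^{-1}$, asymptotic orthonormality of PBW vectors, and uniform character normalization.

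\paragraph{Steps 3--4.} Local limits for $P_{n,p}$ and $P_{m_n,p}$ do not by themselves give a local limit for the rounded law, and Hellinger affinity requires one. The paper uses the exact identity that dithered rounding equals dilation of the cell-interpolated input density followed by output-cell averaging. Without the dither (e.g.\ for integer $\gamma$), the output sits on a sublattice and the affinity loses a factor $\gamma^{-(d-1)/2}$. You also need to check that typical rounded labels are compatible, so that the fallback affects only negligible mass.
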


We give a few remarks on our claims. Part~\textup{(a)} evaluates one prescribed cloner, even on singleton or
lower-dimensional sets.  In part~\textup{(b)}, the competing channel may
depend on \(K\), but not on the actual \(p\) or \(g\).  The interior is
taken in the full \((d-1)\)-dimensional hyperplane \(\mathsf A_d\): its
spectral directions give the converse factor \(\Fcl(\gamma,d)\), and
density extends the bound to the infimum over \(K\). 

The converse in part~\textup{(b)} is minimax (cf.
\cref{sec:minimax-completion}.): no competing channel sequence
can asymptotically exceed \(\inf_{p\in K}\Funiv(\gamma,p)\) in worst-case
fidelity over \(K\times\U(d)\), even though it may outperform
\(\cC_n^{\mathrm{univ}}\) at particular \((p,g)\).
For the singleton \(K=\{p\}\), which fails the closure condition in
part~\textup{(b)}, the known-spectrum channel \(\cC_n^p\) is admissible
and attains \(\Forb(\gamma,p)>\Funiv(\gamma,p)\). 

A natural example bounds the smallest eigenvalue and every adjacent gap
away from zero.  For \(0<\kappa<2/[d(d+1)]\), take
\[
  \Delta_{d,\kappa}
  =\left\{p\in\mathsf A_d:p_d\ge\kappa,\
    p_i-p_{i+1}\ge\kappa\ (i<d)\right\}.
\]
This is a nonempty compact subset of \(\Delta_d^\circ\) and satisfies
\(\Delta_{d,\kappa}
=\overline{\operatorname{int}_{\mathsf A_d}\Delta_{d,\kappa}}
^{\,\mathsf A_d}\),
so \eqref{eq:compact-global-value} applies.\footnote{Being convex with
nonempty relative interior, \(\Delta_{d,\kappa}\) is the closure of that
interior.  For a relative interior point,
choose \(\kappa'\in(\kappa,2/[d(d+1)])\) and set
\(p_i=1/d+((d+1)/2-i)\kappa'\).  At the endpoint
\(\kappa=2/[d(d+1)]\), the constraints force a singleton, so the
compact-set theorem no longer applies.}

\Cref{fig:schur-cartan-cloner} summarizes our cloner construction.
Schur--Weyl duality separates a Young label, carrying spectral information,
from a conditional irreducible state, carrying eigenbasis information
\cite{KeylWerner2001Spectrum}.  Each cloner processes the label classically
and applies the same multiplicity-one Cartan-sector channel to the
irreducible state. Only the label transition rule changes
\cite{ParthasarathyRangaRaoVaradarajan1967,FultonHarris1991}.
We develop the rules in \cref{sec:achievability}.  For the converses,
local asymptotic normality (LAN) gives a classical Gaussian shift and
displaced thermal modes.  A flat-prior Gaussian fidelity bound applies to
arbitrary channels, while least-noise amplifiers attain the quantum factor.
See \cref{sec:gaussian-LAN}.

Although the preceding results do not cover rank-deficient or spectrally degenerate states, we also study flat projector states \(P/r\), with
\(P\) an unknown rank-\(r\) projector, and determine their optimal cloning fidelity.

\begin{theorem}[Cloning projector states]
\label{thm:grassmann}
For fixed \(1\le r<d\), let \(\Gr(r,d)\) denote the set of rank-\(r\)
orthogonal projectors on \(\C^d\).  Then
\begin{equation}\label{eq:grassmann-value}
  \lim_{n\to\infty}\sup_{\cM_n}\inf_{P\in\Gr(r,d)}
  F\!\left(\cM_n((P/r)^{\ot n}),(P/r)^{\ot m_n}\right)
  =\gamma^{-r(d-r)/2}.
\end{equation}
The limit is attained by our cloner \(\cC_n^{\mathrm{proj}}\) constructed in \cref{sec:grassmann-achievability}.
\end{theorem}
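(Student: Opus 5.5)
We prove \cref{thm:grassmann} by matching an explicit achievability bound with a local Gaussian converse. As in the simple-spectrum case, we exploit Schur--Weyl duality.

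\emph{Achievability.} For \(\rho=P/r\), the state \(\rho^{\ot n}\) is supported on \((\operatorname{ran}P)^{\ot n}\). Its Schur decomposition has Young labels \(\mu\) with at most \(r\) rows, distributed by the Schur--Weyl law of the maximally mixed state on \(\C^r\). That law does not depend on \(P\), and it is known exactly once \(r\) is known. Conditional on \(\mu\), the irrep state is \(P_\mu/\dim\) restricted to the \(\U(r)\)-isotypic piece. In \(\cH_\mu\) for \(\U(d)\), this is the normalized projector onto the subspace where the torus of \(\U(d-r)\) acts trivially. Equivalently, it is a flat mixture over the \(\U(r)\)-orbit, and it is determined by the point of \(\Gr(r,d)\).

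The cloner \(\cC_n^{\mathrm{proj}}\) proceeds in three steps:
\begin{enumerate}
\item[(i)] Measure \(\mu\).
\item[(ii)] Draw \(\nu\) from an exact coupling of the flat Schur laws at sizes \(n\) and \(m_n\). We choose the coupling that maximizes the overlap, or a deterministic monotone transport on row lengths. The label part then contributes classical fidelity \(1\), since the target label law is sampled exactly.
\item[(iii)] Apply a \(\U(d)\)-covariant sector channel \(\cH_\mu\to\cH_\nu\). For this we take the optimal covariant ``Grassmannian amplifier'', namely the analogue of the Cartan-sector channel acting on the \(r(d-r)\) complex off-block coordinates.
\end{enumerate}
Because the label distribution matches the target exactly, the fidelity factorizes as an average over \(\mu\) of sector fidelities, and by covariance this average does not depend on \(P\). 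We then evaluate the sector fidelity asymptotically. Near a fixed \(P\), the coherent-state-like family \(g\mapsto\) sector state is described, via the quantum central limit theorem in the typical \(\mu\) region, by \(r(d-r)\) bosonic modes. The relevant modes are vacuum-type: the eigenvalue ratio \(q=p_j/p_i\) between the occupied and empty blocks is \(0\). The displacement scale grows by \(\sqrt\gamma\). The pure-state amplification fidelity per mode is \(1/\gamma\) in squared terms, giving \(f_\gamma(0)=\sqrt\gamma/\gamma=\gamma^{-1/2}\) per complex mode. Note that \(f_\gamma(q)\to\gamma^{-1/2}\) as \(q\to0\), which is consistent. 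The pairs inside the blocks carry no eigenbasis information, so they contribute factor \(1\). This yields \(\gamma^{-r(d-r)/2}\). We will make the ``vacuum modes'' statement rigorous through explicit Gelfand--Tsetlin and highest-weight computations, in the style of the Keyl--Werner pure-state cloner, which is the \(r=1\) case with value \(\gamma^{-(d-1)/2}\) squared-root consistent.

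\emph{Converse.} Fix any channel sequence. By covariance averaging (Haar twirl over \(\U(d)\)), the worst case over \(\Gr(r,d)\) is at most the average, so we may assume the channel is covariant. We then apply LAN at \(P_0\). Local perturbations \(P_{u}=e^{iH_u/\sqrt n}P_0e^{-iH_u/\sqrt n}\) make \((P_u/r)^{\ot n}\) asymptotically equivalent to displaced vacua on \(r(d-r)\) modes, tensored with a \(u\)-independent remainder. There is no classical shift part, because the spectrum is fixed and flat. The target at size \(m_n\) becomes the same family with displacement scaled by \(\sqrt{\gamma_n}\). The paper's flat-prior Gaussian fidelity bound for arbitrary channels mapping coherent states \(\ket{\alpha}\) to \(\ket{\sqrt\gamma\alpha}\) gives a per-mode factor of \(\gamma^{-1/2}\) in root fidelity, as the \(q=0\) specialization of \(f_\gamma\). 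Multiplying over modes and passing the limit through the LAN equivalence gives the upper bound \(\gamma^{-r(d-r)/2}\).

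\emph{Main obstacle.} The hardest step is the rank-deficient LAN. Standard LAN for full-rank states does not apply directly: the support changes with \(P\), and the Schur labels are degenerate, lying at the boundary of the Weyl chamber. I would first try to work entirely inside the \(\U(d)\)-irrep with the highest weight \(\mu\) having at most \(r\) rows. There the orbit of the flat state is a coherent-state-type orbit for the parabolic subgroup. Applying Holstein--Primakoff-type asymptotics to that orbit (as in pure-state LAN, generalized to flag manifolds) gives convergence to \(r(d-r)\) vacuum-displaced modes, uniformly over typical \(\mu\). This also supplies the sector computation needed for achievability.
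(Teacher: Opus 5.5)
There is a genuine gap. Both directions rest on an asymptotic statement that you flag as the main obstacle but never prove, and the paper's tools do not supply it.

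\emph{Achievability.} You need the sector fidelity on $\rho_{P_0,\mu}$, which you plan to get from a Holstein--Primakoff ``vacuum-mode'' limit. For $r>1$ this conditional state is not a point of a coherent-state orbit. It is maximally mixed on $J_\mu\cH_\mu^{(r)}$, whose dimension $D_r^\mu$ grows with $n$. The typical label also sits on the chamber boundary: the within-support gaps are $\mu_i-\mu_j=O(\sqrt n)$ for $i<j\le r$, and the gaps among the last $d-r$ rows vanish. The PBW analysis behind \cref{prop:sector-fidelity} needs every $\Delta_\alpha^\lambda$ to be of order $n$, so it does not specialize to this case. The interior LAN of \cref{thm:LAN-full} also does not apply at a rank-deficient state with a degenerate positive eigenvalue.

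\emph{Converse.} It consists entirely of the unproved claim that $(P_u/r)^{\ot n}$ is asymptotically interconvertible with $r(d-r)$ displaced vacua tensored with an $n$-dependent, $u$-independent remainder, uniformly over typical labels. Your Gaussian step and the value $\gamma^{-r(d-r)/2}$ are correct, but as written neither inequality is established.

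\emph{The missing idea.} No mode analysis is needed, because both halves can be done exactly. The Cartan inclusion commutes with the embedding $\C^r\hookrightarrow\C^d$, that is, $V^{(d)}_{\mu,\omega}J_\nu=(J_\mu\ot J_\omega)V^{(r)}_{\mu,\omega}$. So compressing the ordinary ambient Cartan channel to the target support gives $R_\mu/R_\nu$ times the $\U(r)$ Cartan channel. By irreducibility, that channel maps $I/D_r^\mu$ to $I/D_r^\nu$. The sector fidelity is therefore exactly $\sqrt{R_\mu/R_\nu}$, with $R_\lambda=D_d^\lambda/D_r^\lambda$. Only the elementary limit $R_\mu/R_\nu\to\gamma^{-r(d-r)}$ remains, and no ``Grassmannian amplifier'' beyond the ambient Cartan channel is needed.

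\emph{Points to fix in the achievability.} The coupling must give $\nu-\mu\vdash m_n-n$ with probability $1-o(1)$. Independent sampling fails for $r>1$. A coupling under which the rescaled centered shapes agree in probability does work: it gives $(\omega_i-\omega_{i+1})/\sqrt n\to(\sqrt\gamma-1)(X_i-X_{i+1})>0$ in distribution, where $X$ is the traceless-GUE shape limit. Your ``overlap-maximizing'' or ``monotone'' choice has to be pinned down to such a coupling. Also, several $\mu$ feed the same $\nu$, so the fidelity does not factorize as an average. Concavity gives only a lower bound by the average sector fidelity, which is enough.

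\emph{The paper's converse.} After covariantizing and twirling, replace each support projection by the identity. This gives a $P$-independent operator $B_n\ge(P/r)^{\ot n}$ whose output is maximally mixed in every Schur block, while the target fills only a fraction $1/R_\nu$ of the $\nu$-block. Cauchy--Schwarz then yields the finite-$n$ bound $\bigl[\mathbb E R_\mu\,\mathbb E R_\nu^{-1}\bigr]^{1/2}=\gamma_n^{-r(d-r)/2}(1+O(n^{-1}))$, with no LAN at all.
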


For \(r=1\), the value \(\gamma^{-(d-1)/2}\) recovers the optimal
asymptotic global fidelity for pure-state cloning \cite{Werner1998}. For
\(r=d\), the family contains only \(I_d/d\), and the value is one.
The projector cloner uses the same Schur--Cartan architecture, with a
different Young-label transition rule (\cref{fig:schur-cartan-cloner}).
Its classical factor is \(1\) and its asymptotic sector factor is
\(\gamma^{-r(d-r)/2}\).  Exact Schur-sector compression gives its fidelity,
and a direct finite-dimensional converse in
\cref{prop:grassmann-finite-converse} holds at every \(n\). When
\(m_n=\gamma n+O(1)\), its upper bound approaches this limit at rate
\(O(n^{-1})\).

A natural idea for mixed-state cloning is to first purify the mixed states and then apply Werner’s optimal pure state cloner \cite{Werner1998}. This approach was recently proposed and studied in~\cite{LiTheilHarrowChuang2026,FanizzaGrinkoScharnhorstSpilecki2026,JeonSohnOh2026}. Here we show that our optimal cloner has better performance.

For a rank bound \(1\le r\le d\), let \(E=\C^r\), let
\(\mathcal P_n^{(r)}\) be the random-purification channel with purifying
register \(E\) \cite{TangWrightZhandry2025,GirardiMeleLami2025}, and let
\(\mathcal W_{n,m}^{(dr)}\) be Werner's pure-state \(n\)-to-\(m\) cloner
in dimension \(dr\).  For \(m\ge n\), the
purify--clone--trace (PCT) channel is the following composition:
\[
  \cC_{n,m}^{\mathrm{PCT},r}
  :=\Tr_{E^{\ot m}}\circ\mathcal W_{n,m}^{(dr)}
    \circ\mathcal P_n^{(r)}.
\]
For every state \(\rho\) of rank at most \(r\), we compare
\(\cC_{n,m}^{\mathrm{PCT},r}(\rho^{\ot n})\) with the entire product target
\(\rho^{\ot m}\).  The formulas for \(\mathcal P_n^{(r)}\) and
\(\mathcal W_{n,m}^{(dr)}\) are \eqref{eq:random-purification-channel} and
\eqref{eq:grassmann-werner-cloner} in Appendix~\ref{app:pct-preliminaries}.

\begin{theorem}[Comparison with purify--clone--trace]
\label{thm:pct-comparison}
Fix \(\gamma=\lim_{n\to\infty} m_n/n>1\), and let \(\cC_{n,m}^{\mathrm{PCT},r}\) be the
purify--clone--trace channel with rank bound \(r\).
\par\noindent\textup{(a)} For \(p\in\Delta_d^\circ\), let
\(\rho_p=\diag(p_1,\ldots,p_d)\). Then PCT with rank bound \(d\) has
\begin{equation}\label{eq:full-rank-pct-comparison}
  \lim_{n\to\infty}
  F\!\left(\cC_{n,m_n}^{\mathrm{PCT},d}(\rho_p^{\ot n}),
    \rho_p^{\ot m_n}\right)
  =F_{\mathrm{PCT}}(\gamma,p)<\Funiv(\gamma,p),
\end{equation}
where \(q_{ij}=p_j/p_i\) and
\begin{equation}\label{eq:pct-fidelity-closed-form}
  F_{\mathrm{PCT}}(\gamma,p)
  =\left(\frac{\sqrt{2\gamma-1}}{\gamma}\right)^{(d-1)/2}
   \prod_{i<j}
   \frac{\sqrt{\gamma+(\gamma-1)q_{ij}}
         +\sqrt{q_{ij}(\gamma-1+\gamma q_{ij})}}
        {\gamma(1+q_{ij})}.
\end{equation}
\par\noindent\textup{(b)} Let \(1<r<d\) and let \(P\) be a rank-\(r\)
orthogonal projector on \(\C^d\). Then PCT with rank bound \(r\) satisfies
\begin{equation}\label{eq:grassmann-pct-upper-bound}
  \limsup_{n\to\infty}
  F\!\left(\cC_{n,m_n}^{\mathrm{PCT},r}((P/r)^{\ot n}),
    (P/r)^{\ot m_n}\right)
  \le\gamma^{-r(d-r)/2}
  \left(\frac{\sqrt{2\gamma-1}}{\gamma}\right)^{(r-1)/2}
  <\gamma^{-r(d-r)/2}.
\end{equation}
For \(r=1\), PCT and our cloner both reduce to Werner's pure-state
cloner.
\end{theorem}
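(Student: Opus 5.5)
The plan is to analyze PCT in the same Schur--Weyl / local-asymptotic-normality picture used for our own cloners. That way its asymptotic fidelity factorizes into a classical (Young-label) factor and a Cartan-sector (eigenbasis) factor, which we can then compare term by term with \(\Fcl\) and \(\Forb\).

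\textbf{Step 1: reduce PCT to one covariant structure.} The random-purification channel \(\mathcal P_n^{(r)}\) maps \(\rho^{\ot n}\) to the average over \(\U(r)\) of \(\psi_{\rho,u}^{\ot n}\), where \(\psi_{\rho,u}\) is a purification on \(\C^d\ot E\). Werner's cloner acts on \(\mathrm{Sym}^n(\C^{dr})\) as \(\Pi_{\mathrm{sym}}^{(m)}(\,\cdot\ot I)\Pi_{\mathrm{sym}}^{(m)}\), suitably normalized. Schur--Weyl duality for \(\C^d\ot\C^r\) decomposes \(\mathrm{Sym}^k(\C^d\ot\C^r)=\bigoplus_\lambda \cH_\lambda^{(d)}\ot\cH_\lambda^{(r)}\). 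Tracing \(E^{\ot m}\) then gives a channel of Schur--Cartan type: a Young-label kernel \(\nu\mid\mu\) together with a covariant sector map. We will write both components explicitly, using the Littlewood--Richardson/Pieri branching of \(\mathrm{Sym}^{m}\supset\mathrm{Sym}^n\ot\mathrm{Sym}^{m-n}\).

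\textbf{Step 2: asymptotic evaluation via LAN.} Around \(\rho_p\) with \(p\in\Delta_d^\circ\), we apply the LAN results of \cref{sec:gaussian-LAN}. The input becomes a classical Gaussian shift on the \(d-1\) spectral directions, tensored with displaced thermal modes, one per pair \(i<j\), with thermal ratio \(q_{ij}\). In the purified picture this is a Gaussian pure-state family on \(dr\) modes. Werner's cloner acts asymptotically as the optimal Gaussian pure-state amplifier with gain \(\gamma\), and partial trace over \(E\) is a Gaussian marginal. We therefore expect:
\begin{itemize}
\item each off-diagonal pair \(i<j\) to give a two-mode Gaussian fidelity between a thermal state of ratio \(q_{ij}\) amplified through the purification and the target thermal state, which evaluates to the displayed factor \(\bigl(\sqrt{\gamma+(\gamma-1)q}+\sqrt{q(\gamma-1+\gamma q)}\bigr)/(\gamma(1+q))\);
\item the \(d-1\) diagonal spectral directions, whose fluctuations enter through the amplitudes \(\sqrt{p_i}\) of the purification, to behave as real Gaussian shifts amplified by a quantum (not classical) amplifier; each contributes \(\bigl(\sqrt{2\gamma-1}/\gamma\bigr)^{1/2}\).
\end{itemize}
Uniform integrability and the continuity of fidelity under the LAN approximation (already used for \cref{thm:unknown-optimum}) upgrade this to a limit of the full fidelity, giving \eqref{eq:pct-fidelity-closed-form}.

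\textbf{Step 3: strict inequalities.} Two elementary inequalities finish part (a). First,
\[
  \frac{\sqrt{2\gamma-1}}{\gamma}<\frac{2\sqrt\gamma}{1+\gamma},
\]
which is equivalent after squaring to \((2\gamma-1)(1+\gamma)^2<4\gamma^3\), i.e.\ \(0<(\gamma-1)^2(\ldots)\) for \(\gamma>1\). Second, the pairwise inequality
\[
  \frac{\sqrt{\gamma+(\gamma-1)q}+\sqrt{q(\gamma-1+\gamma q)}}{\gamma(1+q)}<f_\gamma(q)\qquad (0<q<1,\ \gamma>1),
\]
or at least \(\le\), which suffices given the strict classical factor. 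For this we would try squaring and comparing, or use concavity of square roots together with \(\gamma(1+q)\ge\gamma+q\).

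For part (b), the upper bound only needs the spectral factor. For \(P/r\) the orbit part is governed by \(\gamma^{-r(d-r)/2}\), since \cref{thm:grassmann} is optimal among all channels. Purification additionally exposes \(r-1\) spectral directions of the flat spectrum inside the \(\U(r)\)-purified register, and Werner's cloner amplifies these with loss \(\sqrt{2\gamma-1}/\gamma\). We would bound fidelity by monotonicity: measure the Young label of the output and combine it with the sector bound of \cref{prop:grassmann-finite-converse}.

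The main obstacle is Step 2 for the spectral block. The asymptotic law of the output Young label under Werner cloning plus partial trace must be shown to be Gaussian, with the variance inflation \(2\gamma-1\) (rather than \(\gamma^2\)), and jointly tracked with the sector factor. I would compute the label kernel exactly via Pieri's rule and apply a local central limit theorem.
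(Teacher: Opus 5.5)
There is a genuine gap. Your values match the statement and your classical inequality is correct, but Step~2 asserts the limits ("we therefore expect") rather than deriving them, and part~(b) relies on multiplying two unrelated upper bounds.

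\textbf{Part (a): no mechanism through the partial trace.} Tracing out $E^{\ot m}$ is not a marginal of the purified LAN modes. In the reduced model the diagonal coordinates become classical shifts $h_i=2\sqrt{p_i}\Re Z_{ii}$. Each thermal mode mixes two purified modes, $\beta_{ij}=(\sqrt{p_i}Z_{ji}+\sqrt{p_j}\,\overline{Z_{ij}})/\sqrt{p_i-p_j}$. Moreover, LAN is uniform only on compact parameter sets, and the Werner output is not a product state.

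The missing idea is the one the paper uses to make the partial trace tractable:
\begin{enumerate}
\item Werner's output on $\proj{\psi}^{\ot n}$ is exactly diagonal in the occupation basis over $\psi^\perp$, with law converging to $\tau_{(\gamma-1)/\gamma}^{\ot(D-1)}$.
\item That thermal state is the Gaussian coherent-state mixture $\int\proj{\operatorname{coh}(Z)}\,\dd\mu_{\gamma-1}(Z)$, and each coherent component is asymptotically the product $\proj{\psi_{Z,m}}^{\ot m}$.
\item Hence the PCT output is close to $\int\rho_{Z,m}^{\ot m}\,\dd\mu_{\gamma-1}(Z)$. Here $\Tr_{E^{\ot m}}$ acts exactly on each product, and mixed-state LAN applies componentwise with dominated convergence in $Z$.
\item Averaging gives $\operatorname{Cov}(h)=2(\gamma-1)\Sigma_p$ and $\mathbb E|\beta_{ij}|^2=(\gamma-1)(p_i+p_j)/(p_i-p_j)$. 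The limit is therefore $\mathsf N_{0,(2\gamma-1)\Sigma_p}\ot\bigotimes_{i<j}\tau_{q_{ij}^{\mathrm{PCT}}}$ with $q^{\mathrm{PCT}}=(\gamma-1+\gamma q)/(\gamma+(\gamma-1)q)$.
\end{enumerate}
This is where both $2\gamma-1$ and the closed form come from. Your Step~1 cannot substitute for it. The sector maps induced by PCT are not the Cartan channels $\cC_{\mu,\nu}$ (otherwise its orbital factor would be $\Forb$), so \cref{prop:sector-fidelity} does not apply. A Pieri-rule label kernel, even if computed, says nothing about the orbital factors.

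\textbf{Part (a): the pairwise inequality.} You leave it open, and $\gamma(1+q)\ge\gamma+q$ cannot suffice on its own, since the PCT numerator also exceeds $\sqrt\gamma+\sqrt{q(\gamma-1+q)}$. Write the pair factor as $F(\tau_q,\tau_{q^{\mathrm{PCT}}})$. The inequality then follows from $q<q^{\mathrm{opt}}=(\gamma-1+q)/\gamma<q^{\mathrm{PCT}}$ (the mean photon numbers differ by $(\gamma-1)q/(1-q)$) and the decrease of $x\mapsto F(\tau_q,\tau_x)$ for $x>q$.

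\textbf{Part (b): the bounds do not multiply.} The optimality bound $\gamma^{-r(d-r)/2}$ from \cref{thm:grassmann} and a separate spectral loss cannot simply be multiplied; you need an identity that splits PCT's fidelity into the two factors. The paper obtains one exactly at every $n$ from \cref{lem:exact-compression}, because Werner's cloner is the one-row Cartan channel: $F=\sqrt{s_{n,m}^{(r,d)}}\,F(\zeta_{n,m}^{(r)},(I_r/r)^{\ot m})$ with $s_{n,m}^{(r,d)}\to\gamma^{-r(d-r)}$.

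The internal fidelity is then bounded by a computational-basis count measurement. The same coherent-state mixture, now with $h_i(Z)=2\Re Z_{ii}/\sqrt r$, makes the PCT counts converge to $\mathsf N_{0,(2\gamma-1)\Sigma_r}$, versus $\mathsf N_{0,\Sigma_r}$ under the target. Upper semicontinuity of Hellinger affinity under weak convergence finishes the bound.

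The sector bound of \cref{prop:grassmann-finite-converse} bounds a supremum over channels and yields no factor that combines multiplicatively with a label affinity. Measuring the $d$-dimensional output Young label alone also discards the support information. Your label idea does become exact for the internal state $\zeta_{n,m}^{(r)}$, which is $\U(r)$-invariant. However, it would then require the Young-label limit at a randomly perturbed flat spectrum, which is harder than the multinomial CLT the paper uses.
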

See \cref{sec:pct-comparison} for more details. The proof is provided in Appendix~\ref{app:pct-comparison}.

\Cref{fig:pct-comparisons} illustrates the full-rank and projector-state
comparisons using two examples.

\begin{figure}[!htbp]
\centering
\begin{tikzpicture}[font=\scriptsize,
  ours/.style={green!60!black,very thick},
  pctexact/.style={purple!80!black,very thick,densely dotted},
  pctupper/.style={purple!80!black,very thick,dashed},
  grid/.style={gray!18},
  legendtext/.style={anchor=west,text=black}]
  % ---------- (a) full rank ----------
  \begin{scope}[x=1.45cm,y=7cm]
    \draw (0,0) -- (3,0);
    \draw (0,0) -- (0,0.5);
    \node[rotate=90] at (-0.5,0.25) {fidelity};
    \node at (1.5,-0.085) {gain \(\gamma\)};
    \node[font=\small] at (1.5,0.57)
      {\textup{(a)} Full rank: \(p=(3/5,2/5)\)};

    \foreach \x/\lab in {0/1,1/2,2/3,3/4}{
      \draw (\x,0.007) -- (\x,-0.007)
        node[below=1pt] {\(\lab\)};
    }
    \foreach \y/\lab in {0/{0.5},0.1/{0.6},0.2/{0.7},
                            0.3/{0.8},0.4/{0.9},0.5/{1.0}}{
      \draw (0.012,\y) -- (-0.012,\y)
        node[left=2pt] {\(\lab\)};
      \draw[grid] (0,\y) -- (3,\y);
    }

    \draw[ours]
      plot[smooth] coordinates {
        (0,0.5000) (0.25,0.4881) (0.5,0.4624) (0.75,0.4314)
        (1,0.3988) (1.25,0.3665) (1.5,0.3354) (1.75,0.3058)
        (2,0.2780) (2.25,0.2519) (2.5,0.2274) (2.75,0.2045)
        (3,0.1830)
      };
    \draw[pctexact]
      plot[smooth] coordinates {
        (0,0.5000) (0.25,0.4693) (0.5,0.4143) (0.75,0.3572)
        (1,0.3041) (1.25,0.2562) (1.5,0.2135) (1.75,0.1754)
        (2,0.1414) (2.25,0.1109) (2.5,0.0834) (2.75,0.0585)
        (3,0.0359)
      };

    \draw[ours] (0.15,0.095) -- (0.45,0.095);
    \node[legendtext] at (0.5,0.095)
      {our cloner: \(F_{\mathrm{univ}}\)};
    \draw[pctexact] (0.15,0.04) -- (0.45,0.04);
    \node[legendtext] at (0.5,0.04)
      {PCT: \(F_{\mathrm{PCT}}\)};
  \end{scope}

  % ---------- (b) projector-state fidelities; y=1+log_2(F)/5 ----------
  \begin{scope}[xshift=8.4cm,x=1.45cm,y=3.5cm]
    \draw (0,0) -- (3,0);
    \draw (0,0) -- (0,1.0);
    \node[rotate=90] at (-0.85,0.50) {fidelity (log scale)};
    \node at (1.5,-0.17) {gain \(\gamma\)};
    \node[font=\small] at (1.5,1.14)
      {\textup{(b)} Projectors: \(d=4\), \(r=2\)};

    \foreach \x/\lab in {0/1,1/2,2/3,3/4}{
      \draw (\x,0.014) -- (\x,-0.014)
        node[below=1pt] {\(\lab\)};
    }
    \foreach \y/\lab in {0/{0.03125},0.2/{0.0625},0.4/{0.125},
                            0.6/{0.25},0.8/{0.5},1.0/{1.0}}{
      \draw (0.012,\y) -- (-0.012,\y)
        node[left=2pt] {\(\lab\)};
      \draw[grid] (0,\y) -- (3,\y);
    }

    \draw[ours]
      plot[smooth] coordinates {
        (0,1.000000) (0.25,0.871229) (0.5,0.766015)
        (0.75,0.677058) (1,0.600000) (1.25,0.532030)
        (1.5,0.471229) (1.75,0.416227) (2,0.366015)
        (2.25,0.319824) (2.5,0.277058) (2.75,0.237244)
        (3,0.200000)
      };
    \draw[pctupper]
      plot[smooth] coordinates {
        (0,1.000000) (0.25,0.868284) (0.5,0.757519)
        (0.75,0.662419) (1,0.579248) (1.25,0.505405)
        (1.5,0.439036) (1.75,0.378780) (2,0.323615)
        (2.25,0.272752) (2.5,0.225571) (2.75,0.181577)
        (3,0.140368)
      };

    \draw[ours] (1.30,0.90) -- (1.60,0.90);
    \node[legendtext] at (1.65,0.90) {our projector cloner};
    \draw[pctupper] (1.30,0.79) -- (1.60,0.79);
    \node[legendtext] at (1.65,0.79) {PCT upper bound};
  \end{scope}
\end{tikzpicture}
\caption{\textbf{Fixed-gain comparison with PCT}.
  \textup{(a)}~For \(p=(3/5,2/5)\), both curves are exact asymptotic
  fidelities for protocols without the spectrum
  (\cref{thm:unknown-optimum,thm:pct-comparison}).
  \textup{(b)}~For rank-two projectors in \(d=4\), the solid curve is
  our cloner's exact fidelity \(\gamma^{-2}\). The dashed curve is the
  PCT upper bound \(\gamma^{-2}(\sqrt{2\gamma-1}/\gamma)^{1/2}\)
  (\cref{thm:grassmann,thm:pct-comparison}).  The gap is strict for
  every simple full-rank spectrum and every \(1<r<d\).}
\label{fig:pct-comparisons}
\end{figure}

\subsection{Scope and limitations}
\label{sec:remarks}

Apart from flat projector states, our theorems require a simple full-rank
spectrum.  We do not treat nonflat rank-deficient spectra, spectral
degeneracies, or neighborhoods in which the rank changes.  The worst-case
minimax value over all states and finite-copy optimality of our channels
also remain open.  Our asymptotic estimates hold at fixed gain
\(\gamma>1\). Expected extensions to degenerate and rank-deficient spectra, and the
noncommuting limits at degeneracies, are discussed in \cref{sec:conjectures}.

\subsection{Earlier and related work}\label{sec:related-work}

\paragraph{Pure-state cloning.}
Universal qubit cloners
\cite{BuzekHillery1996,GisinMassar1997,
BrussDiVincenzoEkertFuchsMacchiavelloSmolin1998} were followed by Werner's
solution of global universal cloning in arbitrary dimension
\cite{Werner1998} and Keyl--Werner's optimality result for individual-clone
tests \cite{KeylWerner1999}.  State-dependent global cloning was developed
in \cite{BrussDiVincenzoEkertFuchsMacchiavelloSmolin1998,CheflesBarnett1999}.
The many-output limit for individual clones is tied to estimation
\cite{MassarPopescu1995,BrussEkertMacchiavello1998,BaeAcin2006,
ChiribellaDAriano2006}.  For joint fidelity, optimal measure-and-prepare
outputs need not be independent copies of an estimate
\cite{ChiribellaYang2014Optimal}.
This distinction also matters for coherent-state and phase-covariant cloning
\cite{CerfKrugerNavezWernerWolf2005,KimChitambar2022}: Cerf et al. prove
Gaussian optimality for joint fidelity, although non-Gaussian operations
can improve single-clone fidelity.  Reviews include
\cite{ScaraniIblisdirGisinAcin2005,Chiribella2011EstimationCloning,
FanWangJingYueShiZhangMu2014}.

\paragraph{Mixed-state cloning, broadcasting, and PCT.}
Exact broadcasting, and hence cloning, of a noncommuting family is
impossible \cite{BarnumCavesFuchsJozsaSchumacher1996,KalevHen2008}.
Approximate results include global-fidelity, relative-error, and
state-dependent bounds for finite ensembles
\cite{Rastegin2003Upper,Rastegin2002Relative,Rastegin2003StateDependent},
cloners for symmetric-subspace inputs and identical mixed qubits under
one-site shrinking-factor criteria \cite{Fan2003,FanLiuShi2007}, a
restricted mixed-qubit construction under averaged Hilbert--Schmidt loss
\cite{Kazakov2007}, and entropic limitations on simultaneous approximate
cloning and broadcasting \cite{LemmWilde2017}.
Cirac--Ekert--Macchiavello solved mixed-qubit purification and a
cloning/estimation problem under local fidelity
\cite{CiracEkertMacchiavello1999}.  Keyl--Werner studied asymptotic
purification rates under both one-site and joint-product tests
\cite{KeylWerner2001Purification}.  Multi-input broadcasting and
superbroadcasting also optimize marginals and permit correlations
\cite{DArianoMacchiavelloPerinotti2005,
BuscemiDArianoMacchiavelloPerinotti2005Optimal,
BuscemiDArianoMacchiavelloPerinotti2006,DangFan2007}. These criteria do not
determine fidelity with the joint product target.

Random purification \cite{TangWrightZhandry2025,
GirardiMeleLami2025} also reduces mixed-state tomography to pure-state
tomography \cite{PelecanosSpileckiTangWright2025}.  Combined with pure-state
cloning, it gives a global-fidelity guarantee for bounded-rank mixed states
\cite{LiTheilHarrowChuang2026}.  Concurrent work shows
that PCT attains the optimal sample complexity up to constants for squared
fidelity \(1-\epsilon\), uniformly over states of rank at most \(r\)
\cite{FanizzaGrinkoScharnhorstSpilecki2026,JeonSohnOh2026}.
These worst-case high-fidelity results use a different order of limits
from our exact fixed-gain values. See \cref{sec:small-error}.

\paragraph{Gaussian cloning and LAN.}
Gu{\c t}{\u a}--Matsumoto solved cloning of displaced thermal oscillator
states with known temperature under joint norm-distance loss, reducing it
to amplification and proving Gaussian optimality by stochastic ordering
\cite{GutaMatsumoto2006MixedGaussianCloning}.
Gu{\c t}{\u a}--Bowles--Adesso used LAN to derive trace-norm
teleportation benchmarks for i.i.d. qubits and displaced thermal states
\cite{GutaBowlesAdesso2010Teleportation}.
Bowles--Gu{\c t}{\u a}--Adesso used LAN for asymptotically optimal
mixed-qubit purification and dilution under global trace-norm loss
\cite{BowlesGutaAdesso2011Purification}. Related methods solve qudit channel
reversal \cite{BowlesGutaAdesso2012Reversal}.  These norm-distance results
do not determine our fidelity constants, separate spectral and orbital
costs, or finite-dimensional cloners.

The relevant LAN framework was developed in
\cite{GutaJencova2007LAN,GutaKahn2006LANQubits,
KahnGuta2009LANFiniteDimensional}.
Our fidelity amplification bound is attained by the least-noise amplifier
\cite{Caves1982LinearAmplifiers,GutaMatsumoto2006MixedGaussianCloning}.
Low-rank LAN has a different mode structure \cite{LahiryNussbaum2024}. We
treat the boundary directly only for flat projector states.

\subsection{Organization}

\Cref{sec:achievability} proves achievability for the Schur--Cartan cloners,
using the sector and label estimates of
Appendices~\ref{app:sector-fidelity} and~\ref{app:young-rounding}.
\Cref{sec:gaussian-LAN} proves full-rank optimality via LAN and the Gaussian
converse of Appendix~\ref{app:gaussian-converse}.
\Cref{sec:grassmann} treats projector states independently.
\Cref{sec:pct-comparison} compares our cloners with PCT. Its calculations
are in Appendix~\ref{app:pct-comparison}.
\Cref{sec:discussions} discusses spectral mixedness and sample complexity,
then states conjectures and open problems.

\paragraph{Acknowledgments.}
We thank Patrick Hayden and Henry Purcell for early collaborations on
mixed qubit cloning.  JW thanks Daniel Bump and Cynthia Yan for
discussions on covariant maps.  JF acknowledges support from ARO under
Grant No.~W911NF261A327.

\section*{Glossary of notation}
\addcontentsline{toc}{section}{Glossary of notation}

We use \(\U(d)\) for the unitary group and \(\cT_1(\cH)\) for
trace-class operators.  The dimension \(d\) is fixed,
\(\gamma_n=m_n/n\to\gamma>1\) gives the gain and its limit, \(N\) is
a generic sample size, and \(r_d=\binom d2\).  Fidelity is unsquared:
\(F(\rho,\sigma)=\|\sqrt\rho\sqrt\sigma\|_1\).  Unless stated otherwise,
limits are as \(n\to\infty\) with \(d,\gamma\) fixed, and \(o(1)\) is
uniform only over the parameters specified in its statement.  We usually
say \emph{Young label} for a partition \(\lambda\vdash N\), identified with
its Young diagram. \emph{Partition} and \emph{Young diagram} refer to the
same label.

\begingroup
\small
\setlength{\LTpre}{6pt}
\setlength{\LTpost}{6pt}
\renewcommand{\arraystretch}{1.42}
\begin{longtable}{@{}>{\raggedright\arraybackslash}p{.28\linewidth}>{\raggedright\arraybackslash}p{\dimexpr.72\linewidth-2\tabcolsep\relax}@{}}
\textbf{Symbols} & \textbf{Meaning} \\ \hline
\endhead
\(\mathsf A_d,\Delta_d^\circ\); \(K\Subset\Delta_d^\circ\)
& Trace-one hyperplane, simple positive spectra, and compact spectral set.
  See \eqref{eq:simple-spectrum-family} and
  \cref{thm:unknown-optimum}. \\[3pt]
\(\rho_p,\rho_{p,g}\); \(P\in\Gr(r,d)\)
& Diagonal and rotated states, also for repeated positive eigenvalues.
  \(P/r\) is the rank-\(r\) projector state.  See
  \cref{sec:main-results,thm:grassmann}. \\[3pt]
\(q_{ij},f_\gamma(q_{ij})\)
& Eigenvalue ratio and one-mode fidelity factor. See
  \eqref{eq:pairwise-fidelity-factor}. \\[3pt]
\(\Forb,\Fcl,\Funiv\)
& Known-spectrum fidelity, extra factor for unknown spectra, and their
  product. See
  \eqref{eq:main-values}. \\[3pt]
\(\mathsf Y_N\); \(\cH_\lambda,\cS_\lambda\)
& Young labels and their representation spaces. See
  \eqref{eq:schur-weyl-decomposition}. \\[3pt]
\(P_{N,p},\rho_{p,\lambda}\); \(\mathsf T_N(p)\)
& Young-label law, conditional sector state, and typical-label set. See
  \eqref{eq:schur-state} and \eqref{eq:young-typical-set}. \\[3pt]
\(\mathsf K_n^p,\mathsf K_n^{\mathrm{rnd}},\mathsf K_n^{\mathrm{proj}}\); \(\mathcal Q_n[\mathsf K_n]\)
& The three label rules and their Schur--Cartan lift. See
  \cref{sec:cartan-sector-lifting,sec:grassmann-achievability}. \\[3pt]
\(\cC_{\mu,\nu},\cE_{\mu,\nu}\)
& Sector channel and its extension to all label pairs. See
  \eqref{eq:sector-channel-formula} and
  \eqref{eq:sector-channel-extension}. \\[3pt]
\(\cM_n\); \(\cC_n^p,\cC_n^{\mathrm{univ}},\cC_n^{\mathrm{proj}}\)
& Arbitrary cloner and our three constructions. See
  \eqref{eq:known-cloner-definition}, \eqref{eq:universal-cloner-definition},
  and \eqref{eq:projector-cloner}. \\[3pt]
\(\mathsf H\); \(h,z\)
& Eigenvalue-change space and local eigenvalue/eigenbasis coordinates. See
  \cref{sec:gaussian-amplification}. \\[3pt]
\(Y_p(z),g_{n,z}^{(p)}\); \(\Omega_L\)
& Generator and unitary for a small eigenbasis rotation, bounded parameter
  box.  See
  \cref{sec:gaussian-amplification}. \\[3pt]
\(\cF,\cF_{\mathrm{orb}}\); \(\widehat N\)
& One- and \(r_d\)-oscillator spaces, and excitation-counting operator. See
  \cref{sec:gaussian-amplification}. \\[3pt]
\(\tau_p\); \(\tau_p^{(\gamma)}\)
& Thermal state and its amplified version. See
  \eqref{eq:sector-thermal-products}. \\[3pt]
\(\Phi_z^{(p)},\Psi_z^{(p,\gamma)}\)
& Gaussian input and ideal target for eigenbasis changes. See
  \eqref{eq:gaussian-amplification-model}. \\[3pt]
\(\Theta_{h,z}^{\mathrm{in}},\Theta_{h,z}^{\mathrm{tar}}\)
& Gaussian input and ideal target including eigenvalue changes. See
  \eqref{eq:full-gaussian-model}.
\end{longtable}
\endgroup
Symbols confined to one proof are defined there.

\section{Cloner construction and achievability}\label{sec:achievability}

The known- and unknown-spectrum protocols share the architecture in
\cref{fig:schur-cartan-cloner}: a transition rule on Young labels lifts to a
covariant quantum channel.  We construct the two rules and combine their
label estimates with a common sector analysis to evaluate the fidelities
in \cref{sec:achievability-completion}.

Let $\mathsf Y_N=\{\lambda\vdash N:\ell(\lambda)\le d\}$.  Schur--Weyl duality gives
\begin{equation}\label{eq:schur-weyl-decomposition}
  (\C^d)^{\ot N}
  \cong
  \bigoplus_{\lambda\in\mathsf Y_N}\cH_\lambda\ot\cS_\lambda,
\end{equation}
where \(\cH_\lambda\) is the irreducible \(\U(d)\)-module of highest
weight \(\lambda\) and \(\cS_\lambda\) is the corresponding Specht module.
For \(\rho_p=\diag(p_1,\ldots,p_d)\),
\begin{equation}\label{eq:schur-state}
  \rho_p^{\ot N}
  =\bigoplus_{\lambda\in\mathsf Y_N}
  P_{N,p}(\lambda)\,
  \rho_{p,\lambda}\ot\frac{I_{\cS_\lambda}}{\dim\cS_\lambda},
  \qquad
  P_{N,p}(\lambda)=\dim\cS_\lambda\,s_\lambda(p),
  \quad
  \rho_{p,\lambda}=\frac{\pi_\lambda(\rho_p)}{s_\lambda(p)}.
\end{equation}
Here \(\pi_\lambda\) is the irreducible polynomial representation of highest
weight \(\lambda\), acting on \(\cH_\lambda\), and
\(s_\lambda(p)=\Tr\pi_\lambda(\rho_p)\) is the corresponding Schur
polynomial.  For \(\rho_{p,g}=g\rho_p g^*\), replace \(\rho_{p,\lambda}\) by
\(\rho_{p,g,\lambda}=U_\lambda(g)\rho_{p,\lambda}U_\lambda(g)^*\), where
\(U_\lambda(g)=\pi_\lambda(g)\).  The
Young law \(P_{N,p}\) depends on \(p\), but not on \(g\).

Define the typical Young-label set
\begin{equation}\label{eq:young-typical-set}
  \mathsf T_N(p)
  =\left\{\lambda\in\mathsf Y_N:
  \left\|\frac\lambda N-p\right\|_\infty\le N^{-1/3}\right\}.
\end{equation}

\begin{lemma}[Uniform Young concentration]\label{lem:young-concentration}
For all sufficiently large \(N\), depending only on \(d\),
\begin{equation}\label{eq:uniform-young-concentration}
  \sup_{\substack{p_1\ge\cdots\ge p_d>0\\\sum_i p_i=1}}
  P_{N,p}(\mathsf T_N(p)^c)
  \le e^{-N^{1/3}/4}.
\end{equation}
Equal positive eigenvalues are allowed.
\end{lemma}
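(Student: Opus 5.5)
The plan is to prove a pointwise large-deviation bound for the Schur--Weyl (Young-label) law that is uniform in \(p\), and then sum it over atypical labels. Concretely, for every ordered positive spectrum \(p\) and every \(\lambda\in\mathsf Y_N\), with \(\bar\lambda=\lambda/N\), I aim for the Keyl--Werner type estimate
\begin{equation*}
  P_{N,p}(\lambda)\le (N+1)^{d(d-1)/2}\exp\bigl(-N\,D(\bar\lambda\,\|\,p)\bigr),
\end{equation*}
where \(D\) is the classical relative entropy. The polynomial prefactor depends only on \(d\), so it will be harmless.

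The pointwise bound follows from three elementary estimates.
\begin{enumerate}
\item Since \(\dim\cS_\lambda\) counts standard tableaux and each one is a word with content \(\lambda\), we get \(\dim\cS_\lambda\le N!/\prod_i\lambda_i!\).
\item Write \(s_\lambda(p)=\sum_{T}p^{\mathrm{wt}(T)}\), a sum over semistandard tableaux \(T\). Every weight \(\alpha\) occurring is majorized by \(\lambda\). I then claim \(p^\alpha\le p^\lambda\) whenever \(p_1\ge\cdots\ge p_d>0\) and \(\alpha\prec\lambda\). This follows by Abel summation: \(\sum_i(\lambda_i-\alpha_i)\log p_i=\sum_k\bigl(\sum_{i\le k}(\lambda_i-\alpha_i)\bigr)(\log p_k-\log p_{k+1})\ge0\), since all partial sums are nonnegative and \(\log p\) is nonincreasing. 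Hence \(s_\lambda(p)\le\dim\cH_\lambda\,p^\lambda\le (N+1)^{d(d-1)/2}p^\lambda\), using Weyl's dimension formula, which gives a product of \(\binom d2\) factors each at most \(N+1\).
\item The classical type-class bound \(\frac{N!}{\prod_i\lambda_i!}\,p^\lambda\le e^{-N D(\bar\lambda\|p)}\) holds because the multinomial probability of a type is at most \(e^{-ND}\).
\end{enumerate}
Multiplying these three estimates gives the displayed pointwise bound. Equal eigenvalues cause no difficulty anywhere.

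For \(\lambda\notin\mathsf T_N(p)\), Pinsker's inequality gives \(D(\bar\lambda\|p)\ge\tfrac12\|\bar\lambda-p\|_1^2\ge\tfrac12\|\bar\lambda-p\|_\infty^2>\tfrac12N^{-2/3}\). The number of labels satisfies \(|\mathsf Y_N|\le(N+1)^d\). Summing over atypical labels therefore yields
\begin{equation*}
  P_{N,p}(\mathsf T_N(p)^c)\le (N+1)^{d+d(d-1)/2}e^{-N^{1/3}/2},
\end{equation*}
uniformly in \(p\). This is at most \(e^{-N^{1/3}/4}\) once \((d+\binom d2)\log(N+1)\le N^{1/3}/4\), a threshold that depends only on \(d\).

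The only step that is not routine is the dominance inequality \(p^\alpha\le p^\lambda\). It requires that \(p\) be sorted decreasingly, which matches the ordering assumed in the statement and the fact that \(\lambda\) is a partition. Making it precise needs the standard fact that tableau weights lie in the permutohedron of \(\lambda\), i.e.\ are majorized by \(\lambda\). I would cite or briefly justify this fact: in each tableau, entries \(\le k\) occupy only the first \(k\) rows, so \(\sum_{i\le k}\alpha_i\le\sum_{i\le k}\lambda_i\).
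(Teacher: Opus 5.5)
Your proof is correct and follows the paper's argument essentially step for step. You derive the same pointwise bound $P_{N,p}(\lambda)\le(N+1)^{d(d-1)/2}e^{-ND(\lambda/N\|p)}$ from $\dim\cS_\lambda\le\binom{N}{\lambda_1,\ldots,\lambda_d}$, $s_\lambda(p)\le\dim\cH_\lambda\,p^\lambda$ and the type-class bound, then finish with Pinsker and a polynomial count of labels. Your majorization/Abel-summation argument for $p^\alpha\le p^\lambda$ restates the paper's observation that $\lambda-\xi$ is a nonnegative combination of positive roots. Your cruder count $|\mathsf Y_N|\le(N+1)^d$, versus the paper's $(N+1)^{d-1}$, only enlarges the polynomial prefactor, which is harmless.
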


\begin{proof}
The Weyl dimension formula gives
\(\dim\cH_\lambda\le(N+1)^{d(d-1)/2}\).  Every weight \(\xi\) of
\(\cH_\lambda\) has \(\lambda-\xi\) equal to a nonnegative integral
combination of positive roots.  Since \(p_j/p_i\le1\) for \(i<j\), this gives
\(p^\xi\le p^\lambda\).  Hence
\(s_\lambda(p)\le \dim\cH_\lambda\,p^\lambda\).
The standard bounds
\(\dim\cS_\lambda\le\binom{N}{\lambda_1,\ldots,\lambda_d}\) and
\(\binom{N}{\lambda_1,\ldots,\lambda_d}p^\lambda
\le \exp[-ND(\lambda/N\|p)]\) give
\begin{equation}\label{eq:finite-Young-large-deviation}
  P_{N,p}(\lambda)
  \le (N+1)^{d(d-1)/2}\exp[-ND(\lambda/N\|p)].
\end{equation}
Here \(D\) is classical relative entropy with natural logarithms
\cite{KeylWerner2001Spectrum}.

There are at most \((N+1)^{d-1}\) partitions of \(N\) with at most \(d\)
rows.  If \(\lambda\notin\mathsf T_N(p)\), then
\(\|\lambda/N-p\|_1>N^{-1/3}\), and Pinsker's inequality gives
\(D(\lambda/N\|p)\ge\tfrac12N^{-2/3}\).  Therefore
\[
  P_{N,p}(\mathsf T_N(p)^c)
  \le (N+1)^{d(d-1)/2+d-1}e^{-N^{1/3}/2}.
\]
For all sufficiently large \(N\), the polynomial factor is at most
\(e^{N^{1/3}/4}\), which proves \eqref{eq:uniform-young-concentration}.
\end{proof}

\subsection{Cartan-sector channel and label rules}
\label{sec:cartan-sector-lifting}

Let \(\lambda,\theta\) be dominant weights, with unit highest-weight vectors
\(v_\lambda,v_\theta\).  The Cartan component \(\cH_{\lambda+\theta}\)
occurs with multiplicity one in \(\cH_\lambda\ot\cH_\theta\).  Let
\begin{equation}\label{eq:cartan-inclusion}
  V_{\lambda,\theta}:\cH_{\lambda+\theta}
  \longrightarrow\cH_\lambda\ot\cH_\theta,
  \qquad
  V_{\lambda,\theta}v_{\lambda+\theta}=v_\lambda\ot v_\theta,
\end{equation}
be the normalized intertwining isometry.  For \(\mu\in\mathsf Y_n\) and
\(\nu\in\mathsf Y_m\), write \(\nu-\mu\vdash m-n\) when the coordinatewise
difference \(\nu-\mu\) is a partition of \(m-n\), and call such a pair
\((\mu,\nu)\) \emph{compatible}.  If this holds, let
\(\omega=\nu-\mu\) and define
\begin{equation}\label{eq:sector-channel-formula}
  \cC_{\mu,\nu}(X)
  =\frac{\dim\cH_\mu}{\dim\cH_\nu}
  V_{\mu,\omega}^*(X\ot I_{\cH_\omega})V_{\mu,\omega}.
\end{equation}
The partial trace
\(\Tr_{\cH_\omega}(V_{\mu,\omega}V_{\mu,\omega}^*)\) commutes with the
irreducible \(\U(d)\)-action on \(\cH_\mu\).  Hence Schur's lemma and a trace
comparison give
\(
  \Tr_{\cH_\omega}(V_{\mu,\omega}V_{\mu,\omega}^*)
  =\frac{\dim\cH_\nu}{\dim\cH_\mu}I_{\cH_\mu}
\).
The map in \eqref{eq:sector-channel-formula} is completely positive.
The identity above makes it trace preserving, and equivariance of
\(V_{\mu,\omega}\) gives \(\U(d)\)-covariance.

When \(\mu=(n,0,\ldots,0)\) and \(\nu=(m,0,\ldots,0)\), the three
irreps in \eqref{eq:sector-channel-formula} are symmetric,
and the formula is Werner's universal pure-state \(n\)-to-\(m\) cloner
\cite{Werner1998}.  Thus the Cartan-sector channel extends Werner's
construction from symmetric tensors to arbitrary compatible Schur sectors.

For every pair \((\mu,\nu)\in\mathsf Y_n\times\mathsf Y_m\), define a
covariant channel \(\cE_{\mu,\nu}:\cT_1(\cH_\mu)\to\cT_1(\cH_\nu)\) by
\begin{equation}\label{eq:sector-channel-extension}
  \cE_{\mu,\nu}(X)=
  \begin{cases}
    \cC_{\mu,\nu}(X),&\nu-\mu\vdash m-n,\\[1mm]
    \Tr(X)I_{\cH_\nu}/\dim\cH_\nu,&\text{otherwise}.
  \end{cases}
\end{equation}
Let \(\mathsf K_n:\mathsf Y_n\rightsquigarrow\mathsf Y_{m_n}\) be any transition
rule: for each input label \(\mu\), the output label \(\nu\) is selected
with probability \(\mathsf K_n(\nu\mid\mu)\ge0\), where
\(\sum_\nu \mathsf K_n(\nu\mid\mu)=1\).  Its Schur--Cartan
lifting \(\mathcal Q_n[\mathsf K_n]\) acts as follows: apply the Schur transform, measure
\(\mu\), discard the input Specht factor, sample \(\nu\) from
\(\mathsf K_n(\cdot\mid\mu)\), apply \(\cE_{\mu,\nu}\), adjoin
\(I_{\cS_\nu}/\dim\cS_\nu\), and apply the inverse Schur transform.  Each
step is completely positive and trace preserving, so their composition is a
channel.  Because the label measurement and sampling are invariant and every
map \(\cE_{\mu,\nu}\) is covariant, \(\mathcal Q_n[\mathsf K_n]\) is
\(\U(d)\)-covariant.

\begin{proposition}[Fidelity of a lifted transition rule]
\label{prop:lifted-kernel-bound}
Fix \(p\in\Delta_d^\circ\) and \(g\in\U(d)\).  Let
\begin{equation}\label{eq:lifted-label-laws}
  \mathsf J_{n,p}(\mu,\nu)=P_{n,p}(\mu)\mathsf K_n(\nu\mid\mu),
  \qquad
  Q_{m_n,p}(\nu)=\sum_\mu\mathsf J_{n,p}(\mu,\nu),
\end{equation}
so \(\mathsf J_{n,p}\) is the joint input-output label law and \(Q_{m_n,p}\)
its output marginal.  For \(\nu\in\mathsf Y_{m_n}\), let
\begin{equation}\label{eq:lifted-output-block}
  \Xi_{n,p,\nu}
  =\sum_{\mu\in\mathsf Y_n}\mathsf J_{n,p}(\mu,\nu)\,\cE_{\mu,\nu}(\rho_{p,\mu})
\end{equation}
be the unnormalized irreducible output block, so that
\(\Tr\Xi_{n,p,\nu}=Q_{m_n,p}(\nu)\).
The lifted channel has fidelity
\begin{equation}\label{eq:lifted-exact-block-fidelity}
  F\!\left(
    \mathcal Q_n[\mathsf K_n](\rho_{p,g}^{\ot n}),
    \rho_{p,g}^{\ot m_n}
  \right)
  =\sum_{\nu\in\mathsf Y_{m_n}}
   \sqrt{P_{m_n,p}(\nu)}\,F(\Xi_{n,p,\nu},\rho_{p,\nu}).
\end{equation}
\end{proposition}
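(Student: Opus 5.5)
First reduce to \(g=\mathbf 1\).  Covariance of \(\mathcal Q_n[\mathsf K_n]\) gives \(\mathcal Q_n[\mathsf K_n](\rho_{p,g}^{\ot n})=g^{\ot m_n}\,\mathcal Q_n[\mathsf K_n](\rho_p^{\ot n})\,g^{*\ot m_n}\).  The target \(\rho_{p,g}^{\ot m_n}\) is conjugated by the same unitary, and fidelity is unitarily invariant.  So it suffices to treat \(\rho_p\), which is also why the right-hand side of \eqref{eq:lifted-exact-block-fidelity} does not depend on \(g\).

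Next, write the output explicitly in the Schur basis \eqref{eq:schur-weyl-decomposition} at size \(m_n\).  By \eqref{eq:schur-state}, measuring the input label yields \(\mu\) with probability \(P_{n,p}(\mu)\) and post-measurement sector state \(\rho_{p,\mu}\).  Discarding the Specht factor, sampling \(\nu\), applying \(\cE_{\mu,\nu}\) and adjoining the maximally mixed Specht state then gives
\[
  \mathcal Q_n[\mathsf K_n](\rho_p^{\ot n})
  =\bigoplus_{\nu\in\mathsf Y_{m_n}}\Xi_{n,p,\nu}\ot\frac{I_{\cS_\nu}}{\dim\cS_\nu},
\]
with \(\Xi_{n,p,\nu}\) as in \eqref{eq:lifted-output-block}.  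Since each \(\cE_{\mu,\nu}\) is trace preserving, \(\Tr\Xi_{n,p,\nu}=Q_{m_n,p}(\nu)\).  The target has the same block form, with blocks \(P_{m_n,p}(\nu)\,\rho_{p,\nu}\ot I_{\cS_\nu}/\dim\cS_\nu\).

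Now compute the fidelity block by block.  Both operators are block diagonal for the same orthogonal decomposition, so \(\sqrt A\sqrt C\) is block diagonal and its trace norm is the sum of the blockwise trace norms.  Within block \(\nu\), the square roots factor over the tensor product: \(\sqrt{X\ot I/k}=\sqrt X\ot I/\sqrt k\), where \(k=\dim\cS_\nu\).  This gives
\[
  \bigl\|\sqrt{\Xi}\sqrt{P\rho}\ot I/k\bigr\|_1=\sqrt{P}\,\|\sqrt\Xi\sqrt\rho\|_1\,\|I/k\|_1
  =\sqrt{P_{m_n,p}(\nu)}\,F(\Xi_{n,p,\nu},\rho_{p,\nu}).
\]
Summing over \(\nu\) yields \eqref{eq:lifted-exact-block-fidelity}.

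There is no real obstacle here.  The only points needing care are, first, that the Schur-transform identification is the same unitary on both sides, so the blocks align.  Second, the derivation uses unnormalized \(F\) as defined in \eqref{eq:root-fidelity-definition}, and for positive scalars \(c\) it is homogeneous: \(F(cA,C)=\sqrt c\,F(A,C)\).
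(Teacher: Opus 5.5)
Your proposal is correct and follows the paper's proof: reduce to \(g=I\) by covariance and unitary invariance, write the output and the target in the common output Schur block form, then use additivity of \(F\) over orthogonal direct sums, multiplicativity over tensor products with \(F(I_{\cS_\nu}/\dim\cS_\nu,I_{\cS_\nu}/\dim\cS_\nu)=1\), and the homogeneity \(F(A,cB)=\sqrt c\,F(A,B)\). Your explicit blockwise computation of \(\sqrt{\Xi_{n,p,\nu}}\sqrt{P_{m_n,p}(\nu)\rho_{p,\nu}}\ot I_{\cS_\nu}/\dim\cS_\nu\) just spells out those three properties.
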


\begin{proof}
By covariance of \(\mathcal Q_n[\mathsf K_n]\) and unitary invariance of fidelity,
we may take \(g=I\).  Under the output Schur decomposition, the actual and
target states are
\[
 \bigoplus_{\nu\in\mathsf Y_{m_n}}
 \Xi_{n,p,\nu}\ot\frac{I_{\cS_\nu}}{\dim\cS_\nu},
 \qquad
 \bigoplus_{\nu\in\mathsf Y_{m_n}}
 P_{m_n,p}(\nu)\rho_{p,\nu}
 \ot\frac{I_{\cS_\nu}}{\dim\cS_\nu}.
\]
Fidelity is additive over orthogonal direct sums, multiplicative over
tensor products, and satisfies \(F(A,cB)=\sqrt c\,F(A,B)\) for \(c\ge0\).
Since \(F(I_{\cS_\nu}/\dim\cS_\nu,I_{\cS_\nu}/\dim\cS_\nu)=1\), this proves
\eqref{eq:lifted-exact-block-fidelity}.
\end{proof}

\paragraph{Known spectrum: exact target-label sampling.}

Fix \(p\in\Delta_d^\circ\).  Since \(p\) is known, we can sample the target
Young-label law \(P_{m_n,p}\) directly, independently of the input label
\(\mu\).  Typical input and output labels are compatible, so the
Cartan-sector channel applies.
Define
\begin{equation}\label{eq:known-target-label-kernel}
  \mathsf K_n^p(\nu\mid\mu)=P_{m_n,p}(\nu).
\end{equation}
Set
\begin{equation}\label{eq:known-cloner-definition}
  \cC_n^p:=\mathcal Q_n[\mathsf K_n^p].
\end{equation}
Both cases of \eqref{eq:sector-channel-extension} output into the sampled
\(\nu\)-sector.  Thus the output Young-label law on
\(\rho_{p,g}^{\ot n}\) is
\[
  Q_{m_n,p}(\nu)
  =\sum_\mu P_{n,p}(\mu)\mathsf K_n^p(\nu\mid\mu)
  =P_{m_n,p}(\nu)\sum_\mu P_{n,p}(\mu)
  =P_{m_n,p}(\nu)
\]
for every \(n\) and \(g\).

\paragraph{Unknown spectrum: randomized rounding.}

For a typical input label \(\mu\in\mathsf Y_n\), we want an output label
near \(\gamma_n\mu\), where \(\gamma_n=m_n/n\): the dilation has total size
\(m_n\) and preserves the empirical spectrum, since
\(\gamma_n\mu/m_n=\mu/n\).  Its coordinates need not be integers, so we
round it randomly.

For \(N\in\mathbb N\), let
\[
  \mathsf L_N=\left\{x\in\mathbb Z^d:\sum_i x_i=N\right\}.
\]
Given \(\mu\in\mathsf Y_n\), draw independent variables
\(y_1,\ldots,y_{d-1}\), each uniform on \([-1/2,1/2)\), and set
\begin{equation}\label{eq:randomized-rounding}
  \widetilde\nu_i
  =\left\lfloor\gamma_n(\mu_i+y_i)+\frac12\right\rfloor
  \quad(1\le i<d),
  \qquad
  \widetilde\nu_d=m_n-\sum_{i<d}\widetilde\nu_i.
\end{equation}
Here \(\lfloor x+1/2\rfloor\) rounds \(x\) to the nearest integer.
Thus \(\widetilde\nu\in\mathsf L_{m_n}\).  Let
\(\mathsf R_n(\widetilde\nu\mid\mu)\) denote its distribution.
The shifts spread mass across the output lattice.  To see why this matters,
suppose \(\gamma_n=\gamma\) is an integer and instead set \(\nu=\gamma\mu\).
This rule restricts the output to a sublattice of index
\(\gamma^{d-1}\) in \(\mathsf L_{m_n}\).
The Young local limit gives label affinity
\(\gamma^{-(d-1)/2}\Fcl(\gamma,d)\), rather than
\(\Fcl(\gamma,d)\).\footnote{For \(p=(3/5,2/5)\) and \(\gamma=2\),
the undithered and rounded label affinities are \(0.6866\) and \(0.9710\),
respectively.  After the common sector factor, the corresponding cloning
fidelities are \(0.6355\) and \(0.8988\).}
Retain the rounded label when compatible, and otherwise
use the one-row partition \((m_n,0,\ldots,0)\) as the fallback:
\begin{equation}\label{eq:rounding-kernel}
  \nu=
  \begin{cases}
    \widetilde\nu,&\widetilde\nu-\mu\vdash m_n-n,\\[1mm]
    (m_n,0,\ldots,0),&\text{otherwise}.
  \end{cases}
\end{equation}
The conditional law of \(\nu\) defines a transition rule
\(\mathsf K_n^{\mathrm{rnd}}:\mathsf Y_n\rightsquigarrow\mathsf Y_{m_n}\).
Define
\begin{equation}\label{eq:universal-cloner-definition}
  \cC_n^{\mathrm{univ}}:=\mathcal Q_n[\mathsf K_n^{\mathrm{rnd}}].
\end{equation}
Since the transition rule is independent of \(p\), this is a spectrum-independent,
\(\U(d)\)-covariant cloner.

\subsection{Sector and label fidelity estimates}
\label{sec:uniform-estimates-achievability}

The block fidelity formula \eqref{eq:lifted-exact-block-fidelity} leaves two
tasks: evaluate the conditional state fidelity within typical output
sectors, and compare the output Young-label law with the target law.  The
sector estimate must hold even when several input labels lead to the same
output label.  It applies to both cloners.  Only the unknown-spectrum rule
needs a separate label estimate.

\begin{proposition}[Uniform Cartan-sector fidelity]
\label{prop:sector-fidelity}
Let \(K\Subset\Delta_d^\circ\) and let \(\delta_n\ge0\) satisfy
\(\delta_n\to0\).  For all sufficiently large \(n\), every
\(p\in K\), \(\mu\in\mathsf Y_n\), and \(\nu\in\mathsf L_{m_n}\) satisfying
\[
  \left\|\frac\mu n-p\right\|_\infty\le\delta_n,
  \qquad
  \left\|\frac\nu{m_n}-p\right\|_\infty\le\delta_n
\]
also satisfies \(\nu-\mu\vdash m_n-n\).  In particular,
\(\nu\in\mathsf Y_{m_n}\).  Moreover,
\begin{equation}\label{eq:uniform-cartan-sector-fidelity}
  F\!\left(\cC_{\mu,\nu}(\rho_{p,\mu}),\rho_{p,\nu}\right)
  =\Forb(\gamma,p)+o(1)
\end{equation}
uniformly over all such \(p,\mu,\nu\).
More generally, for fixed \(p,\nu\) as above and any probability distribution
\((a_\mu)_\mu\) supported on input labels satisfying the same closeness
condition,
\begin{equation}\label{eq:sector-mixture-fidelity}
  F\!\left(
    \sum_\mu a_\mu\,
      \cC_{\mu,\nu}(\rho_{p,\mu}),
    \rho_{p,\nu}
  \right)
  =\Forb(\gamma,p)+o(1),
\end{equation}
uniformly over these distributions as well as \(p,\nu\).
\end{proposition}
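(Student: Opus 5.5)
The compatibility claim is elementary and I would dispose of it first. Write \(\omega=\nu-\mu\). Then
\[
  \omega_i-\omega_{i+1}
  =(m_n-n)(p_i-p_{i+1})+O(m_n\delta_n),
  \qquad
  \omega_d=(m_n-n)p_d+O(m_n\delta_n).
\]
Since \(m_n-n\sim(\gamma-1)n\) and \(K\) keeps both the gaps \(p_i-p_{i+1}\) and \(p_d\) bounded below, \(\omega\) is a partition of \(m_n-n\) for large \(n\), uniformly in \(K\).

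For the fidelity I would pass to a Holstein--Primakoff / Gaussian picture of each typical sector. Choose a basis of \(\cH_\lambda\) by lowering operators \(E_{ji}\), \(i<j\), acting on the highest-weight vector. Near highest weight \(\lambda\approx Np\), the rescaled root operators \(E_{ji}/\sqrt{N(p_i-p_j)}\) converge to independent creation operators on \(r_d=\binom d2\) modes. In this limit \(\rho_{p,\lambda}\propto\pi_\lambda(\rho_p)\) becomes the product thermal state \(\tau_p=\bigotimes_{i<j}(1-q_{ij})\sum_k q_{ij}^k\proj k\). This holds because the weight of a vector with \(k_{ij}\) excitations is suppressed by \(\prod q_{ij}^{k_{ij}}\), and this is exact on weights. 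The Cartan isometry \(V_{\mu,\omega}\) sends the lowered vector \(E^{(\nu)}_{ji}v_\nu\) to \(E^{(\mu)}_{ji}v_\mu\ot v_\omega+v_\mu\ot E^{(\omega)}_{ji}v_\omega\). In mode language this is a beam splitter with transmissivity \((\mu_i-\mu_j)/(\nu_i-\nu_j)\to1/\gamma\). Consequently \(\cC_{\mu,\nu}\), which is the adjoint of that isometry followed by tracing against the identity on \(\cH_\omega\), converges mode by mode to a phase-insensitive amplifier of gain \(\gamma\). Its input is the thermal state \(\tau_p\), and the environment \(\omega\)-modes carry the normalized identity weighting. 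I would then compute the fidelity of the amplified state with the thermal target \(\tau_{p}\) of the \(\nu\)-sector. Both are diagonal in number basis per mode, so this is a classical Hellinger affinity of two geometric-type laws. A direct summation should give \(f_\gamma(q_{ij})\) mode by mode, and the product over \(i<j\) gives \(\Forb(\gamma,p)\).

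To make this rigorous and uniform, I would truncate to states with at most \(L\) total excitations in every sector. The thermal tails beyond \(L\) have mass \(O(\max_{i<j}q_{ij}^{L})\), uniformly on \(K\) because \(\max q_{ij}<1\) there. On the truncated subspace I would prove convergence of matrix elements of \(\rho_{p,\mu}\), \(\rho_{p,\nu}\) and of the Kraus-like map \(\cC_{\mu,\nu}\) to their Gaussian counterparts. The error is \(O(L^{c}/n)\), uniformly in \(\|\mu/n-p\|_\infty,\|\nu/m_n-p\|_\infty\le\delta_n\). Here I would use explicit Gelfand--Tsetlin or PBW norm computations, in which the relevant norms are polynomials in \(\lambda_i-\lambda_j\). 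Continuity of fidelity in trace norm (Fuchs--van de Graaf) then yields \eqref{eq:uniform-cartan-sector-fidelity}, letting \(L\to\infty\) slowly. The \textbf{main obstacle} is exactly this step. The \(\omega\)-register is traced with the flat identity rather than a thermal state, so I must show that the part of \(\cH_\omega\) contributing to low-excitation output is effectively at zero temperature. Concretely, I would express the \(V_{\mu,\omega}^*(\cdot\ot I)V_{\mu,\omega}\) matrix elements through Clebsch--Gordan coefficients projected onto the Cartan component and control their ratio asymptotics. Alternatively, I would bound them using the commutation of \(\cC_{\mu,\nu}\) with the torus action, which reduces the problem to a finite set of weights.

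For the mixture statement \eqref{eq:sector-mixture-fidelity}, I would note that the limiting Gaussian object is independent of \(\mu\): each \(\cC_{\mu,\nu}(\rho_{p,\mu})\) lies within \(o(1)\) in trace norm, on the truncated space and hence overall, of one fixed state \(\sigma_{p,\nu}\) of the \(\nu\)-sector. This closeness is uniform in \(\mu\). Convexity of the trace norm then gives \(\|\sum_\mu a_\mu\cC_{\mu,\nu}(\rho_{p,\mu})-\sigma_{p,\nu}\|_1=o(1)\), and the same fidelity continuity argument concludes.
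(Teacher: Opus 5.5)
Your proposal is correct and follows essentially the paper's route. The shared ingredients are asymptotically orthonormal PBW vectors at bounded height, and the exact binomial (beam-splitter) expansion of \(V_{\mu,\omega}\) with \(s_{\alpha,n}=\Delta^\mu_\alpha/\Delta^\nu_\alpha\to1/\gamma\). The argument then runs through fixed-cutoff convergence of the output and target blocks to \(\tau_p^{(\gamma)}\) and \(\tau_p\), a uniform thermal tail bound, and a \(\mu\)-independent limit plus convexity for mixtures.

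The ``main obstacle'' you flag is not actually an obstacle once you use that expansion. By weight conservation, \(Ve^\nu_{\mathbf r}\) involves only \(\omega\)-vectors of height at most \(L\), so \(I_{\cH_\omega}\) contributes only their near-orthonormal inner products. In your amplifier picture, the idler is effectively the highest-weight vector \(v_\omega\), i.e.\ the vacuum, so no Clebsch--Gordan asymptotics are needed.

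The step you should make explicit instead is the uniform normalization \(p^\lambda/s_\lambda(p)\to\prod_{i<j}(1-q_{ij})\). Your PBW norm computations do not supply it, and the paper obtains it from the Weyl character formula.

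Also, the PBW errors are \(O_L(n^{-1/2})\) rather than \(O(L^c/n)\). This is harmless if you let \(n\to\infty\) before \(L\to\infty\).
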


The factor \(\Forb(\gamma,p)\) naturally appears as the fidelity of the
thermal states from \cref{sec:gaussian-amplification}:
\[
  \Forb(\gamma,p)=F(\tau_p^{(\gamma)},\tau_p).
\]
To see this in the sector calculation, let \(\alpha=e_i-e_j\) with
\(i<j\), \(q_\alpha=p_j/p_i\), and
\(q_\alpha^{(\gamma)}=1-(1-q_\alpha)/\gamma\).  On weight spaces reached
by a bounded number of lowerings from the highest weight \(\nu\), an
orthonormalized Poincar\'e--Birkhoff--Witt (PBW) basis makes \(\rho_{p,\nu}\) diagonal and
\(\cC_{\mu,\nu}(\rho_{p,\mu})\) asymptotically diagonal.  At a fixed
basis index \(\mathbf r=(r_\alpha)_\alpha\), their diagonal entries
converge to the corresponding number-basis entries:
\[
\begin{aligned}
  (\rho_{p,\nu})_{\mathbf r,\mathbf r}
  &\longrightarrow \prod_\alpha(1-q_\alpha)q_\alpha^{r_\alpha}
   =(\tau_p)_{\mathbf r,\mathbf r},\\
  (\cC_{\mu,\nu}(\rho_{p,\mu}))_{\mathbf r,\mathbf r}
  &\longrightarrow \prod_\alpha(1-q_\alpha^{(\gamma)})
   (q_\alpha^{(\gamma)})^{r_\alpha}
   =(\tau_p^{(\gamma)})_{\mathbf r,\mathbf r}.
\end{aligned}
\]
The powers of \(q_\alpha^{(\gamma)}\) in the output entries come from the
binomial expansion of the Cartan intertwiner: with
\(s_{\alpha,n}=(\mu_i-\mu_j)/(\nu_i-\nu_j)\to\gamma^{-1}\), its factor
\((1-s_{\alpha,n}+s_{\alpha,n}q_\alpha)^{r_\alpha}\) tends to
\((q_\alpha^{(\gamma)})^{r_\alpha}\).  A uniform tail bound extends this
fixed-weight comparison to the full fidelity in
\eqref{eq:uniform-cartan-sector-fidelity}.  See
Appendix~\ref{app:sector-fidelity}.

The second estimate concerns only \(\mathsf K_n^{\mathrm{rnd}}\).

\begin{proposition}[Randomized Young-label dilation]
\label{prop:young-rounding}
There is a constant \(c_{\mathrm{rnd}}<\infty\), depending only on \(d\) and
the fixed sequence \((\gamma_n)\), such that, for every compact
\(K\Subset\Delta_d^\circ\) and all sufficiently large \(n\), every
\(p\in K\) and \(\mu\in\mathsf T_n(p)\) satisfy
\begin{equation}\label{eq:rounding-support}
  \operatorname{supp}\mathsf K_n^{\mathrm{rnd}}(\,\cdot\mid\mu)
  \subseteq
  \left\{\nu\in\mathsf Y_{m_n}:
    \nu-\mu\vdash m_n-n,\quad
    \|\nu-\gamma_n\mu\|_\infty\le c_{\mathrm{rnd}}
  \right\}.
\end{equation}
For every such \(K\), the induced output-label law \(Q_{m_n,p}\) from
\eqref{eq:lifted-label-laws} with \(\mathsf K_n=\mathsf K_n^{\mathrm{rnd}}\),
\begin{equation}\label{eq:rounding-label-fidelity}
  F(Q_{m_n,p},P_{m_n,p})
  =\Fcl(\gamma,d)+o(1)
  \qquad\text{uniformly for }p\in K.
\end{equation}
\end{proposition}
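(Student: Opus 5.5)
The proof has two parts: a deterministic support statement, and a local-limit computation of the Hellinger affinity.

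\emph{Support.} Let \(\mu\in\mathsf T_n(p)\) with \(p\in K\). Every rounded coordinate in \eqref{eq:randomized-rounding} satisfies \(|\widetilde\nu_i-\gamma_n\mu_i|\le\gamma_n/2+1/2\) for \(i<d\). Since \(\sum_i\widetilde\nu_i=m_n=\sum_i\gamma_n\mu_i\), the last coordinate obeys \(|\widetilde\nu_d-\gamma_n\mu_d|\le(d-1)(\gamma_n+1)/2\). So \(c_{\mathrm{rnd}}=(d-1)(\sup_n\gamma_n+1)/2\) works.

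Compatibility of \(\widetilde\nu\) with \(\mu\) needs two things. First, \((\widetilde\nu-\mu)_i-(\widetilde\nu-\mu)_{i+1}\) must be nonnegative. This quantity equals \((\gamma_n-1)(\mu_i-\mu_{i+1})+O(1)\ge(\gamma_n-1)n\,\min_i(p_i-p_{i+1})/2-O(1)>0\), because \(K\) is compact in the open chamber and \(\mu/n\) is \(n^{-1/3}\)-close to \(p\). Second, \((\widetilde\nu-\mu)_d\ge(\gamma_n-1)\mu_d-O(1)>0\) for the same reason. Hence the fallback branch of \eqref{eq:rounding-kernel} never fires on \(\mathsf T_n(p)\), and \eqref{eq:rounding-support} holds.

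\emph{Affinity.} Restrict the input to \(\mathsf T_n(p)\). The atypical input mass is at most \(e^{-n^{1/3}/4}\) by \cref{lem:young-concentration}. Since the Hellinger affinity is jointly concave and changes by at most \(O(\sqrt{\epsilon})\) under a perturbation of total variation \(\epsilon\), this restriction costs \(o(1)\) uniformly in \(p\). We need the local central limit theorem for the Schur--Weyl law, uniform over \(K\):
\[
P_{N,p}(\lambda)=\phi_{N,p}(\lambda)\,(1+o(1)),\qquad \lambda\in\mathsf T_N(p)\ \text{with}\ |\lambda-Np|\le N^{1/2+\epsilon},
\]
where \(\phi_{N,p}\) is the density of the \((d-1)\)-dimensional Gaussian with mean \(Np\) and covariance \(N(\diag p-pp^\top)\), restricted to \(\mathsf L_N\).

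This follows from \(P_{N,p}(\lambda)=\dim\cS_\lambda\, s_\lambda(p)\). Write \(s_\lambda(p)\) by the bialternant formula: for a simple spectrum, the leading term dominates, \(s_\lambda(p)=p^{\lambda+\delta}/\prod_{i<j}(p_i-p_j)\,(1+o(1))\). The hook length formula gives \(\dim\cS_\lambda=\binom{N}{\lambda}\prod_{i<j}(\lambda_i-\lambda_j+j-i)/\prod_i(\lambda_i+d-i)!\cdot\prod_i\lambda_i!\). Combining the two reduces \(P_{N,p}(\lambda)\) to a multinomial probability times \(\prod_{i<j}\frac{\lambda_i-\lambda_j}{N(p_i-p_j)}\prod_i(Np_i/\lambda_i)^{d-i}\to1\). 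The multinomial local CLT then finishes. Uniformity on \(K\) comes from the bounded-away gaps. I would state this as a lemma and prove it first.

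Given the local CLT, \(Q_{m_n,p}(\nu)=\sum_\mu P_{n,p}(\mu)\,\mathsf R_n(\nu\mid\mu)\). The dither makes \(\gamma_n(\mu+y)\) for \(\mu\) on \(\mathsf L_n\) and \(y\) uniform on the unit cube a uniform smearing of the lattice. Rounding to the nearest point of \(\mathsf L_{m_n}\) is then an exact cell-average. Therefore \(Q_{m_n,p}(\nu)\) equals, up to a factor \(1+o(1)\), the Lebesgue density at \(\nu\) of \(\gamma_n Z\), with \(Z\sim N(np,n\Sigma_p)\) and \(\Sigma_p=\diag p-pp^\top\). The Jacobian \(\gamma_n^{-(d-1)}\) is exactly matched by the lattice cell volume, with no sublattice loss. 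To make this rigorous, average \(\phi_{n,p}\) over a cell of side \(O(1)\); this is legitimate because the density varies on scale \(\sqrt n\).

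Hence \(Q_{m_n,p}\) is locally \(N(m_np,\gamma_n m_n\Sigma_p)\) and \(P_{m_n,p}\) is locally \(N(m_np,m_n\Sigma_p)\). Tails beyond \(m^{1/2+\epsilon}\) are negligible by Gaussian and large-deviation bounds. Riemann-sum convergence of \(\sum_\nu\sqrt{QP}\) then gives the Hellinger affinity of two centered Gaussians with covariances \(\gamma\Sigma\) and \(\Sigma\) in dimension \(d-1\). That affinity is \(\det(\sqrt\gamma\,(1+\gamma)/2)^{-1/2}\cdot\gamma^{(d-1)/4}\)-type, and evaluates to \((2\sqrt\gamma/(1+\gamma))^{(d-1)/2}=\Fcl(\gamma,d)\).

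The main obstacle is the uniform local CLT for \(P_{N,p}\) with a relative \(1+o(1)\) error on the \(\sqrt N\)-window. This is where the simple-spectrum hypothesis enters, through the bialternant leading term.
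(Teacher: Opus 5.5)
Your proposal is correct and follows essentially the paper's route. You use the same rounding-error constant \(c_{\mathrm{rnd}}=\frac{d-1}{2}(1+\sup_n\gamma_n)\) and get compatibility from the linear growth of the gaps (the paper obtains this by citing the compatibility part of \cref{prop:sector-fidelity}, which is the same computation). You also use the same Frobenius-plus-bialternant local limit for \(P_{N,p}\) and the same observation that dithered rounding is an exact cell average of the dilated input law. The only difference is technical: the paper runs the limit in \(L^1\) for piecewise-constant interpolations (bounded balls plus Scheff\'e, \(L^1\)-contractivity of cell averaging, a Poincar\'e bound, then Hellinger continuity), whereas you use a pointwise relative-error local CLT on a moderate window \(N^{1/2+\epsilon}\) (this needs \(\epsilon<1/6\)) and Riemann sums, which works equally well.
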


The randomized transition rule \(\mathsf K_n^{\mathrm{rnd}}\) is a discrete
realization of the dilation \(\mu\mapsto\gamma_n\mu\): the uniform shifts in
\eqref{eq:randomized-rounding} spread each input label over nearby output
labels.  Represent the input Young-label law \(P_{n,p}\) by its
piecewise-constant interpolation \(\mathcal I_{n,p}P_{n,p}\), formed by
centering at \(np\), dividing by \(\sqrt n\), and spreading each label mass
over its lattice cell.  If \(\widetilde Q_{m_n,p}:=P_{n,p}\mathsf R_n\) is the law of
the rounded label \(\widetilde\nu\), the intersection-volume identity
\eqref{eq:cell-kernel}
identifies its embedded density \(\mathcal I_{m_n,p}\widetilde Q_{m_n,p}\)
with the \(\sqrt{\gamma_n}\)-dilation of \(\mathcal I_{n,p}P_{n,p}\), averaged
over each cell of the \(m_n\)-copy target-label
lattice.  Write
\begin{equation}\label{eq:young-covariance}
  \Sigma_p
  :=\left.(\diag(p)-pp^{\mathsf T})\right|_{
    \{x\in\R^d:\,\sum_i x_i=0\}}.
\end{equation}
The Young local limit says that \(\mathcal I_{N,p}P_{N,p}\) converges to
the density \(\varphi_{0,\Sigma_p}\) of \(\mathsf N_{0,\Sigma_p}\), both
for the input law at \(N=n\) and the ideal output law at \(N=m_n\).
Thus the rounded output converges to
\(\mathsf N_{0,\gamma\Sigma_p}\), while the ideal output converges to
\(\mathsf N_{0,\Sigma_p}\).  Their fidelity (Hellinger affinity) is
\begin{equation}\label{eq:label-gaussian-affinity}
  F\!\left(\mathsf N_{0,\gamma\Sigma_p},
           \mathsf N_{0,\Sigma_p}\right)
  =\left(\frac{2\sqrt\gamma}{1+\gamma}\right)^{(d-1)/2}
  =\Fcl(\gamma,d).
\end{equation}
For typical input labels, every rounded output is compatible for all
sufficiently large \(n\).  Fallback therefore changes only atypical inputs,
whose total probability vanishes uniformly on compact spectral sets.
The full proof is given in
Appendix~\ref{app:young-rounding}.

\subsection{Cloning fidelity}
\label{sec:achievability-completion}

We use the following continuity estimate for positive trace-class
operators \(A,C,A',C'\) with trace at most one:
\begin{equation}\label{eq:fidelity-continuity}
  |F(A,C)-F(A',C')|
  \le \|A-A'\|_1^{1/2}+\|C-C'\|_1^{1/2}.
\end{equation}
Indeed, the Powers--St\o rmer inequality bounds
\(\|\sqrt A-\sqrt{A'}\|_2\) by \(\|A-A'\|_1^{1/2}\), and similarly for
\(C,C'\).  Adding and subtracting \(\sqrt{A'}\sqrt C\), the reverse triangle
inequality and Schatten H\"older give \eqref{eq:fidelity-continuity}, since
\(\|\sqrt C\|_2,\|\sqrt{A'}\|_2\le1\).

The sector-mixture estimate gives a common factorization for both cloners.

\begin{lemma}[Factorization of the lifted fidelity]
\label{lem:lifted-fidelity-factorization}
Fix \(K\Subset\Delta_d^\circ\).  For each \(p\in K\), choose a transition
rule \(\mathsf K_n^{(p)}:\mathsf Y_n\rightsquigarrow\mathsf Y_{m_n}\).
Use \(\mathsf J_{n,p},Q_{m_n,p}\) from \eqref{eq:lifted-label-laws}
and \(\Xi_{n,p,\nu}\) from \eqref{eq:lifted-output-block} with
\(\mathsf K_n=\mathsf K_n^{(p)}\).
\(P_{m_n,p}\) is the target Young-label law from \eqref{eq:schur-state}.
Suppose that,
for some \(\delta_n\to0\), the set of close label pairs
\[
 G_{n,p}
 =\left\{(\mu,\nu)\in\mathsf Y_n\times\mathsf Y_{m_n}:
   \left\|\frac\mu n-p\right\|_\infty\le\delta_n,\
   \left\|\frac\nu{m_n}-p\right\|_\infty\le\delta_n\right\}
\]
satisfies \(\sup_{p\in K}\mathsf J_{n,p}(G_{n,p}^c)\to0\).
Then, uniformly for \(p\in K\) and \(g\in\U(d)\),
\begin{equation}\label{eq:lifted-fidelity-factorization}
 F\!\left(\mathcal Q_n[\mathsf K_n^{(p)}](\rho_{p,g}^{\ot n}),
                         \rho_{p,g}^{\ot m_n}\right)
 =\Forb(\gamma,p)F(Q_{m_n,p},P_{m_n,p})+o(1).
\end{equation}
\end{lemma}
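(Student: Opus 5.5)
Our starting point is the exact block formula of \cref{prop:lifted-kernel-bound}. It expresses the fidelity as \(\sum_\nu\sqrt{P_{m_n,p}(\nu)}\,F(\Xi_{n,p,\nu},\rho_{p,\nu})\), with \(g\) already removed by covariance. So everything below is uniform in \(g\) for free, and only uniformity in \(p\in K\) needs attention. The plan is to split each output block into a good part and a bad part,
\[
  \Xi_{n,p,\nu}=\Xi^{G}_{n,p,\nu}+\Xi^{B}_{n,p,\nu},
  \qquad
  \Xi^{G}_{n,p,\nu}=\sum_{\mu:(\mu,\nu)\in G_{n,p}}\mathsf J_{n,p}(\mu,\nu)\,\cC_{\mu,\nu}(\rho_{p,\mu}).
\]
On \(G_{n,p}\), \cref{prop:sector-fidelity} guarantees for large \(n\) that the pair is compatible. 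Hence \(\cE_{\mu,\nu}=\cC_{\mu,\nu}\) there, and \(\nu\in\mathsf Y_{m_n}\) is a genuine close label.

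\emph{Step 1: discard the bad part.} Let \(\widehat\Xi\) and \(\widehat\Xi^{G}\) denote the direct sums over \(\nu\) (tensored with the normalized Specht identities), and \(\widehat P\) the target. These are positive, trace at most one, and \(\|\widehat\Xi-\widehat\Xi^{G}\|_1=\mathsf J_{n,p}(G_{n,p}^c)\). By \eqref{eq:fidelity-continuity} the fidelity then changes by at most \(\sup_{p\in K}\mathsf J_{n,p}(G_{n,p}^c)^{1/2}\to0\). This replaces \(\Xi\) by \(\Xi^{G}\) up to \(o(1)\), uniformly.

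\emph{Step 2: apply the sector-mixture estimate blockwise.} Let \(a_\nu=\Tr\Xi^{G}_{n,p,\nu}=\mathsf J_{n,p}(G_{n,p}\cap\{\cdot=\nu\})\). When \(a_\nu>0\), \(\Xi^{G}_{n,p,\nu}/a_\nu\) is a mixture of \(\cC_{\mu,\nu}(\rho_{p,\mu})\) over close \(\mu\) with close \(\nu\). By \eqref{eq:sector-mixture-fidelity} its fidelity with \(\rho_{p,\nu}\) equals \(\Forb(\gamma,p)+\varepsilon_{n,p,\nu}\), where \(\sup|\varepsilon|=:\varepsilon_n\to0\) uniformly over \(p\in K\) and such \(\nu\). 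Homogeneity \(F(aA,C)=\sqrt a\,F(A,C)\) then gives
\[
  \sum_\nu\sqrt{P_{m_n,p}(\nu)}\,F(\Xi^G_{n,p,\nu},\rho_{p,\nu})
  =\bigl(\Forb(\gamma,p)+O(\varepsilon_n)\bigr)\sum_\nu\sqrt{a_\nu P_{m_n,p}(\nu)} .
\]
Since \(\sum_\nu\sqrt{a_\nu P(\nu)}\le1\) by Cauchy--Schwarz, the error is \(O(\varepsilon_n)\).

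\emph{Step 3: compare labels.} It remains to replace \(a=(a_\nu)\) by the output marginal \(Q_{m_n,p}\). We have \(0\le a_\nu\le Q_{m_n,p}(\nu)\) and \(\|Q-a\|_1=\mathsf J_{n,p}(G^c_{n,p})\). The classical version of \eqref{eq:fidelity-continuity} (or \(|\sqrt Q-\sqrt a|\le\sqrt{Q-a}\) and Cauchy--Schwarz) gives \(|F(a,P)-F(Q_{m_n,p},P_{m_n,p})|\le\mathsf J_{n,p}(G^c_{n,p})^{1/2}\to0\). Finally, \(\Forb\le1\) lets us absorb all errors additively, which yields \eqref{eq:lifted-fidelity-factorization}.

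The only delicate point is Step 2. The uniformity in \cref{prop:sector-fidelity} must be strong enough to cover arbitrary mixtures indexed by \(p\)-dependent kernels, and the threshold "sufficiently large \(n\)" must be uniform in \(p\in K\), so that compatibility and the error bound hold simultaneously for every block. Since the proposition is stated uniformly over distributions, \(p\), and \(\nu\), this reduces to bookkeeping. The remaining steps are routine continuity estimates.
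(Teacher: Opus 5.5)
Your proposal is correct and follows the paper's proof essentially step for step. Like the paper, you restrict to the close pairs $G_{n,p}$, which are compatible for large $n$ uniformly on $K$ by \cref{prop:sector-fidelity}, and apply the sector-mixture estimate blockwise, bounding the resulting label affinity by one. You then use \eqref{eq:fidelity-continuity} once for the states and once for the label laws, which gives the same total error $\eta_n+2\sqrt{\mathsf J_{n,p}(G_{n,p}^c)}$.
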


\begin{proof}
By covariance, take \(g=I\).  Write \(P=P_{m_n,p}\), \(Q=Q_{m_n,p}\),
and \(G=G_{n,p}\).  Omitting the maximally mixed Specht factors,
the output and target are \(O=\bigoplus_\nu\Xi_{n,p,\nu}\) and
\(T=\bigoplus_\nu P(\nu)\rho_{p,\nu}\).  For large \(n\), all pairs in
\(G\) are compatible by \cref{prop:sector-fidelity}.  Retain only these pairs:
\[
 O_G=\bigoplus_\nu\Xi_\nu^G,
 \qquad
 \Xi_\nu^G=\sum_{\mu:(\mu,\nu)\in G}
      \mathsf J_{n,p}(\mu,\nu)\cC_{\mu,\nu}(\rho_{p,\mu}),
 \qquad Q_G(\nu)=\Tr\Xi_\nu^G.
\]
Whenever \(Q_G(\nu)>0\), the mixture statement of
\cref{prop:sector-fidelity} applies to \(\Xi_\nu^G/Q_G(\nu)\).
Labels with \(Q_G(\nu)=0\) contribute zero to both fidelities. Thus the
direct-sum identity \eqref{eq:lifted-exact-block-fidelity} and
\(F(Q_G,P)\le1\) give
\[
 \bigl|F(O_G,T)-\Forb(\gamma,p)F(Q_G,P)\bigr|\le\eta_n,
 \qquad \eta_n\longrightarrow0,
\]
uniformly on \(K\).  Positivity and trace preservation give
\(\|O-O_G\|_1=\|Q-Q_G\|_{\ell^1}=\mathsf J_{n,p}(G^c)=:\epsilon_{n,p}\).
Applying \eqref{eq:fidelity-continuity} once to the states and once to
the label laws yields
\[
 \bigl|F(O,T)-\Forb(\gamma,p)F(Q,P)\bigr|
 \le\eta_n+2\sqrt{\epsilon_{n,p}}=o(1),
\]
which proves \eqref{eq:lifted-fidelity-factorization}.
\end{proof}

We now complete the achievability proofs of \cref{thm:known-optimum,thm:unknown-optimum}.
\paragraph{Known spectrum.}
Fix any compact \(K\Subset\Delta_d^\circ\).  For each \(p\in K\), take
the known-spectrum rule \(\mathsf K_n^{(p)}=\mathsf K_n^p\) from
\eqref{eq:known-target-label-kernel}.
The joint law is \(P_{n,p}\ot P_{m_n,p}\) and
\(Q_{m_n,p}=P_{m_n,p}\).  With \(\delta_n=n^{-1/3}\ge m_n^{-1/3}\),
\(G_{n,p}\) contains \(\mathsf T_n(p)\times\mathsf T_{m_n}(p)\), so Young
concentration makes the discarded mass vanish uniformly on \(K\), while
the label fidelity is exactly one.  Thus
\eqref{eq:lifted-fidelity-factorization} gives
\[
 \sup_{\substack{p\in K\\g\in\U(d)}}
 \left|F\!\left(\cC_n^p(\rho_{p,g}^{\ot n}),
                      \rho_{p,g}^{\ot m_n}\right)
             -\Forb(\gamma,p)\right|\longrightarrow0,
\]
proving achievability in \cref{thm:known-optimum}.

\paragraph{Unknown spectrum.}
For every \(p\), take \(\mathsf K_n^{(p)}=\mathsf K_n^{\mathrm{rnd}}\) and
\(\delta_n=n^{-1/3}+c_{\mathrm{rnd}}/m_n\).  By
\eqref{eq:rounding-support}, every pair in the support of
\(\mathsf J_{n,p}\) with \(\mu\in\mathsf T_n(p)\) lies in \(G_{n,p}\), so the
discarded mass is at most \(P_{n,p}(\mathsf T_n(p)^c)=o(1)\) uniformly on
\(K\).  Combining
\eqref{eq:lifted-fidelity-factorization} and
\eqref{eq:rounding-label-fidelity} yields
\[
 F\!\left(\cC_n^{\mathrm{univ}}(\rho_{p,g}^{\ot n}),
                                    \rho_{p,g}^{\ot m_n}\right)
 =\Fcl(\gamma,d)\Forb(\gamma,p)+o(1)
\]
uniformly for \(p\in K\) and \(g\in\U(d)\), proving the achievability
assertion of \cref{thm:unknown-optimum}.

\section{Optimality from LAN and Gaussian converses}\label{sec:gaussian-LAN}

We use two-way LAN to transport arbitrary cloners on local neighborhoods to
channels between Gaussian models.  After establishing the Gaussian bounds,
we apply the transfer in \cref{fig:LAN-transfer} to complete the minimax
proofs.

\begin{figure}[!htbp]
\centering
\begin{tikzpicture}[
  exact/.style={-{Latex[length=2mm]},thick,black},
  approxmap/.style={-{Latex[length=2mm]},thick,dashed,teal!70!black},
  approx/.style={thick,dashed,teal!70!black},
  fidpair/.style={thick,densely dotted,purple!70!black},
  state/.style={font=\small,inner sep=2pt},
  maplabel/.style={font=\scriptsize},
  fidlabel/.style={font=\small,text=purple!70!black,inner sep=1pt,yshift=-4pt},
  relation/.style={font=\small,text=purple!70!black},
  reason/.style={font=\scriptsize,text=black!60,align=center},
  error/.style={font=\scriptsize,text=teal!70!black,inner sep=1pt}
]
  \node[state] (gin) at (-6.6,1.3) {\(\Phi_\theta\)};
  \node[state] (finin) at (-4.2,1.3) {\(\rho_{n,\theta}\)};
  \node[state] (finout) at (-1.6,1.3) {\(\cM_n(\rho_{n,\theta})\)};
  \node[state] (trout) at (1.9,1.3) {\(T_n\cM_n(\rho_{n,\theta})\)};
  \node[state] (gout) at (5.4,1.3) {\(\Gamma_n(\Phi_\theta)\)};

  \node[state] (fintarget) at (-1.6,-0.95) {\(\sigma_{n,\theta}\)};
  \node[state] (trtarget) at (1.9,-0.95) {\(T_n(\sigma_{n,\theta})\)};
  \node[state] (gtarget) at (5.4,-0.95) {\(\Psi_\theta\)};

  \draw[approxmap] (gin.east) --
    node[above=1pt,maplabel] {\(S_n\)}
    node[below=1pt,error] {\(o(1)\)} (finin.west);
  \draw[exact] (finin.east) --
    node[above=1pt,maplabel] {\(\cM_n\)} (finout.west);
  \draw[exact] (finout.east) --
    node[above=1pt,maplabel] {\(T_n\)} (trout.west);
  \draw[exact] (fintarget.east) --
    node[below=1pt,maplabel] {\(T_n\)} (trtarget.west);

  \draw[approx] (trout.east) --
    node[above=2pt,error] {\(o(1)\)} (gout.west);
  \draw[approx] (trtarget.east) --
    node[below=2pt,error] {\(o(1)\)} (gtarget.west);

  \draw[fidpair] (finout.south) --
    node[right=2pt,fidlabel] {\(F_{\mathrm{fin},n}(\theta)\)}
    (fintarget.north);
  \draw[fidpair] (trout.south) --
    node[right=2pt,fidlabel] {\(F_{\mathrm{tr},n}(\theta)\)}
    (trtarget.north);
  \draw[fidpair] (gout.south) --
    node[right=2pt,fidlabel] {\(F_{\mathrm G,n}(\theta)\)}
    (gtarget.north);

  \node[relation] (leq) at (0.75,0) {\(\le\)};
  \node[reason,anchor=south] at ([yshift=1pt]leq.north)
    {monotonicity\\ under \(T_n\)};
  \node[relation] (eq) at (4.3,0) {\(=\)};
  \node[reason,anchor=south] at ([yshift=1pt]eq.north)
    {LAN continuity,\\ up to \(o(1)\)};
\end{tikzpicture}
\caption{LAN transfer for \(\Gamma_n=T_n\circ\cM_n\circ S_n\) on a fixed
  compact local parameter set \(C\).  The top and bottom rows track output
  and target. Dotted segments indicate fidelity pairs, solid arrows exact
  channels, and dashed links uniform \(o(1)\) trace-norm errors.
  Monotonicity and continuity give
  \(F_{\mathrm{fin},n}(\theta)\le F_{\mathrm{tr},n}(\theta)
    =F_{\mathrm G,n}(\theta)+o(1)\) uniformly on \(C\), as proved in
  \cref{lem:LAN-fidelity-transfer}.  The LAN maps
  need not be inverses, and \(\Gamma_n\) need not be Gaussian or converge.}
\label{fig:LAN-transfer}
\end{figure}

\subsection{Local parametrization and Gaussian amplification}
\label{sec:gaussian-amplification}

Fix a base spectrum \(p\in\Delta_d^\circ\).  Let
\[
  \mathsf H=\left\{h\in\R^d:\sum_i h_i=0\right\},
  \qquad
  Y_p(z)=\sum_{j<k}
    \frac{z_{jk}E_{kj}-\bar z_{jk}E_{jk}}{\sqrt{p_j-p_k}},
  \quad z\in\C^{r_d},\quad r_d=\binom d2,
\]
where \(E_{jk}\) are the standard matrix units, and define
\[
  g_{n,z}^{(p)}=\exp\!\left(\frac{Y_p(z)}{\sqrt n}\right).
\]
In \(\rho_{p+h/\sqrt n,g_{n,z}^{(p)}}\), the coordinates \(h\) and \(z\)
describe local changes of eigenvalues and eigenbasis, respectively.
The denominators in \(Y_p\) use the fixed base spectrum \(p\), even when the
local spectrum is \(p+h/\sqrt n\).  For \(L>0\), set
\[
  \Omega_L
  =\left\{(h,z)\in\mathsf H\times\C^{r_d}:
  \|h\|_\infty\le L,\
  |\Re z_{ij}|,|\Im z_{ij}|\le L\ \text{for all }i<j\right\}.
\]
For fixed \(p,L\), \(p+h/\sqrt n\in\Delta_d^\circ\) throughout
\(\Omega_L\) for large \(n\).  Under LAN, \(h\) becomes a classical Gaussian
shift and \(z\) an oscillator displacement.

Let \(\cF\) be the one-mode Fock space with number basis
\(\{\ket{k}:k\ge0\}\), creation and annihilation operators \(a^*,a\), and
number operator \(\widehat N=a^*a\).  We use the Weyl convention
\begin{equation}\label{eq:Weyl-convention}
  D(z)=\exp(z a^*-\bar z a).
\end{equation}
For \(0\le q<1\), define the number-diagonal thermal state
\[
  \tau_q=(1-q)\sum_{k\ge0}q^k\proj{k}.
\]
Since these states commute,
\begin{equation}\label{eq:thermal-root-fidelity}
  F(\tau_q,\tau_{q'})
  =\frac{\sqrt{(1-q)(1-q')}}{1-\sqrt{qq'}}.
\end{equation}
For \(G\ge1\), let \(q^{(G)}=1-(1-q)/G\).  The quantum-limited
phase-insensitive amplifier \(\cA_G\) satisfies
\begin{equation}\label{eq:amplifier-action}
  \cA_G\!\left(D(z)\tau_qD(z)^*\right)
  =D(\sqrt G\,z)\tau_{q^{(G)}}D(\sqrt G\,z)^*.
\end{equation}
Equivalently, it is implemented by a two-mode squeezer with a vacuum idler.

On \(\cF_{\mathrm{orb}}=\cF^{\ot r_d}\), let
\(q_{ij}^{(\gamma)}=1-(1-q_{ij})/\gamma\) and define
\begin{equation}\label{eq:sector-thermal-products}
  \tau_p=\bigotimes_{i<j}\tau_{q_{ij}},
  \qquad
  \tau_p^{(\gamma)}=\bigotimes_{i<j}\tau_{q_{ij}^{(\gamma)}}.
\end{equation}
Let \(D(z)=\bigotimes_{i<j}D(z_{ij})\).  The three relevant states are
\begin{equation}\label{eq:gaussian-amplification-model}
\begin{aligned}
  \text{input:}\quad
  &\Phi_z^{(p)}=D(z)\tau_pD(z)^*,\\
  \text{ideal gain-\(\sqrt\gamma\) target:}\quad
  &\Psi_z^{(p,\gamma)}=D(\sqrt\gamma z)\tau_pD(\sqrt\gamma z)^*,\\
  \text{amplifier output:}\quad
  &\cA_\gamma^{\ot r_d}(\Phi_z^{(p)})
    =D(\sqrt\gamma z)\tau_p^{(\gamma)}D(\sqrt\gamma z)^*.
\end{aligned}
\end{equation}
By \eqref{eq:thermal-root-fidelity},
\begin{equation}\label{eq:one-mode-f}
  F(\tau_{q^{(\gamma)}},\tau_q)=f_\gamma(q).
\end{equation}
Multiplicativity then gives, for every \(z\in\C^{r_d}\),
\begin{equation}\label{eq:product-amplifier-fidelity}
  F\!\left(\cA_\gamma^{\ot r_d}(\Phi_z^{(p)}),
           \Psi_z^{(p,\gamma)}\right)
  =\Forb(\gamma,p).
\end{equation}

Equip \(\mathsf H\) with its induced Euclidean measure and recall
\(\Sigma_p\) from \eqref{eq:young-covariance}.  Write
\(\mathsf N_{h,\Sigma_p}\) for the Gaussian law on \(\mathsf H\) with mean
\(h\) and covariance \(\Sigma_p\).  A classical--quantum state is a
measurable field \(R(x)\) of positive trace-class operators on
\(\cF_{\mathrm{orb}}\), with \(\int_{\mathsf H}\Tr R(x)\,\dd x=1\).
A classical--quantum channel \(\Lambda\) is a
completely positive map, in the sense specified in
Appendix~\ref{app:gaussian-converse}, that preserves
\(\int\Tr R(x)\,\dd x\).  For two such states \(R,S\),
\[
  F(R,S)=\int_{\mathsf H}F(R(x),S(x))\,\dd x.
\]
This is the continuous direct-sum identity.  With
\(\|R\|_1:=\int_{\mathsf H}\|R(x)\|_1\,\dd x\), fidelity retains its
monotonicity and joint-concavity properties.

Combining the classical spectral fluctuations with the quantum oscillator
modes, define the input and ideal target states of the full Gaussian model by
\begin{equation}\label{eq:full-gaussian-model}
  \Theta_{h,z}^{\mathrm{in}}
  =\mathsf N_{h,\Sigma_p}\ot\Phi_z^{(p)},
  \qquad
  \Theta_{h,z}^{\mathrm{tar}}
  =\mathsf N_{\sqrt\gamma h,\Sigma_p}\ot\Psi_z^{(p,\gamma)}.
\end{equation}
The target scales both local parameters \((h,z)\) by
\(\sqrt\gamma\), while retaining the input covariance and thermal noise.
For the orbital model, let \(\mathsf Z_L\) be the box of oscillator
displacements over which we average the fidelity, with Lebesgue measure
on its real and imaginary coordinates:
\[
  \mathsf Z_L
  =\{z\in\C^{r_d}:|\Re z_{ij}|,|\Im z_{ij}|\le L\ \text{for all }i<j\}.
\]

\begin{theorem}[Gaussian amplification optimum]
\label{thm:gaussian-amplification}
For every \(p\in\Delta_d^\circ\),
\begin{equation}\label{eq:gaussian-orbital-optimum}
  \lim_{L\to\infty}\sup_\Lambda
  \frac1{|\mathsf Z_L|}\int_{\mathsf Z_L}
  F\!\left(\Lambda(\Phi_z^{(p)}),\Psi_z^{(p,\gamma)}\right)\,\dd z
  =\Forb(\gamma,p),
\end{equation}
and
\begin{equation}\label{eq:gaussian-full-optimum}
  \lim_{L\to\infty}\sup_\Lambda
  \frac1{|\Omega_L|}\int_{\Omega_L}
  F\!\left(\Lambda(\Theta_{h,z}^{\mathrm{in}}),
            \Theta_{h,z}^{\mathrm{tar}}\right)\,\dd h\,\dd z
  =\Funiv(\gamma,p).
\end{equation}
The first supremum is over all quantum channels on
\(\cF_{\mathrm{orb}}\). The second is over all classical--quantum channels.
No Gaussianity, covariance, or product-structure assumption is imposed on
the competing channels in either supremum.
\end{theorem}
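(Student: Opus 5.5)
Both parts of \cref{thm:gaussian-amplification} split into an achievability half, which is explicit, and a converse half, which reduces to an averaged fidelity bound for arbitrary channels.

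\textbf{Achievability.} For \eqref{eq:gaussian-orbital-optimum}, the amplifier \(\cA_\gamma^{\ot r_d}\) attains \(\Forb(\gamma,p)\) exactly for every \(z\) by \eqref{eq:product-amplifier-fidelity}. Hence the \(\liminf\) is at least \(\Forb\) for every \(L\). For \eqref{eq:gaussian-full-optimum}, take \(\Lambda=\mathcal D_{\sqrt\gamma}\ot\cA_\gamma^{\ot r_d}\), where \(\mathcal D_{\sqrt\gamma}\) maps \(x\mapsto\sqrt\gamma x\) on \(\mathsf H\). It sends \(\mathsf N_{h,\Sigma_p}\) to \(\mathsf N_{\sqrt\gamma h,\gamma\Sigma_p}\). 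Multiplicativity of fidelity and \eqref{eq:label-gaussian-affinity} then give \(\Fcl(\gamma,d)\Forb(\gamma,p)\) for every \((h,z)\).

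\textbf{Converse.} The plan is to reduce the flat-prior average to a covariant problem. For a channel \(\Lambda\), average its conjugates \(D(\sqrt\gamma w)^*\Lambda(D(w)\cdot D(w)^*)D(\sqrt\gamma w)\) over \(w\in\mathsf Z_{\epsilon L}\). Concavity of \(F\), its unitary invariance, and the fact that \(\mathsf Z_L\) and \(\mathsf Z_L+w\) differ only in a boundary region of relative volume \(O(\epsilon)\) show that the averaged fidelity of the twirled channel is at least that of \(\Lambda\), minus \(O(\epsilon)\). Because the displacement group is noncompact, the twirl is not literally a covariant channel. I would therefore pass to a limit point of the twirled channels as \(L\to\infty\), working in the weak\(^*\) topology on channels into the Fock space after imposing an energy cutoff. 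The cutoff costs little, since the targets have uniformly bounded tails. The limit is a gauge-covariant channel \(\Lambda_\infty\) with \(\Lambda_\infty\circ\mathrm{Ad}\,D(w)=\mathrm{Ad}\,D(\sqrt\gamma w)\circ\Lambda_\infty\).

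The step I expect to be the main obstacle is the upper bound
\[
F\bigl(\Lambda_\infty(\tau_p),\tau_p\bigr)\le\Forb(\gamma,p)
\]
for every such covariant \(\Lambda_\infty\). Mere trace preservation in the limit could also fail through escape of mass; I would rule that out using the energy cutoff above. For the bound itself, I would use the known structure of displacement-covariant amplifying channels: each is the quantum-limited amplifier \(\cA_\gamma\) followed by a classical additive-noise channel, an average of displacements. The output is then a Gaussian mixture \(\int D(u)\tau^{(\gamma)}_pD(u)^*\,\dd\pi(u)\). For the one-mode factors, I would show that fidelity with \(\tau_q\) is maximized at \(\pi=\delta_0\). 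The route is to twirl over phase rotations, which leaves \(\tau_q\) fixed, so both states become number-diagonal. The mixture is then stochastically larger than \(\tau_{q^{(\gamma)}}\), while \(\tau_{q^{(\gamma)}}\) already has more noise than \(\tau_q\). A monotone-likelihood-ratio argument on geometric distributions should then show that extra noise only decreases the Bhattacharyya coefficient.

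The multimode case needs more than this, because a general covariant channel may correlate modes. Here I would exploit that \(\tau_p\) is invariant under the phase torus \(\U(1)^{r_d}\). Twirling over that torus decouples the modes at the level of number-diagonal distributions, and the product bound follows.

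The classical part runs in parallel. Twirling over translations of \(\mathsf H\) reduces to a single channel \(x\mapsto\) law of \(\sqrt\gamma x+\xi\), with \(\xi\) an arbitrary noise distribution that may be correlated with the quantum output. For a fixed quantum output, the Hellinger affinity between \(\mathsf N_{0,\Sigma_p}\) convolved appropriately and \(\mathsf N_{0,\Sigma_p}\) is maximized by Gaussian noise, via a Fourier or log-concavity argument. A joint-concavity argument then gives the product of the two optima \(\Fcl\Forb\).
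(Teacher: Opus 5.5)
\textbf{Setup steps.} Your achievability half matches the paper's. Your Følner twirl followed by a compact-observable limit is also the skeleton the paper uses. You cannot rule out escape of mass: the twirls of a constant channel \(X\mapsto\Tr(X)\sigma\) converge to \(0\) against compact observables, and an energy cutoff would destroy covariance. You do not need to, though. A covariant trace-nonincreasing limit has \(\Gamma^*(I)=cI\) with \(c\le1\), and \(F(\cdot,\tau_p)\) is upper semicontinuous under compact-observable convergence (pinch with a large finite-rank projection).

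\textbf{First gap: the key step.} The structure theorem you invoke is false for non-Gaussian channels. A displacement-covariant gain-\(\gamma\) channel has \(\Lambda^*(D(\xi))=f(\xi)D(\sqrt\gamma\,\xi)\), where \(f\) is the characteristic function of an \emph{arbitrary} idler state (\cref{lem:covariant-amplifier-representation}). It factors as random displacements after \(\cA_\gamma\) only if \(f(\xi)e^{(\gamma-1)|\xi|^2/2}\) is the Fourier transform of a probability measure, essentially only for a positive-\(P\) idler. For the idler \(\ket1\) this ratio is \(1-(\gamma-1)|\xi|^2\), which is unbounded. So \(\Lambda(\tau_p)\) need not be a mixture of displaced copies of \(\tau_p^{(\gamma)}\). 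What does hold is weaker but sufficient, and the paper proves it in \cref{prop:least-noise-moment}. Only the diagonal idler entries \(a_{\mathbf l}=\langle\mathbf l|\sigma|\mathbf l\rangle\ge0\) enter the output number law. That law is therefore a mixture of products of \(\mathrm{NB}(l_\ell+1,q_\ell^{(\gamma)})\) laws, which yields a coupling \(N_\ell\ge X_{\ell,0}\) coordinatewise, with independent geometric \(X_{\ell,0}\) of parameter \(q_\ell^{(\gamma)}\).

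\textbf{Second gap: the combination steps.} Both fail as written.
\begin{itemize}
\item Torus twirling makes the output diagonal in the joint number basis, but it does not make the joint count law a product. A correlated idler gives correlated counts, and the Bhattacharyya coefficient does not factor over a correlated law.
\item Joint concavity bounds the fidelity of a mixture from below, not above. The set \(\{\rho:F(\rho,\tau)\le f\}\) is not convex: \(\proj0\) and \(\proj1\) each have fidelity \(1/\sqrt2\) with \(I_2/2\), while their even mixture has fidelity \(1\). So correlations between the classical noise and the quantum output cannot be absorbed this way.
\end{itemize}

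\textbf{Missing idea.} Linearize the fidelity first, using \(F(R,T)^2\le\Tr(RW)\Tr(TW^{-1})\) (\cref{lem:weighted-fidelity}). Take the product witness \(W_\gamma=\prod_{i<j}w_{ij}(\widehat N_{ij})\), with each \(w_{ij}\) decreasing and normalized so that \(\Tr(\tau_p^{(\gamma)}W_\gamma)=\Tr(\tau_pW_\gamma^{-1})=\Forb(\gamma,p)\).
\begin{itemize}
\item The coupling then gives \(\Tr[\Lambda(\tau_p)W_\gamma]\le\Tr[\Lambda(\tau_p)]\,\Forb(\gamma,p)\) however the modes are correlated. This is also the rigorous version of your likelihood-ratio step.
\item In the full model, use the hybrid version with \(\chi=\sqrt{g_1/g_\gamma}\) together with the Gaussian translation bound \(\int g_1(h)\chi(v+\sqrt\gamma h)\,\dd h\le\Fcl(\gamma,d)\). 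This produces the product \(\Fcl\Forb\) despite classical--quantum correlations.
\item Because \(W_\gamma\) is compact, the moment passes directly to the compact-observable limit.
\item Integrating the classical output against \(\chi\) before the limit avoids taking limits of maps with noncompact classical output.
\end{itemize}
A small slip: the affinity-maximizing classical noise is zero, not Gaussian.
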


The product amplifier \(\cA_\gamma^{\ot r_d}\) attains
\(\Forb(\gamma,p)\) in the orbital model.  Tensoring it with the classical
pushforward \(x\mapsto\sqrt\gamma x\) supplies the Gaussian affinity
\(\Fcl(\gamma,d)\), attaining \(\Funiv(\gamma,p)\) in the full model.

For the converse, average a near-optimal channel over translates of the
expanding flat-prior boxes and use a weighted-fidelity witness to reduce
the problem to an oscillator moment.  The weighted oscillator maps have a
subsequential limit against compact observables that is completely positive
and displacement covariant.  Its output trace on normalized inputs is at
most \(1\) in the orbital model and, by a Gaussian translation estimate,
\(\Fcl(\gamma,d)\) in the full model.  The least-noise inequality then
gives the respective bounds \(\Forb(\gamma,p)\) and \(\Funiv(\gamma,p)\).
See \cref{prop:flat-prior-converse} in Appendix~\ref{app:gaussian-converse}.

\subsection{Local asymptotic normality}
\label{sec:local-asymptotic-normality}

Here LAN means that, uniformly for \((h,z)\) in compact sets,
the state families \(\rho_{p+h/\sqrt n,g_{n,z}^{(p)}}^{\ot n}\) and
\(\mathsf N_{h,\Sigma_p}\ot\Phi_z^{(p)}\) are asymptotically interconvertible
by channels with vanishing trace-norm error.  The Gaussian and oscillator
factors represent spectral and eigenbasis fluctuations, respectively. Cloning
scales \((h,z)\) by \(\sqrt\gamma\).

We fix the oscillator normalization directly from the local tangent and
the collective fluctuation operators.  Let \(\Delta_{ij}=p_i-p_j\).  The
chart defined above satisfies
\begin{equation}\label{eq:LAN-coordinate-map}
  \frac{[Y_p(z),\rho_p]_{ji}}{\sqrt{\Delta_{ij}}}=z_{ij},
  \qquad i<j.
\end{equation}
For \(a_{ij,n}:=(n\Delta_{ij})^{-1/2}\sum_{k=1}^n E_{ij}^{(k)}\), where
\(E_{ij}^{(k)}\) acts on the \(k\)-th tensor factor, the expectations read
\[
  \left\langle[a_{ij,n},a_{ij,n}^*]\right\rangle_{\rho_p^{\ot n}}=1,
  \qquad
  \left\langle a_{ij,n}^*a_{ij,n}\right\rangle_{\rho_p^{\ot n}}
  =\frac{p_j}{\Delta_{ij}}=\frac{q_{ij}}{1-q_{ij}}.
\]
For the local product state
\(\rho_{p+h/\sqrt n,g_{n,z}^{(p)}}^{\ot n}\),
\eqref{eq:LAN-coordinate-map} gives
\(\langle a_{ij,n}\rangle=z_{ij}+o(1)\), uniformly on compact
\((h,z)\)-sets.  The \((i,j)\) oscillator factor of \(\Phi_z^{(p)}\),
defined in \eqref{eq:gaussian-amplification-model}, is
\(D(z_{ij})\tau_{q_{ij}}D(z_{ij})^*\).  With the convention
\eqref{eq:Weyl-convention}, its mean is \(z_{ij}\), and its occupation
at zero displacement is \(q_{ij}/(1-q_{ij})\).  These match the
finite-dimensional moments above, fixing the displacement and thermal
occupation parameters of the Gaussian model.
The following two-way LAN statement follows from Kahn--Gu\c{t}\u{a}'s
channel construction \cite[Theorem~4.3, Lemma~6.4, and
Section~7.3]{KahnGuta2009LANFiniteDimensional}, in the oscillator
coordinates fixed above and with the target-sample rescaling below.\footnote{We
use the displacement normalization of the constructive Lemma~6.4 and
Section~7.3 of the cited work.  The displayed Eq.~(4.19) in its arXiv
version has a different factor and adjoint placement.  The rotation chart,
fluctuation calculation, and qubit check below fix our convention.}

\begin{samepage}
\begin{theorem}[Finite-dimensional LAN and target-sample scaling]
\label{thm:LAN-full}
Fix \(p\in\Delta_d^\circ\).  There are channels
\[
  T_N^{\mathrm{full}}:
  \cT_1((\C^d)^{\ot N})
  \longrightarrow L^1(\mathsf H;\cT_1(\cF_{\mathrm{orb}})),
  \qquad
  S_N^{\mathrm{full}}:
  L^1(\mathsf H;\cT_1(\cF_{\mathrm{orb}}))
  \longrightarrow\cT_1((\C^d)^{\ot N})
\]
such that, for every compact
\(C\subset\mathsf H\times\C^{r_d}\),
\begin{equation}\label{eq:LAN-full-forward}
  \sup_{(h,z)\in C}
  \left\|
    T_n^{\mathrm{full}}\!\left(\rho_{p+h/\sqrt n,g_{n,z}^{(p)}}^{\ot n}\right)
    -\mathsf N_{h,\Sigma_p}\ot\Phi_z^{(p)}
  \right\|_1\longrightarrow0,
\end{equation}
\begin{equation}\label{eq:LAN-full-reverse}
  \sup_{(h,z)\in C}
  \left\|
    S_n^{\mathrm{full}}\!\left(\mathsf N_{h,\Sigma_p}\ot\Phi_z^{(p)}\right)
    -\rho_{p+h/\sqrt n,g_{n,z}^{(p)}}^{\ot n}
  \right\|_1\longrightarrow0,
\end{equation}
and
\begin{equation}\label{eq:LAN-full-target}
  \sup_{(h,z)\in C}
  \left\|
    T_{m_n}^{\mathrm{full}}\!\left(
      \rho_{p+h/\sqrt n,g_{n,z}^{(p)}}^{\ot m_n}\right)
    -\mathsf N_{\sqrt\gamma h,\Sigma_p}\ot\Psi_z^{(p,\gamma)}
  \right\|_1\longrightarrow0.
\end{equation}
\end{theorem}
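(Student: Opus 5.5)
The plan is to take \eqref{eq:LAN-full-forward} and \eqref{eq:LAN-full-reverse} directly from the Kahn--Gu\c{t}\u{a} construction \cite{KahnGuta2009LANFiniteDimensional}, and then to obtain \eqref{eq:LAN-full-target} by reparametrizing the \(m_n\)-copy family so that it becomes a compact local family at sample size \(N=m_n\).

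\textbf{Step 1: the two-way LAN statement.} The cited construction gives, for each \(p\in\Delta_d^\circ\), channels \(T_N\) and \(S_N\) built in three stages: a Schur (Young-label) measurement, a central-limit rescaling of the label \(\lambda\mapsto(\lambda-Np)/\sqrt N\), and, within each typical irrep \(\cH_\lambda\), an isometric identification of low-lowering PBW vectors with Fock states. These channels are uniform on compact local sets. Two points need checking against our conventions.
\begin{enumerate}
\item[(i)] The classical factor has covariance \(\Sigma_p\) and mean \(h\). This follows from the Young local limit already used in \cref{prop:young-rounding}: the label law of \(\rho_{p+h/\sqrt N}^{\ot N}\) is asymptotically \(\mathsf N_{h,\Sigma_p}\) after this rescaling.
\item[(ii)] The oscillator factor is exactly \(D(z)\tau_pD(z)^*\) in our normalization. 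Here I would use \eqref{eq:LAN-coordinate-map} together with the moment computation preceding the theorem, which gives \(\langle a_{ij,n}\rangle\to z_{ij}\) and occupation \(q_{ij}/(1-q_{ij})\). Since the limit is Gaussian, these moments pin the state down.
\end{enumerate}
Any residual discrepancy with the reference's normalization is a fixed unitary composition (a phase or scaling relabelling of \(z\)), which we absorb into \(T_N\) and \(S_N\). Uniformity on compact \(C\) is inherited from the reference.

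\textbf{Step 2: the target-sample scaling.} Apply \(T_{m_n}^{\mathrm{full}}\) to \(\rho_{p+h/\sqrt n,\,g_{n,z}^{(p)}}^{\ot m_n}\) and rewrite the local parameters at sample size \(m_n\).
\begin{itemize}
\item[] \emph{Spectrum.} We have \(h/\sqrt n=h'/\sqrt{m_n}\) with \(h'=\sqrt{\gamma_n}\,h\).
\item[] \emph{Eigenbasis.} We have \(g_{n,z}^{(p)}=\exp(Y_p(z)/\sqrt n)=\exp(Y_p(z')/\sqrt{m_n})=g_{m_n,z'}^{(p)}\) with \(z'=\sqrt{\gamma_n}\,z\), because \(Y_p\) is real-linear in \(z\).
\end{itemize}
Since \(\gamma_n\to\gamma\), the rescaled pairs \((h',z')\) range over a compact set \(C'\supset\sqrt{\gamma_n}C\) for all large \(n\). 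Forward LAN at \(N=m_n\) on \(C'\) then gives trace-norm closeness to \(\mathsf N_{\sqrt{\gamma_n}h,\Sigma_p}\ot\Phi^{(p)}_{\sqrt{\gamma_n}z}\).

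It remains to replace \(\gamma_n\) by \(\gamma\). The displaced states are trace-norm continuous in the displacement, uniformly on compacts, by strong continuity of \(D(\cdot)\) applied to the trace-class \(\tau_p\). Gaussian shifts are likewise continuous in the mean. Finally, \(\Phi^{(p)}_{\sqrt\gamma z}=D(\sqrt\gamma z)\tau_pD(\sqrt\gamma z)^*=\Psi^{(p,\gamma)}_z\), which yields \eqref{eq:LAN-full-target}.

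\textbf{Main obstacle.} The hard part is Step 1's convention check, not the scaling. Kahn--Gu\c{t}\u{a} state LAN in their own chart, with its own displacement factors, and the footnote already flags a mismatch in one displayed equation. I would first verify the qubit case (\(d=2\), a single mode) explicitly, by comparing the Holstein--Primakoff limit of the spin-\(j\) coherent rotation with \(D(z)\). I would then argue that the \(d\)-dimensional case factorizes over root pairs \((i,j)\) at leading order, so that the normalization of each mode is fixed by the same one-pair computation through \eqref{eq:LAN-coordinate-map}.
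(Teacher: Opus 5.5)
Your proposal follows the paper's own route: the two-way statement is taken from the Kahn--Gu\c{t}\u{a} construction, the classical covariance and the oscillator displacement and occupation are fixed by the chart identity \eqref{eq:LAN-coordinate-map} and the collective-mode moments (with a qubit check), and \eqref{eq:LAN-full-target} comes from the same reparametrization $(h,z)\mapsto(\sqrt{\gamma_n}h,\sqrt{\gamma_n}z)$ at sample size $m_n$, followed by trace-norm continuity as $\gamma_n\to\gamma$. One small correction: only a phase mismatch $z_{ij}\mapsto e^{i\theta_{ij}}z_{ij}$ can be absorbed, via $\prod_{i<j}e^{i\theta_{ij}\widehat N_{ij}}$, which fixes $\tau_p$; a rescaling with $|c|\ne1$ or a complex conjugation of $z$ cannot be absorbed into both $T_N$ and $S_N$, because that would require noiseless amplification or noiseless phase conjugation, so such mismatches must be ruled out by your moment and qubit check rather than absorbed --- which is exactly the role that check plays in the paper.
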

\end{samepage}

\paragraph{Chart independence of the converses.}
The converses of \cref{thm:known-optimum} and
\cref{thm:unknown-optimum}\textup{(b)} are chart-independent: an invertible
real-linear reparametrization of \((h,z)\) sends the local boxes to dilates
of bounded polytopes, and the averaging in
\cref{prop:flat-prior-converse} uses only
\(|(\Xi_L+\xi)\mathbin\triangle\Xi_L|/|\Xi_L|\to0\).
The resulting Gaussian optimum is determined by \(\Sigma_p\) and the
\(q_{ij}\).  The explicit displacement scale enters
\cref{thm:pct-comparison} through \(\mathbb E|\beta_{ij}|^2\) in
\(q_{ij}^{\mathrm{PCT}}\), and the fluctuation calculation above fixes it.

\paragraph{Coordinate identification and target scaling.}
For \eqref{eq:LAN-full-forward}--\eqref{eq:LAN-full-reverse}, identify the
spectral coordinate \(u\in\R^{d-1}\) of the cited theorem with
\(h=(u_1,\ldots,u_{d-1},-\sum_{i<d}u_i)\in\mathsf H\).
Its covariance \(V(p)_{ij}=\delta_{ij}p_i-p_ip_j\), \(i,j<d\)
\cite[Eq.~(3.3)]{KahnGuta2009LANFiniteDimensional}, becomes
\(\Sigma_p=\diag(p)-pp^{\mathsf T}\) under this embedding, giving
\(\mathsf N_{h,\Sigma_p}\). The normalization
\eqref{eq:LAN-coordinate-map} identifies the oscillator factor with
\(\Phi_z^{(p)}\).

For \eqref{eq:LAN-full-target}, write \(\gamma_n=m_n/n\) and use
\[
  \rho_{p+h/\sqrt n,g_{n,z}^{(p)}}
  =\rho_{p+\sqrt{\gamma_n}h/\sqrt{m_n},
          g_{m_n,\sqrt{\gamma_n}z}^{(p)}}
\]
with the same base \(p\) at both sample sizes.  Apply
\eqref{eq:LAN-full-forward} at sample size \(m_n\) on a common compact
set containing \((\sqrt{\gamma_n}h,\sqrt{\gamma_n}z)\) for all sufficiently
large \(n\).  Since \(\gamma_n\to\gamma\), uniform trace-norm continuity
of Gaussian translations and Weyl conjugation on that set gives the
claimed target limit.

\begin{remark}[Qubit normalization check]
For \(d=2\), let \(\Delta=p_1-p_2>0\), \(q=p_2/p_1\), and take
\(z=t\in\R\).  The exact qubit fidelity identity gives
\[
  F\!\left(\rho_p^{\ot n},
    \rho_{p,g_{n,t}^{(p)}}^{\ot n}\right)
  =\left(1-\Delta^2\sin^2\frac{t}{\sqrt{n\Delta}}\right)^{n/2}
  \longrightarrow e^{-\Delta t^2/2}.
\]
This equals
\(F(\tau_q,D(t)\tau_qD(t)^*)
=\exp\left[-\frac{(1-q)t^2}{2(1+q)}\right]\), confirming the displacement normalization we use.
\end{remark}

\begin{corollary}[Fixed-spectrum LAN]\label{cor:LAN-orb}
Fix \(p\in\Delta_d^\circ\).  There are channels
\[
  T_N^{\mathrm{orb}}:\cT_1((\C^d)^{\ot N})\to\cT_1(\cF_{\mathrm{orb}}),
  \qquad
  S_N^{\mathrm{orb}}:\cT_1(\cF_{\mathrm{orb}})\to\cT_1((\C^d)^{\ot N})
\]
such that, uniformly for \(z\) in compact sets,
\begin{equation}\label{eq:LAN-known-forward}
  \|T_n^{\mathrm{orb}}(\rho_{p,g_{n,z}^{(p)}}^{\ot n})-\Phi_z^{(p)}\|_1
  \longrightarrow0,
\end{equation}
\begin{equation}\label{eq:LAN-known-reverse}
  \|S_n^{\mathrm{orb}}(\Phi_z^{(p)})-\rho_{p,g_{n,z}^{(p)}}^{\ot n}\|_1
  \longrightarrow0,
\end{equation}
and
\begin{equation}\label{eq:LAN-known-target}
  \|T_{m_n}^{\mathrm{orb}}(\rho_{p,g_{n,z}^{(p)}}^{\ot m_n})
       -\Psi_z^{(p,\gamma)}\|_1
  \longrightarrow0.
\end{equation}
\end{corollary}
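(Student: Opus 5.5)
The plan is to derive \cref{cor:LAN-orb} from \cref{thm:LAN-full} by restricting to \(h=0\) and discarding the classical spectral register. For the forward map, take
\[
  T_N^{\mathrm{orb}}:=\Tr_{\mathsf H}\circ T_N^{\mathrm{full}},
  \qquad
  \Tr_{\mathsf H}(R):=\int_{\mathsf H}R(x)\,\dd x .
\]
The map \(\Tr_{\mathsf H}\) is a channel from classical--quantum states to \(\cT_1(\cF_{\mathrm{orb}})\), because it is positive and preserves \(\int\Tr R(x)\,\dd x\). It is also trace-norm contractive, since \(\|\int R(x)\,\dd x\|_1\le\int\|R(x)\|_1\,\dd x\). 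Apply \eqref{eq:LAN-full-forward} on the compact set \(\{0\}\times C_z\), where \(C_z\) is the given compact \(z\)-set. Contractivity then gives
\[
  \|T_n^{\mathrm{orb}}(\rho_{p,g_{n,z}^{(p)}}^{\ot n})-\Tr_{\mathsf H}(\mathsf N_{0,\Sigma_p}\ot\Phi_z^{(p)})\|_1\to0,
\]
and the second term is exactly \(\Phi_z^{(p)}\). This proves \eqref{eq:LAN-known-forward}.

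For \eqref{eq:LAN-known-target}, apply \eqref{eq:LAN-full-target} with \(h=0\) and compose with \(\Tr_{\mathsf H}\). Here one checks that \(\rho_{p+0,g}=\rho_{p,g}\), so the input is literally the fixed-spectrum family. The marginal of \(\mathsf N_{0,\Sigma_p}\ot\Psi_z^{(p,\gamma)}\) is \(\Psi_z^{(p,\gamma)}\), and contractivity again transfers the uniform \(o(1)\) bound.

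For the reverse map, define
\[
  S_N^{\mathrm{orb}}(X):=S_N^{\mathrm{full}}(\mathsf N_{0,\Sigma_p}\ot X).
\]
Here \(\mathsf N_{0,\Sigma_p}\ot X\) denotes the classical--quantum state \(x\mapsto\varphi_{0,\Sigma_p}(x)X\). Preparing the known Gaussian density and tensoring it with \(X\) is a channel \(\cT_1(\cF_{\mathrm{orb}})\to L^1(\mathsf H;\cT_1(\cF_{\mathrm{orb}}))\): it is completely positive, trace preserving, and its Stinespring form is simply appending a classical register. So \(S_N^{\mathrm{orb}}\) is a composition of channels. Then \eqref{eq:LAN-known-reverse} is exactly \eqref{eq:LAN-full-reverse} restricted to \(h=0\), uniformly on compact \(z\)-sets.

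The only point needing care is that the appending and marginalizing maps are genuine channels in the sense used for \(L^1(\mathsf H;\cT_1(\cF_{\mathrm{orb}}))\) in Appendix~\ref{app:gaussian-converse}, and that they are trace-norm contractive. Both follow directly from the definition of \(\|R\|_1\) as \(\int\|R(x)\|_1\,\dd x\). No substantive obstacle is expected, since all the analytic work is already contained in \cref{thm:LAN-full}.
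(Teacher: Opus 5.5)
Your proposal is correct and is essentially the paper's own proof. The paper likewise restricts \cref{thm:LAN-full} to \(h=0\), defines \(T_N^{\mathrm{orb}}\) by integrating out the classical register and \(S_N^{\mathrm{orb}}(\sigma)=S_N^{\mathrm{full}}(\mathsf N_{0,\Sigma_p}\ot\sigma)\), and concludes from the trace-norm contractivity of marginalization.
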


\begin{proof}
Restrict \cref{thm:LAN-full} to \(h=0\) and set
\(T_N^{\mathrm{orb}}(X):=\int_{\mathsf H}
T_N^{\mathrm{full}}(X)(x)\,\dd x\) and
\(S_N^{\mathrm{orb}}(\sigma):=S_N^{\mathrm{full}}
(\mathsf N_{0,\Sigma_p}\ot\sigma)\).
Marginalization contracts trace norm, and tensoring with
\(\mathsf N_{0,\Sigma_p}\) preserves it, giving all three assertions.
\end{proof}

\begin{lemma}[LAN fidelity transfer]\label{lem:LAN-fidelity-transfer}
Let \(\{\rho_{n,\theta}\}_{\theta\in C}\) and
\(\{\sigma_{n,\theta}\}_{\theta\in C}\) be input and target models on a
compact parameter set \(C\).  Let \(\{\Phi_\theta\}\) and
\(\{\Psi_\theta\}\) be limiting models.  Suppose channels \(S_n,T_n\)
satisfy
\[
\begin{aligned}
  \delta_n&:=\sup_{\theta\in C}
    \|S_n(\Phi_\theta)-\rho_{n,\theta}\|_1\to0,\\
  \eta_n&:=\sup_{\theta\in C}
    \|T_n(\sigma_{n,\theta})-\Psi_\theta\|_1\to0.
\end{aligned}
\]
For every channel \(\cM_n\), let \(\Gamma_n=T_n\circ\cM_n\circ S_n\).
Then, for every \(\theta\in C\),
\begin{equation}\label{eq:LAN-transfer}
  F(\cM_n(\rho_{n,\theta}),\sigma_{n,\theta})
  \le F(\Gamma_n(\Phi_\theta),\Psi_\theta)
      +\sqrt{\delta_n}+\sqrt{\eta_n}.
\end{equation}
The error is uniform over both \(\theta\in C\) and \(\cM_n\).
\end{lemma}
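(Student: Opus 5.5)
The proof chains three comparisons: data processing under \(T_n\), the reverse LAN map \(S_n\) on the input side, and the forward map \(T_n\) on the target side. Each trace-norm error is converted into a square-root fidelity error by \eqref{eq:fidelity-continuity}. Throughout, fix \(\theta\in C\) and an arbitrary channel \(\cM_n\).

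\textbf{Step 1 (monotonicity).} Root fidelity is monotone under channels, so
\[
  F(\cM_n(\rho_{n,\theta}),\sigma_{n,\theta})
  \le F\bigl(T_n\cM_n(\rho_{n,\theta}),T_n(\sigma_{n,\theta})\bigr).
\]
This is the standard data-processing inequality. When \(T_n\) lands in classical--quantum states, it still holds, because the continuous direct-sum fidelity retains monotonicity, as noted before \cref{thm:gaussian-amplification}.

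\textbf{Step 2 (replace the input).} By contractivity of trace norm under the channel \(T_n\circ\cM_n\),
\[
  \|T_n\cM_n(\rho_{n,\theta})-\Gamma_n(\Phi_\theta)\|_1
  \le\|\rho_{n,\theta}-S_n(\Phi_\theta)\|_1\le\delta_n.
\]

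\textbf{Step 3 (replace the target).} By the hypothesis defining \(\eta_n\), we have \(\|T_n(\sigma_{n,\theta})-\Psi_\theta\|_1\le\eta_n\).

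\textbf{Step 4 (combine).} Apply \eqref{eq:fidelity-continuity} with \(A=T_n\cM_n(\rho_{n,\theta})\), \(A'=\Gamma_n(\Phi_\theta)\), \(C=T_n(\sigma_{n,\theta})\) and \(C'=\Psi_\theta\). All four are normalized states, so the trace-at-most-one hypothesis holds. This gives
\[
  F\bigl(T_n\cM_n(\rho_{n,\theta}),T_n(\sigma_{n,\theta})\bigr)
  \le F(\Gamma_n(\Phi_\theta),\Psi_\theta)+\sqrt{\delta_n}+\sqrt{\eta_n},
\]
and together with Step 1 this is \eqref{eq:LAN-transfer}. Neither \(\delta_n\) nor \(\eta_n\) involves \(\cM_n\), and both are suprema over \(C\), so the error is uniform in \(\theta\) and \(\cM_n\).

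\textbf{Main obstacle.} The only real care needed is that \eqref{eq:fidelity-continuity} and monotonicity extend to the infinite-dimensional and classical--quantum output spaces \(L^1(\mathsf H;\cT_1(\cF_{\mathrm{orb}}))\). For \eqref{eq:fidelity-continuity}, the Powers--St\o rmer inequality and Hölder's inequality hold for trace-class operators. In the classical--quantum case I would apply them fiberwise: with \(\|A-A'\|_1=\int\|A(x)-A'(x)\|_1\,\dd x\), Cauchy--Schwarz in \(x\) gives the same square-root bound. For monotonicity, I would use the Uhlmann or variational characterization of fidelity, which is available for such completely positive maps (Appendix~\ref{app:gaussian-converse}).
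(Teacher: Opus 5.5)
Your proposal is correct and follows the paper's proof step for step: contractivity of \(T_n\circ\cM_n\) bounds \(\|\Gamma_n(\Phi_\theta)-T_n\cM_n(\rho_{n,\theta})\|_1\) by \(\delta_n\), monotonicity under \(T_n\) gives the first inequality, and \eqref{eq:fidelity-continuity} together with the \(\eta_n\) hypothesis gives the second. Your fiberwise Cauchy--Schwarz argument, which extends the continuity bound to classical--quantum outputs, is a valid supplement that the paper leaves implicit.
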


\begin{proof}
Trace-norm contractivity of \(T_n\circ\cM_n\) gives
\(\|\Gamma_n(\Phi_\theta)-T_n\cM_n(\rho_{n,\theta})\|_1\le\delta_n\).
Monotonicity under \(T_n\), followed by
\eqref{eq:fidelity-continuity} and the definition of \(\eta_n\), therefore yields
\[
\begin{aligned}
  F(\cM_n(\rho_{n,\theta}),\sigma_{n,\theta})
  \le F(T_n\cM_n(\rho_{n,\theta}),T_n(\sigma_{n,\theta}))
  \le F(\Gamma_n(\Phi_\theta),\Psi_\theta)
       +\sqrt{\delta_n}+\sqrt{\eta_n}.
\end{aligned}
\]
\end{proof}

\subsection{Minimax converses}
\label{sec:minimax-completion}

\begin{proof}[Known-spectrum converse and completion of \cref{thm:known-optimum}]
For the known-spectrum problem, set
\(u_n(p):=\sup_{\cM_n}\inf_{g\in\U(d)}
F\!\bigl(\cM_n(\rho_{p,g}^{\ot n}),\rho_{p,g}^{\ot m_n}\bigr)\).
For every channel \(\cM_n\) and fixed \(L\), set
\[
  \Gamma_n=T_{m_n}^{\mathrm{orb}}\circ\cM_n\circ S_n^{\mathrm{orb}}.
\]
The local orbit \(\{g_{n,z}^{(p)}:z\in\mathsf Z_L\}\) is contained in the full
unitary orbit.  Apply \cref{lem:LAN-fidelity-transfer}, average over
\(\mathsf Z_L\), and take the supremum over \(\cM_n\).  Since the transfer
error is independent of \(\cM_n\),
\[
  u_n(p)
  \le \sup_\Lambda\frac1{|\mathsf Z_L|}
  \int_{\mathsf Z_L}
  F(\Lambda(\Phi_z^{(p)}),\Psi_z^{(p,\gamma)})\,\dd z+o(1),
\]
where the supremum is over all channels on the Gaussian model and the
error tends to zero for fixed \(L\).  Taking \(\limsup_n\), then
\(L\to\infty\), \eqref{eq:gaussian-orbital-optimum} gives
\[
  \limsup_{n\to\infty} u_n(p)\le\Forb(\gamma,p).
\]
Together with
\(\liminf_n u_n(p)\ge\Forb(\gamma,p)\), proved in
\cref{sec:achievability}, this proves \cref{thm:known-optimum}.
\end{proof}

Fixing \(p\) would miss the cost of an unknown spectrum.  Instead, we use
\(n^{-1/2}\)-neighborhoods of an interior spectrum and let the local window
grow after the large-\(n\) limit.

\begin{proof}[Proof of part~\textup{(b)} of \cref{thm:unknown-optimum}]
Let \(w_n(K)\) denote the quantity inside the limit on the left of
\eqref{eq:compact-global-value}.  Part~\textup{(a)} gives
\(\liminf_nw_n(K)\ge\inf_{p\in K}\Funiv(\gamma,p)\).
For the reverse inequality, fix
\(p_0\in\operatorname{int}_{\mathsf A_d}K\) and \(L\).  For large \(n\),
all local spectra \(p_0+h/\sqrt n\), \((h,z)\in\Omega_L\), lie in \(K\), so
the worst-case fidelity of \(\cM_n\) on \(K\) is at most its average over
this window.  Repeating the known-spectrum argument with the full LAN maps
at \(p_0\), \(\Gamma_n=T_{m_n}^{\mathrm{full}}\circ\cM_n\circ
S_n^{\mathrm{full}}\), and \(\Omega_L\) in place of \(\mathsf Z_L\) gives
\[
  w_n(K)
  \le\sup_\Lambda\frac1{|\Omega_L|}\int_{\Omega_L}
  F\!\left(\Lambda(\Theta_{h,z}^{\mathrm{in}}),
                 \Theta_{h,z}^{\mathrm{tar}}\right)\,\dd h\,\dd z+o(1),
\]
where both Gaussian models are based at \(p_0\) and the supremum is over
all classical--quantum channels.  Taking \(\limsup_n\) and then
\(L\to\infty\), \eqref{eq:gaussian-full-optimum} yields
\(\limsup_nw_n(K)\le\Funiv(\gamma,p_0)\).  Since
\(\operatorname{int}_{\mathsf A_d}K\) is dense in \(K\) and
\(\Funiv(\gamma,\cdot)\) is continuous, taking the infimum over \(p_0\)
completes the proof.
\end{proof}

\section{Cloning projector states}
\label{sec:grassmann}

A flat rank-\(r\) projector state has the form
\(\rho_P=P/r\), where \(P\) is a rank-\(r\) orthogonal projector.  These
projectors form the complex Grassmannian \(\Gr(r,d)\).  Rank deficiency,
and for \(r>1\) the degenerate positive eigenvalue, place these states
outside the simple full-rank theorems.
We construct \(\cC_n^{\mathrm{proj}}\) and prove
\cref{thm:grassmann} directly, without a general rank-deficient LAN theorem.
The compression lemma below holds for any positive spectrum on its
support, but does not by itself establish optimal cloning beyond the flat
case.

\subsection{Exact rank compression}
\label{sec:compression}

A rank-\(r\) input occupies only part of each Schur sector in ambient
dimension \(d\), while our Cartan channel acts on the full sector.  To evaluate
fidelity with a target on the same support, we compare this ambient
channel with the Cartan channel in dimension \(r\).  The compression
identity below separates the mass retained on the target support from
the fidelity within it. For flat projector states, the latter is one.

Superscripts \((s)\) indicate ambient dimension.  For
\(\ell(\lambda)\le r\), write \(D_s^\lambda=\dim\cH_\lambda^{(s)}\) and
define
\begin{equation}
  R_\lambda:=\frac{D_d^\lambda}{D_r^\lambda}
  =\prod_{\substack{1\le i\le r\\r<j\le d}}
    \frac{\lambda_i+j-i}{j-i}.
  \label{eq:R-lambda}
\end{equation}
Let \(J_\lambda:\cH_\lambda^{(r)}\to\cH_\lambda^{(d)}\) be the isometric
\(\U(r)\)-intertwiner mapping unit highest-weight vectors to each other,
and let \(P_0\) project onto the first \(r\) coordinates.  Write
\(\Pi_{\lambda,P_0}=J_\lambda J_\lambda^*\) for the corresponding
Schur-sector support projection.

For a compatible pair of \(r\)-row partitions \(\mu\vdash n\) and
\(\nu\vdash m\), let \(\omega=\nu-\mu\).  Write
\(\cC_{\mu,\nu}^{(s)}\) for the Cartan-sector channel
\eqref{eq:sector-channel-formula} in ambient dimension \(s\in\{r,d\}\).
For a probability vector \(x\in\R^s\), use the dimension-\(s\)
polynomial representation \(\pi_\lambda^{(s)}\) and Schur polynomial
\(s_\lambda^{(s)}\) to define
\[
  \rho_{x,\lambda}^{(s)}
  =\frac{\pi_\lambda^{(s)}(\diag(x))}{s_\lambda^{(s)}(x)}
\]
when \(\ell(\lambda)\le|\{i:x_i>0\}|\), exactly the condition
\(s_\lambda^{(s)}(x)>0\).

\begin{lemma}[Exact rank-compression identity]
\label{lem:exact-compression}
Let \(p=(p_1,\ldots,p_r)\) satisfy \(p_i>0\) and \(\sum_i p_i=1\),
and let \(\bar p=(p_1,\ldots,p_r,0,\ldots,0)\in\R^d\).  For every
\(r\)-row partition \(\lambda\),
\begin{equation}
  \rho_{\bar p,\lambda}^{(d)}
  =J_\lambda\rho_{p,\lambda}^{(r)}J_\lambda^*.
  \label{eq:conditional-support}
\end{equation}
For every compatible \(r\)-row pair \(\mu,\nu\) and every trace-class
operator \(X\) on \(\cH_\mu^{(r)}\),
\begin{equation}
  \Pi_{\nu,P_0}\cC_{\mu,\nu}^{(d)}(J_\mu XJ_\mu^*)\Pi_{\nu,P_0}
  =\frac{R_\mu}{R_\nu}
   J_\nu\cC_{\mu,\nu}^{(r)}(X)J_\nu^*.
  \label{eq:channel-compression}
\end{equation}
Consequently,
\begin{equation}\label{eq:exact-sector-factorization}
 F\!\left(\cC_{\mu,\nu}^{(d)}(\rho_{\bar p,\mu}^{(d)}),
   \rho_{\bar p,\nu}^{(d)}\right)
 =\sqrt{\frac{R_\mu}{R_\nu}}\,
 F\!\left(\cC_{\mu,\nu}^{(r)}(\rho_{p,\mu}^{(r)}),
   \rho_{p,\nu}^{(r)}\right).
\end{equation}
\end{lemma}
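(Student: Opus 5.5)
The plan is to realize every object attached to \(\U(r)\), acting on the first \(r\) coordinates, inside the corresponding \(\U(d)\) object by a weight-support argument. All three identities then follow from algebra with the isometries \(J_\lambda\).

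\textbf{Step 1: \(\operatorname{ran}J_\lambda\) is the first-\(r\) weight subspace.} Let \(W_\lambda\subseteq\cH_\lambda^{(d)}\) be the span of all weight vectors whose weight \(\xi\in\mathbb Z_{\ge0}^d\) vanishes in coordinates \(r+1,\ldots,d\). Weights of polynomial representations are nonnegative. So \(W_\lambda\) is exactly the joint eigenspace of the last \(d-r\) diagonal generators for eigenvalue \(0\). These generators commute with \(\U(r)\), so \(W_\lambda\) is \(\U(r)\)-invariant.

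By the branching rule \(\U(d)\downarrow\U(r)\times\U(d-r)\), the \(\U(d-r)\)-trivial isotypic part is a single copy of \(\cH_\lambda^{(r)}\). This copy contains \(v_\lambda\), since \(\ell(\lambda)\le r\). Hence \(W_\lambda=\operatorname{ran}J_\lambda\) and \(\Pi_{\lambda,P_0}\) is the projector onto \(W_\lambda\).

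\textbf{Step 2: identity \eqref{eq:conditional-support}.} The operator \(\pi_\lambda^{(d)}(\diag\bar p)\) acts on a weight vector of weight \(\xi\) by the scalar \(\bar p^{\,\xi}\). This scalar is zero unless \(\xi\) is supported on the first \(r\) coordinates, i.e.\ unless the vector lies in \(W_\lambda\). On \(W_\lambda\) the operator agrees with the \(\U(r)\)-action of \(\diag p\), extended polynomially. Therefore
\[
\pi_\lambda^{(d)}(\diag\bar p)=J_\lambda\,\pi_\lambda^{(r)}(\diag p)\,J_\lambda^*.
\]
Taking traces gives \(s^{(d)}_\lambda(\bar p)=s^{(r)}_\lambda(p)\), and dividing the two identities yields \eqref{eq:conditional-support}.

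\textbf{Step 3: identity \eqref{eq:channel-compression}.} This is the key step. Take a weight vector \(x\in W_\nu\) of weight \(\xi\). Because \(V^{(d)}_{\mu,\omega}\) is equivariant, \(V^{(d)}_{\mu,\omega}x\) is a sum of tensors of weight vectors with weights \(\xi_1+\xi_2=\xi\). Both \(\xi_1\) and \(\xi_2\) are nonnegative, so both vanish in coordinates \(r+1,\ldots,d\). Hence
\[
V^{(d)}_{\mu,\omega}J_\nu=(J_\mu\ot J_\omega)\,W
\]
for some isometry \(W:\cH_\nu^{(r)}\to\cH_\mu^{(r)}\ot\cH_\omega^{(r)}\). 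This \(W\) is \(\U(r)\)-equivariant, since the \(J\)'s are \(\U(r)\)-intertwiners. It also sends \(v_\nu\) to \(v_\mu\ot v_\omega\). By multiplicity one of the Cartan component, \(W=V^{(r)}_{\mu,\omega}\).

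Now compute \(J_\nu^*\cC^{(d)}_{\mu,\nu}(J_\mu XJ_\mu^*)J_\nu\). Use \(J_\mu^*J_\mu=I\) and \(J_\omega^*I_{\cH_\omega^{(d)}}J_\omega=I_{\cH_\omega^{(r)}}\). The result is
\[
\frac{D_d^\mu}{D_d^\nu}\,V^{(r)*}_{\mu,\omega}\bigl(X\ot I\bigr)V^{(r)}_{\mu,\omega}
=\frac{D_d^\mu D_r^\nu}{D_d^\nu D_r^\mu}\,\cC^{(r)}_{\mu,\nu}(X)
=\frac{R_\mu}{R_\nu}\,\cC^{(r)}_{\mu,\nu}(X).
\]
Conjugating by \(J_\nu\), and using \(\Pi_{\nu,P_0}=J_\nu J_\nu^*\), gives \eqref{eq:channel-compression}.

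\textbf{Step 4: the fidelity identity \eqref{eq:exact-sector-factorization}.} Set \(A=\cC^{(d)}_{\mu,\nu}(\rho^{(d)}_{\bar p,\mu})\) and \(C=\rho^{(d)}_{\bar p,\nu}\). By Step 2, \(C=\Pi C\Pi\) with \(\Pi=\Pi_{\nu,P_0}\), so \(\sqrt C=\Pi\sqrt C\,\Pi\). Then
\[
F(A,C)=\Tr\sqrt{\sqrt C A\sqrt C}=F(\Pi A\Pi,C).
\]
Insert \eqref{eq:conditional-support} for \(\mu\) into \eqref{eq:channel-compression} with \(X=\rho^{(r)}_{p,\mu}\). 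This gives
\[
\Pi A\Pi=\frac{R_\mu}{R_\nu}\,J_\nu\,\cC^{(r)}_{\mu,\nu}(\rho^{(r)}_{p,\mu})\,J_\nu^*,
\]
while \(C=J_\nu\rho^{(r)}_{p,\nu}J_\nu^*\) by \eqref{eq:conditional-support} for \(\nu\). Fidelity is invariant under a common isometric embedding and satisfies \(F(cA,C)=\sqrt c\,F(A,C)\). This yields \eqref{eq:exact-sector-factorization}.

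\textbf{Main obstacle.} The only delicate point is Step 1, identifying \(\operatorname{ran}J_\lambda\) with the full zero-weight subspace in the last \(d-r\) coordinates. I would prove it via the Gelfand--Tsetlin branching, applied \(d-r\) times. A vector in this subspace has a Gelfand--Tsetlin pattern whose rows below level \(r\) must all have zero total weight in the dropped coordinates. This forces those rows to equal \(\lambda\), leaving exactly one copy of \(\cH_\lambda^{(r)}\). Once this is in place, the intertwiner-restriction argument in Step 3 is routine.
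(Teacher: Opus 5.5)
Your proof is correct and has the same skeleton as the paper's: the block identity $\pi_\lambda^{(d)}(\diag\bar p)=J_\lambda\pi_\lambda^{(r)}(\diag p)J_\lambda^*$, the restriction identity $V^{(d)}_{\mu,\omega}J_\nu=(J_\mu\ot J_\omega)V^{(r)}_{\mu,\omega}$ substituted into the two sector formulas, and the compression $F(A,C)=F(\Pi A\Pi,C)$. Your Step~4 matches the paper's argument essentially line for line. You differ in how the two representation-theoretic identities are justified, and the paper's route bypasses the step you flag as the main obstacle.

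\emph{First identity.} For \eqref{eq:conditional-support}, the paper compares Schur--Weyl blocks of $(\diag\bar p)^{\ot N}=\iota^{\ot N}(\diag p)^{\ot N}(\iota^*)^{\ot N}$. Schur--Weyl naturality writes $\iota^{\ot N}$ as $\bigoplus_{\ell(\lambda)\le r}J_\lambda\ot I_{\cS_\lambda}$, so no branching rule is invoked.

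\emph{Restriction identity.} The paper only observes that both sides are $\U(r)$-intertwiners out of the irreducible $\cH_\nu^{(r)}$ that send $v_\nu$ to $v_\mu\ot v_\omega$. Since $v_\nu$ is cyclic, they coincide. This makes three of your ingredients unnecessary: the weight-additivity range argument, the inclusion $W_\mu\ot W_\omega\subseteq\operatorname{ran}(J_\mu\ot J_\omega)$, and multiplicity one in dimension $r$.

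\emph{What your route buys.} It gives an intrinsic description of $\operatorname{ran}J_\lambda$ as the weight subspace $W_\lambda$. This makes $\Pi_{\lambda,P_0}=\pi_\lambda^{(d)}(P_0)$, which the paper uses later in \eqref{eq:flat-schur-decomposition}, transparent.

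\emph{Your Step~1 without Gelfand--Tsetlin.} It follows in one line from the same naturality. The range of $\iota^{\ot N}$ is the span of product vectors whose indices are all at most $r$. That span is $\bigoplus_\lambda W_\lambda\ot\cS_\lambda$, and it also equals $\bigoplus_{\ell(\lambda)\le r}\operatorname{ran}J_\lambda\ot\cS_\lambda$. Comparing blocks gives $W_\lambda=\operatorname{ran}J_\lambda$.
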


\begin{proof}
Let \(\iota:\C^r\hookrightarrow\C^d\) be the coordinate inclusion.
Schur--Weyl naturality identifies \(\iota^{\ot N}\) with
\(\bigoplus_{\ell(\lambda)\le r}(J_\lambda\ot I_{\cS_\lambda})\).
Comparing blocks in
\((\diag\bar p)^{\ot N}
 =\iota^{\ot N}(\diag p)^{\ot N}(\iota^*)^{\ot N}\) gives
\[
  \pi_\lambda^{(d)}
    (\diag\bar p)
  =J_\lambda\pi_\lambda^{(r)}(\diag p)J_\lambda^*.
\]
Taking traces gives
\(s_\lambda^{(d)}(\bar p)=s_\lambda^{(r)}(p)>0\). Division proves
\eqref{eq:conditional-support}.

The maps \(V_{\mu,\omega}^{(d)}J_\nu\) and
\((J_\mu\ot J_\omega)V_{\mu,\omega}^{(r)}\) are \(\U(r)\)-intertwiners
that both send the unit highest-weight vector to
\(v_\mu\ot v_\omega\).  Hence they agree:
\begin{equation}
  V_{\mu,\omega}^{(d)}J_\nu
  =(J_\mu\ot J_\omega)V_{\mu,\omega}^{(r)}.
  \label{eq:cartan-restriction}
\end{equation}
Inserting \eqref{eq:cartan-restriction} into the two sector-channel formulas
proves \eqref{eq:channel-compression}.

For \(A,B\ge0\) and an orthogonal projection \(\Pi\) with
\(\Pi B\Pi=B\), \(F(A,B)=F(\Pi A\Pi,B)\).  Apply this with
\(\Pi=\Pi_{\nu,P_0}\). \eqref{eq:conditional-support},
\eqref{eq:channel-compression}, isometric invariance, and
\(F(aA,B)=\sqrt a F(A,B)\) give \eqref{eq:exact-sector-factorization}.
\end{proof}

For \(P_0\), the flat-spectrum conditional state in the
\(\lambda\)-sector is
\begin{equation}\label{eq:projector-sector-state}
  \rho_{P_0,\lambda}
  =J_\lambda\frac{I_{\cH_\lambda^{(r)}}}{D_r^\lambda}J_\lambda^*.
\end{equation}
By \(\U(r)\)-covariance and irreducibility, the \(r\)-dimensional Cartan
channel maps the maximally mixed input to the maximally mixed output.
Thus \eqref{eq:exact-sector-factorization} gives, for every compatible pair,
\begin{equation}
  F\!\left(
    \cC_{\mu,\nu}^{(d)}(\rho_{P_0,\mu}),
    \rho_{P_0,\nu}
  \right)
  =\sqrt{\frac{R_\mu}{R_\nu}}.
  \label{eq:grassmann-exact-sector}
\end{equation}
For \(\mu_i/n\to1/r\) and \(\nu_i/m_n\to1/r\), this tends to the
right-hand side of \eqref{eq:grassmann-value}.

\subsection{Achievability}
\label{sec:grassmann-achievability}

We follow the Schur--Cartan lifting of
\cref{sec:cartan-sector-lifting}: sample an output Young label and apply
the Cartan-sector channel when the input and output labels are compatible.
To choose the transition kernel, we couple the exact flat-spectrum Schur
laws at sizes \(n\) and \(m_n\) so that their centered diagram shapes are
close, then condition on the input label.  This gives the target label law
exactly and makes compatible pairs overwhelmingly likely.

Let
\begin{equation}
  \mathsf P_N^{(r)}(\lambda)
  =\frac{\dim\cS_\lambda\,D_r^\lambda}{r^N},
  \qquad \lambda\vdash N,\quad\ell(\lambda)\le r,
  \label{eq:flat-schur-law}
\end{equation}
be the Schur--Weyl law of the maximally mixed state on \(\C^r\).  The
complete Schur decomposition is
\begin{equation}
  (P/r)^{\ot N}
  =\bigoplus_{\ell(\lambda)\le r}
   \mathsf P_N^{(r)}(\lambda)
   \frac{\Pi_{\lambda,P}}{D_r^\lambda}
   \ot\frac{I_{\cS_\lambda}}{\dim\cS_\lambda}.
  \label{eq:flat-schur-decomposition}
\end{equation}
Here \(\Pi_{\lambda,P}=\pi_\lambda^{(d)}(P)\) is the rank-\(D_r^\lambda\)
projection obtained by applying the polynomial representation to \(P\).
For \(P=P_0\), it is the support projection defined above.
Sample \(\lambda\sim\mathsf P_N^{(r)}\) and pad it with zero rows to length
\(r\).  The Schur--Weyl/random-word central limit theorem
\cite{Kuperberg2002,Johansson2001} gives
\begin{equation}
  X_N:=\frac{\lambda-(N/r)\mathbf1_r}{\sqrt N}
  \longrightarrow X
  \quad\text{in distribution},
  \label{eq:traceless-GUE-limit}
\end{equation}
where \(X\) is a fixed scaling of the ordered eigenvalue vector of a
traceless \(r\times r\) GUE matrix.  In particular,
\(\Pr(X_1>\cdots>X_r)=1\).

Both \(X_n\) and \(X_{m_n}\) converge to the same law.  By Strassen's
coupling theorem \cite{Strassen1965}, there is a sufficiently slow
sequence \(\varepsilon_n\downarrow0\) for which the minimum, over
couplings of the exact marginals \(\mathsf P_n^{(r)}\) and
\(\mathsf P_{m_n}^{(r)}\), of the probability that their centered
shapes differ by more than \(\varepsilon_n\) tends to zero.  Fix such a
sequence and let \(\mathsf J_n(\mu,\nu)\) be a minimizer of this finite
transportation linear program.  Under \(\mathsf J_n\),
\begin{equation}
  \left\|
    \frac{\mu-(n/r)\mathbf1_r}{\sqrt n}
    -\frac{\nu-(m_n/r)\mathbf1_r}{\sqrt{m_n}}
  \right\|_\infty
  \longrightarrow0
  \quad\text{in probability}.
  \label{eq:shape-coupling}
\end{equation}
This coupling is needed when \(r>1\): independent sampling of the two
labels leaves their \(O(\sqrt n)\) row gaps unrelated.  For
\(1\le i<r\), independent sampling would give
\[
  \frac{(\nu_i-\mu_i)-(\nu_{i+1}-\mu_{i+1})}{\sqrt n}
  \longrightarrow
  \sqrt\gamma(X'_i-X'_{i+1})-(X_i-X_{i+1})
  \quad\text{in distribution},
\]
where \(X'\) is an independent copy of \(X\). This limit is negative
with positive probability.  The chosen minimizer need not have a
closed-form rule.
Every partition \(\mu\vdash n\) with \(\ell(\mu)\le r\) has
\(\mathsf P_n^{(r)}(\mu)>0\).  We therefore condition the coupling on
its input coordinate and define
\begin{equation}\label{eq:projector-label-kernel}
  \mathsf K_n^{\mathrm{proj}}(\nu\mid\mu)
  :=\frac{\mathsf J_n(\mu,\nu)}{\mathsf P_n^{(r)}(\mu)},
  \qquad \ell(\mu)\le r.
\end{equation}
Thus sampling \(\mu\sim\mathsf P_n^{(r)}\) and then
\(\nu\sim \mathsf K_n^{\mathrm{proj}}(\cdot\mid\mu)\) has joint law \(\mathsf J_n\).  In particular,
the output label has exactly the target marginal:
\begin{equation}\label{eq:projector-exact-output-marginal}
  \sum_{\ell(\mu)\le r}
    \mathsf P_n^{(r)}(\mu)\mathsf K_n^{\mathrm{proj}}(\nu\mid\mu)
  =\sum_{\ell(\mu)\le r}\mathsf J_n(\mu,\nu)
  =\mathsf P_{m_n}^{(r)}(\nu).
\end{equation}
For input diagrams with more than \(r\) rows, choose
\(\mathsf K_n^{\mathrm{proj}}(\cdot\mid\mu)\) arbitrarily.  Such diagrams have zero probability
on rank-\(r\) projector inputs, but this extension makes the lifted map a
channel on the entire input space.  Set
\begin{equation}\label{eq:projector-cloner}
  \cC_n^{\mathrm{proj}}:=\mathcal Q_n[\mathsf K_n^{\mathrm{proj}}].
\end{equation}
Its dependence on the fixed rank \(r\) is suppressed in the notation.
The channel depends on the chosen minimizer, but
is independent of the unknown projector \(P\).

Let \(\omega=\nu-\mu\).  Its leading term is \((m_n-n)/r\) in every
coordinate, while its fluctuations are of order \(\sqrt n\).  Hence
\(\omega_i>0\) for every \(i\) with probability \(1-o(1)\).  For adjacent rows,
\[
  \frac{\omega_i-\omega_{i+1}}{\sqrt n}
  =\sqrt{\gamma_n}
    \left[\frac{\nu_i-\nu_{i+1}}{\sqrt{m_n}}\right]
    -\left[\frac{\mu_i-\mu_{i+1}}{\sqrt n}\right]
  \longrightarrow
    (\sqrt\gamma-1)(X_i-X_{i+1})
  \quad\text{in distribution}.
\]
The limit is strictly positive almost surely.  Thus the compatibility event
\[
  C_n=\{(\mu,\nu):\nu-\mu\vdash m_n-n\}
\]
has probability \(1-o(1)\).  The cloner \(\cC_n^{\mathrm{proj}}\) uses the
ambient Cartan-sector channel on \(C_n\) and, otherwise, the covariant
replacement channel into the same sampled \(\nu\)-block.
Write \(\mathbf1_{C_n}\) for the indicator of \(C_n\): it equals one on
compatible pairs and zero on fallback pairs.
Under the coupling \((\mu,\nu)\sim\mathsf J_n\), the central limit theorem also
gives \(\mu_i/n,\nu_i/m_n\to1/r\) in probability.  By
\eqref{eq:R-lambda}, the random variable
\(\mathbf1_{C_n}\sqrt{R_\mu/R_\nu}\) therefore satisfies
\begin{equation}\label{eq:grassmann-sector-factor-limit}
  \mathbf1_{C_n}\sqrt{\frac{R_\mu}{R_\nu}}
  \longrightarrow\gamma^{-r(d-r)/2}
  \quad\text{in probability}.
\end{equation}
On \(C_n\), compatibility gives \(\nu_i\ge\mu_i\), hence
\(0\le\mathbf1_{C_n}\sqrt{R_\mu/R_\nu}\le1\).  This uniform bound
upgrades the convergence in probability in
\eqref{eq:grassmann-sector-factor-limit} to convergence of expectations.
Since the output-label marginal is exact, the block identity
\eqref{eq:lifted-exact-block-fidelity} (valid also for \(r\)-row label
laws), monotonicity and concavity of \(F\) in its first argument, and
\eqref{eq:grassmann-exact-sector} give
\begin{equation}\label{eq:grassmann-achievability-limit}
  F\!\left(
    \cC_n^{\mathrm{proj}}((P/r)^{\ot n}),
    (P/r)^{\ot m_n}
  \right)
  \ge
  \mathbb E_{\mathsf J_n}\!\left[
    \mathbf1_{C_n}\sqrt{\frac{R_\mu}{R_\nu}}\right]
  \longrightarrow\gamma^{-r(d-r)/2}.
\end{equation}
This proves achievability in \cref{thm:grassmann}.

\subsection{Converse}

The unknown projector \(P\) selects an \(r\)-dimensional support inside
\(\C^d\).  Averaging a cloner over unitaries and output permutations
cannot worsen its worst-case fidelity, so we may work with a symmetric
channel.  Replacing each input support projection by the identity gives
one \(P\)-independent operator \(B_n\) that dominates every projector
input. In the \(\mu\)-block, this multiplies its trace by \(R_\mu\).
The symmetric output on \(B_n\) is maximally mixed within each Schur
block, while the target support occupies only a fraction \(1/R_\nu\) of
the \(\nu\)-block.  Cauchy--Schwarz combines these factors into the bound
below.

\begin{proposition}[Projector converse]
\label{prop:grassmann-finite-converse}
For \(1\le r<d\) and every \(n\) with \(m_n\ge n\),
\begin{equation}\label{eq:grassmann-finite-converse}
 \sup_{\cM_n}\inf_{P\in\Gr(r,d)}
 F\!\left(
   \cM_n((P/r)^{\ot n}),
   (P/r)^{\ot m_n}
 \right)
 \le\left[
   \left(\sum_{\substack{\mu\vdash n\\\ell(\mu)\le r}}
     \mathsf P_n^{(r)}(\mu)R_\mu\right)
   \left(\sum_{\substack{\nu\vdash m_n\\\ell(\nu)\le r}}
     \frac{\mathsf P_{m_n}^{(r)}(\nu)}{R_\nu}\right)
 \right]^{1/2}.
\end{equation}
Here \(\mathsf P_N^{(r)}\) is the flat Schur law from
\eqref{eq:flat-schur-law}, and \(R_\lambda\) is the dimension ratio from
\eqref{eq:R-lambda}.  The supremum is over channels from \(n\) to
\(m_n\) qudits.
\end{proposition}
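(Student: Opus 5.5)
We follow the sketch given just before the statement: symmetrize, dominate all projector inputs by one $P$-independent operator, then apply Cauchy--Schwarz blockwise in the output Schur decomposition.

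\emph{Step 1 (symmetrization).} Given any channel $\cM_n$, define $\bar\cM_n(X)=\int_{\U(d)}\frac1{m_n!}\sum_{\sigma\in S_{m_n}}W_\sigma g^{\ot m_n}\cM_n(g^{*\ot n}Xg^{\ot n})g^{*\ot m_n}W_\sigma^*\,\dd g$. The target $(P/r)^{\ot m_n}$ is permutation invariant, and the family $\{(P/r)^{\ot N}\}$ is unitarily covariant. Joint concavity of $F$ then gives $\inf_P F(\bar\cM_n((P/r)^{\ot n}),(P/r)^{\ot m_n})\ge\inf_P F(\cM_n(\cdots),\cdots)$. Thus we may assume $\cM_n$ is $\U(d)$-covariant and its output commutes with $S_{m_n}$. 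By transitivity of $\U(d)$ on $\Gr(r,d)$, the fidelity is independent of $P$, so it suffices to bound it at $P=P_0$.

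\emph{Step 2 (domination).} By \eqref{eq:flat-schur-decomposition}, $(P/r)^{\ot n}\le B_n:=\bigoplus_{\ell(\mu)\le r}\mathsf P_n^{(r)}(\mu)\frac{I_{\cH_\mu^{(d)}}}{D_r^\mu}\ot\frac{I_{\cS_\mu}}{\dim\cS_\mu}$, since $\Pi_{\mu,P}\le I$. We have $\Tr B_n=\sum_\mu\mathsf P_n^{(r)}(\mu)R_\mu$, and $B_n$ is invariant under $\U(d)\times S_n$. Because $\cM_n$ is covariant and permutation symmetric on the output, $\cM_n(B_n)$ commutes with the $\U(d)\times S_{m_n}$ action. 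Hence $\cM_n(B_n)=\bigoplus_\nu b_\nu\frac{I_{\cH_\nu}}{D_d^\nu}\ot\frac{I_{\cS_\nu}}{\dim\cS_\nu}$, where $b_\nu\ge0$ and $\sum_\nu b_\nu=\Tr B_n$.

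\emph{Step 3 (Cauchy--Schwarz).} Fidelity is monotone in its first argument, so $F(\cM_n((P_0/r)^{\ot n}),T)\le F(\cM_n(B_n),T)$ with $T=(P_0/r)^{\ot m_n}$. Blockwise, $F$ between $b_\nu I/(D_d^\nu\dim\cS_\nu)$ and $\mathsf P_{m_n}^{(r)}(\nu)\Pi_{\nu,P_0}/(D_r^\nu\dim\cS_\nu)$ equals $\sqrt{b_\nu\mathsf P_{m_n}^{(r)}(\nu)}\,\Tr(\Pi_{\nu,P_0})/\sqrt{D_d^\nu D_r^\nu}=\sqrt{b_\nu\mathsf P_{m_n}^{(r)}(\nu)/R_\nu}$. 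Output blocks with $\ell(\nu)>r$ meet a zero target and contribute nothing. Summing over $\nu$ and applying Cauchy--Schwarz gives $\bigl(\sum_\nu b_\nu\bigr)^{1/2}\bigl(\sum_\nu\mathsf P_{m_n}^{(r)}(\nu)/R_\nu\bigr)^{1/2}$. Since $\sum_\nu b_\nu=\Tr B_n=\sum_\mu\mathsf P_n^{(r)}(\mu)R_\mu$, this is \eqref{eq:grassmann-finite-converse}.

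\emph{Main obstacle.} The step needing care is Step 1. We must check that the twirl keeps the worst-case value and that the twirled output on $B_n$ is genuinely a scalar on each $\cH_\nu\ot\cS_\nu$. The latter uses Schur's lemma on both tensor factors: $\U(d)$-covariance gives invariance on $\cH_\nu$, and output permutation averaging gives invariance on $\cS_\nu$. Monotonicity of $F$ under the operator inequality $A\le A'$ also requires unnormalized positive operators, which \eqref{eq:root-fidelity-definition} already allows.
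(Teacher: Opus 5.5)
Your proposal is correct and follows the paper's own proof essentially step for step. Like the paper, you twirl the cloner over \(\U(d)\) and output permutations, and dominate every projector input by the \(P\)-independent operator \(B_n\) with \(\Tr B_n=\sum_\mu\mathsf P_n^{(r)}(\mu)R_\mu\); you then use Schur--Weyl duality to make \(\cM_n(B_n)\) scalar on each \(\cH_\nu\ot\cS_\nu\), and finish with monotonicity of \(F\), the blockwise value \(\sqrt{b_\nu\mathsf P_{m_n}^{(r)}(\nu)/R_\nu}\), and Cauchy--Schwarz.
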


\begin{proof}
Covariantize an arbitrary cloner over \(\U(d)\) and twirl its output over
permutations.  Neither decreases its worst-case fidelity: the first follows
from joint concavity and unitary invariance, the second from monotonicity
under the output twirl, which fixes every target.  Write \(\cM_n\) for the
resulting channel.

Replace each support projection in \eqref{eq:flat-schur-decomposition}
by the identity to obtain the invariant positive operator
\[
  B_n=\bigoplus_{\substack{\mu\vdash n\\\ell(\mu)\le r}}
   \mathsf P_n^{(r)}(\mu)
   \frac{I_{\cH_\mu^{(d)}}}{D_r^\mu}
   \ot\frac{I_{\cS_\mu}}{\dim\cS_\mu}.
\]
For every \(P\),
\[
  (P/r)^{\ot n}\le B_n,
  \qquad
  \Tr B_n=\sum_\mu\mathsf P_n^{(r)}(\mu)R_\mu.
\]
Covariance and the output twirl imply that
\[
  \cM_n(B_n)=\bigoplus_{\substack{\nu\vdash m_n\\\ell(\nu)\le d}}
   z_\nu\frac{I_{\cH_\nu^{(d)}}}{D_d^\nu}
   \ot\frac{I_{\cS_\nu}}{\dim\cS_\nu},
  \qquad z_\nu\ge0,\quad \sum_\nu z_\nu=\Tr B_n.
\]
Fidelity increases in either positive argument:
\(A\le B\) implies
\(\Tr\sqrt{\sqrt T A\sqrt T}\le\Tr\sqrt{\sqrt T B\sqrt T}\).
Positivity of \(\cM_n\) and the direct-sum formula therefore give
\[
 F\!\left(\cM_n((P/r)^{\ot n}),(P/r)^{\ot m_n}\right)
 \le F\!\left(\cM_n(B_n),(P/r)^{\ot m_n}\right)
 =\sum_{\ell(\nu)\le r}
   \sqrt{\frac{z_\nu\mathsf P_{m_n}^{(r)}(\nu)}{R_\nu}}.
\]
By Cauchy--Schwarz and \(\sum_{\ell(\nu)\le r}z_\nu\le\Tr B_n\),
\[
\begin{aligned}
 \left(\sum_{\ell(\nu)\le r}
   \sqrt{\frac{z_\nu\mathsf P_{m_n}^{(r)}(\nu)}{R_\nu}}\right)^2
 \le (\Tr B_n)
   \sum_{\ell(\nu)\le r}\frac{\mathsf P_{m_n}^{(r)}(\nu)}{R_\nu}=\left(\sum_{\ell(\mu)\le r}\mathsf P_n^{(r)}(\mu)R_\mu\right)
   \left(\sum_{\ell(\nu)\le r}
     \frac{\mathsf P_{m_n}^{(r)}(\nu)}{R_\nu}\right).
\end{aligned}
\]
This gives \eqref{eq:grassmann-finite-converse} for the
covariantized channel, hence the minimax bound for the original one.
\end{proof}

For the asymptotics, let \(a=r(d-r)\) and
\(c_{r,d}=\prod_{i\le r<j\le d}[r(j-i)]^{-1}\).  Since
\(\mathsf P_N^{(r)}\) is the Young law of the flat spectrum
\((1/r,\ldots,1/r)\), \cref{lem:young-concentration} in dimension \(r\)
shows that \(\|\lambda/N-\mathbf1_r/r\|_\infty\le N^{-1/3}\) outside an
event of probability at most \(e^{-N^{1/3}/4}\).  Write
\(u_i=\lambda_i-N/r\), so \(\sum_i u_i=0\).  On this typical set,
the leading linear terms in \(u_i\) cancel in the product
\eqref{eq:R-lambda} because \(\sum_i u_i=0\), and Taylor expansion gives
\[
  R_\lambda=c_{r,d}N^a
    \left[1+O_{r,d}\!\left(N^{-1}+\frac{\|u\|_2^2}{N^2}\right)\right],
  \qquad
  \frac1{R_\lambda}=c_{r,d}^{-1}N^{-a}
    \left[1+O_{r,d}\!\left(N^{-1}+\frac{\|u\|_2^2}{N^2}\right)\right].
\]
The flat Young law has \(\mathbb E\|u\|_2^2=O_r(N)\).  Indeed, the
quadratic \(\U(r)\) Casimir has eigenvalue
\(\sum_i\lambda_i(\lambda_i+r+1-2i)\).  On \((\C^r)^{\ot N}\), the
Casimir operator equals
\(NrI+2\sum_{1\le j<k\le N}\operatorname{Swap}_{jk}\), where
\(\operatorname{Swap}_{jk}\) exchanges tensor factors \(j,k\).  Since
\(\Tr[\operatorname{Swap}_{12}(I_r/r)^{\ot 2}]=1/r\), its expectation under
\(\mathsf P_N^{(r)}\) is \(Nr+N(N-1)/r\).  Moreover,
\[
  \sum_i\lambda_i(\lambda_i+r+1-2i)
  =\frac{N^2}{r}+\|u\|_2^2
   +\sum_{i<j}(\lambda_i-\lambda_j),
\]
so \(\mathbb E\|u\|_2^2\le N(r-1/r)\).  Globally
\(1\le R_\lambda\le(N+1)^a\), so the atypical
event contributes less than any inverse power of \(N\).  Hence
\begin{equation}\label{eq:R-moment-limits}
  \mathbb E_{\mathsf P_N^{(r)}}R_\lambda
  =c_{r,d}N^a(1+O_{r,d}(N^{-1})),
  \qquad
  \mathbb E_{\mathsf P_N^{(r)}}\frac1{R_\lambda}
  =c_{r,d}^{-1}N^{-a}(1+O_{r,d}(N^{-1})).
\end{equation}
Applying \eqref{eq:R-moment-limits} at \(N=n\) and \(N=m_n\), the product
under the square root in \eqref{eq:grassmann-finite-converse} is
\[
\begin{aligned}
\mathbb E_{\mathsf P_n^{(r)}}R_\mu\,
\mathbb E_{\mathsf P_{m_n}^{(r)}}R_\nu^{-1}
=n^a m_n^{-a}
  (1+O_{r,d}(n^{-1}))(1+O_{r,d}(m_n^{-1}))
=\gamma_n^{-a}(1+O_{r,d}(n^{-1})).
\end{aligned}
\]
Taking its square root, the right-hand side of
\eqref{eq:grassmann-finite-converse} is
\(\gamma_n^{-a/2}(1+O_{r,d}(n^{-1}))\).  If
\(m_n=\gamma n+O(1)\), this equals
\(\gamma^{-a/2}+O_{r,d,\gamma}(n^{-1})\). The assumption
\(\gamma_n\to\gamma\) alone gives no rate relative to that limit.
Substitution yields
\begin{equation}\label{eq:grassmann-converse-limit}
  \limsup_{n\to\infty}\sup_{\cM_n}\inf_{P\in\Gr(r,d)}
  F\!\left(\cM_n((P/r)^{\ot n}),(P/r)^{\ot m_n}\right)
  \le\gamma^{-r(d-r)/2},
\end{equation}
which completes the converse and the proof of \cref{thm:grassmann}.

\section{Fixed-gain comparison with PCT}\label{sec:pct-comparison}

For the PCT channel defined before \cref{thm:pct-comparison}, the published
data-processing guarantee
\cite{LiTheilHarrowChuang2026,FanizzaGrinkoScharnhorstSpilecki2026,
JeonSohnOh2026} is, for every \(\rho\) of rank at most \(r\),
\begin{equation}\label{eq:pct-data-processing-lower-bound}
  F\!\left(\cC_{n,m}^{\mathrm{PCT},r}(\rho^{\ot n}),
             \rho^{\ot m}\right)
  \ge
  \left(\frac{\binom{n+dr-1}{dr-1}}
              {\binom{m+dr-1}{dr-1}}\right)^{1/2}
  \longrightarrow\gamma^{-(dr-1)/2}
  \qquad(m=m_n).
\end{equation}
The bound is exact for each purified component before averaging and
discarding the purifying registers, but is only a lower bound for the
final output.

Here and in Appendix~\ref{app:pct-comparison}, we analyze
the PCT cloning fidelity directly. \Cref{fig:pct-comparisons} illustrates
the comparisons.  At every fixed gain \(\gamma>1\), our cloners
asymptotically outperform PCT on simple full-rank states and non-pure
flat projector states, while both agree on pure states.

\paragraph{Simple full-rank spectra.}
Fix \(p_1>\cdots>p_d>0\).  PCT uses rank bound \(r=d\) and purification
dimension \(d^2\). Neither it nor \(\cC_n^{\mathrm{univ}}\) is supplied
\(p\).  \Cref{thm:pct-comparison}\textup{(a)} gives PCT's exact limiting
fidelity and its strict gap from \(\Funiv(\gamma,p)\).  The loss for PCT
comes from larger spectral fluctuations and less accurate reproduction
of the eigenbasis.  Appendix~\ref{app:pct-full-rank} gives its exact
fidelity and proves the strict comparison.

\paragraph{Projector states.}
For \(\rho=P/r\) with \(P\in\Gr(r,d)\) and \(1<r<d\), apply PCT
with rank bound \(r\).  \Cref{thm:pct-comparison}\textup{(b)} bounds its
limiting fidelity strictly below the projector optimum
\(\gamma^{-r(d-r)/2}\).  The extra factor is visible even after conditioning
on every output lying in \(\operatorname{ran}P\): counts of outcomes in a basis of
that space fluctuate more under PCT than under the target
\((P/r)^{\ot m_n}\).  Appendix~\ref{app:pct-projector} proves the bound.

\section{Discussion}
\label{sec:discussions}

\subsection{Dependence on spectral mixedness}\label{sec:spectral-mixedness}
For fixed \(\gamma>1\), the product formula shows how fidelity changes
with spectral mixedness:
\[
  f_\gamma'(q)
  =\frac{(\gamma-1)^2}
  {2\sqrt{q(\gamma-1+q)}
   \bigl(\sqrt{\gamma(\gamma-1+q)}+\sqrt q\bigr)^2}>0,
  \qquad 0<q<1.
\]
Consequently, if \(p,p'\in\Delta_d^\circ\) satisfy
\begin{equation}\label{eq:ratio-wise-flattening}
  \frac{p'_j}{p'_i}\ge \frac{p_j}{p_i}
  \qquad\text{for every }i<j,
\end{equation}
that is, if $p'$ is more mixed than $p$ in that every logarithmic spectral gap contracts in passing from \(p\)
to \(p'\), then
\begin{equation}\label{eq:ratio-wise-fidelity-monotonicity}
  \Forb(\gamma,p')\ge \Forb(\gamma,p),
  \qquad
  \Funiv(\gamma,p')\ge \Funiv(\gamma,p),
\end{equation}
with strict inequality if any ratio is strictly larger.\footnote{Condition \eqref{eq:ratio-wise-flattening} implies that \(p'\) is
majorized by \(p\), but majorization alone does not ensure larger
fidelity when \(d\ge3\).  For example,
\(p'=(0.8,0.16,0.04)\) is majorized by \(p=(0.9,0.06,0.04)\), yet
\(\Forb(10,p')\approx0.0772<0.0791\approx\Forb(10,p)\): the decrease
in the \((2,3)\) factor outweighs the increases in the other two.}

\subsection{Worst-case cloning over all states}
The optimal worst-case fidelity over all \(d\)-dimensional states remains
unknown.  At fixed gain \(\gamma>1\), a single channel must work without
spectral or rank information. Write its optimal guaranteed fidelity as
\begin{equation}\label{eq:uniform-state-minimax-value}
  v_n:=\sup_{\cM_n}\inf_{\substack{\rho\ge0\\\Tr\rho=1}}
  F\!\left(\cM_n(\rho^{\ot n}),\rho^{\ot m_n}\right).
\end{equation}
The PCT guarantee gives
\(\liminf_n v_n\ge\gamma^{-(d^2-1)/2}\).  Our
compact-spectrum optimum \eqref{eq:compact-global-value} gives
\begin{equation}\label{eq:uniform-state-minimax-upper-bound}
  \limsup_{n\to\infty}v_n
  \le\inf_{p\in\Delta_d^\circ}\Funiv(\gamma,p)
  =\Fcl(\gamma,d)\,\gamma^{-d(d-1)/4}.
\end{equation}
The equality follows by taking simple spectra with all ratios
\(p_j/p_i\), \(i<j\), tending to zero.
At large gain, these lower and upper bounds scale respectively as
\(\gamma^{-(d^2-1)/2}\) and \(\gamma^{-(d^2-1)/4}\), up to
constants, leaving a factor of two between the polynomial exponents.
Whether PCT attains the asymptotic all-state minimax value also remains
open.

\subsection{High-fidelity sample complexity}
\label{sec:small-error}
In this subsection we follow the sample-complexity literature, using
squared fidelity, \(N\) inputs, and \(M\) \emph{additional} outputs, so that
\begin{equation}\label{eq:pct-copy-count-parameters}
  m=N+M,\qquad \delta=\frac{M}{N},\qquad \gamma=1+\delta.
\end{equation}
The asymptotic family-specific thresholds below use the iterated limit: first
\(N,M\to\infty\) at fixed \(\delta>0\), then \(\delta\downarrow0\).
We have not controlled approximation errors uniformly as
\(\gamma_n\downarrow1\), so these thresholds are not finite-sample bounds
for a joint high-fidelity regime and do not improve the rank-bounded
worst-case \(1/\epsilon\) scaling.  The PCT sufficient condition
\eqref{eq:pct-certified-sample-ratio} is the finite-sample statement here.
The limiting fidelities give two high-fidelity behaviors.  At a fixed
simple full-rank spectrum, the ratio \(N/M\) needed for squared
infidelity \(\epsilon\) scales as \(\epsilon^{-1/2}\), with a smaller
leading constant for our cloner than for PCT.  On non-pure projector
states, both have the leading ratio \(r(d-r)/\epsilon\). Our strict
fidelity advantage first appears at order \(\delta^2\).  The published
PCT guarantee below gives only a looser sufficient ratio.

For \(1\le r\le d\) and sufficiently small \(\epsilon>0\), the concurrent
papers \cite{FanizzaGrinkoScharnhorstSpilecki2026,JeonSohnOh2026} prove that
cloning with squared fidelity at least \(1-\epsilon\) for every state of
rank at most \(r\) has optimal worst-case sample complexity
\begin{equation}\label{eq:pct-uniform-copy-complexity}
  N=\Theta\!\left(\frac{Mrd}{\epsilon}\right).
\end{equation}
Their upper bounds use PCT, and their lower bounds use projectors
of rank \(\ell=\Theta(r)\), possibly \(\ell<r\).

\paragraph{Simple full-rank spectra.}
For fixed \(p\) and \(\bullet\in\{\mathrm{univ},\mathrm{PCT}\}\), expand
the squared limiting fidelity at \(\gamma=1+\delta\) and invert it at
\(1-\epsilon\).  The resulting asymptotic threshold satisfies
\begin{equation}\label{eq:full-rank-small-error}
  F_\bullet(1+\delta,p)^2
  =1-A_\bullet(p)\delta^2+O(\delta^3),
  \qquad
  \frac{N}{M}\sim\sqrt{\frac{A_\bullet(p)}{\epsilon}},
\end{equation}
where
\begin{equation}
  A_{\mathrm{univ}}(p)
  =\frac{d-1}{8}+\frac14\sum_{i<j}\frac{p_i}{p_j},\quad
  A_{\mathrm{PCT}}(p)
  =\frac{d-1}{2}+\frac14\sum_{i<j}\frac{(p_i+p_j)^2}{p_ip_j}.
\end{equation}
For each pair, \((p_i+p_j)^2/(p_ip_j)=p_i/p_j+2+p_j/p_i>p_i/p_j\).
The constant term in \(A_{\mathrm{PCT}}\) is larger as well.  Thus
\(A_{\mathrm{PCT}}(p)>A_{\mathrm{univ}}(p)\).  Since \(\delta=M/N\), the
squared infidelity is asymptotic to
\(A_\bullet(p)(M/N)^2\).  To reach squared infidelity \(\epsilon\), PCT
therefore needs a larger leading ratio \(N/M\) than our cloner.\footnote{The
constant hidden in \(O(\delta^3)\) depends on \(p\), and expanding
\(f_{1+\delta}(q_{ij})\) requires \(\delta\ll q_{ij}\), hence
\(\delta\ll p_d/p_1\) for all pairs.  As the smallest eigenvalue approaches
zero, this window shrinks and both \(A_{\mathrm{univ}}(p)\) and
\(A_{\mathrm{PCT}}(p)\) diverge.  When \(q_{ij}\ll\delta\), instead
\(f_{1+\delta}(q_{ij})^2=(1+\delta)^{-1}+o(\delta)\), so that pair
contributes infidelity of order \(\delta\).  This helps explain why the
fixed-spectrum \(1/\sqrt\epsilon\) scaling does not hold uniformly over
the rank-bounded family, whose worst-case scaling is \(1/\epsilon\).}

\paragraph{Projector states.}
For \(1<r<d\), inverting our limiting squared fidelity gives
\begin{equation}\label{eq:projector-optimal-sample-ratio}
  \frac{N}{M}
  =\frac{1}{(1-\epsilon)^{-1/[r(d-r)]}-1}
  \sim\frac{r(d-r)}{\epsilon}.
\end{equation}
The published guarantee \eqref{eq:pct-data-processing-lower-bound} certifies
PCT squared fidelity at least \(1-\epsilon\) whenever
\begin{equation}\label{eq:pct-certified-sample-ratio}
  \frac{N}{M}
  \ge\frac{1}{(1-\epsilon)^{-1/(rd-1)}-1}
  \sim\frac{rd-1}{\epsilon},
\end{equation}
since, with \(n=N\) and \(m=N+M\), its binomial ratio is
\(\prod_{j=1}^{rd-1}(N+j)/(N+M+j)\ge[N/(N+M)]^{rd-1}\).
The coefficient \(rd-1\) describes only this sufficient condition.
The two thresholds have the subleading expansion
\begin{equation}\label{eq:power-fidelity-threshold-expansion}
  \frac1{(1-\epsilon)^{-1/a}-1}
  =\frac a\epsilon-\frac{a+1}{2}+O(\epsilon),
\end{equation}
with \(a=r(d-r)\) in \eqref{eq:projector-optimal-sample-ratio} and
\(a=rd-1\) in \eqref{eq:pct-certified-sample-ratio}.  The latter expansion
does not determine PCT's actual subleading threshold.
Both the \(\liminf\) and the \(\limsup\) of PCT's actual squared
infidelity equal \(r(d-r)\delta+O_{r,d}(\delta^2)\)
(\cref{prop:pct-projector-small-error}), so PCT shares the leading sample
ratio \(N/M\sim r(d-r)/\epsilon\).  With \(\gamma=1+\delta\), squaring
\eqref{eq:grassmann-pct-upper-bound} bounds the \(\limsup\) of PCT's
squared fidelity by our value \((1+\delta)^{-r(d-r)}\) times
\[
  \left(\frac{\sqrt{1+2\delta}}{1+\delta}\right)^{r-1}
  =1-\frac{r-1}{2}\delta^2+O_r(\delta^3).
\]
Thus the asymptotic squared-fidelity gap is at least
\((r-1)\delta^2/2+O_{r,d}(\delta^3)\), while
\cref{prop:pct-projector-small-error} bounds it above by
\(O_{r,d}(\delta^2)\).  Our advantage appears at second order,
with its exact coefficient undetermined.

\subsection{Conjectures for degenerate and rank-deficient spectra}
\label{sec:conjectures}
Beyond the flat projector family, optimal cloning remains open for
nonflat rank-deficient states and spectra with degeneracies.  We conjecture separate optimal cloning fidelities for two families: on a known-spectrum orbit only the eigenbasis varies, while in
a local fixed-rank family the positive eigenvalues may vary as well.
Suppose a rank-\(r\) state \(\rho\) has distinct positive eigenvalues
\(u_1>\cdots>u_s>0\), with multiplicities \(r_1,\ldots,r_s\), where
\(\sum_a r_a=r\) and \(\sum_a r_a u_a=1\).  We conjecture that the
known-spectrum optimum on its unitary orbit, defined as in
\eqref{eq:known-optimum}, is
\begin{equation}\label{eq:conjectured-degenerate-orbit}
  \Forb(\gamma,\rho)
  =\gamma^{-r(d-r)/2}
   \prod_{a<b}f_\gamma\!\left(\frac{u_b}{u_a}\right)^{r_ar_b}.
\end{equation}
Rotations within degenerate eigenspaces are stabilizers and do not
contribute.  Since
\(f_\gamma(0)=\gamma^{-1/2}\), the first factor accounts for the
\(r(d-r)\) support--kernel directions.  The simple full-rank case
\(r=d\), \(s=d\), \(r_a=1\) recovers \eqref{eq:main-values}. The flat
projector case \(s=1\) recovers \cref{thm:grassmann}, including Werner's
pure-state value at \(r=1\).  The exact compression identity in
\cref{lem:exact-compression} establishes this support--kernel factorization
for our sector channel, but does not by itself prove the conjecture for the
remaining spectra.

The known-spectrum optimum need not be continuous at a degeneracy,
because the orbit loses dimensions.  For qubits, let \(q=p_2/p_1\), so
\(\Forb(\gamma,p)=f_\gamma(q)\).  As \(p\to(1/2,1/2)\) through simple
spectra, \(q\uparrow1\), and
\[
  \lim_{\substack{p\to(1/2,1/2)\\p_1>p_2}}
  \Forb(\gamma,p)
  =\lim_{q\uparrow1}
    \frac{\sqrt\gamma+\sqrt{q(\gamma-1+q)}}{\gamma+q}
  =\frac{2\sqrt\gamma}{1+\gamma}<1,
\]
whereas the known-spectrum family at \(p=(1/2,1/2)\) contains only
\(I_2/2\) and can be cloned perfectly.  For each fixed \(n\), the constant
channel preparing \((I_2/2)^{\ot m_n}\) has fidelity tending to one as
\(p\to(1/2,1/2)\), uniformly over the orbit.  Thus the degeneracy and
large-$n$ limits do not commute.

For the local unknown-state problem, consider the smooth manifold of
rank-\(r\) density matrices, denoted as $ \mathcal D_r^\circ$.
If \(P\) is the support projection of \(\rho\), its tangent space is
\begin{equation}\label{eq:fixed-rank-tangent-space}
  T_\rho\mathcal D_r^\circ
  =\{X=X^*:\Tr X=0,\ (I-P)X(I-P)=0\}.
\end{equation}

Let \(\operatorname{Exp}_\rho\) be the exponential map for the induced
Hilbert--Schmidt metric.  For fixed \(L\) and sufficiently large \(n\), set
\(\rho_{n,X}=\operatorname{Exp}_\rho(X/\sqrt n)\) and define
\begin{equation}\label{eq:fixed-rank-local-minimax}
  v_{n,L}^{(r)}(\rho)
  :=\sup_{\cM_n}\inf_{\substack{X\in T_\rho\mathcal D_r^\circ\\
                                \|X\|_2\le L}}
  F\!\left(\cM_n(\rho_{n,X}^{\ot n}),\rho_{n,X}^{\ot m_n}\right).
\end{equation}

The channel may depend on the fixed center \(\rho\), but not on the
unknown displacement \(X\).  We conjecture that the \(\limsup\) and \(\liminf\) of
\(v_{n,L}^{(r)}(\rho)\), first as \(n\to\infty\) and then as
\(L\to\infty\), agree.  Writing their common value as
\(\Funiv^{(r)}(\gamma,\rho)\), the conjectured formula is
\begin{equation}\label{eq:conjectured-fixed-rank-local}
  \Funiv^{(r)}(\gamma,\rho)
  =\left(\frac{2\sqrt\gamma}{1+\gamma}\right)^{(\sum_{a=1}^s r_a^2-1)/2}
   \Forb(\gamma,\rho).
\end{equation}
Unlike \(v_n\) in \eqref{eq:uniform-state-minimax-value}, this problem
tests only \(n^{-1/2}\)-scale perturbations of a fixed rank-\(r\) center
\(\rho\).  Thus \eqref{eq:conjectured-fixed-rank-local} is a conjectured
local value, whereas \eqref{eq:uniform-state-minimax-upper-bound} is a
proved upper bound for the all-state minimax value.
The classical exponent in \eqref{eq:conjectured-fixed-rank-local} counts
perturbations within the positive-eigenvalue blocks: an
\(r_a\times r_a\) Hermitian block has \(r_a^2\) real parameters, and the
trace constraint removes one overall.  A classical Gaussian shift with
\(k\) real parameters contributes the fidelity factor
\((2\sqrt\gamma/(1+\gamma))^{k/2}\).  At fixed rank, the zero block has no
independent parameters. Support--kernel rotations are already counted in
\(\Forb(\gamma,\rho)\).  The GUE limit for flat-spectrum Young-label
fluctuations \cite{Kuperberg2002,Johansson2001} motivates treating the
within-block directions as classical Gaussian variables, but does not
establish the conjecture.

When the \(r\) positive eigenvalues are distinct (every \(r_a=1\)), the
fixed-rank LAN theorem of \cite[Theorem~3.3]{LahiryNussbaum2024}
identifies \(r-1\) classical Gaussian coordinates, \(\binom r2\) shifted
thermal modes (one per pair of positive eigenvectors), and \(r(d-r)\)
shifted pure modes (one per support--kernel pair).  The classical
coordinates give the exponent \((r-1)/2\) in
\eqref{eq:conjectured-fixed-rank-local}. The thermal and pure modes
correspond to the \(f_\gamma(u_b/u_a)\) factors and
\(\gamma^{-r(d-r)/2}\), respectively, in
\eqref{eq:conjectured-degenerate-orbit}.  These mode counts support the
conjecture but do not prove the optimal cloning fidelities.

The formula is consistent under positive-eigenvalue collisions at fixed
rank: each colliding pair contributes
\(f_\gamma(1)=2\sqrt\gamma/(1+\gamma)\), and
\((r-1)/2+\sum_a\binom{r_a}{2}=(\sum_a r_a^2-1)/2\).
When also \(r=d\) and the spectrum is simple, the formula reduces to the
full-rank value \(\Funiv(\gamma,p)\) in \eqref{eq:main-values}.

At a flat projector state, \(s=1\), the local model still allows the
positive eigenvalues to vary, unlike the projector family in
\cref{thm:grassmann}, where they are fixed at \(1/r\).  This uncertainty
gives the conjectured classical factor for \(r>1\). The two values coincide
for \(r=1\).  Neither model allows rank changes, which remains an open question.

We further conjecture that, at every fixed \(\gamma>1\) and with the same
prior spectral or rank information available to both protocols,
Schur--Cartan cloners with transition rules adapted to degenerate or
rank-deficient spectra outperform PCT on the corresponding families of mixed states, and
that a single spectrum-independent extension strictly beats PCT with rank
bound \(d\) in the all-state problem
\eqref{eq:uniform-state-minimax-value}.  The latter would require control
near the boundary of the spectral simplex, beyond the pointwise
comparisons proved here.

\appendix
\section{Exact-weight PBW calculation of the sector fidelity}
\label[appendix]{app:sector-fidelity}

We prove \cref{prop:sector-fidelity} in three steps.
On fixed-height weight spaces, the Gram matrices of normalized PBW
vectors approach the identity.  The Cartan inclusion then splits
lowering operators with asymptotically binomial weights.  Uniform
character normalization and a bound on the target mass above the cutoff
yield the thermal fidelity \(\Forb(\gamma,p)\).

Throughout, \(O_L(\cdot)\) may depend on \(L,d,c,C\) in
\(cN\le\Delta_\alpha^\lambda\le CN\) for \(\alpha\in\mathsf R_+\), but is
uniform in \(N\) and all dominant \(\lambda\) satisfying these inequalities.

\subsection{PBW vectors on complete weight spaces}

Identify the positive roots of \(\U(d)\) with
\[
  \mathsf R_+=\{e_i-e_j:1\le i<j\le d\}.
\]
For \(\alpha=e_i-e_j\), set
\[
  F_\alpha=E_{ji},
  \qquad E_\alpha=E_{ij},
  \qquad H_\alpha=[E_\alpha,F_\alpha]=E_{ii}-E_{jj},
  \qquad \Delta_\alpha^\lambda=\lambda_i-\lambda_j.
\]
Fix a total order on \(\mathsf R_+\).  For
\(\mathbf r\in\mathbb Z_{\ge0}^{\mathsf R_+}\), write
\[
  F^{\mathbf r}=\prod_{\alpha\in\mathsf R_+}^{\prec}F_\alpha^{r_\alpha},
  \qquad |\mathbf r|=\sum_\alpha r_\alpha.
\]
For a nonnegative simple-root combination \(\eta\), let
\(\operatorname{ht}(\eta)\) be the sum of its coefficients in the simple
roots \(e_i-e_{i+1}\), \(i<d\), and let
\[
  \mathsf P(\eta)
  =\left\{\mathbf r:\sum_{\alpha\in\mathsf R_+}r_\alpha\alpha=\eta\right\}.
\]
The set \(\mathsf P(\eta)\) is finite, and the PBW theorem implies that
\(\{F^{\mathbf r}v_\lambda:\mathbf r\in\mathsf P(\eta)\}\) spans
\(\cH_\lambda[\lambda-\eta]\) \cite{FultonHarris1991}.
Here \(\cH_\lambda[\lambda-\eta]\) denotes the weight space of weight
\(\lambda-\eta\).  Define
\begin{equation}\label{eq:PBW-height-subspace}
  \mathcal K_{\lambda,L}
  =\bigoplus_{\operatorname{ht}(\eta)\le L}
    \cH_\lambda[\lambda-\eta].
\end{equation}
Let \(P_{\lambda,L}\) denote the orthogonal projection onto
\(\mathcal K_{\lambda,L}\).

\begin{lemma}[PBW Gram matrices at fixed height]
\label{lem:PBW-fixed-height}
Let \(\lambda=\lambda^{(N)}\) range over dominant weights satisfying
\[
  cN\le\Delta_\alpha^\lambda\le CN
  \qquad(\alpha\in\mathsf R_+)
\]
for fixed \(0<c<C<\infty\).  Define
\begin{equation}\label{eq:normalized-PBW-vectors}
  \widetilde e_{\mathbf r}^\lambda
  =\prod_{\alpha\in\mathsf R_+}^{\prec}
   \frac{F_\alpha^{r_\alpha}}
        {\sqrt{r_\alpha!}(\Delta_\alpha^\lambda)^{r_\alpha/2}}
   v_\lambda.
\end{equation}
For every fixed \(L\), uniformly over
\(\operatorname{ht}(\eta)\le L\) and
\(\mathbf r,\mathbf s\in\mathsf P(\eta)\),
\begin{equation}\label{eq:PBW-Gram-fixed-height}
  \langle\widetilde e_{\mathbf r}^\lambda,
         \widetilde e_{\mathbf s}^\lambda\rangle
  =\delta_{\mathbf r,\mathbf s}+O_L(N^{-1/2}).
\end{equation}
Consequently, for all sufficiently large \(N\), the full Gram matrix
\(G_{\lambda,\eta}\) on every such exact-weight block is invertible.  If
\begin{equation}\label{eq:exact-block-orthonormalization}
  (e_{\mathbf r}^\lambda)_{\mathbf r\in\mathsf P(\eta)}
  =
  (\widetilde e_{\mathbf r}^\lambda)_{\mathbf r\in\mathsf P(\eta)}
  G_{\lambda,\eta}^{-1/2},
\end{equation}
then these vectors form an orthonormal basis of
\(\cH_\lambda[\lambda-\eta]\), and
\begin{equation}\label{eq:PBW-orthonormal-close}
  e_{\mathbf r}^\lambda
  =\widetilde e_{\mathbf r}^\lambda+O_L(N^{-1/2})
\end{equation}
uniformly for \(\operatorname{ht}(\eta)\le L\) and
\(\mathbf r\in\mathsf P(\eta)\).
\end{lemma}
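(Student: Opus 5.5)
The plan is to compute the inner products \(\langle F^{\mathbf r}v_\lambda,F^{\mathbf s}v_\lambda\rangle\) exactly, by commuting raising operators through lowering operators. Since \(F_\alpha^*=E_\alpha\) in the unitary representation, we have
\[
  \langle F^{\mathbf r}v_\lambda,F^{\mathbf s}v_\lambda\rangle
  =\langle v_\lambda,(F^{\mathbf r})^*F^{\mathbf s}v_\lambda\rangle,
\]
where \((F^{\mathbf r})^*\) is an ordered product of the \(E_\alpha\) in reverse order.

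I would move each \(E_\beta\) to the right using \([E_\beta,F_\alpha]\in\mathfrak{gl}_d\). The bracket is one of three things: \(H_\alpha\) when \(\beta=\alpha\); a root vector \(E_{\beta-\alpha}\) or \(F_{\alpha-\beta}\) when that difference is a root; or zero. The process ends when \(E\)'s reach \(v_\lambda\) and are annihilated, or when Cartan elements act on weight vectors by scalars.

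The scalars satisfy a key bound. Because \(\operatorname{ht}(\eta)\le L\), every Cartan element that appears acts on a weight \(\lambda-\eta'\) with \(\operatorname{ht}(\eta')\le L\). Its eigenvalue is therefore \(\Delta_\alpha^\lambda+O_L(1)\) for \(H_\alpha\), which is \(\Theta(N)\) by hypothesis. The number of terms in the expansion, and the size of the integer coefficients, are bounded in terms of \(L\) and \(d\) alone.

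Next comes power counting, done by induction on \(|\mathbf r|+|\mathbf s|\le 2L\). Each term in the full expansion is a product of Cartan eigenvalues, so it is \(O_L(N^{k})\), where \(k\) is the number of Cartan contractions. A Cartan contraction consumes one \(E\) and one \(F\) of the same root. A root-vector bracket instead consumes two operators and produces only one, without any factor of \(N\).

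The normalization divides by \(N^{(|\mathbf r|+|\mathbf s|)/2}\) up to bounded factors. This gives order \(1\) only for terms in which every operator is removed by a Cartan contraction, which requires \(\mathbf r=\mathbf s\). For \(\mathbf r=\mathbf s\), the leading term in each root is
\[
  \prod_{t=1}^{r_\alpha} t\,(\Delta_\alpha^\lambda-t+1+O_L(1))
  =r_\alpha!\,(\Delta_\alpha^\lambda)^{r_\alpha}(1+O_L(N^{-1})),
\]
which follows from the \(\mathfrak{sl}_2\) formula \(E F^r v=r(h-r+1)F^{r-1}v\) for the root subalgebra. After the normalization in \eqref{eq:normalized-PBW-vectors} this contributes \(1\).

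The remaining terms are as follows. Any term with at least one root-vector bracket loses at least one net power of \(N^{1/2}\). If \(\mathbf r\ne\mathbf s\) with equal weight, at least one such bracket must occur. All of these terms are therefore \(O_L(N^{-1/2})\), and this gives \eqref{eq:PBW-Gram-fixed-height}.

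The main obstacle is making this counting rigorous in the ordered PBW product. Cross-root contractions can create new \(F\)'s that later contract with Cartans. I would handle this by assigning to each operator a weight \(N^{1/2}\). Each commutator step then either:
\begin{enumerate}
\item[(i)] removes an \(E\)-\(F\) pair of the same root and produces a Cartan scalar \(O(N)\), which is neutral; or
\item[(ii)] replaces two operators by one, a net loss of \(N^{1/2}\).
\end{enumerate}
A simple induction on the number of operators then bounds every non-diagonal contribution. Since \(|\mathbf r|\le L\), the finitely many terms give uniform constants.

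Finally, \(\mathsf P(\eta)\) has cardinality bounded by a constant depending on \(L\), and the PBW vectors span the weight space. Hence \(G_{\lambda,\eta}=I+O_L(N^{-1/2})\) is invertible for large \(N\). It follows that the \(\widetilde e_{\mathbf r}^\lambda\) are linearly independent, so they form a basis.

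The orthonormalization \eqref{eq:exact-block-orthonormalization} is then orthonormal by construction. The estimate \(G^{-1/2}=I+O_L(N^{-1/2})\) (power series or functional calculus) gives \eqref{eq:PBW-orthonormal-close}, because the \(\widetilde e_{\mathbf r}^\lambda\) have norm \(1+O_L(N^{-1/2})\).
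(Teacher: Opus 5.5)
Your proof is correct and shares the paper's skeleton: commute the raising operators to the right, and observe that only same-root scalar commutators survive at leading order. It controls the remainders by a different mechanism, though.

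The paper first normalizes \(\widehat F_\alpha=F_\alpha/\sqrt{\Delta_\alpha^\lambda}\) and proves an analytic fact: these operators are uniformly bounded on each fixed-height cutoff \(\mathcal K_{\lambda,R}\). This comes from the ladder induction \(M_R\le M_{R-h}+1+O_R(N^{-1})\), which in turn follows from \([\widehat E_\alpha,\widehat F_\alpha]=I+O_R(N^{-1})\). With that bound, a cross-root commutator \([\widehat E_\alpha,\widehat F_\beta]\) is simply an operator of norm \(O_L(N^{-1/2})\), so the expansion can stop there instead of being carried down to scalars.

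You instead expand completely and count powers of \(N\). Suppose a term has \(k\) same-root contractions, each an \(O(N)\) Cartan scalar, and \(j\) root brackets. Then \(2k+j=|\mathbf r|+|\mathbf s|\), so after normalization the term is \(O(N^{-j/2})\), and \(j=0\) forces \(\mathbf r=\mathbf s\). Your route is purely algebraic and needs no norm estimate. The paper's route gives a reusable operator bound on the cutoff and avoids explicit bookkeeping of an expansion tree.

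To make your argument airtight, state two points you use implicitly.
\begin{enumerate}
\item The rewriting terminates after boundedly many steps. For instance, the number of operators plus the number of \(E\)--\(F\) inversions strictly decreases at each swap or bracket when you move the rightmost \(E\) first.
\item Every surviving term consumes all operators. This holds because the word \((F^{\mathbf r})^*F^{\mathbf s}\) has total weight zero and \(\langle v_\lambda,F\cdots F\,v_\lambda\rangle=0\) for a nonempty \(F\)-word.
\end{enumerate}

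One correction concerns the diagonal case \(\mathbf r=\mathbf s\) with several roots. There the \(\mathfrak{sl}_2\) identity \(EF^rv=r(h-r+1)F^{r-1}v\) does not apply verbatim, because \(F^{\mathbf s}v_\lambda\) is not a highest-weight vector for the \(\alpha\)-subalgebra. The diagonal value should come directly from the \(j=0\) terms. These correspond to the \(\prod_\alpha r_\alpha!\) same-root matchings, and each contributes \(\prod_\alpha(\Delta_\alpha^\lambda)^{r_\alpha}(1+O_L(N^{-1}))\).
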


\begin{proof}
For \(\mathbf r\in\mathsf P(\eta)\), every positive root has positive
simple-root height, so
\(|\mathbf r|\le\operatorname{ht}(\eta)\).  Let
\[
  \widehat F_\alpha=\frac{F_\alpha}{\sqrt{\Delta_\alpha^\lambda}},
  \qquad
  \widehat E_\alpha=\frac{E_\alpha}{\sqrt{\Delta_\alpha^\lambda}}.
\]
The word inside each unnormalized Gram entry
\(\langle v_\lambda,(\widehat F^{\mathbf r})^*
\widehat F^{\mathbf s}v_\lambda\rangle\) contains at most \(2L\)
normalized root operators.  Reordering them does not increase the number
of factors.  Since each root has height at most \(d-1\), every
intermediate vector lies in
\[
  \mathcal K_{\lambda,R_L},
  \qquad R_L:=2L(d-1).
\]

The normalized root operators are uniformly bounded on each fixed
cutoff.  Indeed, let \(h=\operatorname{ht}(\alpha)\), set
\(\mathcal K_{\lambda,R}=\{0\}\) for \(R<0\), and write
\[
  M_R=\bigl\|\widehat F_\alpha|_{\mathcal K_{\lambda,R}}\bigr\|^2.
\]
Because each cutoff \(\mathcal K_{\lambda,R}\)
contains complete weight spaces, \(\widehat E_\alpha\) maps
\(\mathcal K_{\lambda,R}\) into
\(\mathcal K_{\lambda,R-h}\).  It is the adjoint of
\(\widehat F_\alpha:\mathcal K_{\lambda,R-h}\to
\mathcal K_{\lambda,R}\), so its squared operator norm is \(M_{R-h}\).
On \(\mathcal K_{\lambda,R}\),
\([\widehat E_\alpha,\widehat F_\alpha]
=H_\alpha/\Delta_\alpha^\lambda=I+O_R(N^{-1})\).  Applying this identity
to unit vectors gives
\[
  M_R\le M_{R-h}+1+O_R(N^{-1})
  \le \lfloor R/h\rfloor+1+O_R(N^{-1})=O_R(1),
\]
where the second inequality follows by induction from \(M_R=0\) for
\(R<0\).  This also bounds \(\widehat E_\alpha\).  All operator norms below
are taken after restriction to
\(\mathcal K_{\lambda,R_L}\), with the codomain norm inherited from
\(\cH_\lambda\).  For \(0\le R\le R_L=2L(d-1)\),
the \(O_R\) bounds above can be written \(O_L\).

The matrix-unit commutator formula now gives
\begin{equation}\label{eq:normalized-root-commutator}
  [\widehat E_\alpha,\widehat F_\beta]
  =\delta_{\alpha\beta}I+O_L(N^{-1/2}).
\end{equation}
For \(\alpha=\beta\), the left side is
\(H_\alpha/\Delta_\alpha^\lambda=I+O_L(N^{-1})\) on the cutoff.
For \(\alpha\ne\beta\), the unnormalized commutator is zero or,
up to sign, a root operator \(E_\gamma\) or \(F_\gamma\).  In the nonzero
case, the preceding bound gives
\(\|[\widehat E_\alpha,\widehat F_\beta]\|
=O_L\bigl(\sqrt{\Delta_\gamma^\lambda}/
\sqrt{\Delta_\alpha^\lambda\Delta_\beta^\lambda}\bigr)
=O_L(\sqrt N/N)=O_L(N^{-1/2})\).

Commute every \(\widehat E_\alpha\) to the right in
\(\langle v_\lambda,(\widehat F^{\mathbf r})^*
\widehat F^{\mathbf s}v_\lambda\rangle\).  The scalar commutators in
\eqref{eq:normalized-root-commutator} contribute only when
\(\mathbf r=\mathbf s\), in which case they give
\(\prod_\alpha r_\alpha!\).  Every other term vanishes or contains an
\(O_L(N^{-1/2})\) remainder.  Only \(O_L(1)\) terms occur, and all remaining
operator factors have norm \(O_L(1)\). Consequently,
\[
  \langle v_\lambda,(\widehat F^{\mathbf r})^*
  \widehat F^{\mathbf s}v_\lambda\rangle
  =\delta_{\mathbf r,\mathbf s}\prod_\alpha r_\alpha!
   +O_L(N^{-1/2}).
\]
Dividing by the factorials in \eqref{eq:normalized-PBW-vectors} proves
\eqref{eq:PBW-Gram-fixed-height}.

The number and sizes of the exact-weight blocks are bounded in terms of \(L\), so
\(G_{\lambda,\eta}=I+O_L(N^{-1/2})\) in operator norm.  It is therefore
invertible for large \(N\), and the PBW spanning vectors form a basis.
Functional calculus gives
\(G_{\lambda,\eta}^{-1/2}=I+O_L(N^{-1/2})\), proving
\eqref{eq:exact-block-orthonormalization} and
\eqref{eq:PBW-orthonormal-close}.
\end{proof}

\subsection{Character normalization and fixed-height matrices}

For \(a=(a_1,\ldots,a_d)\), write \(p^a=\prod_i p_i^{a_i}\).

\begin{lemma}[Uniform character normalization]
\label{lem:uniform-character-normalization}
Let \(K\Subset\Delta_d^\circ\), let \(\delta_N\to0\), and suppose
\(p\in K\), \(\lambda\in\mathsf Y_N\), and
\(\|\lambda/N-p\|_\infty\le\delta_N\).  Then, uniformly over this family,
\begin{equation}\label{eq:uniform-character-normalization}
  \frac{p^\lambda}{s_\lambda(p)}
  =\prod_{i<j}\left(1-\frac{p_j}{p_i}\right)+o(1).
\end{equation}
\end{lemma}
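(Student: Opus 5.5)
The natural route is the bialternant (Weyl character) formula. For \(p\) with distinct entries,
\[
  s_\lambda(p)=\frac{\det\bigl(p_i^{\lambda_j+d-j}\bigr)_{i,j}}{\prod_{i<j}(p_i-p_j)}.
\]
Expand the numerator as a sum over permutations \(\sigma\in S_d\) of \(\operatorname{sgn}(\sigma)\prod_j p_{\sigma(j)}^{\lambda_j+d-j}\). The identity term is \(p^\lambda\prod_i p_i^{d-i}\). Since \(\prod_{i<j}(p_i-p_j)=\prod_i p_i^{d-i}\prod_{i<j}(1-p_j/p_i)\), dividing gives
\[
  \frac{s_\lambda(p)}{p^\lambda}
  =\prod_{i<j}\Bigl(1-\frac{p_j}{p_i}\Bigr)^{-1}
   \Bigl[1+\sum_{\sigma\ne\mathrm{id}}\operatorname{sgn}(\sigma)
     \prod_j\Bigl(\frac{p_{\sigma(j)}}{p_j}\Bigr)^{\lambda_j+d-j}\Bigr].
\]
So the claim reduces to showing that every nonidentity term tends to zero uniformly. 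Inverting is then harmless, because \(\prod_{i<j}(1-p_j/p_i)\) is bounded away from \(0\) and \(1\) on the compact set \(K\).

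To bound the nonidentity terms, write \(\lambda_j+d-j=l_j\), which is strictly decreasing in \(j\). The term for \(\sigma\) equals \(\prod_j p_{\sigma(j)}^{l_j}\big/\prod_j p_j^{l_j}\). By the rearrangement inequality applied to \(\log p_j\), which is strictly decreasing, and \(l_j\), also strictly decreasing, the identity pairing maximizes \(\sum_j l_j\log p_{\sigma(j)}\).

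To get a quantitative gap, decompose \(\sigma\ne\mathrm{id}\) into transpositions that each reduce the pairing sum. A cleaner way is to note that
\[
  \sum_j l_j\log p_j-\sum_j l_j\log p_{\sigma(j)}
  \;\ge\;\min_{i<d}(l_i-l_{i+1})\cdot\min_{i<d}\log\frac{p_i}{p_{i+1}}.
\]
This follows from Abel summation: write \(\sum_j l_j x_j=\sum_k (l_k-l_{k+1})(x_1+\dots+x_k)+l_d\sum_j x_j\). With \(x=\log p\) sorted decreasingly, every partial sum of the identity arrangement dominates the corresponding partial sum of any permutation. At least one partial sum is strictly larger, by at least \(\min_i\log(p_i/p_{i+1})\), since the partial sums of a nonidentity permutation differ at some \(k\).

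Now use the hypotheses. Because \(\|\lambda/N-p\|_\infty\le\delta_N\to0\) and \(p\in K\) has gaps \(p_i-p_{i+1}\ge\kappa_K>0\), we get \(l_i-l_{i+1}\ge\lambda_i-\lambda_{i+1}\ge N(\kappa_K-2\delta_N)\ge N\kappa_K/2\) for large \(N\). Also \(\log(p_i/p_{i+1})\ge c_K>0\) on \(K\). Hence each of the \(d!-1\) nonidentity terms is at most \(e^{-c_K\kappa_K N/2}\). The bracket is therefore \(1+O(d!\,e^{-cN})\) uniformly, which gives \eqref{eq:uniform-character-normalization}, even with an exponentially small error.

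The main obstacle is only the Abel-summation gap bound. Everything else is algebra plus compactness of \(K\). The convergence \(\delta_N\to0\) is used only to keep the row gaps of \(\lambda\) of order \(N\).
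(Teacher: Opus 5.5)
Your proposal is correct and follows essentially the same route as the paper: the bialternant (Weyl character) expansion, dominance of the identity term, and exponential smallness of the nonidentity terms, uniformly on $K$. The only difference is that you obtain the exponent gap explicitly by Abel summation in terms of the row gaps of $\lambda$, whereas the paper writes $\lambda\approx Np$ (with an $O_K(\delta_N+N^{-1})$ error) and uses strict rearrangement plus compactness to get a uniform positive constant.
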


\begin{proof}
Write the numerator in the Weyl character formula as
\[
  \sum_{\sigma\in\mathfrak S_d}\operatorname{sgn}(\sigma)A_\sigma,
  \qquad
  A_\sigma=\prod_i p_i^{\lambda_{\sigma(i)}+d-\sigma(i)}.
\]
For each nonidentity permutation, strict rearrangement
makes \(\sum_i(p_i-p_{\sigma(i)})\log p_i\) positive.  Compactness of
\(K\) and finiteness of \(\mathfrak S_d\) make this bound uniform for all
sufficiently large \(N\):
\[
  \frac1N\log\frac{A_{\mathrm{id}}}{A_\sigma}
  =\sum_i(p_i-p_{\sigma(i)})\log p_i
   +O_K(\delta_N+N^{-1})\ge c_K>0.
\]
Thus the numerator is \(A_{\mathrm{id}}(1+O_K(e^{-c_KN}))\).
Cancelling \(\prod_i p_i^{d-i}\) against the Weyl denominator gives
\[
  s_\lambda(p)
  =p^\lambda
   \prod_{i<j}\left(1-\frac{p_j}{p_i}\right)^{-1}
   \bigl(1+O_K(e^{-c_KN})\bigr).
\]
This proves \eqref{eq:uniform-character-normalization}.
\end{proof}

Index the number basis of \(\cF_{\mathrm{orb}}\) by
\(\mathbf r\in\mathbb Z_{\ge0}^{\mathsf R_+}\), writing
\(\{\ket{\mathbf r}\}\), and define the finite-rank projection
\begin{equation}\label{eq:Fock-height-projection}
  P_L^{\cF}
  =\sum_{\operatorname{ht}(\sum_\alpha r_\alpha\alpha)\le L}
    \proj{\mathbf r}.
\end{equation}
For every fixed \(L\) and all sufficiently large \(N\), each highest
weight \(\lambda\) satisfying the gap bounds in
\cref{lem:PBW-fixed-height} has the orthonormal basis
\(e_{\mathbf r}^\lambda\) of \(\mathcal K_{\lambda,L}\) constructed there.
Define
\begin{equation}\label{eq:finite-coordinate-unitary}
  J_{\lambda,L}:\mathcal K_{\lambda,L}\longrightarrow
  P_L^{\cF}\cF_{\mathrm{orb}},
  \qquad
  J_{\lambda,L}e_{\mathbf r}^\lambda=\ket{\mathbf r}.
\end{equation}
The common index set makes \(J_{\lambda,L}\) unitary on its stated domain.

\begin{proof}[Proof of \cref{prop:sector-fidelity}]
Fix \(K\) and \(\delta_n\) as in the proposition, write \(m=m_n\), and let
\(\omega=\nu-\mu\).  Since \(m/n=\gamma_n\to\gamma>1\), we have
\(m/n-1\ge(\gamma-1)/2\) for all sufficiently large \(n\).
Uniformly over the allowed parameters,
\[
  \left\|\frac\omega{m-n}-p\right\|_\infty
  \le\frac{m/n+1}{m/n-1}\,\delta_n=o(1).
\]
Every coordinate and every adjacent gap is uniformly positive on \(K\).
Hence \(\omega\vdash m-n\) for all sufficiently large \(n\), and
\(\nu=\mu+\omega\in\mathsf Y_m\).  For
\(\alpha=e_i-e_j\), set
\begin{equation}\label{eq:splitting-fraction}
  s_{\alpha,n}
  :=\frac{\Delta_\alpha^\mu}{\Delta_\alpha^\nu}
    =\frac{n}{m}
      \frac{\mu_i/n-\mu_j/n}{\nu_i/m-\nu_j/m}
    =\frac1\gamma+o(1),
  \qquad
  1-s_{\alpha,n}=\frac{\Delta_\alpha^\omega}{\Delta_\alpha^\nu}.
\end{equation}
Consequently, there are constants
\(0<c_K<C_K<\infty\), depending only on \(K\) and
\(\gamma\), such that, for all large \(n\),
\begin{equation}\label{eq:uniform-sector-gaps}
  c_Kn
  \le\Delta_\alpha^\mu,\Delta_\alpha^\nu,\Delta_\alpha^\omega
  \le C_Kn
  \qquad(\alpha\in\mathsf R_+).
\end{equation}

Let
\(V=V_{\mu,\omega}:\cH_\nu\to\cH_\mu\ot\cH_\omega\), and write
\(F_\alpha^{(\lambda)}=\dd\pi_\lambda(F_\alpha)\) for the action of the root
operator on \(\cH_\lambda\).  Intertwining and the highest-weight
normalization give
\[
  VF_\alpha^{(\nu)}
  =\bigl(F_\alpha^{(\mu)}\ot I+I\ot F_\alpha^{(\omega)}\bigr)V.
\]
Expanding each
\(\bigl(F_\alpha^{(\mu)}\ot I+I\ot F_\alpha^{(\omega)}\bigr)^{r_\alpha}\)
binomially gives the exact identity
\begin{equation}\label{eq:cartan-PBW-exact}
  V\widetilde e_{\mathbf r}^{\nu}
  =\sum_{\mathbf a\le\mathbf r}
    \beta_{\mathbf r,\mathbf a}^{(n)}
    \widetilde e_{\mathbf a}^{\mu}
    \ot\widetilde e_{\mathbf r-\mathbf a}^{\omega},
\end{equation}
where
\begin{equation}\label{eq:cartan-coefficients}
  \beta_{\mathbf r,\mathbf a}^{(n)}
  =\prod_{\alpha\in\mathsf R_+}
   \sqrt{\binom{r_\alpha}{a_\alpha}
   s_{\alpha,n}^{a_\alpha}
   (1-s_{\alpha,n})^{r_\alpha-a_\alpha}}.
\end{equation}
If \(\operatorname{ht}(\sum_\alpha r_\alpha\alpha)\le L\),
then \eqref{eq:cartan-PBW-exact} has \(O_L(1)\) terms, all of height at
most \(L\), with
\(0\le\beta_{\mathbf r,\mathbf a}^{(n)}\le1\).
Since \(n,m,m-n\) are uniformly comparable, the gap bounds
\eqref{eq:uniform-sector-gaps} and \cref{lem:PBW-fixed-height} give
\(e_{\mathbf b}^{\lambda}=\widetilde e_{\mathbf b}^{\lambda}
+O_L(n^{-1/2})\) for \(\lambda\in\{\mu,\nu,\omega\}\) and all relevant
\(\mathbf b\).  Substituting these relations into
\eqref{eq:cartan-PBW-exact} and using \(\|Vx\|=\|x\|\) gives
\begin{equation}\label{eq:cartan-orthonormal-fixed-height}
  Ve_{\mathbf r}^{\nu}
  =\sum_{\mathbf a\le\mathbf r}
    \beta_{\mathbf r,\mathbf a}^{(n)}
    e_{\mathbf a}^{\mu}\ot e_{\mathbf r-\mathbf a}^{\omega}
    +O_L(n^{-1/2})
\end{equation}
uniformly in norm for
\(\operatorname{ht}(\sum_\alpha r_\alpha\alpha)\le L\).

Let
\[
  \sigma_{\mu,\nu}=\cC_{\mu,\nu}(\rho_{p,\mu}),
  \qquad
  c_\lambda(p)=\frac{p^\lambda}{s_\lambda(p)},
  \qquad
  q_\alpha=\frac{p_j}{p_i}\quad(\alpha=e_i-e_j).
\]
For \(\lambda\in\{\mu,\nu\}\), the weight of
\(e_{\mathbf b}^\lambda\) is
\(\lambda-\sum_\alpha b_\alpha\alpha\), and hence
\[
  \rho_{p,\lambda}e_{\mathbf b}^\lambda
  =c_\lambda(p)\prod_\alpha q_\alpha^{b_\alpha}e_{\mathbf b}^\lambda.
\]
The sector channel gives the local formula
\[
  \sigma_{\mu,\nu}
  =\frac{\dim\cH_\mu}{\dim\cH_\nu}
   V^*(\rho_{p,\mu}\ot I_{\cH_\omega})V.
\]
All estimates below are uniform over the allowed family.
By \cref{lem:uniform-character-normalization},
\[
  c_\lambda(p)=\prod_\alpha(1-q_\alpha)+o(1)
  \quad(\lambda=\mu,\nu),
\]
while the Weyl dimension formula, \eqref{eq:uniform-sector-gaps}, and
\eqref{eq:splitting-fraction} give
\[
  \frac{\dim\cH_\mu}{\dim\cH_\nu}
  =\prod_{\alpha=e_i-e_j}
    \frac{\Delta_\alpha^\mu+j-i}{\Delta_\alpha^\nu+j-i}
  =\prod_{\alpha\in\mathsf R_+}
    \bigl(s_{\alpha,n}+O(n^{-1})\bigr)
  =\gamma^{-r_d}+o(1).
\]
Substitute \eqref{eq:cartan-orthonormal-fixed-height} into the sector
formula.  The binomial theorem
and \(\|\rho_{p,\mu}\|\le1\) therefore give, for fixed \(L\),
\begin{align}
  \langle e_{\mathbf r}^\nu,
    \sigma_{\mu,\nu}e_{\mathbf s}^\nu\rangle
  &={}\frac{\dim\cH_\mu}{\dim\cH_\nu}
  \sum_{\substack{\mathbf a\le\mathbf r\\
                   \mathbf b\le\mathbf s}}
  \beta_{\mathbf r,\mathbf a}^{(n)}
  \beta_{\mathbf s,\mathbf b}^{(n)}
  \langle e_{\mathbf a}^\mu,
    \rho_{p,\mu}e_{\mathbf b}^\mu\rangle
  \langle e_{\mathbf r-\mathbf a}^\omega,
    e_{\mathbf s-\mathbf b}^\omega\rangle
  +O_L(n^{-1/2})\notag\\
  &={}
  \delta_{\mathbf r,\mathbf s}
  \frac{\dim\cH_\mu}{\dim\cH_\nu}c_\mu(p)
  \prod_{\alpha\in\mathsf R_+}
  \bigl(1-s_{\alpha,n}+s_{\alpha,n}q_\alpha\bigr)^{r_\alpha}
  +O_L(n^{-1/2}),
  \label{eq:sector-output-fixed-matrix}\\
  \langle e_{\mathbf r}^\nu,
    \rho_{p,\nu}e_{\mathbf s}^\nu\rangle
  &=\delta_{\mathbf r,\mathbf s}
    c_\nu(p)\prod_{\alpha\in\mathsf R_+}q_\alpha^{r_\alpha}.
  \label{eq:sector-target-fixed-matrix}
\end{align}

With \(s_{\alpha,n}=\gamma^{-1}+o(1)\) and
\(q_\alpha^{(\gamma)}=1-(1-q_\alpha)/\gamma\), the output entries in
\eqref{eq:sector-output-fixed-matrix} equal
\(\delta_{\mathbf r,\mathbf s}\prod_\alpha
(1-q_\alpha^{(\gamma)})(q_\alpha^{(\gamma)})^{r_\alpha}+o(1)\)
uniformly for fixed \(L\).  The target entries in
\eqref{eq:sector-target-fixed-matrix} have the same form with
\(q_\alpha\) in place of \(q_\alpha^{(\gamma)}\).  These are the
number-basis entries of \(\tau_p^{(\gamma)}\) and \(\tau_p\),
respectively.  The cutoff space has
fixed finite dimension, so this uniform entrywise convergence gives
\begin{equation}\label{eq:output-fixed-height-convergence}
  \left\|
  J_{\nu,L}P_{\nu,L}\sigma_{\mu,\nu}P_{\nu,L}J_{\nu,L}^*
  -P_L^{\cF}\tau_p^{(\gamma)}P_L^{\cF}
  \right\|_1\longrightarrow0.
\end{equation}
\begin{equation}\label{eq:target-fixed-height-convergence}
  \left\|
  J_{\nu,L}P_{\nu,L}\rho_{p,\nu}P_{\nu,L}J_{\nu,L}^*
  -P_L^{\cF}\tau_pP_L^{\cF}
  \right\|_1\longrightarrow0.
\end{equation}

To obtain the fidelity assertions, let \((a_\mu)_\mu\) be a
probability distribution supported on labels \(\mu\in\mathsf Y_n\) with
\(\|\mu/n-p\|_\infty\le\delta_n\).
All such pairs \((\mu,\nu)\) are compatible for large \(n\), by the first
part of the proof.  Let
\[
  \overline\sigma_{p,\nu}
  =\sum_\mu a_\mu\,\sigma_{\mu,\nu}.
\]
For fixed \(L\), let \(\varepsilon_{n,L}\to0\) bound both trace-norm
errors in \eqref{eq:output-fixed-height-convergence} and
\eqref{eq:target-fixed-height-convergence}, uniformly over the allowed
parameters.  Since the output frame \(J_{\nu,L}P_{\nu,L}\) is independent
of \(\mu\), convexity gives the same bound for \(\overline\sigma_{p,\nu}\),
uniformly over all mixture weights.

Both \(\overline\sigma_{p,\nu}\) and \(\rho_{p,\nu}\)
commute with the diagonal torus.  The second assertion is immediate.
The first follows from torus invariance of each \(\rho_{p,\mu}\) and
covariance of \(\cC_{\mu,\nu}\), and persists under the mixture.
Consequently, \(P_{\nu,L}\) reduces both states, and the direct-sum
identity gives
\[
  F(\overline\sigma_{p,\nu},\rho_{p,\nu})
  =F(P_{\nu,L}\overline\sigma_{p,\nu}P_{\nu,L},
        P_{\nu,L}\rho_{p,\nu}P_{\nu,L})
  +F((I-P_{\nu,L})\overline\sigma_{p,\nu}(I-P_{\nu,L}),
       (I-P_{\nu,L})\rho_{p,\nu}(I-P_{\nu,L})).
\]
The trace bound
\(F(A,B)\le\sqrt{\Tr A\,\Tr B}\) shows that the second term is at most
\(\sqrt{\Tr[(I-P_{\nu,L})\rho_{p,\nu}]}\), because
\(\Tr[(I-P_{\nu,L})\overline\sigma_{p,\nu}]\le1\).

Compactness of \(K\Subset\Delta_d^\circ\) gives
\(b_K:=\sup_{p\in K,\alpha\in\mathsf R_+}q_\alpha<1\).
Let \(k_L:=\lfloor L/((d-1)r_d)\rfloor+1\).
Since there are \(r_d\) roots, each of height at most \(d-1\), a
number vector outside \(P_L^{\cF}\) has \(r_\alpha\ge k_L\) for some
\(\alpha\).  Under \(\tau_p\), each \(r_\alpha\) has the geometric law
\(\Pr_{\tau_p}(r_\alpha=j)=(1-q_\alpha)q_\alpha^j\), so
\(\Pr_{\tau_p}(r_\alpha\ge k_L)=q_\alpha^{k_L}\le b_K^{k_L}\).
A union bound gives
\begin{equation}\label{eq:uniform-thermal-height-tail}
  t_L:=\sup_{p\in K}\Tr[(I-P_L^{\cF})\tau_p]
  \le
  r_d\,b_K^{k_L}
  \longrightarrow0\qquad(L\to\infty).
\end{equation}
Taking traces in
\eqref{eq:target-fixed-height-convergence} now gives
\(\Tr[(I-P_{\nu,L})\rho_{p,\nu}]\le t_L+\varepsilon_{n,L}\)
uniformly for fixed \(L\).  The states \(\tau_p^{(\gamma)}\) and
\(\tau_p\) are number diagonal, so \(P_L^{\cF}\) reduces them. The same
trace bound makes their
complementary fidelity at most \(\sqrt{t_L}\).  Consequently,
\[
\begin{aligned}
 \left|F(\overline\sigma_{p,\nu},\rho_{p,\nu})
       -F(\tau_p^{(\gamma)},\tau_p)\right|
 &\le
 \left|F(P_{\nu,L}\overline\sigma_{p,\nu}P_{\nu,L},
         P_{\nu,L}\rho_{p,\nu}P_{\nu,L})
       -F(P_L^{\cF}\tau_p^{(\gamma)}P_L^{\cF},
          P_L^{\cF}\tau_pP_L^{\cF})\right|\\
 &\qquad+
 \sqrt{\Tr[(I-P_{\nu,L})\rho_{p,\nu}]}
 +\sqrt{\Tr[(I-P_L^{\cF})\tau_p]}.
\end{aligned}
\]
By \eqref{eq:fidelity-continuity} and
\eqref{eq:output-fixed-height-convergence}--\eqref{eq:target-fixed-height-convergence},
together with the mixture bound above, for every fixed \(L\) the
cutoff-fidelity difference on the right is at most
\(2\sqrt{\varepsilon_{n,L}}\) uniformly.  The preceding tail estimates
therefore give
\[
  \left|F(\overline\sigma_{p,\nu},\rho_{p,\nu})
  -F(\tau_p^{(\gamma)},\tau_p)\right|
  \le 2\sqrt{\varepsilon_{n,L}}
     +\sqrt{t_L+\varepsilon_{n,L}}+\sqrt{t_L}.
\]
First let \(n\to\infty\) at fixed \(L\), and then let
\(L\to\infty\), using \(\varepsilon_{n,L}\to0\) uniformly and
\eqref{eq:uniform-thermal-height-tail}.  This proves uniform
convergence.  Finally,
\[
  F(\tau_p^{(\gamma)},\tau_p)
  =\prod_{i<j}F(\tau_{q_{ij}^{(\gamma)}},\tau_{q_{ij}})
  =\Forb(\gamma,p)
\]
by multiplicativity of the trace norm under tensor
products, \eqref{eq:thermal-root-fidelity}, and \eqref{eq:one-mode-f}.
This proves the mixture assertion of \cref{prop:sector-fidelity}.
Taking a point mass gives the individual-sector assertion.
\end{proof}

\section{Young local limit and randomized rounding}
\label[appendix]{app:young-rounding}

To prove \cref{prop:young-rounding}, we represent each
Young-label law as a density constant on lattice cells.  In this
representation, randomized rounding is a dilation followed by cell
averaging.  A uniform Gaussian local limit then yields
\(\Fcl(\gamma,d)\).

Define the half-open fundamental cell
\[
  \mathsf C
  =\left\{\sum_{i=1}^{d-1}t_i(e_i-e_d):-\frac12\le t_i<\frac12\right\}
  \subset\mathsf H.
\]
Volumes are taken with the induced \((d-1)\)-dimensional Lebesgue measure.
Except on cell boundaries, which have probability zero,
the rounding rule \eqref{eq:randomized-rounding} selects the unique
\(\widetilde\nu\in\mathsf L_{m_n}\) for which
\[
  \gamma_n(\mu+y)\in\widetilde\nu+\mathsf C,
  \qquad
  y=\sum_{i=1}^{d-1}y_i(e_i-e_d).
\]
The first \(d-1\) coordinates follow the nearest-integer rule. The final
coordinate is determined by the required sum.  Since \(y\) is uniform
on \(\mathsf C\),
\begin{equation}\label{eq:cell-kernel}
  \mathsf R_n(\widetilde\nu\mid\mu)
  =\frac{|\gamma_n(\mu+\mathsf C)\cap(\widetilde\nu+\mathsf C)|}
         {\gamma_n^{d-1}|\mathsf C|}.
\end{equation}
Extend every label law on \(\mathsf Y_N\) by zero to \(\mathsf L_N\).

For \(\lambda\in\mathsf L_N\), define the recentered and
rescaled lattice cell
\[
  \mathsf C_{N,p}(\lambda)=\frac{\lambda-Np+\mathsf C}{\sqrt N}.
\]
For any \(R:\mathsf L_N\to\R\) satisfying
\(\sum_{\lambda\in\mathsf L_N}|R(\lambda)|<\infty\), define its
piecewise-constant interpolation by
\[
  (\mathcal I_{N,p}R)(x)
  =\frac{N^{(d-1)/2}}{|\mathsf C|}R(\lambda)
  \quad(x\in\mathsf C_{N,p}(\lambda)).
\]
These cells tile \(\mathsf H\) up to boundaries of zero
Lebesgue measure.  Each has volume
\(|\mathsf C|N^{-(d-1)/2}\), which cancels the density prefactor
\(N^{(d-1)/2}/|\mathsf C|\) in \(\mathcal I_{N,p}\).
Hence \(\mathcal I_{N,p}\) preserves \(\ell^1\)-distance,
\(\|\mathcal I_{N,p}R-\mathcal I_{N,p}S\|_1=\|R-S\|_{\ell^1}\), and, for
nonnegative \(R,S\), Hellinger affinity,
\(\int_{\mathsf H}\sqrt{(\mathcal I_{N,p}R)(\mathcal I_{N,p}S)}\,\dd x
=\sum_{\lambda\in\mathsf L_N}\sqrt{R(\lambda)S(\lambda)}\).

\begin{lemma}[Uniform Young local limit]\label{lem:young-local-limit}
For every compact \(K\Subset\Delta_d^\circ\),
\begin{equation}\label{eq:young-local-limit}
  \sup_{p\in K}
  \left\|\mathcal I_{N,p}P_{N,p}-\varphi_{0,\Sigma_p}\right\|_1
  \longrightarrow0.
\end{equation}
\end{lemma}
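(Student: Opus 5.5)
The plan is to prove a uniform pointwise local limit on the bulk scale $\|x\|\le M$, and then convert it to $L^1$ convergence using Scheffé-type mass bookkeeping together with the Gaussian tail.

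\emph{Step 1 (explicit formula on typical labels).} For $\lambda\in\mathsf T_N(p)$, I will write $P_{N,p}(\lambda)=\dim\cS_\lambda\, s_\lambda(p)$. The Schur factor is handled by \cref{lem:uniform-character-normalization}:
\[
s_\lambda(p)=p^\lambda\prod_{i<j}(1-p_j/p_i)^{-1}(1+o(1)),
\]
uniformly on $K$. For the Specht factor I will use the hook-length formula in the form
\[
\dim\cS_\lambda=\binom{N}{\lambda_1,\ldots,\lambda_d}\prod_{i<j}\frac{\lambda_i-\lambda_j+j-i}{\lambda_i+d-i}\cdot\prod_i\frac{\lambda_i!}{(\lambda_i+d-i)!}(\cdots),
\]
or more cleanly the Frobenius formula
\[
\dim\cS_\lambda=\frac{N!}{\prod_i(\lambda_i+d-i)!}\prod_{i<j}(\ell_i-\ell_j),\qquad \ell_i=\lambda_i+d-i.
\]
On the typical set, where $\lambda_i\asymp N$ with gaps $\asymp N$, this gives
\[
\dim\cS_\lambda=\binom{N}{\lambda}\prod_{i<j}\frac{\lambda_i-\lambda_j}{N\,p_i}\,(1+o(1)).
\]
Here $\lambda_i-\lambda_j\approx N(p_i-p_j)$, and $\prod_i\lambda_i^{\,d-i}\approx\prod_i(Np_i)^{d-i}$. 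Hence
\[
\prod_{i<j}\frac{p_i-p_j}{p_i}=\prod_{i<j}(1-p_j/p_i),
\]
which cancels exactly against the Schur-polynomial prefactor. I therefore expect to obtain
\[
P_{N,p}(\lambda)=\mathrm{Mult}_{N,p}(\lambda)\,(1+o(1))
\]
uniformly over $p\in K$ and $\lambda\in\mathsf T_N(p)$. The care needed here is to make the ratio errors $O(N^{-1/3})$ uniformly, which the typical-set definition provides.

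\emph{Step 2 (multinomial local CLT).} The uniform local CLT for the multinomial law on $\mathsf L_N$, with covariance $\Sigma_p$ restricted to $\mathsf H$, holds uniformly for $p$ in a compact set with entries bounded below. This can be obtained from Stirling's formula directly on $\|\lambda-Np\|\le M\sqrt N$. It gives
\[
\frac{N^{(d-1)/2}}{|\mathsf C|}\,\mathrm{Mult}_{N,p}(\lambda)=\varphi_{0,\Sigma_p}(x)(1+o(1))
\]
for $x\in\mathsf C_{N,p}(\lambda)$ with $\|x\|\le M$. The cell-volume normalization is right provided $\varphi_{0,\Sigma_p}$ is taken as a density with respect to the same induced measure on $\mathsf H$ used for $|\mathsf C|$; I will check that the lattice determinant matches $|\mathsf C|$, which is the standard local-CLT bookkeeping. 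Since $\varphi$ is uniformly continuous on $\|x\|\le M$ uniformly in $p\in K$ ($\Sigma_p$ is uniformly nondegenerate on $K$), and the cells have diameter $O(N^{-1/2})$, Steps 1 and 2 combine to give
\[
\sup_{p\in K}\int_{\|x\|\le M}\left|\mathcal I_{N,p}P_{N,p}-\varphi_{0,\Sigma_p}\right|\to 0.
\]
Note that $\|x\|\le M$ lies inside the typical set for large $N$.

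\emph{Step 3 (tails).} Both functions are probability densities, so
\[
\int_{\|x\|>M}\mathcal I_{N,p}P_{N,p}=1-\int_{\|x\|\le M}\mathcal I_{N,p}P_{N,p}\to 1-\int_{\|x\|\le M}\varphi_{0,\Sigma_p},
\]
uniformly on $K$. The right side is at most $\epsilon_M:=\sup_{p\in K}\Pr_{\varphi_{0,\Sigma_p}}(\|x\|>M)$, which tends to $0$ as $M\to\infty$ because $\Sigma_p$ is uniformly bounded on $K$. The total $L^1$ error is then bounded by the bulk term plus twice $\epsilon_M$, up to $o(1)$. Letting $N\to\infty$ and then $M\to\infty$ proves \eqref{eq:young-local-limit}.

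The main obstacle is Step 1: verifying that the Frobenius-formula correction exactly cancels the Weyl-denominator factor from the character normalization, uniformly over $K$. The remainder is standard Stirling and local-CLT uniformity. I will first try the explicit ratio computation using $\ell_i=\lambda_i+d-i$, which keeps all factors transparent.
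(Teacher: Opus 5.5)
Your proposal is correct and follows essentially the same route as the paper: the Frobenius formula plus \cref{lem:uniform-character-normalization} reduces $P_{N,p}(\lambda)$ to the multinomial law up to a factor $1+o(1)$ on bulk cells, since the Weyl-denominator factor cancels exactly as you computed. Stirling's formula (the multinomial local limit, normalized by $|\mathsf C|=\sqrt d$ and $\det_{\mathsf H}\Sigma_p=d\prod_i p_i$) and within-cell continuity then give uniform convergence on balls, and a Scheff\'e-type tail bound finishes; the only cosmetic difference is that you make the reduction on all of $\mathsf T_N(p)$ with $O(N^{-1/3})$ errors, whereas the paper works only on cells meeting a fixed ball with $O(N^{-1/2})$ errors.
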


\begin{proof}
Write \(f_{N,p}=\mathcal I_{N,p}P_{N,p}\), and fix a ball
\(B_R=\{x\in\mathsf H:\|x\|\le R\}\).
For every cell \(\mathsf C_{N,p}(\lambda)\) meeting
\(B_R\), its center \(x_\lambda=(\lambda-Np)/\sqrt N\) lies in a
bounded set independent of \(N,p\).  Compactness of
\(K\Subset\Delta_d^\circ\) gives positive lower bounds on \(p_d\)
and all adjacent gaps, so these \(\lambda\) belong to
\(\mathsf Y_N\) for all sufficiently large \(N\).
The Frobenius dimension formula and
\cref{lem:uniform-character-normalization} give, uniformly on these cells,
\[
  P_{N,p}(\lambda)
  =\frac{N!\prod_i p_i^{\lambda_i}}{\prod_i\lambda_i!}
   \prod_{i<j}\frac{\lambda_i-\lambda_j+j-i}{N(p_i-p_j)}
   \prod_i\prod_{a=1}^{d-i}\frac{Np_i}{\lambda_i+a}
   \bigl(1+o(1)\bigr).
\]
Each factor in the two last products is \(1+O_{K,R}(N^{-1/2})\).
Stirling's formula and a Taylor expansion of the first factor therefore
yield
\[
  P_{N,p}(\lambda)
  =\frac{\exp\!\left[-\frac12\sum_i x_{\lambda,i}^2/p_i\right]}
  {(2\pi N)^{(d-1)/2}\sqrt{\prod_i p_i}}\bigl(1+o(1)\bigr).
\]
The normalization agrees with the Gaussian density on \(\mathsf H\):
\[
  |\mathsf C|=\sqrt d,\qquad
  \det_{\mathsf H}\Sigma_p=d\prod_i p_i,\qquad
  \langle x,\Sigma_p^{-1}x\rangle=\sum_i\frac{x_i^2}{p_i}
  \quad(x\in\mathsf H).
\]
Indeed, the Gram matrix of \((e_i-e_d)_{i<d}\) has determinant \(d\).
The ambient covariance \(\diag(p)-pp^{\mathsf T}\) has kernel
\(\operatorname{span}\{(1,\ldots,1)\}\).  Deleting row and column \(i\)
leaves a principal minor of determinant
\((\prod_{j\ne i}p_j)(1-\sum_{j\ne i}p_j)=\prod_jp_j\).
The product of the nonzero eigenvalues is the sum of these \(d\) principal
minors, giving \(\det_{\mathsf H}\Sigma_p=d\prod_i p_i\).  Its inverse on
\(\mathsf H\) sends \(x\) to \(\diag(p)^{-1}x\) up to a multiple of
\((1,\ldots,1)\), giving the quadratic identity.
For cells meeting \(B_R\), the Gaussian log-density
has gradient norm
\(\|\nabla\log\varphi_{0,\Sigma_p}(z)\|
=\|\Sigma_p^{-1}z\|=O_{K,R}(1)\), while each cell has diameter
\(O(N^{-1/2})\).
Thus its density changes by a factor \(1+O_{K,R}(N^{-1/2})\) within a
cell.  Together with the cell-center estimate for \(P_{N,p}(\lambda)\),
this gives
\[
  \sup_{p\in K}\sup_{x\in B_R}
  |f_{N,p}(x)-\varphi_{0,\Sigma_p}(x)|\longrightarrow0.
\]
Both densities have integral one.  Consequently,
\[
  \|f_{N,p}-\varphi_{0,\Sigma_p}\|_1
  \le 2\int_{B_R}|f_{N,p}-\varphi_{0,\Sigma_p}|\,\dd x
     +2\int_{B_R^c}\varphi_{0,\Sigma_p}(x)\,\dd x.
\]
Let \(N\to\infty\), then \(R\to\infty\).  The Gaussian tails
vanish uniformly on compact \(K\), proving \eqref{eq:young-local-limit}.
\end{proof}

\begin{lemma}[Continuous limit of randomized dilation]
\label{lem:randomized-dilation}
Let \(\widetilde Q_{m_n,p}=P_{n,p}\mathsf R_n\) be the law
obtained by applying \(\mathsf R_n\) to an input drawn from \(P_{n,p}\), before
the fallback replacement in \eqref{eq:rounding-kernel}.  For every compact
\(K\Subset\Delta_d^\circ\),
\begin{equation}\label{eq:rounded-density-limit}
  \sup_{p\in K}
  \left\|\mathcal I_{m_n,p}\widetilde Q_{m_n,p}
  -\varphi_{0,\gamma\Sigma_p}\right\|_1\longrightarrow0.
\end{equation}
\begin{equation}\label{eq:raw-rounded-affinity}
  \sup_{p\in K}
  \left|F(\widetilde Q_{m_n,p},P_{m_n,p})-\Fcl(\gamma,d)\right|
  \longrightarrow0.
\end{equation}
\end{lemma}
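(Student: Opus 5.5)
The plan is to read the rounding kernel \eqref{eq:cell-kernel} as a dilation of densities followed by cell averaging, and then use the Young local limit at sample size \(n\).

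\emph{Step 1: exact density identity.} Summing \eqref{eq:cell-kernel} against \(P_{n,p}\) gives
\[
  \widetilde Q_{m_n,p}(\widetilde\nu)
  =\sum_\mu P_{n,p}(\mu)
   \frac{|\gamma_n(\mu+\mathsf C)\cap(\widetilde\nu+\mathsf C)|}{\gamma_n^{d-1}|\mathsf C|}
   =\int_{\widetilde\nu+\mathsf C} g_n(y)\,\dd y .
\]
Here \(g_n\) is the density on \(\gamma_n\)-dilated cells, equal to \(P_{n,p}(\mu)/(\gamma_n^{d-1}|\mathsf C|)\) on \(\gamma_n(\mu+\mathsf C)\).

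Now recenter. We have \(\gamma_n\cdot np=m_np\), so subtract \(m_np\) and divide by \(\sqrt{m_n}\). In these coordinates \(g_n\) becomes \(D_n f_{n,p}\), where \(f_{n,p}=\mathcal I_{n,p}P_{n,p}\) and \((D_n f)(x)=\gamma_n^{-(d-1)/2}f(x/\sqrt{\gamma_n})\) is the \(\sqrt{\gamma_n}\)-dilation. The input cell \(\mathsf C_{n,p}(\mu)\) is sent exactly to \(\sqrt{\gamma_n}\,\mathsf C_{n,p}(\mu)\), and the normalizations match.

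Consequently
\[
\mathcal I_{m_n,p}\widetilde Q_{m_n,p}=\mathbb A_{m_n}D_nf_{n,p},
\]
where \(\mathbb A_{m_n}\) averages a density over each cell \(\mathsf C_{m_n,p}(\lambda)\). The averaging \(\mathbb A_{m_n}\) is an \(L^1\) contraction, and so is \(D_n\), which is measure preserving.

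\emph{Step 2: the limit \eqref{eq:rounded-density-limit}.} Split the error into three terms:
\[
\|\mathbb A D_nf_{n,p}-\varphi_{0,\gamma\Sigma_p}\|_1
\le\|f_{n,p}-\varphi_{0,\Sigma_p}\|_1
+\|D_n\varphi_{0,\Sigma_p}-\varphi_{0,\gamma\Sigma_p}\|_1
+\|\mathbb A\varphi_{0,\gamma\Sigma_p}-\varphi_{0,\gamma\Sigma_p}\|_1 .
\]
\begin{itemize}
\item The first term vanishes uniformly on \(K\) by \cref{lem:young-local-limit}.
\item The second term equals \(\|\varphi_{0,\gamma_n\Sigma_p}-\varphi_{0,\gamma\Sigma_p}\|_1\). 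It tends to zero uniformly, because \(\gamma_n\to\gamma\) and \(\Sigma_p\) ranges over a compact set of nondegenerate covariances.
\item For the third term, note that \(\mathbb A\) applied to any fixed density converges to it in \(L^1\) as the cell diameter \(O(m_n^{-1/2})\to0\). To get uniformity over \(p\in K\), I would truncate to a ball \(B_R\) and use the uniform Lipschitz bound on \(\varphi_{0,\gamma\Sigma_p}\) there, as in the proof of \cref{lem:young-local-limit}, together with uniform Gaussian tails outside \(B_R\).
\end{itemize}
There is one subtlety here. The interpolation grid \(\mathsf C_{m_n,p}(\lambda)\) lives in \(\mathsf L_{m_n}\), and labels outside \(\mathsf Y_{m_n}\) may receive mass. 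This is harmless, since \(\mathcal I\) is defined on all of \(\mathsf L_N\).

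\emph{Step 3: the affinity \eqref{eq:raw-rounded-affinity}.} The interpolation \(\mathcal I_{m_n,p}\) preserves Hellinger affinity, so
\[
F(\widetilde Q_{m_n,p},P_{m_n,p})=\int\sqrt{\mathcal I\widetilde Q\cdot\mathcal I P}.
\]
Affinity is continuous in \(L^1\): \(|F(a,b)-F(a',b')|\le\|a-a'\|_1^{1/2}+\|b-b'\|_1^{1/2}\), which is \eqref{eq:fidelity-continuity} in the commutative case. Combining this with Step 2 and with \cref{lem:young-local-limit} at \(N=m_n\), the affinity converges uniformly to \(F(\mathsf N_{0,\gamma\Sigma_p},\mathsf N_{0,\Sigma_p})\).

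That limit is evaluated by the standard Gaussian formula. The affinity of \(\mathsf N_{0,A}\) and \(\mathsf N_{0,B}\) in dimension \(d-1\) is \(\det(A)^{1/4}\det(B)^{1/4}/\det((A+B)/2)^{1/2}\). With \(A=\gamma\Sigma_p\) and \(B=\Sigma_p\) this equals \(\bigl(2\sqrt\gamma/(1+\gamma)\bigr)^{(d-1)/2}=\Fcl(\gamma,d)\), as in \eqref{eq:label-gaussian-affinity}.

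The main obstacle is Step 1: verifying the exact identity \(\mathcal I_{m_n,p}\widetilde Q=\mathbb A_{m_n}D_n\mathcal I_{n,p}P_{n,p}\), including the centering shift and the Jacobian factors. It should follow directly from \eqref{eq:cell-kernel} once the cells are written in recentered coordinates, but the bookkeeping of \(\gamma_n^{(d-1)/2}\) factors must be checked carefully.
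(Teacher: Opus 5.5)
Your proposal is correct and follows the paper's proof essentially step for step. It uses the same exact intertwining identity, namely cell averaging on the $m_n$-grid applied to the $\sqrt{\gamma_n}$-dilation of $\mathcal I_{n,p}P_{n,p}$, and the $\gamma_n^{(d-1)/2}$ bookkeeping you were worried about does check out. It then uses the same three-term $L^1$ split, with contractivity of averaging and isometry of dilation, and the same affinity argument via \eqref{eq:fidelity-continuity} and the Young local limit at $N=m_n$. The only difference is minor: you control the cell-averaging error by truncating to a ball, using a uniform Lipschitz bound and uniform Gaussian tails. The paper instead invokes a cellwise $L^1$ Poincar\'e inequality to get $O_K(m_n^{-1/2})$ directly. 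Both give the required uniformity over $p\in K$.
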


\begin{proof}
For \(a>0\) and \(f\in L^1(\mathsf H)\), set
\[
  (\operatorname{dil}_a f)(z)=a^{-(d-1)/2}f(z/\sqrt a),
\]
and define the target-cell averaging operator by
\[
  (\operatorname{Av}_{m_n,p}f)(x)
  :=\frac{m_n^{(d-1)/2}}{|\mathsf C|}
   \int_{\mathsf C_{m_n,p}(\nu)}f(z)\,\dd z,
  \qquad x\in\mathsf C_{m_n,p}(\nu).
\]
If \(X=(\mu-np+y)/\sqrt n\), where \(\mu\sim P_{n,p}\) and
\(y\) is independent of \(\mu\) and uniform on \(\mathsf C\), then
\(X\) has density \(\mathcal I_{n,p}P_{n,p}\).  Because
\(m_n=\gamma_n n\),
\(\sqrt{\gamma_n}X=
[\gamma_n(\mu+y)-m_np]/\sqrt{m_n}\).  Thus the condition
\(\sqrt{\gamma_n}X\in\mathsf C_{m_n,p}(\nu)\) is equivalent to the
rounding event \(\gamma_n(\mu+y)\in\nu+\mathsf C\).  Hence the dilated
density assigns mass \(\widetilde Q_{m_n,p}(\nu)\) to that target cell.
Averaging over each target cell gives
\begin{equation}\label{eq:rounding-intertwining}
 \operatorname{Av}_{m_n,p}\operatorname{dil}_{\gamma_n}
   (\mathcal I_{n,p}P_{n,p})
 =\mathcal I_{m_n,p}\widetilde Q_{m_n,p}.
\end{equation}
Cell averaging is an \(L^1\)-contraction and dilation an \(L^1\)-isometry.
Since
\(\operatorname{dil}_{\gamma_n}\varphi_{0,\Sigma_p}
=\varphi_{0,\gamma_n\Sigma_p}\), \eqref{eq:rounding-intertwining} yields
\[
  \|\mathcal I_{m_n,p}\widetilde Q_{m_n,p}
       -\varphi_{0,\gamma\Sigma_p}\|_1
  \le \|\mathcal I_{n,p}P_{n,p}-\varphi_{0,\Sigma_p}\|_1
  +\|\operatorname{Av}_{m_n,p}\varphi_{0,\gamma_n\Sigma_p}
               -\varphi_{0,\gamma_n\Sigma_p}\|_1
  +\|\varphi_{0,\gamma_n\Sigma_p}
               -\varphi_{0,\gamma\Sigma_p}\|_1.
\]
The first term tends uniformly to zero by \cref{lem:young-local-limit},
and the last does so because \(\gamma_n\to\gamma\).  For the middle
term, the cellwise \(L^1\) Poincar\'e inequality gives, for smooth \(f\)
with integrable gradient,
\[
  \|\operatorname{Av}_{m_n,p}f-f\|_1
  \le C_d m_n^{-1/2}\|\nabla f\|_1.
\]
Gaussian gradients have uniformly bounded \(L^1\)-norm on the compact
covariance family \(\gamma_n\Sigma_p\), so this term is
\(O_K(m_n^{-1/2})\).  This proves \eqref{eq:rounded-density-limit}.

Applying \cref{lem:young-local-limit} at sample size \(m_n\) gives
\[
  \mathcal I_{m_n,p}P_{m_n,p}\longrightarrow\varphi_{0,\Sigma_p}
\]
uniformly in \(L^1\).  The embedding preserves Hellinger affinity, and
\eqref{eq:fidelity-continuity} also holds for nonnegative \(L^1\) densities.
Therefore
\[
  F(\widetilde Q_{m_n,p},P_{m_n,p})
  \longrightarrow
  F(\mathsf N_{0,\gamma\Sigma_p},\mathsf N_{0,\Sigma_p})
  =\Fcl(\gamma,d)
\]
uniformly on \(K\).
\end{proof}

\begin{proof}[Proof of \cref{prop:young-rounding}]
Write \(e=\widetilde\nu-\gamma_n\mu\).  The nearest-integer rule gives
\(|e_i|\le(\gamma_n+1)/2\) for \(i<d\), and \(e_d=-\sum_{i<d}e_i\).
Thus \(\|e\|_\infty\le c_{\mathrm{rnd}}\), where
\[
  c_{\mathrm{rnd}}
  =\frac{d-1}{2}\left(1+\sup_n\gamma_n\right)<\infty,
\]
where finiteness follows from \(\gamma_n\to\gamma\).

For \(p\in K\) and a typical input label
\(\mu\in\mathsf T_n(p)\), the identity
\(\widetilde\nu/m_n=\mu/n+e/m_n\) gives
\[
  \left\|\frac{\widetilde\nu}{m_n}-p\right\|_\infty
  \le n^{-1/3}+\frac{c_{\mathrm{rnd}}}{m_n}.
\]
Apply the compatibility assertion of
\cref{prop:sector-fidelity} with
\(\delta_n=n^{-1/3}+c_{\mathrm{rnd}}/m_n\).  For all sufficiently large
\(n\), it yields \(\widetilde\nu-\mu\vdash m_n-n\), so the fallback
rule leaves every typical rounded label unchanged.  This proves
\eqref{eq:rounding-support}.

Coupling the raw and final labels by the fallback rule therefore gives
\[
  \|Q_{m_n,p}-\widetilde Q_{m_n,p}\|_{\ell^1}
  \le 2P_{n,p}(\mathsf T_n(p)^c)=o(1)
\]
uniformly on \(K\), by \cref{lem:young-concentration}.  Combining
\eqref{eq:raw-rounded-affinity} with \eqref{eq:fidelity-continuity} proves
\eqref{eq:rounding-label-fidelity}.
\end{proof}

\section{Unified Gaussian flat-prior converse}
\label[appendix]{app:gaussian-converse}

We prove the two converses in
\cref{thm:gaussian-amplification} through the common flat-prior bound in
\cref{prop:flat-prior-converse}.  Its proof combines weighted fidelity,
limits tested on compact observables, a least-noise oscillator moment,
and Gaussian translation.

The classical parameter space has dimension \(k=0\) in the orbital model
and \(k=d-1\) in the full model.
All trace-class-valued integrals below are Bochner integrals in trace norm.
For classical--quantum maps, complete positivity means positivity of
\(\operatorname{id}_{M_a}\otimes\Lambda\) for every \(a\ge1\), with the
matrix factor acting on the quantum trace-class fibers. The classical
algebra is abelian.

\subsection{Weighted fidelity and compact limits}

\begin{lemma}[Weighted fidelity bound]\label{lem:weighted-fidelity}
Let \(R,T\) be positive trace-class operators, and let \(W\) be bounded,
positive, and injective.  Define
\[
  \Tr(TW^{-1})
  :=\lim_{\varepsilon\downarrow0}
  \Tr\bigl(T(W+\varepsilon I)^{-1}\bigr)\in[0,\infty].
\]
If this quantity is finite, then
\begin{equation}\label{eq:weighted-quantum-fidelity}
  F(R,T)^2\le\Tr(RW)\Tr(TW^{-1}).
\end{equation}

Let \(\mathsf E\) be a finite-dimensional Euclidean space,
let \(R\in L^1(\mathsf E;\cT_1(\cF_{\mathrm{orb}}))\) be positive, let
\(g\) be a probability density on \(\mathsf E\), and let \(\chi>0\) be
measurable.  For positive \(R,S\) in this space, write
\(F(R,S)=\int_{\mathsf E}F(R(y),S(y))\,\dd y\), in agreement with the
hybrid-state convention in \cref{sec:gaussian-LAN}.
The notation \(gT\) means the density \(y\mapsto g(y)T\).
Whenever the right-hand side below is finite,
\begin{equation}\label{eq:weighted-hybrid-fidelity}
  F(R,gT)^2
  \le
  \left(\int_{\mathsf E}\chi(y)\Tr[R(y)W] \,\dd y\right)
  \left(\int_{\mathsf E}\frac{g(y)}{\chi(y)}\,\dd y\right)
  \Tr(TW^{-1}).
\end{equation}
\end{lemma}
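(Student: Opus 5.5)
The plan is to prove the operator inequality \eqref{eq:weighted-quantum-fidelity} first with a regularized weight, pass to the limit, and then obtain the hybrid bound \eqref{eq:weighted-hybrid-fidelity} by combining the pointwise inequality with a scalar Cauchy--Schwarz step. For \(\varepsilon>0\), set \(W_\varepsilon=W+\varepsilon I\). This operator is bounded, positive and boundedly invertible.

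\emph{Quantum bound.} Write \(F(R,T)=\|\sqrt R\sqrt T\|_1\) and factor
\[
  \sqrt R\sqrt T=\bigl(\sqrt R\,W_\varepsilon^{1/2}\bigr)\bigl(W_\varepsilon^{-1/2}\sqrt T\bigr).
\]
The Schatten H\"older inequality then gives
\[
  F(R,T)\le\|\sqrt R\,W_\varepsilon^{1/2}\|_2\,\|W_\varepsilon^{-1/2}\sqrt T\|_2,
\]
and squaring yields
\[
  F(R,T)^2\le\Tr(RW_\varepsilon)\,\Tr(TW_\varepsilon^{-1}).
\]
Both factors are finite: \(R\) is trace class and \(W_\varepsilon\) is bounded, while \(T\) is trace class and \(W_\varepsilon^{-1}\) is bounded. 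As \(\varepsilon\downarrow0\), \(\Tr(RW_\varepsilon)=\Tr(RW)+\varepsilon\Tr R\) converges to \(\Tr(RW)\). The second factor converges to \(\Tr(TW^{-1})\) by the definition of that quantity. The limit exists and is monotone in \(\varepsilon\) because \((W+\varepsilon I)^{-1}\) increases as \(\varepsilon\) decreases, and it is finite by hypothesis. Injectivity of \(W\) is not used in the inequality itself; it is only what makes \(W^{-1}\) meaningful.

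\emph{Hybrid bound.} Apply the quantum inequality pointwise, using \(\sqrt{g(y)}\) scaling:
\[
  F(R(y),g(y)T)\le\sqrt{g(y)}\,\sqrt{\Tr[R(y)W]}\,\sqrt{\Tr(TW^{-1})}.
\]
Next insert \(\chi(y)^{1/2}\chi(y)^{-1/2}\) and apply Cauchy--Schwarz in \(L^2(\mathsf E,\dd y)\):
\[
  \int\sqrt{\chi\Tr[R W]}\sqrt{g/\chi}\,\dd y
  \le\Bigl(\int\chi\Tr[RW]\,\dd y\Bigr)^{1/2}\Bigl(\int g/\chi\,\dd y\Bigr)^{1/2}.
\]
Squaring gives \eqref{eq:weighted-hybrid-fidelity}.

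\emph{Measure-theoretic details.} Points where \(g(y)=0\) contribute zero on both sides. Measurability of \(y\mapsto F(R(y),g(y)T)\) and of \(y\mapsto\Tr[R(y)W]\) follows from trace-norm continuity of these maps composed with the Bochner-measurable field \(R\). The right-hand side is assumed finite, so every integral is well defined.

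The only mildly delicate point is the \(\varepsilon\)-limit when \(W^{-1}\) is unbounded. It is handled entirely by working with \(W_\varepsilon\) and invoking monotone convergence, so no real obstacle is expected.
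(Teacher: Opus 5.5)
Your proposal is correct and follows the paper's proof essentially step for step. You regularize with \(W_\varepsilon=W+\varepsilon I\), apply Schatten H\"older to \((\sqrt R\,W_\varepsilon^{1/2})(W_\varepsilon^{-1/2}\sqrt T)\), and let \(\varepsilon\downarrow0\) using \(\Tr(RW_\varepsilon)=\Tr(RW)+\varepsilon\Tr R\) and monotonicity of the second trace, then combine the pointwise bound with Cauchy--Schwarz on \(\sqrt{\chi\Tr[R(y)W]}\) and \(\sqrt{g/\chi}\), exactly as the paper does; your measurability remarks only make explicit what the paper leaves implicit.
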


\begin{proof}
For \(W_\varepsilon=W+\varepsilon I\), the Schatten H\"older inequality gives
\[
  F(R,T)
  =\|(R^{1/2}W_\varepsilon^{1/2})(W_\varepsilon^{-1/2}T^{1/2})\|_1
  \le\sqrt{\Tr(RW_\varepsilon)}\sqrt{\Tr(TW_\varepsilon^{-1})}.
\]
Squaring and letting \(\varepsilon\downarrow0\), using
\(\Tr(RW_\varepsilon)=\Tr(RW)+\varepsilon\Tr R\) and monotone convergence
for the second trace, proves \eqref{eq:weighted-quantum-fidelity}.  For the
hybrid state,
\[
  F(R,gT)
  =\int_{\mathsf E}\sqrt{g(y)}\,F(R(y),T)\,\dd y.
\]
Apply the first inequality pointwise and then Cauchy--Schwarz to the factors
\(\sqrt{\chi(y)\Tr[R(y)W]}\) and
\(\sqrt{g(y)/\chi(y)}\).  This gives
\eqref{eq:weighted-hybrid-fidelity}.
\end{proof}

\begin{lemma}[Compact-observable limits of completely positive maps]
\label{lem:compact-cp-limit}
Let \(\cH_0,\cH_1\) be separable Hilbert spaces and let
\(\Gamma_j:\cT_1(\cH_0)\to\cT_1(\cH_1)\) be completely positive maps with
\(\sup_j\|\Gamma_j\|_{1\to1}<\infty\).  There are a subsequence and a
completely positive map
\(\Gamma:\cT_1(\cH_0)\to\cT_1(\cH_1)\) such that
\begin{equation}\label{eq:compact-map-convergence}
  \Tr[\Gamma_j(X)O]\longrightarrow\Tr[\Gamma(X)O]
\end{equation}
for every \(X\in\cT_1(\cH_0)\) and every compact operator
\(O\) on \(\cH_1\).  If
\(\limsup_j\Tr\Gamma_j(X)\le c\Tr X\) for every \(X\ge0\), then
\(\Tr\Gamma(X)\le c\Tr X\).

Suppose in addition that
\(U_g\in\U(\cH_0)\) and \(V_g\in\U(\cH_1)\) are families of
unitaries indexed by a set \(G\), and that, for every fixed \(g\in G\) and
\(X\in\cT_1(\cH_0)\),
\[
  \|\Gamma_j(U_g X U_g^*)-V_g\Gamma_j(X)V_g^*\|_1
  \longrightarrow0.
\]
Then
\begin{equation}\label{eq:compact-limit-covariance}
  \Gamma(U_g X U_g^*)=V_g\Gamma(X)V_g^*
\end{equation}
for every \(g\in G\) and \(X\in\cT_1(\cH_0)\).
\end{lemma}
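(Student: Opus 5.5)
This is a sequential-compactness argument. I would use duality between trace-class operators and compact operators, together with a diagonal extraction over a countable dense set.

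\emph{Countable reduction.} Since $\cH_0,\cH_1$ are separable, fix countable sets $\mathcal D_0\subset\cT_1(\cH_0)$ and $\mathcal D_1\subset\mathcal K(\cH_1)$ with the following properties:
\begin{itemize}
\item $\mathcal D_0$ is trace-norm dense and consists of finite-rank operators with rational matrix entries in a fixed basis;
\item $\mathcal D_1$ is norm dense among compact operators, for instance finite-rank operators with rational entries.
\end{itemize}
For $X\in\mathcal D_0$ and $O\in\mathcal D_1$, the numbers $\Tr[\Gamma_j(X)O]$ are bounded by $M\|X\|_1\|O\|$, where $M=\sup_j\|\Gamma_j\|_{1\to1}$.

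\emph{Extraction.} A diagonal argument gives a subsequence along which all these numbers converge. For each $X\in\mathcal D_0$, the limit defines a bounded linear functional on $\mathcal D_1$, hence on all compact operators. Because $\mathcal K(\cH_1)^*=\cT_1(\cH_1)$, this functional is given by a trace-class operator $\Gamma(X)$ with $\|\Gamma(X)\|_1\le M\|X\|_1$.

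\emph{Extension to all of $\cT_1(\cH_0)$.} The uniform bound $M$ gives equicontinuity: approximating $X$ by elements of $\mathcal D_0$ and $O$ by elements of $\mathcal D_1$ is an $\varepsilon/3$ argument. It shows that $\Tr[\Gamma_j(X)O]$ converges for every $X\in\cT_1(\cH_0)$ and every compact $O$. It also shows that $\Gamma$ extends to a bounded map with $\|\Gamma\|_{1\to1}\le M$, and that $\Gamma$ is linear, since linearity passes to the limit.

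\emph{Complete positivity.} Fix $a\ge1$ and $Y\ge0$ in $M_a\otimes\cT_1(\cH_0)$. Test $(\mathrm{id}_a\otimes\Gamma)(Y)$ against $\proj{\xi}$ with $\xi\in\C^a\otimes\cH_1$; this operator is compact. The pairing is the limit of the corresponding pairings for $\Gamma_j$, which are nonnegative. Hence $(\mathrm{id}_a\otimes\Gamma)(Y)\ge0$.

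\emph{Trace bound.} Take $X\ge0$ and a finite-rank projection $P\le I$; $P$ is compact. Then
\[
\Tr[\Gamma(X)P]=\lim_j\Tr[\Gamma_j(X)P]\le\limsup_j\Tr\Gamma_j(X)\le c\Tr X,
\]
using $\Gamma_j(X)\ge0$ for the first inequality. Letting $P\uparrow I$ along a basis gives $\Tr\Gamma(X)\le c\Tr X$ by normality of the trace.

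\emph{Covariance.} Fix $g\in G$, $X\in\cT_1(\cH_0)$, and a compact $O$. Then $V_g^*OV_g$ is compact, so
\[
\Tr[V_g\Gamma_j(X)V_g^*\,O]=\Tr[\Gamma_j(X)\,V_g^*OV_g]\to\Tr[\Gamma(X)\,V_g^*OV_g]=\Tr[V_g\Gamma(X)V_g^*\,O].
\]
By hypothesis, $\Gamma_j(U_gXU_g^*)-V_g\Gamma_j(X)V_g^*\to0$ in trace norm, and $|\Tr[ZO]|\le\|Z\|_1\|O\|$. So $\Tr[\Gamma_j(U_gXU_g^*)O]$ has the same limit. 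The left side also converges to $\Tr[\Gamma(U_gXU_g^*)O]$. Compact operators separate trace-class operators, which gives \eqref{eq:compact-limit-covariance}. This works for uncountable $G$ because the subsequence was fixed beforehand and the convergence already holds for every $X$.

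The only delicate point is the extension step: convergence must hold for all $X$, not just on $\mathcal D_0$, and the uniform bound $M$ is exactly what makes that $\varepsilon/3$ argument go through.
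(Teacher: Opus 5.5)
Your proof is correct and follows essentially the paper's argument. The paper takes weak-* limits of the adjoints $\Gamma_j^*(O)$ in $\mathcal B(\cH_0)$ for $O$ in a countable dense set of compacts, checks complete positivity on the Heisenberg side through positive block matrices, and then defines $\Gamma$ by the duality of compact and trace-class operators; this is your diagonal extraction of the pairings $\Tr[\Gamma_j(X)O]$ followed by the $\varepsilon/3$ extension, viewed from the dual side, and your trace bound via $P\uparrow I$ and covariance argument via compactness of $V_g^*OV_g$ coincide with the paper's.
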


\begin{proof}
Let \(M=\sup_j\|\Gamma_j\|_{1\to1}\), and let
\(\mathcal D\) be a countable norm-dense set of compact operators on
\(\cH_1\).
Since \(\cT_1(\cH_0)\) is separable, a diagonal weak-*
compactness argument yields a subsequence along which
\(\Gamma_j^*(O)\) converges weak-* in \(\mathcal B(\cH_0)\) for every
\(O\in\mathcal D\).  The bound
\(\|\Gamma_j^*(O)\|\le M\|O\|\) extends both the limit and the convergence
to every compact \(O\), producing a bounded complex-linear map
\(\Gamma^*\) from compact operators on \(\cH_1\) to
\(\mathcal B(\cH_0)\).

For every positive block matrix \([O_{ab}]\) of compact operators,
\([\Gamma_j^*(O_{ab})]\ge0\), so its entrywise weak-* limit
\([\Gamma^*(O_{ab})]\) is positive.  Thus \(\Gamma^*\) is completely positive.

Since the dual of the compact operators on \(\cH_1\) is
\(\cT_1(\cH_1)\), define \(\Gamma(X)\) by
\[
  \Tr[\Gamma(X)O]=\Tr[X\Gamma^*(O)]
  \qquad(O\text{ compact}).
\]
Then \(\Gamma\) is completely positive and
\eqref{eq:compact-map-convergence} holds.  If \(P_N\uparrow I\) are
finite-rank projections and \(X\ge0\), then
\[
  \Tr\Gamma(X)
  =\sup_N\lim_j\Tr[\Gamma_j(X)P_N]
  \le\limsup_j\Tr\Gamma_j(X)
  \le c\Tr X.
\]
Finally, for fixed \(g,X\) and compact \(O\), the operator
\(V_g^*OV_g\) is compact.  Passing to the limit gives
\[
  \Tr[\Gamma(U_gXU_g^*)O]
  =\Tr[V_g\Gamma(X)V_g^*O].
\]
Compact operators separate trace-class operators, so the limiting map is
covariant.
\end{proof}

\subsection{The least-noise oscillator moment}

Fix \(\gamma>1\).  For \(s\ge1\) and
\(q_1,\ldots,q_s\in(0,1)\), let
\[
  q_\ell^{(\gamma)}=1-\frac{1-q_\ell}{\gamma},
  \qquad
  \tau=\bigotimes_{\ell=1}^s\tau_{q_\ell},
  \qquad
  \tau^{(\gamma)}
  =\bigotimes_{\ell=1}^s\tau_{q_\ell^{(\gamma)}}.
\]

\begin{lemma}[Covariant-amplifier representation]
\label{lem:covariant-amplifier-representation}
Let \(G_1,\ldots,G_s>1\), and write
\[
  \sqrt{\mathbf G}z=(\sqrt{G_\ell}z_\ell)_{\ell=1}^s.
\]
Every \(s\)-mode channel \(\Lambda\) satisfying
\[
  \Lambda(D(z)XD(z)^*)
  =D(\sqrt{\mathbf G}z)\Lambda(X)D(\sqrt{\mathbf G}z)^*
\]
admits an idler realization
\begin{equation}\label{eq:covariant-amplifier-idler}
  a_{\ell,\mathrm{out}}
  =\sqrt{G_\ell}\,a_{\ell,\mathrm{in}}
   +\sqrt{G_\ell-1}\,b_\ell^*,
  \qquad \ell=1,\ldots,s.
\end{equation}
Here \(a_{\ell,\mathrm{in}}\) and \(b_\ell\) annihilate the
input and idler modes, respectively, and \(b_\ell^*\) creates an idler
excitation.  For some joint idler state \(\sigma\), a product \(U\) of
two-mode squeezers implements
\(\Lambda(X)=\Tr_{\mathrm{idler}}[U(X\otimes\sigma)U^*]\), with
\(a_{\ell,\mathrm{out}}=U^*(a_{\ell,\mathrm{in}}\otimes I)U\) in the
Heisenberg picture.
The channel uniquely determines the joint idler state,
which may be non-Gaussian, nondiagonal in the number basis, and
correlated across modes.
\end{lemma}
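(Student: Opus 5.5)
We will prove the representation by characteristic functions: a displacement-covariant channel is determined by its action on Weyl operators, and we will show that this action is exactly that of the two-mode-squeezer circuit \eqref{eq:covariant-amplifier-idler} with a suitable idler state.

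\textbf{Step 1 (Heisenberg picture).} Write \(W(\xi)=D(\xi)\) on the input modes. Covariance in the Heisenberg picture reads
\[
  D(z)^*\Lambda^*(Y)D(z)
  =\Lambda^*\!\bigl(D(\sqrt{\mathbf G}z)^*YD(\sqrt{\mathbf G}z)\bigr).
\]
Take \(Y=D(\xi)\). The Weyl relations give \(D(w)^*D(\xi)D(w)=e^{2i\Im(\bar w\xi)}D(\xi)\), with the phase convention fixed by \eqref{eq:Weyl-convention}. Hence \(A:=\Lambda^*(D(\xi))\) satisfies
\[
  D(z)^*AD(z)=e^{2i\Im(\sqrt{\mathbf G}\bar z\cdot\xi)}A
  \qquad\text{for all }z.
\]
The operator \(D(\sqrt{\mathbf G}\xi)\) has the same commutation phase with every \(D(z)\). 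Therefore \(A\,D(\sqrt{\mathbf G}\xi)^*\) commutes with all Weyl operators, and irreducibility of the Weyl system gives
\[
  \Lambda^*(D(\xi))=\phi(\xi)\,D(\sqrt{\mathbf G}\xi)
\]
for a scalar function \(\phi\). Weak continuity of \(\Lambda^*\) and of the Weyl map makes \(\phi\) continuous, and unitality gives \(\phi(0)=1\).

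\textbf{Step 2 (identify \(\phi\) as an idler characteristic function).} The circuit \(U\), a product of two-mode squeezers with gains \(G_\ell\), realizes the Bogoliubov map \eqref{eq:covariant-amplifier-idler}. In this circuit,
\[
  U^*(D(\xi)\ot I)U=D(\sqrt{\mathbf G}\xi)\ot D_b\bigl(-\sqrt{\mathbf G-1}\,\bar\xi\bigr)
\]
up to a sign convention. It follows that the channel \(X\mapsto\Tr_{\mathrm{idler}}[U(X\ot\sigma)U^*]\) has multiplier
\[
  \phi_\sigma(\xi)=\Tr\bigl[\sigma D_b(-\sqrt{\mathbf G-1}\,\bar\xi)\bigr].
\]
It therefore suffices to show that \(\eta\mapsto\phi(\eta^\sharp)\), with \(\eta^\sharp\) the inverse of \(\xi\mapsto-\sqrt{\mathbf G-1}\,\bar\xi\), is the quantum characteristic function of some density operator \(\sigma\). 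This is the main obstacle. By the quantum Bochner theorem, we need two facts: continuity with \(\phi(0)=1\), which we already have, and the twisted (symplectic) positive-definiteness
\[
  \sum_{a,b}\bar c_a c_b\,\phi(\eta_b-\eta_a)\,e^{i\omega(\eta_a,\eta_b)}\ge0
\]
with the idler symplectic form.

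\textbf{Step 3 (twisted positive-definiteness).} We derive this from complete positivity of \(\Lambda\). For vectors \(\xi_1,\ldots,\xi_k\), the matrix
\[
  [\Lambda^*(D(\xi_a)^*D(\xi_b))]
  =\bigl[e^{i\omega(\xi_a,\xi_b)}\phi(\xi_b-\xi_a)\,D(\sqrt{\mathbf G}(\xi_b-\xi_a))\bigr]
\]
is positive. Compress it with vectors of the form \(D(\sqrt{\mathbf G}\xi_a)\psi\) and factor out the output Weyl operators. The Weyl product rule leaves the input twist \(\omega\) combined with \(-G\omega\) from the output operators, i.e.\ a net twist \(-(G-1)\omega\). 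This is exactly the phase of the conjugated idler Weyl algebra, whose sign flip reflects the anti-linear \(\bar\xi\). We obtain the required scalar positivity, so \(\sigma\) exists as a density operator.

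\textbf{Step 4 (conclusion).} Two channels with the same Weyl multipliers agree, since Weyl operators separate trace-class operators via the characteristic function. Hence \(\Lambda\) equals the idler realization with this \(\sigma\). Uniqueness of \(\sigma\) follows from injectivity of \(\sigma\mapsto\phi_\sigma\): since \(G_\ell>1\), the map \(\xi\mapsto\sqrt{\mathbf G-1}\,\bar\xi\) is onto \(\C^s\), so \(\phi\) determines \(\Tr[\sigma D_b(\eta)]\) for every \(\eta\), and the characteristic function determines \(\sigma\). Nothing in the argument forces \(\sigma\) to be Gaussian, number-diagonal, or a product across modes, which gives the final remark of the statement.
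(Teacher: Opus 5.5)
Your argument is correct and has the same outer structure as the paper. Covariance plus Weyl irreducibility gives \(\Lambda^*(D(\xi))=\phi(\xi)D(\sqrt{\mathbf G}\xi)\). Uniqueness of \(\sigma\) follows because \(\xi\mapsto-\sqrt{\mathbf G-1}\,\bar\xi\) is onto when all \(G_\ell>1\).

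The difference is the middle step. The paper treats \(\Lambda\) as a linear bosonic channel whose defect form \(\bigoplus_\ell(1-G_\ell)\Omega_2\) is invertible. It cites Lami--Sabapathy--Winter (Lemma 7) for an exact dilation with \(s\) idler modes, then chooses the noise matrix \(\bigoplus_\ell\sqrt{G_\ell-1}\,\diag(1,-1)\) so that the dilation becomes a product of two-mode squeezers. You instead fix the squeezer circuit first and compute its multiplier \(\Tr[\sigma D_b(-\sqrt{\mathbf G-1}\,\bar\xi)]\); this is exact, with no sign ambiguity, in the convention \eqref{eq:Weyl-convention}. You then derive the quantum Bochner--Khinchin positivity of \(\phi\) directly from complete positivity of \(\Lambda^*\). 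That is the content of the cited lemma specialized to this channel. Your version is self-contained and shows concretely why the idler Weyl algebra carries the reversed symplectic form: the negative defect form is absorbed by the antilinear map \(\xi\mapsto\bar\xi\). The paper's citation is shorter and covers general linear bosonic channels.

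A few details need fixing.
\begin{itemize}
\item Continuity: weak continuity only makes \(\phi(\xi)\langle\psi,D(\sqrt{\mathbf G}\xi)\psi\rangle\) continuous. You still have to divide by a nonvanishing matrix element, such as the vacuum expectation \(e^{-\sum_\ell G_\ell|\xi_\ell|^2/2}\), as the paper does.
\item Step 3: with the ordering \(D(\xi_a)^*D(\xi_b)\), you must compress with \(c_aD(\sqrt{\mathbf G}\xi_a)^*\psi\), not \(c_aD(\sqrt{\mathbf G}\xi_a)\psi\). Only then is \(D(\sqrt{\mathbf G}\xi_a)D(\sqrt{\mathbf G}(\xi_b-\xi_a))D(\sqrt{\mathbf G}\xi_b)^*\) a scalar. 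With that choice the net phase is \(\exp[i\sum_\ell(G_\ell-1)\Im(\xi_{a,\ell}\bar\xi_{b,\ell})]\). This is exactly the idler twist \(\exp[-i\sum_\ell\Im(\eta_{a,\ell}\bar\eta_{b,\ell})]\) under \(\eta=-\sqrt{\mathbf G-1}\,\bar\xi\), so the Bochner condition holds as you claim.
\item Final sentence of the statement: Step 2 shows that every idler state yields a covariant channel. Say this explicitly; it shows the non-Gaussian, nondiagonal, and correlated cases actually occur, rather than merely not being excluded by your argument.
\end{itemize}
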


\begin{proof}
Covariance and Weyl irreducibility give
\(\Lambda^*(D(w))=f(w)D(\sqrt{\mathbf G}w)\).  Vacuum matrix elements
show that \(f\) is continuous at zero, so \(\Lambda\) is a linear
bosonic channel. See \cite[Ch.~12]{Holevo2013QuantumSystems} for the
bosonic-channel formalism.  Its defect form is
\(\bigoplus_{\ell=1}^s(1-G_\ell)\Omega_2\), where
\(\Omega_2=\left(\begin{smallmatrix}0&1\\-1&0\end{smallmatrix}\right)\).
This form is invertible, so \cite[Lemma~7]{LamiSabapathyWinter2018}
gives an exact dilation with \(s\) idler modes.  Choose the noise matrix
\(\bigoplus_\ell\sqrt{G_\ell-1}\diag(1,-1)\) and complete these
blocks with a product of two-mode squeezers, yielding
\eqref{eq:covariant-amplifier-idler}.  The noise matrix is invertible,
so \(f\) determines the idler characteristic function and hence its state.
\end{proof}

\begin{proposition}[Least-noise moment]
\label{prop:least-noise-moment}
Let \(\Gamma\) be a completely positive trace-nonincreasing map on the
\(s\)-mode trace class satisfying
\[
  \Gamma(D(z)XD(z)^*)
  =D(\sqrt\gamma\,z)\Gamma(X)D(\sqrt\gamma\,z)^*
  \qquad(z\in\mathbb C^s).
\]
Let
\(
  W(\widehat{\mathbf N}):=\prod_{\ell=1}^s w_\ell(\widehat N_\ell),
\)
where every \(w_\ell\) is bounded, nonnegative, and nonincreasing on
\(\mathbb Z_{\ge0}\).  Then
\begin{equation}\label{eq:least-noise-moment}
  \Tr[\Gamma(\tau)W(\widehat{\mathbf N})]
  \le\Tr[\Gamma(\tau)]
       \Tr[\tau^{(\gamma)}W(\widehat{\mathbf N})].
\end{equation}
\end{proposition}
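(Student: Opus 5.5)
Normalize first: if \(\Tr\Gamma(\tau)=0\) both sides vanish. Otherwise \(\Gamma\) is covariant and trace-nonincreasing. Trace nonincreasing plus covariance give \(\Gamma^*(I)=c\,I\) for a constant \(0\le c\le1\), because \(\Gamma^*(I)\) commutes with all Weyl operators; irreducibility then forces it to be a scalar. If \(c>0\), the map \(\Lambda=c^{-1}\Gamma\) is a covariant channel with \(\Tr\Gamma(\tau)=c\). The claim reduces to
\[
\Tr[\Lambda(\tau)W(\widehat{\mathbf N})]\le\Tr[\tau^{(\gamma)}W(\widehat{\mathbf N})]
\]
for every covariant channel \(\Lambda\) of gain \(\gamma\) on each mode.

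By \cref{lem:covariant-amplifier-representation} with \(G_\ell=\gamma\), \(\Lambda\) has an idler realization
\[
a_{\ell,\mathrm{out}}=\sqrt\gamma\,a_{\ell,\mathrm{in}}+\sqrt{\gamma-1}\,b_\ell^*
\]
with some joint idler state \(\sigma\). Since \(\sigma\) enters the output linearly, \(\Lambda(\tau)\) depends affinely on \(\sigma\). The least-noise amplifier \(\cA_\gamma^{\ot s}\) corresponds to \(\sigma=\proj{0}^{\ot s}\), and gives \(\tau^{(\gamma)}\) by \eqref{eq:amplifier-action}. The goal is to show that among all idler states the vacuum maximizes \(\Tr[\Lambda(\tau)W]\) for nonincreasing product weights \(W\). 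This is a stochastic-ordering statement in the spirit of Gu\c{t}\u{a}--Matsumoto: the output photon-number distribution of a vacuum idler is stochastically smallest, jointly over modes.

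The plan has two steps.

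\textbf{Diagonal reduction.} Both \(\tau\) and \(W(\widehat{\mathbf N})\) are invariant under phase rotations \(e^{i\phi_\ell\widehat N_\ell}\), and the two-mode squeezer \(U\) intertwines an input rotation by \(\phi\) with an idler rotation by \(-\phi\). Hence \(\Tr[\Lambda(\tau)W]\) is unchanged when \(\sigma\) is replaced by its phase-twirl, which is number-diagonal. By linearity it then suffices to consider idler Fock states \(\ket{\mathbf k}=\bigotimes_\ell\ket{k_\ell}\).

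\textbf{Factorization and ordering.} With \(\tau\) a product state and \(U\) a product over modes, a Fock idler \(\ket{\mathbf k}\) makes the output a product state. The output on mode \(\ell\) is the one-mode amplifier output \(\Lambda_{k_\ell}(\tau_{q_\ell})\), which is number-diagonal. Since \(W\) is a product of nonnegative nonincreasing functions, it is enough to prove, for one mode, that the output number distribution with idler \(\ket k\) stochastically dominates the one with idler \(\ket0\). Nonnegativity of each \(w_\ell\) then lets the product inequality follow factor by factor.

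For the one-mode ordering I would compare idler \(\ket{k+1}\) with \(\ket k\) directly. The two-mode squeezer conserves \(\widehat N_a-\widehat N_b\), so with input \(\ket j\) and idler \(\ket k\) the output photon number is \(j-k+\widehat N_b^{\mathrm{out}}\). I would seek a monotone coupling. The cleanest route is probably an explicit computation: write the squeezer's conditional output law for input \(\ket j\) and idler \(\ket k\) as a negative-binomial-type distribution, then mix it over \(j\sim\) geometric\((q)\). I would then show the resulting law is the convolution of the vacuum-idler geometric law \(\tau_{q^{(\gamma)}}\) with an independent nonnegative variable, probably itself negative-binomial in \(k\). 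This is plausible because \(a^*\)-type noise adds photons in the Heisenberg picture.

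\textbf{Main obstacle.} I expect this convolution (stochastic-ordering) identity to be the main obstacle. The fallback is the generating-function route. Compute \(\Tr[\Lambda_k(\tau_q)x^{\widehat N}]\) for \(x\in[0,1]\) via the characteristic function, and show it equals the vacuum-idler generating function times the generating function of a nonnegative integer variable (a polynomial factor in \(x\) with nonnegative coefficients). That factorization gives the required domination for every nonincreasing \(w\).
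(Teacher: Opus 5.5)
Your reductions are all valid. These are the normalization \(\Gamma^*(I)=cI\), the gain-\(\gamma\) idler realization of \(\Lambda\) from \cref{lem:covariant-amplifier-representation}, and the torus twirl of \(\sigma\). The twirl is legitimate because each squeezer commutes with \(\widehat N_{a_\ell}-\widehat N_{b_\ell}\), while \(\tau\) and \(W\ot I\) are phase invariant. The factor-by-factor argument also works, since the output is a product state for each Fock idler \(\ket{\mathbf k}\). The twirl is a fine substitute for the paper's observation that only the diagonal entries of \(\sigma\) reach the output number law.

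However, the proposal stops exactly where the content of the proposition lies. The one-mode claim is that, with input \(\tau_q\) and idler \(\ket{k}\), the output photon number stochastically dominates the geometric law of \(\tau_{q^{(\gamma)}}\). You only assert this as plausible, so as written there is no proof. Your primary route is also unlikely to go through cleanly. For input \(\ket{j}\) and idler \(\ket{k}\) with \(j,k\ge1\), the squeezer's matrix elements are finite hypergeometric sums rather than a single negative-binomial term. Mixing over a geometric \(j\) would then need an identity you have not identified.

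The missing idea, used in the paper, is to remove the thermal input. Since \(\tau_q=\cA_{1/(1-q)}(\proj{0})\), apply the representation lemma to \(\Lambda\circ\bigotimes_\ell\cA_{1/(1-q_\ell)}\) instead. This is a covariant map with gains \(G_\ell=\gamma/(1-q_\ell)\) acting on the vacuum. With vacuum signal, \(U_G\ket{0,l}\) is explicit. It gives output law \(\mathrm{NB}(l+1,\lambda)\) with \(\lambda=1-G^{-1}=q^{(\gamma)}\), i.e.\ a sum of \(l+1\) i.i.d.\ geometric variables, which dominates a single one. Only the idler diagonal enters, so no twirl is needed.

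Applied to your one-mode setting, this shows the composite's effective idler is the binomial thinning of \(\ket{k}\) with \(t=(\gamma-1)(1-q)/(\gamma-1+q)\). The extra noise is therefore \(\sum_{i=1}^{J}Y_i\), with \(J\sim\mathrm{Bin}(k,t)\) and \(Y_i\) geometric with parameter \(q^{(\gamma)}\). This is not negative binomial in \(k\) unless \(q=0\).

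Your generating-function fallback can also be completed. Using \(\langle k|D(\xi)|k\rangle=e^{-|\xi|^2/2}L_k(|\xi|^2)\) and the Laplace transform of \(L_k\), one finds
\[
\Tr[\Lambda_k(\tau_q)x^{\widehat N}]=\frac{1-\lambda}{1-\lambda x}\Bigl(1-t+t\,\frac{1-\lambda}{1-\lambda x}\Bigr)^{k}.
\]
The extra factor is a rational function with nonnegative Taylor coefficients, not a polynomial. This computation, or the paper's precomposition, is the step you must actually carry out for the argument to be a proof.
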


\begin{proof}
Displacement covariance makes \(\Gamma^*(I)\) commute with every Weyl
displacement.  Irreducibility therefore gives \(\Gamma^*(I)=cI\), with
\(c=\Tr\Gamma(\tau)\in[0,1]\).  The case \(c=0\) is immediate. Otherwise
\(\Lambda=c^{-1}\Gamma\) is a channel with the same covariance.

The quantum-limited amplifier \(\cA_G\) sends vacuum to
\(\tau_{1-1/G}\).  Thus
\begin{equation}\label{eq:thermal-as-amplified-vacuum}
  \tau=\left(\bigotimes_{\ell=1}^s\cA_{1/(1-q_\ell)}\right)
        (\proj{\mathbf0}).
\end{equation}
The composite
\(\mathcal B=\Lambda\circ\bigotimes_\ell\cA_{1/(1-q_\ell)}\)
therefore sends joint vacuum to \(\Lambda(\tau)\), with modewise gains
\[
  G_\ell=\frac{\gamma}{1-q_\ell},\qquad
  \lambda_\ell=1-G_\ell^{-1}=q_\ell^{(\gamma)}.
\]
Apply \cref{lem:covariant-amplifier-representation} to \(\mathcal B\),
and let \(\sigma\) be its joint idler state.

For a single two-mode squeezer, normal-ordered disentangling
\cite{BarnettRadmore1997} gives, with \(\lambda=1-G^{-1}\),
\begin{equation}\label{eq:seeded-amplifier-expansion}
  U_G\ket{0,l}
  =(1-\lambda)^{(l+1)/2}
   \sum_{k\ge0}\sqrt{\binom{k+l}{l}}\,
   \lambda^{k/2}\ket{k,k+l}.
\end{equation}
For fixed \(\mathbf k\), tracing out the idler requires
\(\mathbf k+\mathbf l=\mathbf k+\mathbf l'\), hence
\(\mathbf l=\mathbf l'\).  Thus only
\(a_{\mathbf l}=\langle\mathbf l|\sigma|\mathbf l\rangle\)
enters the output number probabilities:
\begin{equation}\label{eq:multimode-negative-binomial-law}
  \langle\mathbf k|\Lambda(\tau)|\mathbf k\rangle
  =\sum_{\mathbf l\in\mathbb Z_{\ge0}^s}a_{\mathbf l}
    \prod_{\ell=1}^s
    \binom{k_\ell+l_\ell}{l_\ell}
    (1-\lambda_\ell)^{l_\ell+1}\lambda_\ell^{k_\ell}.
\end{equation}
Conditional on \(\mathbf l\), the coordinates are independent
\(\mathrm{NB}(l_\ell+1,\lambda_\ell)\) variables. The mixing law
\(a_{\mathbf l}\) may be correlated.  No product or phase-invariance
assumption on the idler is needed.

Let \(\mathbf L=(L_1,\ldots,L_s)\) satisfy
\(\Pr(\mathbf L=\mathbf l)=a_{\mathbf l}\).
Independently, draw geometric variables \(X_{\ell,j}\), mutually
independent over \(\ell\) and \(j\), with
\(\Pr(X_{\ell,j}=k)=(1-\lambda_\ell)\lambda_\ell^k\) for
\(j,k\ge0\).
The coupling
\[
  N_\ell=\sum_{j=0}^{L_\ell}X_{\ell,j}\ge X_{\ell,0}
  \qquad(\ell=1,\ldots,s)
\]
has the output law \eqref{eq:multimode-negative-binomial-law}, while
\((X_{\ell,0})_\ell\) has the number law of \(\tau^{(\gamma)}\).
Since each \(w_\ell\) is nonnegative and nonincreasing,
\[
  \Tr[\Lambda(\tau)W]
  =\mathbb E\prod_\ell w_\ell(N_\ell)
  \le\mathbb E\prod_\ell w_\ell(X_{\ell,0})
  =\Tr[\tau^{(\gamma)}W].
\]
Multiplying by \(c=\Tr\Gamma(\tau)\) proves
\eqref{eq:least-noise-moment}.
\end{proof}

Specialize to \(s=r_d\), \(q_\ell=q_{ij}\),
\(\tau=\tau_p\), and \(\tau^{(\gamma)}=\tau_p^{(\gamma)}\).
For
\(\mathbf k=(k_{ij})_{i<j}\in\mathbb Z_{\ge0}^{r_d}\), let
\(t_{\mathbf k}\) and \(t_{\mathbf k}^{(\gamma)}\) be the corresponding
number-basis eigenvalues, and define
\begin{equation}\label{eq:quantum-witness}
  W_\gamma
  =\sum_{\mathbf k}
  \sqrt{\frac{t_{\mathbf k}}{t_{\mathbf k}^{(\gamma)}}}
  \proj{\mathbf k}.
\end{equation}
Equivalently,
\[
  W_\gamma=\prod_{i<j}w_{ij}(\widehat N_{ij}),
  \qquad
  w_{ij}(n)=
  \sqrt{\frac{1-q_{ij}}{1-q_{ij}^{(\gamma)}}}
  \left(\frac{q_{ij}}{q_{ij}^{(\gamma)}}\right)^{n/2}.
\]
Here \(\widehat N_{ij}\) is the number operator of the mode indexed by
\(i<j\), and \(\lvert\mathbf k\rvert:=\sum_{i<j}k_{ij}\) is the total
occupation number.
Since \(q_{ij}<q_{ij}^{(\gamma)}\), each \(w_{ij}(n)\) is positive, bounded,
and strictly decreasing, and the eigenvalues of \(W_\gamma\) tend to zero
as \(\lvert\mathbf k\rvert\to\infty\).  Thus \(W_\gamma\) is bounded,
injective, and compact, with
\begin{equation}\label{eq:quantum-witness-identities}
  \Tr(\tau_p^{(\gamma)}W_\gamma)
  =\Tr(\tau_p W_\gamma^{-1})
  =\sum_{\mathbf k}\sqrt{t_{\mathbf k}t_{\mathbf k}^{(\gamma)}}
  =\Forb(\gamma,p),
\end{equation}
where the inverse is understood as in \cref{lem:weighted-fidelity}.

\subsection{One flat-prior converse for both Gaussian models}

\begin{proposition}[Flat-prior converse inequality]
\label{prop:flat-prior-converse}
Let \(k\in\{0,d-1\}\).  For \(k=0\), let
\(\mathsf E=\{0\}\), omit the classical component, and interpret \(h\)-integration
as evaluation at \(0\).  For \(k=d-1\), let \(\mathsf E=\mathsf H\) with its
induced Euclidean measure.  Let
\[
  \mathsf E_L=\{h\in\mathsf E:\|h\|_\infty\le L\},
  \qquad
  \Xi_L=\mathsf E_L\times\mathsf Z_L,
\]
with \(|\mathsf E_L|=1\) when \(k=0\).  Define
\[
  R_{0,z}^{(0)}=\Phi_z^{(p)},\ \ \
   S_{0,z}^{(0)}=\Psi_z^{(p,\gamma)},\ \ \
  R_{h,z}^{(d-1)}=\Theta_{h,z}^{\mathrm{in}},\ \ \
   S_{h,z}^{(d-1)}=\Theta_{h,z}^{\mathrm{tar}},\ \ \
  c_k=\left(\frac{2\sqrt\gamma}{1+\gamma}\right)^{k/2}.
\]
For \(k=0\), the supremum below is over quantum channels on
\(\cF_{\mathrm{orb}}\). For \(k=d-1\), it is over classical--quantum
channels.  Then
\begin{equation}\label{eq:unified-flat-prior-converse}
  \limsup_{L\to\infty}\sup_\Lambda
  \frac1{|\Xi_L|}\int_{\Xi_L}
  F\!\left(\Lambda(R_{h,z}^{(k)}),S_{h,z}^{(k)}\right)
  \,\dd h\,\dd z
  \le c_k\Forb(\gamma,p).
\end{equation}
\end{proposition}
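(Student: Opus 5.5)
The plan follows the sketch given after \cref{thm:gaussian-amplification}: symmetrize a near-optimal channel by averaging over parameter translates, reduce the fidelity to a weighted moment via \cref{lem:weighted-fidelity}, pass to a covariant limit via \cref{lem:compact-cp-limit}, and then bound the moment by \cref{prop:least-noise-moment}.

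\emph{Step 1: averaged channels.} For each \(L\), pick \(\Lambda_L\) within \(1/L\) of the supremum. For \(\xi=(h_0,z_0)\in\Xi_L\), define the translated channel
\[
\Lambda_L^{\xi}(X)=\mathcal T^{\mathrm{out}}_{\xi}{}^{-1}\circ\Lambda_L\circ\mathcal T^{\mathrm{in}}_{\xi}(X).
\]
Here \(\mathcal T^{\mathrm{in}}_\xi\) displaces the oscillator by \(D(z_0)\) and shifts the classical variable by \(h_0\). The output translation \(\mathcal T^{\mathrm{out}}_\xi\) uses \(D(\sqrt\gamma z_0)\) and the shift \(\sqrt\gamma h_0\), so that \(\mathcal T^{\mathrm{out}}_\xi S_{0,0}=S_{\xi}\) and \(\mathcal T^{\mathrm{in}}_\xi R_{0,0}=R_{\xi}\).

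Set \(\bar\Lambda_L=|\Xi_L|^{-1}\int_{\Xi_L}\Lambda_L^\xi\,\dd\xi\). By joint concavity of fidelity and unitary or translation invariance, the averaged fidelity of \(\Lambda_L\) is at most \(F(\bar\Lambda_L(R_{0,0}),S_{0,0})\).

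Because \(|(\Xi_L+\xi)\triangle\Xi_L|/|\Xi_L|\to0\) for fixed \(\xi\), the averaged maps are asymptotically covariant in trace norm: \(\bar\Lambda_L\circ\mathcal T^{\mathrm{in}}_\xi-\mathcal T^{\mathrm{out}}_\xi\circ\bar\Lambda_L\to0\).

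\emph{Step 2: weighted bound.} Apply \eqref{eq:weighted-hybrid-fidelity} with \(W=W_\gamma\) from \eqref{eq:quantum-witness}, \(T=\tau_p\), and \(g=\varphi_{0,\Sigma_p}\). The weight \(\chi\) is chosen as a Gaussian ratio, namely \(\chi=\sqrt{g/g_\gamma}\) with \(g_\gamma=\varphi_{0,\gamma\Sigma_p}\). With this choice, \(\int g/\chi=\int\sqrt{g g_\gamma}=\Fcl(\gamma,d)=c_k\), and \(\Tr(\tau_pW_\gamma^{-1})=\Forb(\gamma,p)\). It then remains to show
\[
\limsup_L\int\chi(y)\Tr[\bar\Lambda_L(R_{0,0})(y)W_\gamma]\,\dd y\le c_k\Forb(\gamma,p).
\]

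To handle the classical part, integrate the classical output against \(\chi\) and define the quantum map \(\Gamma_L(X)=\int\chi(y)\,\bar\Lambda_L(g\otimes X)(y)\,\dd y\). One would rather absorb the classical input into the map, i.e.\ consider \(X\mapsto\int\chi\,\bar\Lambda_L(\mathsf N_{h,\Sigma_p}\otimes X)\) averaged over \(h\). Approximate classical translation covariance together with Gaussian translation estimates then bounds \(\limsup\Tr\Gamma_L(X)\) by \(c_k\Tr X\).

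\emph{Step 3: compact limit and least noise.} The maps \(\Gamma_L\) are uniformly bounded, completely positive, and asymptotically \(\sqrt\gamma\)-displacement covariant. \Cref{lem:compact-cp-limit} therefore gives a subsequential limit \(\Gamma\) that is covariant and satisfies \(\Tr\Gamma(X)\le c_k\Tr X\). Since \(W_\gamma\) is compact, \(\Tr[\Gamma_L(\tau_p)W_\gamma]\to\Tr[\Gamma(\tau_p)W_\gamma]\). \Cref{prop:least-noise-moment}, applied to \(\Gamma/c_k\), then gives
\[
\Tr[\Gamma(\tau_p)W_\gamma]\le c_k\Tr[\tau_p^{(\gamma)}W_\gamma]=c_k\Forb(\gamma,p).
\]
Combined with Step 2, this yields \(F^2\le c_k^2\Forb^2\), which is \eqref{eq:unified-flat-prior-converse}. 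For \(k=0\), everything is the same with \(\chi\equiv1\) and \(c_0=1\).

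\emph{Main obstacle.} The delicate step is the classical Gaussian translation estimate in the full model. It must show that the \(\chi\)-weighted total output mass of an asymptotically \(\sqrt\gamma\)-translation-covariant hybrid map on \(\mathsf N_{0,\Sigma_p}\) is at most \(\int\sqrt{gg_\gamma}\). The first thing I would try is to treat the classical marginal as a Markov kernel that commutes with the dilation \(h\mapsto\sqrt\gamma h\) up to translations. Under flat averaging, its pushforward of \(\mathsf N_{h,\Sigma_p}\) is then a convolution of \(\mathsf N_{\sqrt\gamma h,\gamma\Sigma_p}\) with some law. A convolution-concavity argument, namely that Hellinger affinity with the narrower Gaussian is maximized at zero extra noise, would then give \(c_k\). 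The remaining difficulty is keeping this compatible with the quantum factor, since the classical noise and the quantum output may be correlated.
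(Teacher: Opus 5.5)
Your outline is the paper's proof step for step. You translate-average a near-optimal channel and apply the weighted-fidelity lemma with \(\chi=\sqrt{g/g_\gamma}\) and \(W_\gamma\). This reduces everything to a moment of the \(\chi\)-weighted quantum map \(\Gamma_L\). You then take a covariant compact limit and apply the least-noise moment.

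The gap is the step you flag and leave open: the trace bound \(\limsup_L\Tr\Gamma_L(X)\le c_k\Tr X\) for \(k=d-1\). This bound supplies the second factor of \(c_k\). Without it you only have \(\Tr\Gamma_L(X)\le\|\chi\|_\infty\Tr X=\gamma^{k/4}\Tr X\), which gives \(\sqrt{c_k\gamma^{k/4}}\,\Forb\) rather than \(c_k\Forb\). Note also that what must be bounded is the \(\chi\)-weighted classical \emph{mass} of the output, not a Hellinger affinity. Concavity of \(F\) would not help here, since it only gives lower bounds for mixtures.

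Here is how the paper closes it. Fix \(X\ge0\). Hybrid inputs are densities, so take a probability approximate identity \(u_\varepsilon\) on \(\mathsf H\) and set \(B_\varepsilon=\bar\Lambda_L(u_\varepsilon X)\); this is a hybrid state of total trace \(\Tr X\). Write \(g*u_\varepsilon=\int g(h)\,u_\varepsilon(\cdot-h)\,\dd h\) and apply your approximate covariance at \(\xi=(h,0)\). Then \(\bar\Lambda_L((g*u_\varepsilon)X)\) equals \(\int g(h)\,\mathcal T^{\mathrm{out}}_{(h,0)}B_\varepsilon\,\dd h\) up to trace-norm error \(\Tr X\int g(h)\,b_L(h,0)\,\dd h\). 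Here \(b_L(\xi)=|(\Xi_L+\xi)\mathbin\triangle\Xi_L|/|\Xi_L|\in[0,2]\), so dominated convergence sends this error to zero. Pairing the main term with \(\chi\) and changing variables gives
\[
  \int\Tr B_\varepsilon(v)\Bigl(\int g(h)\,\chi(v+\sqrt\gamma h)\,\dd h\Bigr)\dd v,
  \qquad
  \int g(h)\,\chi(v+\sqrt\gamma h)\,\dd h
  =c_k\exp\!\Bigl[-\frac{\gamma-1}{2\gamma(\gamma+1)}\langle v,\Sigma_p^{-1}v\rangle\Bigr]\le c_k,
\]
where the identity follows by completing the square. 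Hence the weighted mass is at most \(\bigl(c_k+\|\chi\|_\infty\int g(h)\,b_L(h,0)\,\dd h\bigr)\Tr X\). Letting \(\varepsilon\downarrow0\), using \(g*u_\varepsilon\to g\) in \(L^1\) and boundedness of \(\chi\), gives the trace bound.

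This is your convolution picture with noise law \(v\mapsto\Tr B_\varepsilon(v)\). The monotonicity you need holds pointwise in the extra displacement \(v\) and is maximized at \(v=0\).

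Your worry about classical--quantum correlations is moot. The estimate sees only the scalar function \(\Tr B_\varepsilon(v)\), so any correlation is traced out. The quantum side meets the classical data only through the single map \(\Gamma\). The least-noise inequality holds for every displacement-covariant, completely positive, trace-nonincreasing map, so the classical cost enters only multiplicatively through \(\Tr\Gamma(\tau_p)\le c_k\). The \(h\)-covariance is used only for this trace bound, and the \(z\)-covariance only for covariance of the compact limit.
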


\begin{proof}
The classical weight below captures the spectral
affinity, while the compact witness \(W_\gamma\) controls the oscillator
output.
For \(k=0\), set \(g_1=g_\gamma=\chi=1\) on the one-point space.  For
\(k=d-1\), let \(g_1\) and \(g_\gamma\) be the centered Gaussian densities
with covariances \(\Sigma_p\) and \(\gamma\Sigma_p\), and set
\[
  \chi(y)=\sqrt{\frac{g_1(y)}{g_\gamma(y)}}
  =\gamma^{k/4}
   \exp\!\left[-\frac{\gamma-1}{4\gamma}
     \langle y,\Sigma_p^{-1}y\rangle\right].
\]
This function is bounded and strictly positive.  In both cases,
\begin{equation}\label{eq:classical-witness-identity}
  \int_{\mathsf E}\frac{g_1(y)}{\chi(y)}\,\dd y
  =\int_{\mathsf E}\sqrt{g_1(y)g_\gamma(y)}\,\dd y
  =c_k.
\end{equation}
For \(k=d-1\), completing the square also gives
\begin{equation}\label{eq:classical-witness-translation}
  \int_{\mathsf E}g_1(h)\chi(v+\sqrt\gamma h)\,\dd h
  =c_k\exp\!\left[-\frac{\gamma-1}{2\gamma(\gamma+1)}
    \langle v,\Sigma_p^{-1}v\rangle\right]
  \le c_k.
\end{equation}

For \(\xi=(h,z)\), define
\[
  (\mathcal U_\xi^{\mathrm{in}}R)(x)
  =D(z)R(x-h)D(z)^*,
  \qquad
  (\mathcal U_\xi^{\mathrm{out}}R)(y)
  =D(\sqrt\gamma z)R(y-\sqrt\gamma h)D(\sqrt\gamma z)^*.
\]
Then
\(R_\xi^{(k)}=\mathcal U_\xi^{\mathrm{in}}(g_1\tau_p)\) and
\(S_\xi^{(k)}=\mathcal U_\xi^{\mathrm{out}}(g_1\tau_p)\).  For a channel
\(\Lambda\), define its average pointwise by
\[
  \overline\Lambda_L(R)
  =\frac1{|\Xi_L|}\int_{\Xi_L}
   \mathcal U_{-\eta}^{\mathrm{out}}
   \bigl(\Lambda(\mathcal U_\eta^{\mathrm{in}}R)\bigr)\,\dd\eta.
\]
For each fixed input \(R\), the integrand is trace-norm continuous and
bounded in norm by \(\|R\|_1\), so the Bochner integral is well-defined.
Fidelity is invariant under the output isometries
\(\mathcal U_\xi^{\mathrm{out}}\), and is concave and trace-norm continuous
in its first argument. The finite-average inequality therefore passes
to the Bochner average by Riemann sums:
\begin{equation}\label{eq:unified-folner-concavity}
  F_L(\Lambda)
  =\frac1{|\Xi_L|}\int_{\Xi_L}
  F\!\left(\Lambda(R_\xi^{(k)}),S_\xi^{(k)}\right)\,\dd\xi
  \le F(\overline\Lambda_L(g_1\tau_p),g_1\tau_p).
\end{equation}

Let \(F_*=\limsup_{L\to\infty}\sup_\Lambda F_L(\Lambda)\).  Choose
\(L_j\to\infty\) with \(\sup_\Lambda F_{L_j}(\Lambda)\to F_*\), and channels
\(\widetilde\Lambda_j\) satisfying
\(F_{L_j}(\widetilde\Lambda_j)\ge\sup_\Lambda F_{L_j}(\Lambda)-j^{-1}\).
Set
\[
  \Lambda_j=\overline{\widetilde\Lambda_j}_{L_j},
  \qquad R_j=\Lambda_j(g_1\tau_p).
\]
Then
\begin{equation}\label{eq:unified-maximizing-sequence}
  F_{L_j}(\widetilde\Lambda_j)\longrightarrow F_*,
  \qquad
  F_{L_j}(\widetilde\Lambda_j)\le F(R_j,g_1\tau_p).
\end{equation}
Write
\[
  A_{j,u}=\mathcal U_{-u}^{\mathrm{out}}\circ
    \widetilde\Lambda_j\circ\mathcal U_u^{\mathrm{in}}.
\]
The group law and the change of variables \(u=\eta+\xi\) give
\[
\begin{aligned}
  \Lambda_j(\mathcal U_\xi^{\mathrm{in}}R)
  &=\mathcal U_\xi^{\mathrm{out}}
    \frac1{|\Xi_{L_j}|}\int_{\Xi_{L_j}+\xi}A_{j,u}(R)\,\dd u,\\
  \mathcal U_\xi^{\mathrm{out}}\Lambda_j(R)
  &=\mathcal U_\xi^{\mathrm{out}}
    \frac1{|\Xi_{L_j}|}\int_{\Xi_{L_j}}A_{j,u}(R)\,\dd u.
\end{aligned}
\]
The integrals over the intersection of the two
averaging sets cancel.  On their symmetric difference, each integrand
has trace norm at most \(\|R\|_1\): the output translations are
isometries, and each \(A_{j,u}\) is a channel and hence a trace-norm
contraction.
\begin{equation}\label{eq:unified-approx-covariance}
  \|\Lambda_j(\mathcal U_\xi^{\mathrm{in}}R)
    -\mathcal U_\xi^{\mathrm{out}}\Lambda_j(R)\|_1
  \le b_j(\xi)\|R\|_1,
  \qquad
  b_j(\xi):=
  \frac{|(\Xi_{L_j}+\xi)\mathbin\triangle\Xi_{L_j}|}{|\Xi_{L_j}|}.
\end{equation}
Here \(0\le b_j\le2\), and \(b_j(\xi)=O_\xi(L_j^{-1})\to0\) for fixed
\(\xi\), since \(\Xi_1\) is a bounded polytope and \(\Xi_L=L\Xi_1\).

Weighting and integrating the classical output now
produces a quantum map. The Gaussian translation bound controls its
trace.  Define
\[
  \Gamma_j^\chi(X)
  =\int_{\mathsf E}\chi(y)[\Lambda_j(g_1X)](y)\,\dd y.
\]
These maps are completely positive and satisfy
\(\|\Gamma_j^\chi\|_{1\to1}\le\|\chi\|_\infty\).
We first show the sharper trace bound
\begin{equation}\label{eq:weighted-map-trace-bound}
  \Tr\Gamma_j^\chi(X)\le(c_k+e_j)\Tr X
  \quad(X\ge0),\qquad
  e_j:=\|\chi\|_\infty\int_{\mathsf E}g_1(h)b_j(h,0)\,\dd h
  \longrightarrow0.
\end{equation}
For \(k=0\), this is trace preservation, with \(c_0=1\) and \(e_j=0\).
For \(k=d-1\), take a nonnegative probability approximate identity
\(u_\varepsilon\) on \(\mathsf E\) and let
\(B_{j,\varepsilon}=\Lambda_j(u_\varepsilon X)\), whose total trace is
\(\Tr X\).  Expressing \(g_1*u_\varepsilon\) as an average of translates
of \(u_\varepsilon\), \eqref{eq:unified-approx-covariance} gives
\[
  \left\|\Lambda_j((g_1*u_\varepsilon)X)
  -\int_{\mathsf E}g_1(h)\mathcal U_{(h,0)}^{\mathrm{out}}
       B_{j,\varepsilon}\,\dd h\right\|_1
  \le \Tr X\int_{\mathsf E}g_1(h)b_j(h,0)\,\dd h.
\]
Pairing with \(\chi\), changing variables, and using
\eqref{eq:classical-witness-translation} therefore yields
\[
  \int_{\mathsf E}\chi(y)
    \Tr[\Lambda_j((g_1*u_\varepsilon)X)(y)]\,\dd y
  \le c_k\int_{\mathsf E}\Tr B_{j,\varepsilon}(v)\,\dd v+e_j\Tr X
  =(c_k+e_j)\Tr X.
\]
Letting \(\varepsilon\downarrow0\) proves
\eqref{eq:weighted-map-trace-bound}, since
\(g_1*u_\varepsilon\to g_1\) in \(L^1\).  Dominated convergence gives
\(e_j\to0\).

Taking \(\xi=(0,z)\) in \eqref{eq:unified-approx-covariance} and integrating
against \(\chi\) gives
\[
 \|\Gamma_j^\chi(D(z)XD(z)^*)
   -D(\sqrt\gamma z)\Gamma_j^\chi(X)D(\sqrt\gamma z)^*\|_1
 \le\|\chi\|_\infty b_j(0,z)\|X\|_1\longrightarrow0.
\]
Hence, after passing to a subsequence, \cref{lem:compact-cp-limit} and
\eqref{eq:weighted-map-trace-bound} give a completely positive,
displacement-covariant limit \(\Gamma^\chi\) with
\(\Tr\Gamma^\chi(X)\le c_k\Tr X\) for \(X\ge0\), such that
\begin{equation}\label{eq:weighted-map-limit}
  \Tr[\Gamma_j^\chi(X)K]
  \longrightarrow\Tr[\Gamma^\chi(X)K]
\end{equation}
for trace-class \(X\) and compact \(K\).  Since \(c_k\le1\), the limit is
trace nonincreasing.
Define \(M_j\) by
\[
  M_j
  =\int_{\mathsf E}\chi(y)\Tr[R_j(y)W_\gamma] \,\dd y
  =\Tr[\Gamma_j^\chi(\tau_p)W_\gamma].
\]
Because \(W_\gamma\) is compact,
\eqref{eq:weighted-map-limit} passes this moment to the limit:
\begin{equation}\label{eq:moment-limit}
  M_j\longrightarrow\Tr[\Gamma^\chi(\tau_p)W_\gamma].
\end{equation}
Moreover, \cref{lem:weighted-fidelity},
\eqref{eq:classical-witness-identity}, and
\eqref{eq:quantum-witness-identities} give
\begin{equation}\label{eq:unified-fidelity-to-moment}
  F(R_j,g_1\tau_p)^2
  \le M_j\,c_k\Forb(\gamma,p).
\end{equation}

By \cref{prop:least-noise-moment},
\[
  \lim_j M_j
  =\Tr[\Gamma^\chi(\tau_p)W_\gamma]
  \le\Tr\Gamma^\chi(\tau_p)\,\Forb(\gamma,p)
  \le c_k\Forb(\gamma,p).
\]
Combining \eqref{eq:unified-maximizing-sequence} with
\eqref{eq:unified-fidelity-to-moment} yields
\(F_*\le c_k\Forb(\gamma,p)\), proving both cases.
\end{proof}

\section{Comparison with purify--clone--trace}
\label[appendix]{app:pct-comparison}

We compute the exact asymptotic PCT fidelity at simple full-rank spectra
and derive a strict upper bound for flat projector states with
\(1<r<d\), proving the comparisons stated in
\cref{sec:pct-comparison}.

\subsection{Common preliminaries}
\label{app:pct-preliminaries}

For \(D\ge1\), write
\(\mathsf d_D(N)=\dim\operatorname{Sym}^N(\C^D)=\binom{N+D-1}{D-1}\),
and let \(\Pi_{\mathrm{sym},N}^{(D)}\) project onto this symmetric
subspace.  On inputs supported on \(\operatorname{Sym}^n(\C^D)\),
Werner's pure-state cloner \cite{Werner1998} is
\begin{equation}\label{eq:grassmann-werner-cloner}
  \mathcal W_{n,m}^{(D)}(X)
  =\frac{\mathsf d_D(n)}{\mathsf d_D(m)}
   \Pi_{\mathrm{sym},m}^{(D)}
   \bigl(X\ot I_D^{\ot(m-n)}\bigr)
   \Pi_{\mathrm{sym},m}^{(D)},\qquad m\ge n.
\end{equation}
If \(\rho\) has rank at most \(r\), choose a purification
\(\psi\in\C^d\ot E\), \(E=\C^r\).  The random-purification channel
acts on product inputs as \cite{TangWrightZhandry2025}
(see also \cite[Eq.~(50) and Theorem~22]{GirardiMeleLami2025})
\begin{equation}\label{eq:random-purification-channel}
  \mathcal P_n^{(r)}(\rho^{\ot n})
  =\int_{\U(r)}
    \bigl((I_d\ot U)\proj\psi(I_d\ot U)^*\bigr)^{\ot n}\,\dd U.
\end{equation}
Here \(\dd U\) is normalized Haar measure.  All purifications in this
space differ by a unitary on \(E\), so the average is independent of
the chosen purification.  Covariance of Werner's cloner and invariance
of partial trace remove this average:
\begin{equation}\label{eq:Haar-drops}
  \cC_{n,m}^{\mathrm{PCT},r}(\rho^{\ot n})
  =\Tr_{E^m}\mathcal W_{n,m}^{(dr)}(\proj\psi^{\ot n}).
\end{equation}

\paragraph{Occupation and coherent-state limits.}
Fix \(D\ge2\), a unit vector \(\psi\in\C^D\), and an orthonormal basis
\(e_1,\ldots,e_{D-1}\) of \(\psi^\perp\).  For
\(\mathbf k=(k_1,\ldots,k_{D-1})\), let \(\ket{\mathbf k}_m\) be the
normalized symmetric vector with \(k_j\) particles in \(e_j\) and the
remaining \(m-|\mathbf k|\) in \(\psi\).

\begin{lemma}[Werner occupation law]
\label{lem:Werner-Fock}
The output \(\mathcal W_{n,m}^{(D)}(\proj\psi^{\ot n})\) is diagonal
in this occupation basis, with eigenvalue
\begin{equation}\label{eq:occupation-law}
  \pi_{n,m}^{(D)}(\mathbf k)
  =\frac{\binom{m-|\mathbf k|}{n}}
         {\binom{m+D-1}{n+D-1}},
  \qquad |\mathbf k|\le m-n.
\end{equation}
Recall that \(\tau_q=(1-q)\sum_{j\ge0}q^j\proj{j}\) is the
one-mode thermal state.
If \(m_n/n\to\gamma>1\), the output's pullback \(\omega_{n,m_n}\) to the
symmetric Fock space \(\Gamma_s(\psi^\perp)\), with occupation basis
\(\ket{\mathbf k}\), satisfies
\begin{equation}\label{eq:Werner-thermal-limit}
  \left\|\omega_{n,m_n}
    -\tau_{(\gamma-1)/\gamma}^{\ot(D-1)}\right\|_1
  \longrightarrow0.
\end{equation}
\end{lemma}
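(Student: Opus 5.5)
The proof has two parts: an exact computation of Werner's output on \(\psi^{\ot n}\), and then a limit. For the exact part, we use that \(\proj\psi^{\ot n}\) is a highest-weight vector of \(\operatorname{Sym}^n\) for a \(\U(D)\) basis with \(\psi\) first, and that \(\mathcal W_{n,m}^{(D)}\) is covariant. Hence the output commutes with the stabilizer torus \(\{u:u\psi=e^{i\phi}\psi\}\supseteq \U(1)\times\U(D-1)\). The torus of \(\U(D-1)\) acting diagonally in \(e_1,\dots,e_{D-1}\) separates the occupation vectors \(\ket{\mathbf k}_m\), because distinct \(\mathbf k\) have distinct weights. Moreover, each weight space of \(\operatorname{Sym}^m\) is one-dimensional. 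So the output is diagonal in this basis.

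To get the eigenvalues, we compute
\[
  \langle\mathbf k|_m\,\Pi_{\mathrm{sym}}(\proj\psi^{\ot n}\ot I^{\ot(m-n)})\Pi_{\mathrm{sym}}\,|\mathbf k\rangle_m
  =\bigl\|(\langle\psi|^{\ot n}\ot I)\ket{\mathbf k}_m\bigr\|^2 .
\]
Write \(\ket{\mathbf k}_m\) as the normalized uniform superposition over the \(m!/((m-|\mathbf k|)!\prod k_j!)\) words. Contracting the first \(n\) sites with \(\psi\) keeps exactly those words whose first \(n\) letters are all \(\psi\). The result is a multiple of the normalized occupation vector on \(m-n\) sites with \(m-n-|\mathbf k|\) copies of \(\psi\). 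Its squared norm is the ratio of word counts:
\[
  \frac{(m-n)!/((m-n-|\mathbf k|)!\prod k_j!)}{m!/((m-|\mathbf k|)!\prod k_j!)}
  =\binom{m-|\mathbf k|}{n}\Big/\binom{m}{n}.
\]
Multiplying by the prefactor \(\mathsf d_D(n)/\mathsf d_D(m)\) and simplifying the binomials gives \eqref{eq:occupation-law}. As a check, the eigenvalues sum to one by the hockey-stick identity
\[
  \sum_{j}\binom{j+D-2}{D-2}\binom{m-j}{n}=\binom{m+D-1}{n+D-1}.
\]

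For the limit, pull the output back to \(\Gamma_s(\psi^\perp)\) via \(\ket{\mathbf k}_m\mapsto\ket{\mathbf k}\). This gives a diagonal state \(\omega_{n,m_n}\) whose eigenvalues are \(\pi_{n,m_n}^{(D)}(\mathbf k)\), extended by zero when \(|\mathbf k|>m_n-n\). The trace distance to \(\tau_{q}^{\ot(D-1)}\), with \(q=(\gamma-1)/\gamma\), is the \(\ell^1\) distance of these diagonal laws. For fixed \(\mathbf k\) with \(|\mathbf k|=j\),
\[
  \pi(\mathbf k)=\frac{\binom{m-j}{n}}{\binom{m+D-1}{n+D-1}}
  =\prod_{i=0}^{j-1}\frac{m-n-i}{m-i}\cdot\prod_{i=1}^{D-1}\frac{n+i}{m+i}\cdot(1+o(1)).
\]
Both factors converge. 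The first product tends to \(\bigl((\gamma-1)/\gamma\bigr)^j=q^j\), and the second tends to \(\gamma^{-(D-1)}=(1-q)^{D-1}\). Together these give \((1-q)^{D-1}q^{|\mathbf k|}\), the thermal eigenvalue.

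Pointwise convergence of probability mass functions on the countable set \(\mathbb Z_{\ge0}^{D-1}\) implies \(\ell^1\) convergence by Scheffé's lemma, which gives \eqref{eq:Werner-thermal-limit}. No tail estimate is needed, since both sides are normalized.

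The step needing most care is the exact diagonal entry: the word-count normalization of the occupation vectors and the symmetrization bookkeeping. The limiting step is routine once the exact formula is in hand.
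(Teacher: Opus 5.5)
Your proposal is correct and follows essentially the same route as the paper: a symmetry argument for diagonality, the direct evaluation $\binom{m-|\mathbf k|}{n}/\binom{m}{n}$ of the expectation on occupation vectors multiplied by $\mathsf d_D(n)/\mathsf d_D(m)$, and pointwise convergence plus the discrete Scheff\'e lemma. The only variation is how you get diagonality: you use one-dimensional torus weight spaces, whereas the paper uses the phase rotation of $\psi$ to rule out cross-$|\mathbf k|$ blocks and irreducibility of $\U(D-1)$ on $\operatorname{Sym}^t(\psi^\perp)$ to make each block scalar; your word-count derivation and hockey-stick normalization check fill in details the paper leaves implicit.
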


\begin{proof}
On an occupation vector with \(t=|\mathbf k|\), the expectation of
\(\Pi_{\mathrm{sym},m}^{(D)}
 (\proj\psi^{\ot n}\ot I)
 \Pi_{\mathrm{sym},m}^{(D)}\)
is \(\binom{m-t}{n}/\binom mn\).  Multiplying by
\(\mathsf d_D(n)/\mathsf d_D(m)\) gives
\eqref{eq:occupation-law}.  The rotation \(\psi\mapsto e^{i\theta}\psi\)
fixes the input state and acts on the \(t\)-excitation subspace by
\(e^{i(m-t)\theta}\), so the output has no cross-\(t\) blocks.
The stabilizer \(\U(D-1)\) acts irreducibly on each
\(\operatorname{Sym}^t(\psi^\perp)\), making each block scalar.
For each fixed \(\mathbf k\),
\[
  \pi_{n,m_n}^{(D)}(\mathbf k)
  \longrightarrow
  \gamma^{-(D-1)}
  \left(\frac{\gamma-1}{\gamma}\right)^{|\mathbf k|}.
\]
The limit is a normalized product of geometric laws.  The discrete
Scheff\'e lemma therefore gives \(\ell^1\) convergence, which is
\eqref{eq:Werner-thermal-limit}.
\end{proof}

For the rest of this appendix, \(L\) denotes a total product length.
The \(M\) of \cref{sec:small-error} counts additional outputs.
Let \(\Pi_{\le L}\) project onto Fock occupations of total number at
most \(L\), and let \(J_L\ket{\mathbf k}=\ket{\mathbf k}_L\) on this
subspace.
The trace-preserving occupation embedding is
\begin{equation}\label{eq:occupation-channel}
  \mathcal J_L(X)
  =J_L\Pi_{\le L}X\Pi_{\le L}J_L^*
   +\Tr[(I-\Pi_{\le L})X]\proj\psi^{\ot L}.
\end{equation}
For \(Z\in\psi^\perp\), distinguish its coherent Fock vector
\(\ket{\operatorname{coh}(Z)}
  =e^{-\|Z\|^2/2}\bigoplus_{t\ge0}Z^{\ot t}/\sqrt{t!}\)
from the one-particle vector \(\ket Z\),
and let
\[
  \ket{\psi_{Z,L}}
  =\frac{\ket\psi+L^{-1/2}\ket Z}
         {\sqrt{1+\|Z\|^2/L}}.
\]
For fixed \(Z\),
\begin{equation}\label{eq:coherent-product-limit}
  \left\|
    \mathcal J_L\bigl(\proj{\operatorname{coh}(Z)}\bigr)
    -\proj{\psi_{Z,L}}^{\ot L}
  \right\|_1\longrightarrow0.
\end{equation}
For \(Z\ne0\), both vectors have excitations only in the direction
\(Z/\|Z\|\) (the case \(Z=0\) is immediate).  With \(t=\|Z\|^2\),
their number amplitudes after pulling
back the product vector by \(J_L^*\) are the square roots of
\[
  B_L(j)=\binom Lj\frac{(t/L)^j}{(1+t/L)^L},
  \qquad
  \mathrm{Pois}_t(j)=e^{-t}\frac{t^j}{j!}.
\]
Pointwise convergence and the discrete Scheff\'e lemma give
\[
  \sum_{j\ge0}\bigl(\sqrt{B_L(j)}-\sqrt{\mathrm{Pois}_t(j)}\bigr)^2
  \le\sum_{j\ge0}|B_L(j)-\mathrm{Pois}_t(j)|\longrightarrow0.
\]
Thus the pure states converge in trace norm. Contractivity of
\(\mathcal J_L\) proves \eqref{eq:coherent-product-limit}.

For \(s>0\), let \(\mu_s\) be the circular complex Gaussian measure
on \(\psi^\perp\) with covariance \(sI\), so each orthonormal coordinate has
\(\mathbb E|Z_j|^2=s\).  Gaussian integration in the number basis gives
the positive coherent-state representation
\begin{equation}\label{eq:pct-coherent-mixture}
  \tau_{s/(1+s)}^{\ot(D-1)}
  =\int_{\psi^\perp}\proj{\operatorname{coh}(Z)}\,\dd\mu_s(Z),
  \qquad
  \dd\mu_s(Z)=\frac{e^{-\|Z\|^2/s}}{(\pi s)^{D-1}}\,\dd Z.
\end{equation}
Specifically, the integral is diagonal with eigenvalue
\(s^{|\mathbf k|}/(1+s)^{|\mathbf k|+D-1}\).
Combining \eqref{eq:Haar-drops}, \eqref{eq:Werner-thermal-limit},
and \eqref{eq:coherent-product-limit} therefore gives, for any fixed
purification \(\psi\in\C^d\ot E\),
\begin{equation}\label{eq:pct-product-mixture}
  \left\|
    \cC_{n,m_n}^{\mathrm{PCT},r}(\rho^{\ot n})
    -\int_{\psi^\perp}\rho_{Z,m_n}^{\ot m_n}\,\dd\mu_{\gamma-1}(Z)
  \right\|_1\longrightarrow0,
  \qquad
  \rho_{Z,L}=\Tr_E\proj{\psi_{Z,L}}.
\end{equation}
Here the finite Werner output equals
\(\mathcal J_{m_n}(\omega_{n,m_n})\).  Contractivity controls its
replacement by the thermal state, and dominated convergence controls
the coherent-state integral, whose trace-distance integrand is bounded
by \(2\).  No uniform estimate over unbounded \(Z\) is needed.

\subsection{Simple full-rank spectra}
\label{app:pct-full-rank}

Fix \(p\in\Delta_d^\circ\), so that
\(\rho_p=\diag(p_1,\ldots,p_d)\) with
\(p_1>\cdots>p_d>0\), and recall \(q_{ij}=p_j/p_i\).  Define
\begin{equation}\label{eq:q-pct}
  q_{ij}^{\mathrm{PCT}}
  =\frac{\gamma-1+\gamma q_{ij}}
         {\gamma+(\gamma-1)q_{ij}}.
\end{equation}
The asymptotic PCT fidelity is
\begin{equation}\label{eq:pct-fidelity}
  F_{\mathrm{PCT}}(\gamma,p)
  =\left(\frac{\sqrt{2\gamma-1}}{\gamma}\right)^{(d-1)/2}
   \prod_{i<j}F\!\left(\tau_{q_{ij}},
                       \tau_{q_{ij}^{\mathrm{PCT}}}\right).
\end{equation}
The equivalent closed form is displayed in
\eqref{eq:pct-fidelity-closed-form}.
PCT here uses rank bound \(d\), hence purification dimension \(d^2\).

\begin{proof}[Proof of \cref{thm:pct-comparison}\textup{(a)}]
The value \(\Funiv\) is attained by \cref{thm:unknown-optimum}\textup{(a)}.
For the PCT limit, apply interior LAN to the mixture
\eqref{eq:pct-product-mixture} and compare the resulting factors.

\emph{Step 1: The Gaussian output.}
Choose
\[
  \ket{\psi_p}=\sum_i\sqrt{p_i}\ket{ii}.
\]
Identify the tangent vector \(\ket Z=\sum_{i,j}Z_{ij}\ket{ij}\)
with its coefficient matrix.  Partial trace has differential
\begin{equation}\label{eq:partial-trace-differential}
  D_p(Z)=Z\sqrt{\rho_p}+\sqrt{\rho_p}Z^*,
  \qquad
  \rho_{Z,L}=\rho_p+L^{-1/2}D_p(Z)+O_Z(L^{-1}).
\end{equation}
Its diagonal and orbital LAN coordinates are
\begin{equation}\label{eq:h-Z}
  h_i(Z)=2\sqrt{p_i}\,\Re Z_{ii},
  \qquad \sum_i h_i(Z)=0,
\end{equation}
\begin{equation}\label{eq:beta-Z}
  \beta_{ij}(Z)
  =\frac{D_p(Z)_{ji}}{\sqrt{p_i-p_j}}
  =\frac{\sqrt{p_i}Z_{ji}
            +\sqrt{p_j}\,\overline{Z_{ij}}}
           {\sqrt{p_i-p_j}},
  \qquad i<j.
\end{equation}
The derivative of the local chart from \cref{sec:gaussian-amplification} is
\((u,w)\mapsto\diag(u)+[Y_p(w),\rho_p]\), a real linear isomorphism
onto the traceless Hermitian matrices.  The inverse function theorem
and \eqref{eq:partial-trace-differential} give exact LAN coordinates:
\[
  \rho_{Z,L}
  =g_{L,z_L}^{(p)}\diag(p+L^{-1/2}h_L)(g_{L,z_L}^{(p)})^*,
  \qquad
  (h_L,z_L)=(h(Z),\beta(Z))+O_Z(L^{-1/2}).
\]
Let \(T_L^p,S_L^p\) be the forward and reverse interior LAN channels
centered at \(\rho_p\), in the normalization of
\eqref{eq:LAN-coordinate-map}, and set
\begin{equation}\label{eq:pct-local-gaussian}
  \Theta(Z)
  =\mathsf N_{h(Z),\Sigma_p}\ot
   \bigotimes_{i<j}
   D(\beta_{ij}(Z))\tau_{q_{ij}}D(\beta_{ij}(Z))^*.
\end{equation}
The bounded-parameter LAN estimates
\eqref{eq:LAN-full-forward}--\eqref{eq:LAN-full-reverse} and
trace-norm continuity of Gaussian shifts imply, for each fixed \(Z\),
\[
  \left\|T_L^p(\rho_{Z,L}^{\ot L})-\Theta(Z)\right\|_1\to0,
  \qquad
  \left\|\rho_{Z,L}^{\ot L}-S_L^p(\Theta(Z))\right\|_1\to0.
\]

To average \(\Theta(Z)\) under \(Z\sim\mu_{\gamma-1}\), use the
orthogonal decomposition:
\[
  \psi_p^\perp=
  \left\{\sum_i z_i\ket{ii}:\sum_i\sqrt{p_i}z_i=0\right\}
  \oplus\bigoplus_{i<j}\operatorname{span}\{\ket{ij},\ket{ji}\}
\]
The diagonal and pair variables are independent, and
\eqref{eq:h-Z}--\eqref{eq:beta-Z} give
\begin{equation}\label{eq:pct-displacement-covariances}
  \operatorname{Cov}(h)=2(\gamma-1)\Sigma_p,\qquad
  \mathbb E|\beta_{ij}|^2
    =\frac{(\gamma-1)(p_i+p_j)}{p_i-p_j},
  \qquad
  \mathbb E\beta_{ij}^2=0.
\end{equation}
Here the diagonal Gaussian has covariance
\((\gamma-1)(I-\sqrt p\,\sqrt p^{\,\mathsf T})\). Multiplying its real
parts by \(2\sqrt{p_i}\) gives the first identity on the trace-zero
hyperplane.  Each \(\beta_{ij}\) is circular complex Gaussian.  Averaging
its displacement multiplies the Weyl characteristic function by
\(\exp(-\mathbb E|\beta_{ij}|^2|\xi|^2)\), leaving a thermal state
whose mean photon number increases by \(\mathbb E|\beta_{ij}|^2\):
\begin{equation}\label{eq:thermal-mean-out}
  \bar n_{ij}^{\mathrm{PCT}}
  =\frac{p_j+(\gamma-1)(p_i+p_j)}{p_i-p_j}
  =\frac{(\gamma-1)p_i+\gamma p_j}{p_i-p_j},
\end{equation}
whose geometric parameter is \eqref{eq:q-pct}.  Consequently,
\begin{equation}\label{eq:pct-lan-output}
  \Theta_p^{\mathrm{PCT}}
  :=\int_{\psi_p^\perp}\Theta(Z)\,\dd\mu_{\gamma-1}(Z)
  =\mathsf N_{0,(2\gamma-1)\Sigma_p}
   \ot\bigotimes_{i<j}\tau_{q_{ij}^{\mathrm{PCT}}}.
\end{equation}
Integrating both LAN approximations by dominated convergence and
using \eqref{eq:pct-product-mixture} gives
\begin{align}
  \left\|T_{m_n}^p(\cC_{n,m_n}^{\mathrm{PCT},d}(\rho_p^{\ot n}))
                   -\Theta_p^{\mathrm{PCT}}\right\|_1&\longrightarrow0,
  \label{eq:pct-forward-state-limit}\\
  \left\|\cC_{n,m_n}^{\mathrm{PCT},d}(\rho_p^{\ot n})
                   -S_{m_n}^p(\Theta_p^{\mathrm{PCT}})\right\|_1
                   &\longrightarrow0.
  \label{eq:pct-reverse-state-limit}
\end{align}
The target has the same two-sided approximations with
\begin{equation}\label{eq:target-lan-output}
  \Theta_p=\Theta(0)
  =\mathsf N_{0,\Sigma_p}\ot\bigotimes_{i<j}\tau_{q_{ij}}.
\end{equation}

\emph{Step 2: Fidelity and strict comparison.}
Monotonicity under \(T_{m_n}^p\) and \(S_{m_n}^p\), together with
trace-norm continuity, now gives the two-sided bound
\[
  F(\Theta_p^{\mathrm{PCT}},\Theta_p)-o(1)
  \le F\!\left(\cC_{n,m_n}^{\mathrm{PCT},d}(\rho_p^{\ot n}),\rho_p^{\ot m_n}\right)
  \le F(\Theta_p^{\mathrm{PCT}},\Theta_p)+o(1).
\]
The limiting fidelity factorizes, with
\begin{equation}\label{eq:pct-classical-affinity}
  F\!\left(\mathsf N_{0,(2\gamma-1)\Sigma_p},
            \mathsf N_{0,\Sigma_p}\right)
  =\left(\frac{\sqrt{2\gamma-1}}{\gamma}\right)^{(d-1)/2}
\end{equation}
and
\(F(\tau_q,\tau_x)=\sqrt{(1-q)(1-x)}/(1-\sqrt{qx})\).
This proves the exact value \eqref{eq:pct-fidelity}.

A thermal state \(\tau_x\) has mean photon number \(x/(1-x)\).
For one mode with \(q\in(0,1)\), the optimal amplifier has mean
\(\bar n^{\mathrm{opt}}=\gamma q/(1-q)+\gamma-1\), hence
\[
  \bar n^{\mathrm{opt}}>\frac q{1-q},
  \qquad
  \bar n^{\mathrm{PCT}}-\bar n^{\mathrm{opt}}
  =\frac{(\gamma-1)q}{1-q}>0.
\]
Thus \(q<q^{\mathrm{opt}}<q^{\mathrm{PCT}}\), where
\(q^{\mathrm{opt}}=(\gamma-1+q)/\gamma\) and
\(f_\gamma(q)=F(\tau_q,\tau_{q^{\mathrm{opt}}})\).  Since
\[
  \frac{\dd}{\dd x}\log F(\tau_q,\tau_x)
  =\frac{\sqrt q-\sqrt x}
         {2\sqrt x(1-x)(1-\sqrt{qx})}<0,\qquad x>q,
\]
every PCT quantum factor is strictly smaller.  PCT's classical factor
is also smaller: \(c\mapsto2\sqrt c/(1+c)\) decreases for \(c>1\),
and \(2\gamma-1>\gamma>1\).  Multiplying the factors proves
\(F_{\mathrm{PCT}}(\gamma,p)<\Funiv(\gamma,p)\).
\end{proof}

\subsection{Projector states}
\label{app:pct-projector}

For projector states, we upper-bound PCT's actual fidelity to compare it
with the optimum in \cref{thm:grassmann}. Its published data-processing
lower bound does not suffice for this comparison.

Fix \(1\le r<d\) and \(P\in\Gr(r,d)\), choose an isometry
\(V_P:\C^r\to\C^d\) with \(V_PV_P^*=P\), and let
\[
  \ket{\psi_P}
  =\frac1{\sqrt r}\sum_{i=1}^r V_P\ket i\ot\ket i,
  \qquad
  \ket{\psi_r}=\frac1{\sqrt r}\sum_{i=1}^r\ket{ii}.
\]
\begin{proposition}[Finite-sample PCT factorization for projector states]
\label{prop:grassmann-pct-comparison}
Define
\begin{equation}
  s_{n,m}^{(r,d)}
  =\frac{\mathsf d_{rd}(n)/\mathsf d_{rd}(m)}
         {\mathsf d_{r^2}(n)/\mathsf d_{r^2}(m)}
  =\prod_{j=r^2}^{rd-1}\frac{n+j}{m+j}.
  \label{eq:grassmann-pct-support-factor}
\end{equation}
There is an exact finite-sample factorization
\begin{equation}
  F\!\left(
    \cC_{n,m}^{\mathrm{PCT},r}((P/r)^{\ot n}),(P/r)^{\ot m}
  \right)
  =\sqrt{s_{n,m}^{(r,d)}}\,
   F\!\left(\zeta_{n,m}^{(r)},(I_r/r)^{\ot m}\right),
  \label{eq:grassmann-pct-factorization}
\end{equation}
where
\begin{equation}
  \zeta_{n,m}^{(r)}
  =\Tr_{E^m}\mathcal W_{n,m}^{(r^2)}
    \bigl(\proj{\psi_r}^{\ot n}\bigr).
  \label{eq:grassmann-pct-internal-state}
\end{equation}
\end{proposition}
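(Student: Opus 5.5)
We start from the Haar-removal identity \eqref{eq:Haar-drops}, taken with the purification \(\ket{\psi_P}=(V_P\ot I_E)\ket{\psi_r}\). It gives
\[
\cC_{n,m}^{\mathrm{PCT},r}((P/r)^{\ot n})=\Tr_{E^m}Y,\qquad Y=\mathcal W_{n,m}^{(rd)}(\proj{\psi_P}^{\ot n}).
\]
The plan is to compress \(Y\) onto the subspace on which both the input and the target live. That subspace is \(S=\operatorname{ran}(P)\ot E\subset\C^d\ot E\), which has dimension \(r^2\). Let \(Q=P\ot I_E\) be the projection onto \(S\).

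\emph{Step 1 (fidelity sees only the support).} The target satisfies \(P^{\ot m}(P/r)^{\ot m}P^{\ot m}=(P/r)^{\ot m}\). By the projection identity used in the proof of \cref{lem:exact-compression}, namely \(F(A,B)=F(\Pi A\Pi,B)\) whenever \(\Pi B\Pi=B\), we get
\[
F(\Tr_{E^m}Y,(P/r)^{\ot m})=F(P^{\ot m}\Tr_{E^m}(Y)P^{\ot m},(P/r)^{\ot m}).
\]
Because \(Q^{\ot m}=P^{\ot m}\ot I_{E^m}\) acts trivially on \(E^m\), the compression commutes with the partial trace. Hence \(P^{\ot m}\Tr_{E^m}(Y)P^{\ot m}=\Tr_{E^m}(Q^{\ot m}YQ^{\ot m})\).

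\emph{Step 2 (compressing the Werner cloner).} The input \(X=\proj{\psi_P}^{\ot n}\) satisfies \(X=Q^{\ot n}XQ^{\ot n}\). Moreover \(Q^{\ot m}\) commutes with \(\Pi^{(rd)}_{\mathrm{sym},m}\), since it is a tensor power. Also \(Q^{\ot m}\Pi^{(rd)}_{\mathrm{sym},m}\) is the symmetric projector of \(S^{\ot m}\), i.e.\ \(\Pi^{(r^2)}_{\mathrm{sym},m}\) after identifying \(S\cong\C^r\ot\C^r\) via \(V_P\ot I\). Applying these facts to formula \eqref{eq:grassmann-werner-cloner} gives
\[
Q^{\ot m}YQ^{\ot m}=\frac{\mathsf d_{rd}(n)}{\mathsf d_{rd}(m)}\,\Pi^{(r^2)}_{\mathrm{sym},m}\bigl(X\ot Q^{\ot(m-n)}\bigr)\Pi^{(r^2)}_{\mathrm{sym},m}.
\]
Rewriting the prefactor in terms of the dimension-\(r^2\) Werner normalization shows
\[
Q^{\ot m}YQ^{\ot m}=s_{n,m}^{(r,d)}\,(V_P\ot I)^{\ot m}\,\mathcal W_{n,m}^{(r^2)}\bigl(\proj{\psi_r}^{\ot n}\bigr)\,(V_P\ot I)^{*\ot m}.
\]
The closed product form of \(s_{n,m}^{(r,d)}\) follows from
\[
\mathsf d_D(n)/\mathsf d_D(m)=\prod_{j=1}^{D-1}\frac{n+j}{m+j}.
\]
Taking the ratio for \(D=rd\) and \(D=r^2\), the factors with \(j<r^2\) cancel, leaving \(\prod_{j=r^2}^{rd-1}(n+j)/(m+j)\).

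\emph{Step 3 (assemble).} Tracing out \(E^m\) in Step 2 gives
\[
\Tr_{E^m}(Q^{\ot m}YQ^{\ot m})=s_{n,m}^{(r,d)}\,V_P^{\ot m}\zeta^{(r)}_{n,m}V_P^{*\ot m}.
\]
The target can be written as \((P/r)^{\ot m}=V_P^{\ot m}(I_r/r)^{\ot m}V_P^{*\ot m}\). Now use isometric invariance of fidelity together with \(F(cA,B)=\sqrt c\,F(A,B)\). Combined with Step 1, this yields \eqref{eq:grassmann-pct-factorization}.

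The only delicate point I expect is Step 2. There one must check that compressing by \(Q^{\ot m}\) replaces the identity tail \(I^{\ot(m-n)}\) by \(Q^{\ot(m-n)}\) without losing terms. This holds because \(Q^{\ot m}\) commutes with the symmetrizer and \(Q^{\ot n}XQ^{\ot n}=X\). One must also check that the dimension-\(rd\) normalization differs from the dimension-\(r^2\) one by exactly the factor \(s_{n,m}^{(r,d)}\). Everything else is bookkeeping with the identities already used in \cref{lem:exact-compression}.
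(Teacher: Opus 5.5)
Your proof is correct and is essentially the paper's argument. The paper obtains $P^{\ot m}\cC_{n,m}^{\mathrm{PCT},r}((P/r)^{\ot n})P^{\ot m}=s_{n,m}^{(r,d)}V_P^{\ot m}\zeta_{n,m}^{(r)}(V_P^*)^{\ot m}$ by citing \eqref{eq:channel-compression} for the one-row Cartan channel with the isometry $V_P\ot I_r$, so that $R_\mu/R_\nu$ becomes $s_{n,m}^{(r,d)}$; you verify the same compression directly from the symmetrizer formula using $[Q^{\ot m},\Pi_{\mathrm{sym},m}^{(rd)}]=0$, and both arguments then finish with support compression, isometric invariance, and homogeneity of the fidelity.
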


\begin{proof}
By \eqref{eq:Haar-drops}, use the fixed purification \(\psi_P\).
Werner's channel is the one-row Cartan channel, so
\eqref{eq:channel-compression} applies with \(\mu=(n)\), \(\nu=(m)\),
and \(V_P\ot I_r:\C^{r^2}\to\C^{rd}\).  On the symmetric output, its
compression projection is \((P\ot I_r)^{\ot m}\) and its dimension factor is
\(s_{n,m}^{(r,d)}\) by \eqref{eq:grassmann-pct-support-factor}.
Tracing out \(E^m\) gives
\begin{equation}
  P^{\ot m}\cC_{n,m}^{\mathrm{PCT},r}((P/r)^{\ot n})P^{\ot m}
  =s_{n,m}^{(r,d)}
   V_P^{\ot m}\zeta_{n,m}^{(r)}(V_P^*)^{\ot m}.
  \label{eq:grassmann-pct-system-compression}
\end{equation}
Compressing to the target support \(P^{\ot m}\) leaves fidelity
unchanged. Homogeneity then yields \eqref{eq:grassmann-pct-factorization}.
\end{proof}

\begin{proof}[Proof of \cref{thm:pct-comparison}\textup{(b)}]
By \cref{prop:grassmann-pct-comparison}, it remains to bound the internal
fidelity.  Assume \(r>1\), let
\[
  \mathsf H_r=\left\{h\in\R^r:\sum_{i=1}^r h_i=0\right\},
  \qquad
  \Sigma_r
  =\left.\left(\frac1r I_r-\frac1{r^2}
    \mathbf1_r\mathbf1_r^{\mathsf T}\right)\right|_{\mathsf H_r},
\]
and measure every system output in the computational basis.  If
\(C_m=(C_{m,1},\ldots,C_{m,r})\) is the resulting count vector, its
centered and rescaled version belongs to \(\mathsf H_r\).
Under the target \((I_r/r)^{\ot m}\), the multinomial central limit theorem
gives
\begin{equation}
  \frac{C_m-(m/r)\mathbf1_r}{\sqrt m}
  \longrightarrow\mathsf N_{0,\Sigma_r}
  \quad\text{in distribution}.
  \label{eq:grassmann-target-count-limit}
\end{equation}
Under \(\zeta_{n,m_n}^{(r)}\), we will show
\begin{equation}
  \frac{C_{m_n}-(m_n/r)\mathbf1_r}{\sqrt{m_n}}
  \longrightarrow
  \mathsf N_{0,(2\gamma-1)\Sigma_r}
  \quad\text{in distribution}.
  \label{eq:grassmann-pct-count-limit}
\end{equation}

Apply \eqref{eq:pct-product-mixture} to the internal state, with
condensate \(\psi_r\), dimension \(r^2\), and
\(Z\sim\mu_{\gamma-1}\) on \(\psi_r^\perp\).  Identify \(Z\) with a
traceless complex \(r\times r\) matrix.  Conditional on \(Z\), measuring
the system part of the product of the normalized vector
\(\psi_r+m^{-1/2}Z\) gives multinomial probabilities
\begin{align}
  p_i^{(m)}(Z)
  &=\frac{r^{-1}+2\Re Z_{ii}/\sqrt{rm}
                +m^{-1}\sum_j|Z_{ij}|^2}
         {1+m^{-1}\|Z\|^2}\notag\\*
  &=\frac1r+\frac{h_i(Z)}{\sqrt m}+O_Z(m^{-1}),
  \qquad h_i(Z)=\frac{2\Re Z_{ii}}{\sqrt r}.
  \label{eq:grassmann-coherent-count-probabilities}
\end{align}
The multinomial central limit theorem therefore gives the conditional
limit \(\mathsf N_{h(Z),\Sigma_r}\).  Since \(\Tr Z=0\), the mean
\(h(Z)\) belongs to \(\mathsf H_r\). Under \(\mu_{\gamma-1}\) it is
centered Gaussian, with
\[
  \operatorname{Cov}(h(Z))_{ij}
  =\frac{2(\gamma-1)}r\left(\delta_{ij}-\frac1r\right)
  =2(\gamma-1)(\Sigma_r)_{ij}.
\]
Dominated convergence of the conditional characteristic functions gives
the Gaussian limit with covariance
\(\Sigma_r+2(\gamma-1)\Sigma_r=(2\gamma-1)\Sigma_r\).
The trace-norm approximation \eqref{eq:pct-product-mixture} and
contractivity of measurement transfer this limit to
\(\zeta_{n,m_n}^{(r)}\), proving
\eqref{eq:grassmann-pct-count-limit}.

Let \(L_n,T_n\) be the laws of the centered, rescaled count vectors,
converging weakly to their Gaussian
limits \(L,T\), respectively.  For any finite partition \(\mathcal P\) whose
boundaries have zero Gaussian measure, coarse-graining and weak
convergence give
\[
  \limsup_n F(L_n,T_n)
  \le\sum_{A\in\mathcal P}\sqrt{L(A)T(A)}.
\]
Along a refining sequence generating the Borel sets, the right-hand side
decreases to \(F(L,T)\).  This upper semicontinuity, fidelity monotonicity
under the count measurement, and
\eqref{eq:grassmann-target-count-limit}--\eqref{eq:grassmann-pct-count-limit}
give
\begin{equation}
  \limsup_{n\to\infty}
  F\!\left(\zeta_{n,m_n}^{(r)},(I_r/r)^{\ot m_n}\right)
  \le
  \left(\frac{\sqrt{2\gamma-1}}{\gamma}\right)^{(r-1)/2}.
  \label{eq:grassmann-pct-internal-bound}
\end{equation}
Finally, \eqref{eq:grassmann-pct-support-factor} gives
\[
  s_{n,m_n}^{(r,d)}\longrightarrow\gamma^{-r(d-r)}.
\]
Combining this with \eqref{eq:grassmann-pct-factorization} and
\eqref{eq:grassmann-pct-internal-bound} proves
\eqref{eq:grassmann-pct-upper-bound}.  Its right-hand side is strictly
smaller than \(\gamma^{-r(d-r)/2}\) when \(1<r<d\), because
\(2\gamma-1<\gamma^2\).
\end{proof}

\begin{proposition}[PCT's leading projector infidelity coefficient]
\label{prop:pct-projector-small-error}
Fix \(1<r<d\) and \(P\in\Gr(r,d)\), let \(a=r(d-r)\), and, for each
fixed \(0<\delta<1/(2r)\), take \(m_n/n\to1+\delta\).  Define
\[
  \mathcal E_n(\delta)
  =1-F\!\left(\cC_{n,m_n}^{\mathrm{PCT},r}((P/r)^{\ot n}),
                  (P/r)^{\ot m_n}\right)^2,
  \qquad
  C_r(\delta)=(1-4\delta^2)^{-(r^2-1)/2}-1.
\]
Then
\begin{equation}\label{eq:pct-projector-small-error-envelope}
  1-(1+\delta)^{-a}
  \le\liminf_n\mathcal E_n(\delta)
   \le\limsup_n\mathcal E_n(\delta)
  \le1-(1+\delta)^{-a}+(1+\delta)^{-a}C_r(\delta).
\end{equation}
The restriction \(\delta<1/(2r)\) is used only for domination in the proof.
Thus both envelopes equal \(a\delta+O_{r,d}(\delta^2)\) as
\(\delta\downarrow0\).  In this iterated limit, PCT has the same leading
coefficient of \(1-F^2\) as the optimal projector cloner.
\end{proposition}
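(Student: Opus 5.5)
The plan is to bound PCT's fidelity on both sides through the exact factorization of \cref{prop:grassmann-pct-comparison}. For \(m=m_n\),
\[
  F\!\left(\cC_{n,m}^{\mathrm{PCT},r}((P/r)^{\ot n}),(P/r)^{\ot m}\right)^2
  =s_{n,m}^{(r,d)}\,F\!\left(\zeta_{n,m}^{(r)},(I_r/r)^{\ot m}\right)^2 .
\]
Since \(m_n/n\to1+\delta\), \eqref{eq:grassmann-pct-support-factor} gives \(s_{n,m_n}^{(r,d)}\to(1+\delta)^{-a}\). The lower envelope is then immediate: the internal fidelity is at most one, so \(\limsup_n F^2\le(1+\delta)^{-a}\), which is the left inequality in \eqref{eq:pct-projector-small-error-envelope}. (One could equally invoke \cref{prop:grassmann-finite-converse}.) All the work is therefore in the upper envelope, which is equivalent to
\[
  \liminf_n F\!\left(\zeta_{n,m_n}^{(r)},(I_r/r)^{\ot m_n}\right)^2\ge 1-C_r(\delta).
\]

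\emph{Step 1: replace \(\zeta\) by a coherent-product mixture.} Apply \eqref{eq:pct-product-mixture} in dimension \(r^2\) with condensate \(\psi_r\). This shows that \(\zeta_{n,m_n}^{(r)}\) is within \(o(1)\) in trace norm of
\[
  \overline\zeta_m=\int\rho_{Z,m}^{\ot m}\,\dd\mu_\delta(Z),
\]
where \(Z\) ranges over traceless complex \(r\times r\) matrices. By \eqref{eq:fidelity-continuity}, it suffices to bound \(F(\overline\zeta_m,\sigma^{\ot m})\) from below, with \(\sigma=I_r/r\). I will use the \(\chi^2\)-type inequality
\[
  F(\rho,\tau)^2\ge\frac{1}{\Tr(\rho^2\tau^{-1})}\ge 2-\Tr(\rho^2\tau^{-1}).
\]
The first inequality follows from Cauchy--Schwarz: \(1=\Tr\rho\le F(\rho,\tau)\,\|\tau^{-1/4}\rho\,\tau^{-1/4}\|_2\)-type H\"older estimates, or equivalently \cref{lem:weighted-fidelity} with a suitable weight. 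The second is \(1/x\ge2-x\). Expanding the mixture, the quantity to control is
\[
  \Tr\!\left(\overline\zeta_m^{\,2}\,(\sigma^{-1})^{\ot m}\right)
  =\iint\left(r\Tr[\rho_{Z,m}\rho_{Z',m}]\right)^m\dd\mu_\delta(Z)\,\dd\mu_\delta(Z').
\]

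\emph{Step 2: pointwise limit and Gaussian integral.} From the expansion \eqref{eq:grassmann-coherent-count-probabilities} in matrix form,
\[
  \rho_{Z,m}=\frac{I/r+(Z+Z^*)/\sqrt{rm}+ZZ^*/m}{1+\|Z\|^2/m}.
\]
The \(ZZ^*\) terms cancel against the normalization to first order, so
\[
  \left(r\Tr[\rho_{Z,m}\rho_{Z',m}]\right)^m\longrightarrow\exp\!\left(\Tr[H_ZH_{Z'}]\right),\qquad H_Z=Z+Z^*,
\]
with the constant absorbed by the chosen normalization. Here \(H_Z\) is a traceless Hermitian matrix whose \(r^2-1\) real coordinates are independent centered Gaussians of variance proportional to \(\delta\). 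The double Gaussian integral of \(e^{\langle H,H'\rangle}\) factorizes over these \(r^2-1\) directions. Each direction contributes \((1-4\delta^2)^{-1/2}\), the moment generating function of a product of two independent Gaussians of variance \(2\delta\). The product is therefore \((1-4\delta^2)^{-(r^2-1)/2}=1+C_r(\delta)\), which is finite exactly because \(2\delta<1\), and this explains the hypothesis. Combined with Step 1, this gives
\[
  \liminf_n F(\zeta,\sigma^{\ot m})^2\ge 2-(1+C_r(\delta))=1-C_r(\delta),
\]
and with the support factor \(s_{n,m_n}^{(r,d)}\) this yields the right inequality in \eqref{eq:pct-projector-small-error-envelope}. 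The \(O(\delta^2)\) statement then follows by expanding both envelopes: \(C_r(\delta)=O(\delta^2)\), and \(1-(1+\delta)^{-a}=a\delta+O(\delta^2)\).

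\emph{Main obstacle.} The hard step is justifying the exchange of limit and double integral in Step 2, that is, finding an \(m\)-uniform integrable dominating function for \((r\Tr[\rho_{Z,m}\rho_{Z',m}])^m\). I would first prove the bound
\[
  r\Tr[\rho_{Z,m}\rho_{Z',m}]\le 1+\frac{\langle H_Z,H_{Z'}\rangle/r+O\!\left(\|Z\|^2+\|Z'\|^2\right)/\sqrt m}{m}
\]
uniformly in \(m\), using \(1+x\le e^x\). The remaining polynomial-in-\(\|Z\|^2/m\) corrections need care. For \(\|Z\|,\|Z'\|\le\epsilon\sqrt m\), I would absorb them into a slightly enlarged exponent \(e^{(1+\epsilon)\langle H,H'\rangle}\), which is still integrable because \(\delta<1/(2r)\) leaves slack; the constraint \(\delta<1/(2r)\), stricter than the \(\delta<1/2\) needed for finiteness of the limit, is where this slack comes from. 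On the complementary region, I would use the crude bound \(r\Tr[\rho\rho']\le r\) together with the superexponential Gaussian tail of \(\mu_\delta\). A second point needing care is Step 1 itself: the \(\chi^2\) quantity is not trace-norm continuous. I would therefore apply the inequality only to the exact mixture \(\overline\zeta_m\), and transfer the resulting bound to \(\zeta\) only at the level of \(F\), via \eqref{eq:fidelity-continuity}.
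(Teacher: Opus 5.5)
\textbf{Overall.} Your route is the paper's route, step for step:
\begin{itemize}
\item the support factorization, with the trivial bound on the internal fidelity, for the lower envelope;
\item the purity bound \(1-F(\sigma,(I_r/r)^{\ot L})^2\le r^L\Tr\sigma^2-1\), applied to the exact coherent mixture rather than to \(\zeta\);
\item the pointwise limit \(e^{\Tr(H_ZH_W)}\) and the Gaussian cross-moment \((1-4\delta^2)^{-(r^2-1)/2}\);
\item the transfer to \(\zeta\) by \eqref{eq:fidelity-continuity}.
\end{itemize}
Your derivation of the purity bound through \(F^2\ge1/(r^L\Tr\sigma^2)\ge2-r^L\Tr\sigma^2\) is valid. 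Keeping the first form would even give the slightly sharper upper envelope \(1-(1+\delta)^{-a}/(1+C_r(\delta))\). Justify it by H\"older with exponents \(3/2\) and \(3\): \(1=\Tr\sigma\le(\Tr\sqrt\sigma)^{2/3}(\Tr\sigma^2)^{1/3}\). Do not cite \cref{lem:weighted-fidelity}, which bounds fidelity from above.

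\textbf{The domination step does not work as you sketch it.} This is the step you flagged yourself.
\begin{itemize}
\item \emph{Complement region.} Outside \(\{\|Z\|,\|W\|\le\epsilon\sqrt m\}\), the crude bound \(r^m\) times the Gaussian mass is of order \(\exp[m(\log r-\epsilon^2/\delta)]\), up to polynomial factors. At radius \(\epsilon\sqrt m\) the tail of \(\mu_\delta\) is only exponential in \(m\), not superexponential. So this diverges unless \(\epsilon^2>\delta\log r\), and \(\epsilon\) cannot be small at fixed \(\delta\).
\item \emph{Inner region.} \(e^{(1+\epsilon)\Tr(H_ZH_W)}\) does not dominate the integrand. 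After raising to the \(m\)th power, the remainders are of order \(\sqrt r\,\|Z\|\|W\|(\|Z\|+\|W\|)/\sqrt m\) and \((r-1)\|Z\|^2\|W\|^2/m\). These are controlled by \(\|Z\|\|W\|\), not by \(\Tr(H_ZH_W)\), which may be negative.
\item \emph{Displayed inequality.} Your uniform bound carries a spurious \(1/r\) and the wrong remainder.
\end{itemize}

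\textbf{How to repair it.} The split can be rescued. On the inner region the integrand is bounded by \(\exp[(4+4\sqrt r\,\epsilon+(r-1)\epsilon^2)\|Z\|\|W\|]\). With \(\epsilon^2\) slightly above \(\delta\log r\), this coefficient stays below \(2/\delta\) when \(\delta<1/(2r)\), so the bound is integrable.

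The paper needs no split. Both perturbations are traceless, so
\(r\Tr(\rho_{Z,L}\rho_{W,L})=1+r\Tr[(\rho_{Z,L}-I_r/r)(\rho_{W,L}-I_r/r)]\).
Contractivity of the partial trace gives
\(\|\rho_{Z,L}-I_r/r\|_2\le\|\proj{\psi_{Z,L}}-\proj{\psi_r}\|_1\le2\|Z\|/\sqrt L\).
Hence \((r\Tr\rho_{Z,L}\rho_{W,L})^L\le e^{4r\|Z\|\|W\|}\le e^{2r(\|Z\|^2+\|W\|^2)}\) for all \(L,Z,W\). This is \(\mu_\delta\otimes\mu_\delta\)-integrable exactly when \(\delta<1/(2r)\). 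That crude global bound, not slack in a local expansion, is where the hypothesis \(\delta<1/(2r)\) comes from.
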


\begin{proof}
It suffices to bound the internal squared infidelity in
\eqref{eq:grassmann-pct-factorization}.  Specialize the mixture
\eqref{eq:pct-product-mixture} to the condensate \(\psi_r\), and write
\[
  \widetilde\zeta_L
  =\int_{\psi_r^\perp}\rho_{Z,L}^{\ot L}\,\dd\mu_\delta(Z),
  \qquad
  \rho_{Z,L}=\frac{I_r}{r}+\frac{A_Z}{\sqrt L}+O_Z(L^{-1}),
  \qquad A_Z=\frac{Z+Z^*}{\sqrt r}.
\]
Here \(Z\) is a traceless complex \(r\times r\) matrix, and
\(\|\zeta_{n,m_n}^{(r)}-\widetilde\zeta_{m_n}\|_1\to0\).
For independent \(Z,W\sim\mu_\delta\), the scaled purity satisfies
\begin{align*}
  r^L\Tr\widetilde\zeta_L^2
  &=\mathbb E\!\left[
      \bigl(r\Tr(\rho_{Z,L}\rho_{W,L})\bigr)^L
    \right],\\*
  \bigl(r\Tr(\rho_{Z,L}\rho_{W,L})\bigr)^L
  &\longrightarrow e^{r\Tr(A_ZA_W)}.
\end{align*}
To justify exchanging limit and expectation, contraction under partial
trace and the pure-state trace-distance formula give
\[
  \left\|\rho_{Z,L}-\frac{I_r}{r}\right\|_2
  \le
  \left\|\proj{\psi_{Z,L}}-\proj{\psi_r}\right\|_1
  =2\sqrt{\frac{\|Z\|^2}{L+\|Z\|^2}}.
\]
Since both perturbations have trace zero, Cauchy--Schwarz and
\(1+x\le e^x\) yield
\[
  0\le\bigl(r\Tr(\rho_{Z,L}\rho_{W,L})\bigr)^L
  \le e^{4r\|Z\|\|W\|}
  \le e^{2r(\|Z\|^2+\|W\|^2)}.
\]
The last expression is integrable when \(\delta<1/(2r)\).
In a Hilbert--Schmidt orthonormal basis of traceless Hermitian matrices,
the real parts \(x_j,y_j\) of the coordinates of \(Z,W\) are independent
\(\mathsf N_{0,\delta/2}\) variables and
\(r\Tr(A_ZA_W)=4\sum_{j=1}^{r^2-1}x_jy_j\).
Dominated convergence and the Gaussian cross-moment therefore give
\begin{equation}\label{eq:pct-internal-purity-limit}
  \lim_{L\to\infty}r^L\Tr\widetilde\zeta_L^2
  =\prod_{j=1}^{r^2-1}\mathbb E e^{4x_jy_j}
  =(1-4\delta^2)^{-(r^2-1)/2}
  =1+C_r(\delta).
\end{equation}

For any state \(\sigma\) on \((\C^r)^{\ot L}\),
\[
  1-F\!\left(\sigma,(I_r/r)^{\ot L}\right)^2
  \le r^L\Tr\sigma^2-1.
\]
Indeed, apply \((\sqrt x-1)^2\le(x-1)^2\) to the relative eigenvalues
\(x=r^L\lambda_j(\sigma)\), average with weights \(r^{-L}\), and use
\(1-F^2\le2(1-F)\).
Apply this to \(\widetilde\zeta_L\) and use fidelity continuity to
transfer the bound to the PCT output:
\begin{equation}\label{eq:pct-projector-internal-small-error}
  \limsup_n
  \left[1-F\!\left(\zeta_{n,m_n}^{(r)},
                         (I_r/r)^{\ot m_n}\right)^2\right]
  \le C_r(\delta).
\end{equation}
Finally, the exact support factorization and
\(s_{n,m_n}^{(r,d)}\to(1+\delta)^{-a}\) imply
\eqref{eq:pct-projector-small-error-envelope}.
Since \(C_r(\delta)=2(r^2-1)\delta^2+O_r(\delta^4)\), the asserted
first-order coefficient follows.  For any fixed \(0<\eta<1\) and
sufficiently small \(\epsilon>0\), both envelopes lie below \(\epsilon\)
at \(\delta=(1-\eta)\epsilon/a\) and above it at
\(\delta=(1+\eta)\epsilon/a\).  Letting \(\eta\downarrow0\) gives the
leading sample coefficient \(a\) in \cref{sec:small-error}, without
assuming convergence of the fidelity at fixed \(\delta\).
\end{proof}


\begin{thebibliography}{BCFJS96}

\bibitem[BA06]{BaeAcin2006}
J. Bae and A. Ac\'{\i}n,
Asymptotic quantum cloning is state estimation,
\doi{10.1103/PhysRevLett.97.030402}{\emph{Phys. Rev. Lett.} \textbf{97}, 030402 (2006)},
\arxiv{quant-ph/0603078}.

\bibitem[BR97]{BarnettRadmore1997}
S. M. Barnett and P. M. Radmore,
\emph{Methods in Theoretical Quantum Optics},
Oxford University Press, Oxford (1997);
\doi{10.1093/acprof:oso/9780198563617.001.0001}{paperback ed. (2002)}.

\bibitem[BCFJS96]{BarnumCavesFuchsJozsaSchumacher1996}
H. Barnum, C. M. Caves, C. A. Fuchs, R. Jozsa, and B. Schumacher,
Noncommuting mixed states cannot be broadcast,
\doi{10.1103/PhysRevLett.76.2818}{\emph{Phys. Rev. Lett.} \textbf{76}, 2818--2821 (1996)},
\arxiv{quant-ph/9511010}.

\bibitem[BGA11]{BowlesGutaAdesso2011Purification}
P. Bowles, M. Gu{\c t}{\u a}, and G. Adesso,
Asymptotically optimal purification and dilution of mixed qubit and Gaussian states,
\doi{10.1103/PhysRevA.84.022320}{\emph{Phys. Rev. A} \textbf{84}, 022320 (2011)},
\arxiv{1102.2341}.

\bibitem[BGA12]{BowlesGutaAdesso2012Reversal}
P. Bowles, M. Gu{\c t}{\u a}, and G. Adesso,
Asymptotically optimal quantum channel reversal for qudit ensembles and multimode Gaussian states,
\doi{10.1088/1367-2630/14/11/113041}{\emph{New J. Phys.} \textbf{14}, 113041 (2012)},
\arxiv{1208.3569}.

\bibitem[BDE+98]{BrussDiVincenzoEkertFuchsMacchiavelloSmolin1998}
D. Bru{\ss}, D. P. DiVincenzo, A. Ekert, C. A. Fuchs, C. Macchiavello, and J. A. Smolin,
Optimal universal and state-dependent quantum cloning,
\doi{10.1103/PhysRevA.57.2368}{\emph{Phys. Rev. A} \textbf{57}, 2368--2378 (1998)},
\arxiv{quant-ph/9705038}.

\bibitem[BEM98]{BrussEkertMacchiavello1998}
D. Bru{\ss}, A. Ekert, and C. Macchiavello,
Optimal universal quantum cloning and state estimation,
\doi{10.1103/PhysRevLett.81.2598}{\emph{Phys. Rev. Lett.} \textbf{81}, 2598--2601 (1998)}.

\bibitem[BDMP05]{BuscemiDArianoMacchiavelloPerinotti2005Optimal}
F. Buscemi, G. M. D'Ariano, C. Macchiavello, and P. Perinotti,
Optimal superbroadcasting of mixed qubit states,
\href{https://wordpress.qubit.it/wp-content/uploads/publications-dariano/sendai_proc05-printed.pdf}{in
\emph{Proceedings of the 13th Quantum Information Technology Symposium
(QIT13)}, Tohoku University, Sendai, 149--155 (2005)},
\arxiv{quant-ph/0510155}.

\bibitem[BDMP06]{BuscemiDArianoMacchiavelloPerinotti2006}
F. Buscemi, G. M. D'Ariano, C. Macchiavello, and P. Perinotti,
Universal and phase-covariant superbroadcasting for mixed qubit states,
\doi{10.1103/PhysRevA.74.042309}{\emph{Phys. Rev. A} \textbf{74}, 042309 (2006)},
\arxiv{quant-ph/0602125}.

\bibitem[BH96]{BuzekHillery1996}
V. Bu\v{z}ek and M. Hillery,
Quantum copying: Beyond the no-cloning theorem,
\doi{10.1103/PhysRevA.54.1844}{\emph{Phys. Rev. A} \textbf{54}, 1844--1852 (1996)}.

\bibitem[Cav82]{Caves1982LinearAmplifiers}
C. M. Caves,
Quantum limits on noise in linear amplifiers,
\doi{10.1103/PhysRevD.26.1817}{\emph{Phys. Rev. D} \textbf{26}, 1817--1839 (1982)}.

\bibitem[CKNWW05]{CerfKrugerNavezWernerWolf2005}
N. J. Cerf, O. Kr\"uger, P. Navez, R. F. Werner, and M. M. Wolf,
Non-Gaussian cloning of quantum coherent states is optimal,
\doi{10.1103/PhysRevLett.95.070501}{\emph{Phys. Rev. Lett.} \textbf{95}, 070501 (2005)},
\arxiv{quant-ph/0410058}.

\bibitem[CB99]{CheflesBarnett1999}
A. Chefles and S. M. Barnett,
Strategies and networks for state-dependent quantum cloning,
\doi{10.1103/PhysRevA.60.136}{\emph{Phys. Rev. A} \textbf{60}, 136--144 (1999)},
\arxiv{quant-ph/9812035}.

\bibitem[Chi11]{Chiribella2011EstimationCloning}
G. Chiribella,
On quantum estimation, quantum cloning and finite quantum de Finetti theorems,
\doi{10.1007/978-3-642-18073-6_2}{in \emph{Theory of Quantum Computation,
Communication, and Cryptography}, Lecture Notes in Computer Science
\textbf{6519}, 9--25 (Springer, 2011)},
\arxiv{1010.1875}.

\bibitem[CD06]{ChiribellaDAriano2006}
G. Chiribella and G. M. D'Ariano,
Quantum information becomes classical when distributed to many users,
\doi{10.1103/PhysRevLett.97.250503}{\emph{Phys. Rev. Lett.} \textbf{97}, 250503 (2006)}.

\bibitem[CY14]{ChiribellaYang2014Optimal}
G. Chiribella and Y. Yang,
Optimal asymptotic cloning machines,
\doi{10.1088/1367-2630/16/6/063005}{\emph{New J. Phys.} \textbf{16}, 063005 (2014)},
\arxiv{1404.0990}.

\bibitem[CEM99]{CiracEkertMacchiavello1999}
J. I. Cirac, A. K. Ekert, and C. Macchiavello,
Optimal purification of single qubits,
\doi{10.1103/PhysRevLett.82.4344}{\emph{Phys. Rev. Lett.} \textbf{82}, 4344--4347 (1999)},
\arxiv{quant-ph/9812075}.

\bibitem[DF07]{DangFan2007}
G.-F. Dang and H. Fan,
Optimal broadcasting of mixed states,
\doi{10.1103/PhysRevA.76.022323}{\emph{Phys. Rev. A} \textbf{76}, 022323 (2007)},
\arxiv{quant-ph/0609182}.

\bibitem[DMP05]{DArianoMacchiavelloPerinotti2005}
G. M. D'Ariano, C. Macchiavello, and P. Perinotti,
Superbroadcasting of mixed states,
\doi{10.1103/PhysRevLett.95.060503}{\emph{Phys. Rev. Lett.} \textbf{95}, 060503 (2005)},
\arxiv{quant-ph/0506251}.

\bibitem[Die82]{Dieks1982}
D. Dieks,
Communication by EPR devices,
\doi{10.1016/0375-9601(82)90084-6}{\emph{Physics Letters A} \textbf{92}, 271--272 (1982)}.

\bibitem[Fan03]{Fan2003}
H. Fan,
Quantum cloning of mixed states in symmetric subspaces,
\doi{10.1103/PhysRevA.68.054301}{\emph{Phys. Rev. A} \textbf{68}, 054301 (2003)},
\arxiv{quant-ph/0308058}.

\bibitem[FLS07]{FanLiuShi2007}
H. Fan, B. Y. Liu, and K. J. Shi,
Quantum cloning of identical mixed qubits,
\doi{10.26421/QIC7.5-6-8}{\emph{Quantum Inf. Comput.} \textbf{7}, 551--558 (2007)},
\arxiv{quant-ph/0601017}.

\bibitem[FWJ+14]{FanWangJingYueShiZhangMu2014}
H. Fan, Y.-N. Wang, L. Jing, J.-D. Yue, H.-D. Shi, Y.-L. Zhang, and L.-Z. Mu,
Quantum cloning machines and the applications,
\doi{10.1016/j.physrep.2014.06.004}{\emph{Phys. Rep.} \textbf{544}, 241--322 (2014)},
\arxiv{1301.2956}.

\bibitem[FGSS26]{FanizzaGrinkoScharnhorstSpilecki2026}
M. Fanizza, D. Grinko, T. Scharnhorst, and J. Spilecki,
Optimal cloning of mixed states,
\arxiv{2608.27298} (2026).

\bibitem[FH91]{FultonHarris1991}
W. Fulton and J. Harris,
\doi{10.1007/978-1-4612-0979-9}{\emph{Representation Theory: A First Course}},
Graduate Texts in Mathematics 129, Springer (1991).

\bibitem[GML25]{GirardiMeleLami2025}
F. Girardi, F. A. Mele, and L. Lami,
Random purification channel made simple,
\arxiv{2511.23451} (2025).

\bibitem[GM97]{GisinMassar1997}
N. Gisin and S. Massar,
Optimal quantum cloning machines,
\doi{10.1103/PhysRevLett.79.2153}{\emph{Phys. Rev. Lett.} \textbf{79}, 2153--2156 (1997)}.

\bibitem[GBA10]{GutaBowlesAdesso2010Teleportation}
M. Gu{\c t}{\u a}, P. Bowles, and G. Adesso,
Quantum teleportation benchmarks for independent and identically-distributed
spin states and displaced thermal states,
\doi{10.1103/PhysRevA.82.042310}{\emph{Phys. Rev. A} \textbf{82}, 042310 (2010)},
\arxiv{1004.1843}.

\bibitem[GJ07]{GutaJencova2007LAN}
M. Gu{\c t}{\u a} and A. Jen{\v c}ov\'a,
Local asymptotic normality in quantum statistics,
\doi{10.1007/s00220-007-0340-1}{\emph{Commun. Math. Phys.} \textbf{276}, 341--379 (2007)},
\arxiv{quant-ph/0606213}.

\bibitem[GK06]{GutaKahn2006LANQubits}
M. Gu{\c t}{\u a} and J. Kahn,
Local asymptotic normality for qubit states,
\doi{10.1103/PhysRevA.73.052108}{\emph{Phys. Rev. A} \textbf{73}, 052108 (2006)},
\arxiv{quant-ph/0512075}.

\bibitem[GM06]{GutaMatsumoto2006MixedGaussianCloning}
M. Gu{\c t}{\u a} and K. Matsumoto,
Optimal cloning of mixed Gaussian states,
\doi{10.1103/PhysRevA.74.032305}{\emph{Phys. Rev. A} \textbf{74}, 032305 (2006)},
\arxiv{quant-ph/0605161}.

\bibitem[Hol13]{Holevo2013QuantumSystems}
A. S. Holevo,
\doi{10.1515/9783110273403}{\emph{Quantum Systems, Channels, Information:
A Mathematical Introduction}},
De Gruyter Studies in Mathematical Physics 16, De Gruyter, Berlin/Boston (2013).

\bibitem[JSO26]{JeonSohnOh2026}
S. Jeon, V. Sohn, and C. Oh,
Optimal copy complexity of quantum state cloning,
\arxiv{2608.24484} (2026).

\bibitem[Joh01]{Johansson2001}
K. Johansson,
Discrete orthogonal polynomial ensembles and the Plancherel measure,
\doi{10.2307/2661375}{\emph{Ann. of Math.} \textbf{153}, 259--296 (2001)},
\arxiv{math/9906120}.

\bibitem[KG09]{KahnGuta2009LANFiniteDimensional}
J. Kahn and M. Gu{\c t}{\u a},
Local asymptotic normality for finite dimensional quantum systems,
\doi{10.1007/s00220-009-0787-3}{\emph{Commun. Math. Phys.} \textbf{289}, 597--652 (2009)},
\arxiv{0804.3876}.

\bibitem[KH08]{KalevHen2008}
A. Kalev and I. Hen,
No-broadcasting theorem and its classical counterpart,
\doi{10.1103/PhysRevLett.100.210502}{\emph{Phys. Rev. Lett.} \textbf{100},
210502 (2008)}.

\bibitem[Kaz10]{Kazakov2007}
A. Ya. Kazakov,
``Partial'' quantum cloning and quantum cloning of mixed states,
\doi{10.1142/S0219749910006186}{\emph{Int. J. Quantum Inf.}
\textbf{8}, 435--442 (2010)},
\arxiv{0711.2860}.

\bibitem[KW99]{KeylWerner1999}
M. Keyl and R. F. Werner,
Optimal cloning of pure states, testing single clones,
\doi{10.1063/1.532887}{\emph{J. Math. Phys.} \textbf{40}, 3283--3299 (1999)},
\arxiv{quant-ph/9807010}.

\bibitem[KW01]{KeylWerner2001Spectrum}
M. Keyl and R. F. Werner,
Estimating the spectrum of a density operator,
\doi{10.1103/PhysRevA.64.052311}{\emph{Phys. Rev. A} \textbf{64}, 052311 (2001)},
\arxiv{quant-ph/0102027}.

\bibitem[KW01p]{KeylWerner2001Purification}
M. Keyl and R. F. Werner,
The rate of optimal purification procedures,
\doi{10.1007/PL00001027}{\emph{Ann. Henri Poincar\'e} \textbf{2}, 1--26 (2001)},
\arxiv{quant-ph/9910124}.

\bibitem[KC22]{KimChitambar2022}
C. Kim and E. Chitambar,
Process-optimized phase-covariant quantum cloning,
\doi{10.1103/PhysRevA.106.022405}{\emph{Phys. Rev. A} \textbf{106}, 022405 (2022)},
\arxiv{2107.03042}.

\bibitem[Kup02]{Kuperberg2002}
G. Kuperberg,
Random words, quantum statistics, central limits, random matrices,
\doi{10.4310/MAA.2002.v9.n1.a3}{\emph{Methods Appl. Anal.}
\textbf{9}, 99--118 (2002)},
\arxiv{math/9909104}.

\bibitem[LN24]{LahiryNussbaum2024}
S. Lahiry and M. Nussbaum,
Minimax estimation of low-rank quantum states and their linear functionals,
\doi{10.3150/23-BEJ1610}{\emph{Bernoulli} \textbf{30}, 610--635 (2024)},
\arxiv{2111.03279}.

\bibitem[LSW18]{LamiSabapathyWinter2018}
L. Lami, K. K. Sabapathy, and A. Winter,
All phase-space linear bosonic channels are approximately Gaussian dilatable,
\doi{10.1088/1367-2630/aae738}{\emph{New J. Phys.} \textbf{20}, 113012 (2018)},
\arxiv{1806.11042}.

\bibitem[LW17]{LemmWilde2017}
M. Lemm and M. M. Wilde,
Information-theoretic limitations on approximate quantum cloning and broadcasting,
\doi{10.1103/PhysRevA.96.012304}{\emph{Phys. Rev. A} \textbf{96}, 012304 (2017)},
\arxiv{1608.07569}.

\bibitem[LTHC26]{LiTheilHarrowChuang2026}
Z. Li, E. Theil, A. W. Harrow, and I. Chuang,
An exponential sample-complexity advantage for coherent quantum inference,
\arxiv{2605.21457} (2026).

\bibitem[MP95]{MassarPopescu1995}
S. Massar and S. Popescu,
Optimal extraction of information from finite quantum ensembles,
\doi{10.1103/PhysRevLett.74.1259}{\emph{Phys. Rev. Lett.} \textbf{74}, 1259--1263 (1995)}.

\bibitem[PRV67]{ParthasarathyRangaRaoVaradarajan1967}
K. R. Parthasarathy, R. Ranga Rao, and V. S. Varadarajan,
Representations of complex semi-simple Lie groups and Lie algebras,
\doi{10.2307/1970351}{\emph{Annals of Mathematics} \textbf{85}, 383--429 (1967)}.

\bibitem[PSTW25]{PelecanosSpileckiTangWright2025}
A. Pelecanos, J. Spilecki, E. Tang, and J. Wright,
Mixed state tomography reduces to pure state tomography,
\arxiv{2511.15806} (2025).

\bibitem[Ras03a]{Rastegin2003Upper}
A. E. Rastegin,
Upper bound on the global fidelity for mixed-state cloning,
\doi{10.1103/PhysRevA.67.012305}{\emph{Phys. Rev. A} \textbf{67}, 012305 (2003)}.

\bibitem[Ras03b]{Rastegin2002Relative}
A. E. Rastegin,
A lower bound on the relative error of mixed-state cloning and related operations,
\doi{10.1088/1464-4266/5/6/017}{\emph{J. Opt. B: Quantum Semiclass. Opt.}
\textbf{5}, S647--S650 (2003)},
\arxiv{quant-ph/0208159}.

\bibitem[Ras03c]{Rastegin2003StateDependent}
A. E. Rastegin,
Global-fidelity limits of state-dependent cloning of mixed states,
\doi{10.1103/PhysRevA.68.032303}{\emph{Phys. Rev. A} \textbf{68}, 032303 (2003)},
\arxiv{quant-ph/0301132}.

\bibitem[SIGA05]{ScaraniIblisdirGisinAcin2005}
V. Scarani, S. Iblisdir, N. Gisin, and A. Ac\'{\i}n,
Quantum cloning,
\doi{10.1103/RevModPhys.77.1225}{\emph{Rev. Mod. Phys.} \textbf{77}, 1225--1256 (2005)},
\arxiv{quant-ph/0511088}.

\bibitem[Str65]{Strassen1965}
V. Strassen,
The existence of probability measures with given marginals,
\doi{10.1214/aoms/1177700153}{\emph{Ann. Math. Statist.} \textbf{36},
423--439 (1965)}.

\bibitem[TWZ25]{TangWrightZhandry2025}
E. Tang, J. Wright, and M. Zhandry,
Conjugate queries can help,
\arxiv{2510.07622} (2025).

\bibitem[Wer98]{Werner1998}
R. F. Werner,
Optimal cloning of pure states,
\doi{10.1103/PhysRevA.58.1827}{\emph{Phys. Rev. A} \textbf{58}, 1827--1832 (1998)},
\arxiv{quant-ph/9804001}.

\bibitem[WZ82]{WoottersZurek1982}
W. K. Wootters and W. H. Zurek,
A single quantum cannot be cloned,
\doi{10.1038/299802a0}{\emph{Nature} \textbf{299}, 802--803 (1982)}.


\end{thebibliography}
\end{document}